\def\showauthnotes{1}
\definecolor{sunset}{rgb}{0.85, 0.44, 0.3} 
\definecolor{forestgreen}{rgb}{0.13, 0.55, 0.13} 
\definecolor{coral}{rgb}{1.0, 0.5, 0.31} 
\definecolor{midnightblue}{rgb}{0.1, 0.3, 0.95} 
\definecolor{goldenrod}{rgb}{0.85, 0.65, 0.13} 
\definecolor{lavender}{rgb}{0.9, 0.9, 0.98} 
\definecolor{skyblue}{rgb}{0.15, 0.25, 0.8} 
\definecolor{crimson}{rgb}{0.86, 0.08, 0.24} 
\definecolor{teal}{rgb}{0.0, 0.5, 0.5} 
\definecolor{peachpuff}{rgb}{1.0, 0.85, 0.73} 
\definecolor{blueviolet}{rgb}{0.15, 0.15, 0.6}
\definecolor{webgreen}{rgb}{0,.5,0}
\definecolor{webbrown}{rgb}{.6,0,0}
\crefname{claim}{Claim}{Claims}
\numberwithin{equation}{section}
\newcounter{lemc}
\newcounter{coroc}
\numberwithin{lemc}{section}
\numberwithin{propc}{section}
\numberwithin{coroc}{section}
\newtheorem{theorem}{Theorem}
\newtheorem{claim}{Claim}
\newtheorem{corollary}[coroc]{Corollary}
\newtheorem{definition}{Definition}
\newtheorem{notation}{Notation}
\newtheorem{lemma}[lemc]{Lemma}
\newtheorem{fact}{Fact}
\theoremstyle{definition}
\newcommand{\vertiii}[1]{{\left\vert\kern-0.25ex\left\vert\kern-0.25ex\left\vert #1 \right\vert\kern-0.25ex\right\vert\kern-0.25ex\right\vert}}
\newcommand{\robert}[1]{\textcolor{red}{\small {\textbf{(Robert:} #1\textbf{) }}}}
\newcommand{\fermi}[1]{\textcolor{blue}{\small {\textbf{(Fermi:} #1\textbf{) }}}}
\newcommand{\robert}[1]{}
\newcommand{\fermi}[1]{}
\providecommand{\darkgray}[1]{\textcolor{darkgray}{#1}}
\providecommand{\Id}{\mathsf{Id}}
\providecommand{\eq}{\mathsf{eq}}
\providecommand{\opnorm}{\mathrm{op}}
\providecommand{\xydist}{\mathrm{dist}_{X,Y}}
\providecommand{\bij}{\mathsf{bij}}
\providecommand{\inj}{\mathsf{inj}}
\providecommand{\Compress}{\mathsf{Comp}}
\providecommand{\Uncompress}{\mathsf{Uncompress}}
\providecommand{\num}{\mathsf{num}}
\providecommand{\sSym}{\mathsf{Sym}}
\providecommand{\ssym}{\mathsf{sym}}
\providecommand{\EPR}{\mathsf{EPR}}
\providecommand{\pf}{\mathsf{pf}}
\providecommand{\Adv}{\mathcal{A}}
\providecommand{\TD}{\mathsf{TD}}
\providecommand{\sA}{\mathsf{A}}
\providecommand{\sB}{\mathsf{B}}
\providecommand{\sC}{\mathsf{C}}
\providecommand{\sD}{\mathsf{D}}
\providecommand{\sE}{\mathsf{E}}
\providecommand{\sF}{\mathsf{F}}
\providecommand{\sL}{\mathsf{L}}
\providecommand{\sP}{\mathsf{P}}
\providecommand{\sR}{\mathsf{R}}
\providecommand{\sX}{\mathsf{X}}
\providecommand{\sY}{\mathsf{Y}}
\providecommand{\scC}{\mathsf{cC}}
\providecommand{\scD}{\mathsf{cD}}
\providecommand{\scQ}{\mathsf{cQ}}
\providecommand{\frakD}{\mathfrak{D}}
\providecommand{\pr}{V}
\providecommand{\pr}{\mathsf{CU}}
\providecommand{\pr}{\mathsf{CU}}
\providecommand{\pfo}{\mathsf{pfO}}
\providecommand{\pf}{\mathsf{pfO}}
\providecommand{\spfo}{\mathsf{pfO}}
\providecommand{\pru}{\mathsf{PRU}}
\providecommand{\spru}{\mathsf{sPRU}}
\providecommand{\init}{\mathsf{init}}
\providecommand{\gsA}{{\textcolor{darkgray}{\mathsf{A}}}}
\providecommand{\gsB}{{\textcolor{darkgray}{\mathsf{B}}}}
\providecommand{\gsC}{{\textcolor{darkgray}{\mathsf{C}}}}
\providecommand{\gsD}{{\textcolor{darkgray}{\mathsf{D}}}}
\providecommand{\gsE}{{\textcolor{darkgray}{\mathsf{E}}}}
\providecommand{\gsF}{{\textcolor{darkgray}{\mathsf{F}}}}
\providecommand{\gsL}{{\textcolor{darkgray}{\mathsf{L}}}}
\providecommand{\gsP}{{\textcolor{darkgray}{\mathsf{P}}}}
\providecommand{\gsR}{{\textcolor{darkgray}{\mathsf{R}}}}
\providecommand{\calA}{\mathcal{A}}
\providecommand{\calD}{\mathcal{D}}
\providecommand{\calH}{\mathcal{H}}
\providecommand{\calI}{\mathcal{I}}
\providecommand{\calO}{\mathcal{O}}
\providecommand{\calR}{\mathcal{R}}
\providecommand{\calU}{\mathcal{U}}
\providecommand{\calZ}{\mathcal{Z}}
\newcommand{\rom}[1]{\mathtt{\uppercase\expandafter{\romannumeral #1\relax}}}
\DeclareMathOperator{\Dom}{Dom}
\DeclareMathOperator{\dist}{{dist}}
\DeclareMathOperator*{\E}{{\mathbb{E}}}
\DeclareMathOperator{\poly}{poly}
\newcommandx{\lz}[2][1=]{\todo[linecolor=red,backgroundcolor=red!10,bordercolor=red,#1]{LZ: #2}}
\begin{document}

\title{How to Construct Random Unitaries}

\author{Fermi Ma\thanks{Simons Institute $\&$ UC Berkeley. Email: \texttt{fermima1@gmail.com}} \and Hsin-Yuan Huang\thanks{Google Quantum AI, Caltech, $\&$ MIT. This work was conducted while Hsin-Yuan Huang was visiting the Simons Institute for the Theory of Computing. Email: \texttt{hsinyuan@google.com}, \texttt{hsinyuan@caltech.edu}}}

\date{}

\maketitle

\begin{abstract}
\normalsize
The existence of pseudorandom unitaries (PRUs)---efficient quantum circuits that are computationally indistinguishable from Haar-random unitaries---has been a central open question, with significant implications for cryptography, complexity theory, and fundamental physics.
In this work, we close this question by proving that PRUs exist, assuming that any quantum-secure one-way function exists.
We establish this result for both (1) the standard notion of PRUs, which are secure against any efficient adversary that makes queries to the unitary $U$, and (2) a stronger notion of PRUs, which are secure even against adversaries that can query both the unitary $U$ and its inverse $U^\dagger$. 
In the process, we prove that any algorithm that makes queries to a Haar-random unitary can be \emph{efficiently} simulated on a quantum computer, up to inverse-exponential trace distance.

\end{abstract}

\thispagestyle{empty} 
\setcounter{page}{0}

\renewcommand{\thefootnote}{\fnsymbol{footnote}}

\newpage

\thispagestyle{empty} 
\setcounter{page}{0}

\tableofcontents
\vspace{1em}

\addtocontents{toc}{\protect\thispagestyle{empty}}

\thispagestyle{empty} 
\setcounter{page}{0}

\newpage

\renewcommand{\thefootnote}{\arabic{footnote}}

\section{Introduction}

This paper resolves the question: can efficient quantum circuits behave like truly random unitaries? Specifically, we prove that \emph{pseudorandom unitaries} (PRUs) exist assuming the existence of any quantum-secure one-way function. First proposed by Ji, Liu, and Song in 2017~\cite{ji2017pseudorandom}, a PRU is the unitary analogue of a pseudorandom function (PRF) \cite{goldreich1986construct}. A PRU consists of a family of efficiently computable quantum circuits with the guarantee that no polynomial-time quantum algorithm can distinguish between queries to a unitary sampled from the PRU family and a unitary sampled from the Haar measure.

Random unitaries play an essential role throughout quantum information science, arising in quantum algorithms, quantum supremacy experiments, quantum learning, cryptographic protocols, and much more~\cite{hayden2004randomizing, knill2008randomized, arute2019quantum, bouland2019complexity, huang2020predicting, ananth2022cryptography, huang2022provably, elben2022randomized, huang2022quantum, movassagh2023hardness, kretschmer2023quantum, lombardi2024one}. In physics, highly chaotic systems such as black holes are often modeled as Haar-random unitary transformations \cite{cotler2017black, nahum2018operator, cotler2017chaos, kim2020ghost, choi2023preparing}. However, this approach has a fundamental problem: Haar-random unitaries are inherently unphysical, requiring exponential complexity to even specify. The notion of a PRU offers a tantalizing solution: efficient circuits that are as good as Haar-random. In fact, the idea that PRUs are a more accurate model of black hole dynamics is behind recent advances in fundamental physics~\cite{kim2020ghost, yang2023complexity, engelhardt2024cryptographic}.

Despite considerable interest, the question of whether PRUs actually \emph{exist} has remained open. In the past couple of years, a series of works has established that weaker notions are possible~\cite{lu2023quantum, brakerski2024real,haug2023pseudorandom, metger2024simple,ananth2024pseudorandom}. For example,~\cite{metger2024simple, chen2024efficient} constructed \emph{non-adaptive} PRUs, which are secure against restricted adversaries that makes all of their queries \emph{at once} in parallel. While these works represent important progress, the broader goal remains elusive, and constructing a PRU remains one of the central challenges in quantum cryptography.

\subsection{Our results} In this work, we give the first proof that PRUs exist.
\begin{theorem}
\label{theorem:intro-standard-pru}
    PRUs exist assuming the existence of any quantum-secure one-way function.
\end{theorem}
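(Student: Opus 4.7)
The plan is to construct a candidate PRU from any quantum-secure pseudorandom function (which itself follows from quantum-secure one-way functions by GGM-style constructions) and reduce its security, via a sequence of hybrids, to a purely information-theoretic statement: the action of a truly Haar-random unitary under adaptive query access (including to its inverse) can be efficiently simulated up to inverse-exponential trace distance. The proof therefore splits into (i) a construction/reduction step that replaces each pseudorandom primitive by its truly random counterpart, and (ii) a ``path-recording'' simulation theorem for Haar, which is the main technical contribution promised by the abstract.

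\textbf{The candidate construction.} I would take the $\mathsf{PFC}$-style ensemble that has already appeared in prior work on non-adaptive PRUs: sample a pseudorandom permutation $P$ on $\{0,1\}^n$, a pseudorandom binary phase function $F$, and a pseudorandom Clifford $C$, and set $U := C\cdot F\cdot P$ (possibly repeated a constant number of times). Each factor is efficiently implementable from a quantum-secure PRF, so by standard hybrid arguments it suffices to show that the idealized ensemble, with $P$, $F$, $C$ drawn truly uniformly, is indistinguishable from Haar against any $\mathrm{poly}(n)$-query adversary with access to both $U$ and $U^\dagger$.

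\textbf{Path-recording oracle.} The core of the proof is to design a ``path-recording'' simulator $\mathsf{PR}$ in the spirit of Zhandry's compressed oracle for random functions. It maintains, on an auxiliary register, a partial bijection --- a list of input/output pairs representing the commitments the random unitary has been forced into so far. On a forward query it either retrieves the recorded image of the queried basis state, or (if no image is recorded) appends a freshly sampled Haar-random vector in the orthogonal complement of the previously recorded images, all coherently in superposition; inverse queries are handled symmetrically. The claim to prove is that after $t$ adaptive queries, the joint adversary/oracle state produced by $\mathsf{PR}$ is within $\mathrm{poly}(t)/2^{\Omega(n)}$ trace distance of the purified state produced by a genuine Haar-random unitary, proved by induction on $t$ using a gentle-measurement / one-sided-error argument step by step.

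\textbf{Where the difficulty lies.} The main obstacle, and the reason prior work was stuck at the non-adaptive setting, is handling two-sided queries simultaneously: each forward commitment constrains the admissible answers to future inverse queries, and unitarity requires the recorded data structure to represent a random partial isometry whose two marginals remain jointly near-uniform on the unqueried subspace. I would attack this by first writing a purified description of Haar (e.g.\ as a projection of independent Haar vectors onto the isometry subspace), then showing that the single-query update rule of $\mathsf{PR}$ coincides with the true Haar update up to an inverse-exponentially small correction arising from collisions between newly sampled and already-recorded images. Once this simulation theorem is established, indistinguishability of the idealized $\mathsf{PFC}$ ensemble from $\mathsf{PR}$ follows because $C F P \ket{x}$ on a fresh input $x$ produces a near-Haar-random state orthogonal to prior images --- exactly what $\mathsf{PR}$ samples --- and combining this with PRF security yields \Cref{theorem:intro-standard-pru}.
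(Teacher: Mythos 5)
Your high-level architecture matches the paper's: construct from a $\mathsf{PFC}$-style ensemble, reduce via PRF security to the ideal case, and invoke a ``path-recording'' simulator for Haar. Where the two diverge substantially is in the design of the simulator and, more importantly, in the proof strategy for showing it approximates Haar.

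\textbf{The path-recording update rule.} You propose a simulator that, on a forward query, ``either retrieves the recorded image of the queried basis state, or (if no image is recorded) appends a freshly sampled Haar-random vector in the orthogonal complement.'' This is the coherent case distinction that makes Zhandry-style compressed oracles for random \emph{functions} work, and it is precisely the mechanism the paper deliberately avoids. The paper's oracle $V$ \emph{always} appends a fresh pair $(x,y)$ with $y$ ranging uniformly outside $\Im(R)$; it never branches on whether $x$ is already recorded. Making a coherent ``already-seen vs.\ fresh'' branch well-defined for unitaries is exactly where the difficulty lies, because unlike random functions, distinct inputs are not independent under the unitarity constraint, and there is no clean notion of a ``blank'' database entry. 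You would need to supply a concrete isometry implementing this branch and prove it approximately commutes with the adversary's dynamics; your sketch does not do so.

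\textbf{The role of $C$.} The reason ``always append'' is sound in the paper is that the random $2$-design $C$ (a random Clifford) is applied before each query, and twirling ensures the $x$-coordinates in the relation register are distinct except with probability $O(t^2/2^n)$ (Lemma~\ref{lem:almost-distinct-on-X}). Combined with the \emph{right unitary invariance} of $V$ (Lemma~\ref{lem:cho-transfer}), this moves $C$ onto the purifying register and lets the whole argument live in the bijective-relation subspace. In your proposal the $C$ factor appears only as part of a generic $\mathsf{PFC}$ product; you do not exploit the fact that its randomization of the input is what eliminates collisions and renders the branching unnecessary. This is the central structural insight you are missing.

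\textbf{The comparison to Haar.} You propose to ``write a purified description of Haar'' and then show the single-query update rule of $\mathsf{PR}$ matches the ``true Haar update'' step by step. The paper never performs such a direct comparison, and I think it would be quite hard to carry out: there is no combinatorially tractable purification of the Haar measure. What the paper does instead is purify $P_\pi \cdot F_f$ (which \emph{does} have a nice discrete purification over $\sSym_N \times \{0,1\}^N$), show via the $\Compress$ isometry that this purified oracle is \emph{exactly} equal to $V$ on the bijective subspace (Lemma~\ref{lemma:compress-cho-pfo-state}), and then observe that Haar is literally $P_\pi F_f C$ with $C$ Haar-random (by left-invariance of Haar measure). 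So both Haar and the Clifford instantiation reduce to the same $V$ with no direct ``Haar vs.\ $V$'' calculation ever appearing. Your proposal commits you to a calculation the paper sidesteps entirely, and it is unclear how you would complete it.

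\textbf{A smaller point.} Theorem~\ref{theorem:intro-standard-pru} concerns adversaries with forward queries only; inverse queries are the subject of Theorem~\ref{theorem:intro-strong-pru} and require a substantially more involved argument (\cref{part:strong}). You frame the obstacle as ``handling two-sided queries,'' but the reason prior work was stuck at the \emph{non-adaptive} setting is adaptivity, not two-sidedness. Scoping your proof to forward queries would already require all the machinery above; adding inverse queries from the start makes the missing steps harder, not easier.
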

\noindent In fact, we go one step further.~\cref{theorem:intro-standard-pru} is about PRUs that satisfy the original definition of~\cite{ji2018pseudorandom}, which are secure against adversaries that can query an oracle for $U$, but \emph{not} the inverse unitary $U^\dagger$. We therefore define \emph{strong PRUs}, which are indistinguishable from Haar-random even to adverasaries that can query both $U$ and $U^\dagger$. Our second main result builds strong PRUs from one-way functions.\footnote{The notion of strong PRUs is also discussed in~\cite{metger2024simple} as an open question.}
\begin{theorem}
\label{theorem:intro-strong-pru}
    Strong PRUs exist assuming the existence of any quantum-secure one-way function.
\end{theorem}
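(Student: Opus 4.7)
The plan is to deduce \cref{theorem:intro-strong-pru} by leveraging the machinery developed for \cref{theorem:intro-standard-pru} but adapted to handle two-sided oracle access. The overall structure mirrors the standard PRU proof: (i) propose an explicit candidate family $\{U_k\}_k$ of efficient circuits keyed by a quantum-secure PRF seed $k$; (ii) introduce an idealized ``path-recording'' oracle that simulates a Haar-random $U$ along with its inverse $U^\dagger$ up to inverse-exponential trace distance; (iii) show computationally that the candidate and the path-recording oracle are indistinguishable when the PRF is swapped for a truly random function, thereby completing a hybrid argument. The additional difficulty over \cref{theorem:intro-standard-pru} is that the adversary can intermix forward and inverse queries, which breaks the natural ``fresh-Haar-vector-per-new-input'' picture used in the one-sided setting.

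First, I would propose a candidate of the form $U_k = \vC_2 \cdot \vF_k \cdot \vC_1$ augmented with inverse-friendly structure, or more conservatively an iterated block $\bigl(\vC \cdot \vF \cdot \vP\bigr)^{t}$ of random Cliffords $\vC$, PRF-keyed diagonal phases $\vF$, and PRF-keyed permutations $\vP$, repeated a constant number of layers. The precise choice must satisfy two desiderata: every layer must be its own efficient inverse (so that $U_k^\dagger$ queries reduce to PRF queries in reverse order), and replacing each PRF by a truly random function must yield a distribution whose moments match those of a Haar-random $U$ sufficiently well that the path-recording bookkeeping stays well-defined under both query directions.

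The heart of the proof is constructing and analyzing the two-sided path-recording oracle. Here I would maintain an internal register storing a partial isometry $V$ as a set of input/output pairs $\{(x_i, y_i)\}$ sampled consistently with a Haar-random unitary. A \emph{forward} query on a register $\sA$ in state $\ket{x}$ either returns the stored $y$ (if $x$ already appears as an input) or samples a fresh $y$ uniformly from the orthogonal complement of the stored outputs and appends $(x,y)$; a \emph{backward} query on $\ket{y}$ acts symmetrically, sampling a fresh $x$ from the orthogonal complement of stored inputs. The key technical claim is that after $q$ queries this simulator is within trace distance $\poly(q)/\sqrt{2^n}$ of the true Haar oracle with inverse access. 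I would prove this by the same symmetry/representation-theoretic argument used in the one-sided case, but working with the left-right symmetric subspace of pairs of registers, exploiting that sampling a fresh output uniformly from the orthogonal complement of a subspace of dimension $q \ll 2^n$ is statistically close to sampling from the full Hilbert space. Crucially, the bookkeeping is invariant under swapping the roles of inputs and outputs, which is precisely what makes the simulator commute with exchanging forward and backward queries.

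The main obstacle, and where I expect most of the work to go, is step (iii): the hybrid that replaces the PRF-keyed layers with truly random functions/permutations and then shows the resulting distribution is indistinguishable from the two-sided path-recording oracle. In the one-sided setting one argues that the Clifford twirl ``hides'' each queried vector into a Haar-random direction; in the two-sided setting one must argue simultaneously about the action on inputs \emph{and} on outputs, since inverse queries can probe the output side directly. I would handle this by performing the hybrid layer-by-layer and invoking a two-sided analogue of the PFC moment bound: namely, that the $(q,q)$-th order mixed moment of $\vC \vF \vP$ over random Cliffords, random functions, and random permutations matches the Haar moment $\E_{U \sim \Haar}[U^{\otimes q} \otimes (U^\dagger)^{\otimes q}]$ up to inverse-exponential error. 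Assuming that moment bound (which I expect to prove by a direct Weingarten-style calculation combined with the Clifford 2-design property on the doubled space), combining it with the trace-distance bound from step (ii) yields strong PRU security via a standard pseudorandomness-to-indistinguishability reduction from the quantum-secure PRF.
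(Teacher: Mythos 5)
Your high-level plan (PFC-style candidate, path-recording simulator, hybrid argument to a truly random function/permutation) matches the shape of the paper's proof, but the proposal misses the core technical idea that makes the two-sided case work, and it also falls back onto precisely the machinery the paper is trying to avoid.

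The central gap is in your description of the two-sided path-recording oracle. You propose a classical lazy-sampling picture: a forward query on $\ket{x}$ ``either returns the stored $y$ (if $x$ already appears as an input) or samples a fresh $y$,'' and similarly for inverse queries. This picture does not quantize in a straightforward way, and it does not capture what actually happens when an adversary interleaves forward and inverse queries. The paper's oracle maintains \emph{two} separate relation registers $\sL$ (forward records) and $\sR$ (inverse records), and the forward-query operator is $V = V^L(\Id - V^R V^{R,\dagger}) + (\Id - V^L V^{L,\dagger}) V^{R,\dagger}$: a forward query sometimes \emph{inserts} a pair into $\sL$ and sometimes \emph{deletes} a pair from $\sR$ (and vice versa for inverse queries). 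This add-or-delete dichotomy is exactly what lets the oracle satisfy $V^\dagger$ being the correct model of an inverse query, and it is what makes $V$ a partial isometry. Your symmetric-bookkeeping description does not specify a linear map at all, and if you try to formalize it as a unitary (or partial isometry) you will run into exactly the consistency problem the paper solves: after a forward query on $x_1$ returning $y_1$, an inverse query on $y_1$ must return $x_1$ coherently, which forces an interaction between the forward and inverse records that ``symmetric appending'' does not provide. Related to this, the paper needs a \emph{ternary} phase $F_f$ (a $q$-ary phase with $q \geq 3$) rather than the $\pm 1$ phase of the one-sided construction, and random 2-design twirls on \emph{both} sides of $P_\pi F_f$; these are not cosmetic choices -- they are what makes the purification decompose into the $\ket{\pf_{L,R}}$ basis that the $(\sL,\sR)$ bookkeeping tracks. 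None of this appears in your proposal.

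The second issue is methodological. You propose to establish step (iii) by proving a two-sided $(q,q)$-moment bound for $CFP$ against $\E_U[U^{\otimes q} \otimes (U^\dagger)^{\otimes q}]$ via a ``Weingarten-style calculation.'' A moment (design) bound of this form would only give security against \emph{non-adaptive} adversaries that make all queries in parallel; it does not by itself handle an adaptive adversary that decides its $i$-th query (and whether it is forward or inverse) based on previous answers. The whole point of the path-recording approach is to sidestep Weingarten calculus and get adaptive security directly: the paper proves approximate two-sided unitary invariance of $V$ (Claim 5.1) and a twirling lemma (Lemma 5.3) by elementary means, then runs an induction on the query count (Claim 5.5) that works query-by-query. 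If you actually tried the Weingarten route you would be re-deriving the non-adaptive results of prior work, not the adaptive strong-PRU result. Your trace-distance target $\poly(q)/\sqrt{2^n}$ is also unsubstantiated; the paper's actual bound is $O(t^2/N^{1/8})$, which is negligible but weaker, and getting even that requires the machinery above.
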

While~\cref{theorem:intro-strong-pru} technically subsumes~\cref{theorem:intro-standard-pru}, the proof of~\cref{theorem:intro-strong-pru} is significantly more involved. Since~\cref{theorem:intro-standard-pru} may suffice for many applications, we present them separately.
By establishing the existence of PRUs, our work provides the foundation for new avenues of research in quantum computation, cryptography, and fundamental physics.

\subsection{Our techniques}

We achieve our results on PRUs by proving that any quantum oracle algorithm $\Adv^U$ that queries an $n$-qubit Haar-random unitary $U$ can be \emph{efficiently simulated} with a remarkably simple procedure:
\begin{enumerate}
    \item Initialize an external register $\sE$ to the state $\ket*{\varnothing}$, where $\varnothing$ denotes the empty set. (\textbf{Aside:} When we write a set inside a ket, e.g., $\ket*{S}_{\sE}$, we are simply using the set $S$ as a label for a unit vector. The inner product $\braket*{R}{S}$ equals $1$ if $R = S$ and $0$ otherwise.)

    \item Run the oracle algorithm $\calA$, replacing each query to $U$ with the following linear map:
    \begin{align}
        V: \ket*{x} \ket*{S}_{\gsE} \mapsto 
        \frac{1}{\sqrt{2^n - \abs{S}}} \sum_{\substack{y \in \{0,1\}^n: \\ y\not\in S_Y}} \ket*{y} \ket*{S \cup \{(x,y)\}}_{\gsE},
    \end{align}
    where $S_Y$ denotes the set of all $y$ such that $(x,y) \in S$ for some $x$. In words, $V$ maps $x$ to a uniform superposition over $y \in \{0,1\}^n$, except those that already appear in $S$, and simultaneously ``records'' $(x,y)$ by inserting it into $S$. We refer to $V$ as the \emph{path-recording oracle}.
\end{enumerate}
We prove that the following mixed states have trace distance $O(t^2/2^n)$:
\begin{itemize}
    \item $\E_{U} \ketbra*{\Adv^{U}}$, the state of $\Adv$ after $t$ queries to a Haar-random unitary $U$, where $\ket*{\Adv^U} \coloneqq U \cdot A_t \cdots U \cdot A_1 \ket*{0}$
    denotes the state of the algorithm after $t$ queries to $U$, and $\ket*{0}$ denotes an arbitrary initial state.
    \item $\Tr_{\sE}(\ketbra*{\Adv^{V}})$, where $\ket*{\calA^V}_{\gsA \gsE} \coloneqq V \cdot A_t \cdot \cdots V \cdot A_1 \ket*{0} \ket*{\varnothing}_{\sE}$ denotes the global state of the algorithm and the external register $\sE$ after $t$ queries to $V$.
\end{itemize}
Despite the extensive literature on Haar-random unitaries, to the best of our knowledge, this ``path-recording'' characterization was not known before.\footnote{We note that~\cite{alagic2020efficient} proves that there \emph{exists} a space-efficient (but otherwise inefficient) way to exactly simulate Haar-random unitaries. Moreover, their proof is non-constructive, i.e., they do not give a simulator.}\footnote{This can also be viewed as an analog of Zhandry's compressed oracles for Haar-random unitaries~\cite{zhandry2019record}.}
Furthermore, it is easy to show that $V$ can be efficiently implemented on a quantum computer; see \cref{sec:efficient-implementation-pro}. This establishes the following fact:
\begin{center}
    \emph{Any algorithm that queries a Haar-random unitary can be efficiently simulated \\
    on a quantum computer up to inverse-exponential trace distance.}
\end{center}
As we now explain, this new path-recording perspective is the key to our PRU proof.

\paragraph{How to construct PRUs.} The main technical step in our PRU proof is to show that a $t$-query oracle algorithm $\Adv$ can only distinguish between
\begin{itemize}
    \item $P_{\pi} \cdot F_f \cdot C$, where $P_{\pi} = \sum_{x} \ketbra*{\pi(x)}{x}$ for a random permutation $\pi \gets S_{2^n}$, $F_f = \sum_{x} (-1)^{f(x)} \ketbra*{x}$ for a random function $f \gets \{0,1\}^{2^n}$, and $C$ is a random $n$-qubit Clifford.\footnote{This $PFC$ construction was introduced by~\cite{metger2024simple}, who proved security against \emph{non-adaptive} adversaries, i.e., adversaries that make all of their oracle queries at once, in parallel.}
    \item a Haar-random $n$-qubit unitary $U$,
\end{itemize}
with probability $1/2 + t^2/2^n$.

Our proof works by \emph{purifying} the randomness of the PRU. Ignoring $C$ for now, suppose we initialize an external register to the uniform superposition $\propto \sum_{\pi \in S_{2^n}} \ket*{\pi} \otimes \sum_{f \in \{0,1\}^{2^n}} \ket*{f}$ over all permutations $\pi$ and functions $f$. In this view, a query to a random $P_{\pi} \cdot F_f$ is equivalent to a query to a fixed unitary that applies $P_{\pi} \cdot F_f$ controlled on $\ket*{\pi} \ket*{f}$, i.e., the map
\begin{align}
    \ket*{x} \otimes \ket*{\pi,f} \mapsto (-1)^{f(x)} \cdot \ket*{\pi(x)} \otimes \ket*{\pi, f}.
\end{align}
Equivalently, we can view this map as sending $x$ to a superposition over all $y$, while simultaneously multiplying the purifying register by the coefficient $\delta_{\pi(x) = y} \cdot (-1)^{f(x)}$: 
\begin{align}
    \ket*{x} \otimes \ket*{\pi,f} \mapsto \sum_{y \in \{0,1\}^n} \ket*{y}  \otimes \Big( \textcolor{skyblue}{\delta_{\pi(x) = y} \cdot (-1)^{f(x)}} \cdot \ket*{\pi, f} \Big).
\end{align}
After $t$ queries to the purified $P_{\pi} \cdot F_f$, the global state including the purifying registers is (proportional to) a sum of terms
\begin{equation}
    \ketbra*{y_t}{x_t} \cdot A_t \cdots \ketbra*{y_1}{x_1} \cdot A_1 \ket*{0^{n}} \otimes \underbrace{\sum_{\pi \in S_{2^n}} \ket*{\pi,f} \cdot \textcolor{skyblue}{\delta_{\pi(x_1) = y_1} \cdots \delta_{\pi(x_t) = y_t} \cdot (-1)^{f(x_1) + \cdots + f(x_t)}}}_{\propto \ket*{\pf_{\{(x_1,y_1),\dots,(x_t,y_t)\}}}},
\end{equation}
over all possible $x_1,y_1,\dots,x_t,y_t \in \{0,1\}^n$, i.e., over all \emph{Feynman paths}. 

Crucially, when all the $x_1,\dots,x_t$ are distinct, these $\ket*{\pf_{\{(x_1,y_1),\dots,(x_t,y_t)\}}}$ states are orthogonal and is isometric to $\ket*{\{(x_1,y_1),\dots,(x_t,y_t)\}}$. Since the algorithm is not given the purifying registers, a query to a random $P_{\pi} \cdot F_f$ is \emph{identical} to a query to the path-recording oracle $V$ described earlier---except on paths where there is a collision among the inputs $x_1,\dots,x_t$.

This is where $C$ comes in. We prove that $V$ satisfies a key property: for any $n$-qubit unitary $C$, 
\begin{equation}
    (V \cdot C) \cdot A_t \cdots (V \cdot C) \cdot A_1 \ket*{0^{n}} \ket*{\varnothing}_{\sE} = ((C\otimes \Id)^{\otimes t})_{\sE} \cdot V \cdot A_t \cdots V \cdot A_1 \ket*{0^{n}} \ket*{\varnothing}_{\sE}.
\end{equation}
This says that applying $C$ to the \textbf{adversary's register} before each query to $V$ is equivalent to applying $C$ to each $x_i$ in the \textbf{purifying register} $\ket*{\{(x_1,y_1),\dots,(x_t,y_t)\}}$. When $C$ is sampled from any $2$-design, the randomness of $C$ ensures there are no collisions in the $x_1,\dots,x_t$ with overwhelming probability. Consequently, we show that queries to $V$ are indistinguishable from queries to $P_{\pi} \cdot F_f \cdot C$, as long as $C$ is sampled from \emph{any} $2$-design. By instantiating the $2$-design to be either (1) a random Clifford or (2) a Haar-random unitary, we show that both $P_{\pi} \cdot F_f \cdot C$ and Haar-random unitaries are indistinguishable from $V$, and thus, from each other.

\paragraph{Strong PRUs and a symmetrized path-recording oracle $\widetilde{V}$.} To obtain strong PRUs, we use the construction: $D \cdot P_{\pi} \cdot F_f \cdot C$, where $D,C$ are both random $n$-qubit Cliffords, $P_{\pi}$ is the same as before, and $F_f$ is a random $q$-ary phase (for any $q \geq 3$). By analyzing the purification of $P_{\pi} \cdot F_f$, we show that when $\Adv$ makes forward and inverse queries, the purifying registers, viewed in the right basis, ``record'' information from \emph{two Feynman paths}: one set $S^{\mathsf{for}}$ consists of $(x,y)$ tuples corresponding to the forward queries, and another set $S^{\mathsf{inv}}$ of tuples $(x,y)$ corresponds to the inverse queries. Whereas each query in the standard PRU proof always inserts a tuple $(x,y)$ into the set $S$, when both forward and inverse queries are allowed, the effect is more intricate:
\begin{itemize}
    \item A forward query will sometimes add a tuple to $S^{\mathsf{for}}$, but other times delete a tuple from $S^{\mathsf{inv}}$.
    \item An inverse query will sometimes add a tuple to $S^{\mathsf{inv}}$, but other times delete a tuple from $S^{\mathsf{for}}$.
\end{itemize}
We prove that this behavior corresponds to a more general ``symmetrized'' path recording oracle $\widetilde{V}$. Moreover, as long as $D,C$ are sampled from any $2$-design, the adversary cannot distinguish between queries to $D \cdot P_{\pi} \cdot F_f \cdot C$ and queries to $\widetilde{V}$, and using similar reasoning as the standard PRU proof, conclude both of the following (1) strong PRUs exist and (2) $\widetilde{V}$ is indistinguishable from Haar-random even under inverse queries. As we show in~\cref{sec:efficient-implementation-pro}, $\widetilde{V}$ can also be implemented efficiently, and consequently any algorithm that makes \emph{forward and inverse} queries to a Haar-random unitary can also be simulated to inverse exponential error.

Our proof leverages the following property of $2$-designs: if one samples $C$ from a $2$-design and applies $C \otimes \overline{C}$ to any state (where $\overline{C}$ denotes the complex conjugate), then with overwhelmingly high probability, the result is either (a) a pair of distinct elements, or (b) the maximally entangled state. At a very high level, the fact that there are two kinds of outcomes after twirling by $C \otimes \overline{C}$ is related to how the purification ``decides'' whether it should add or delete a tuple $(x, y)$. 

We remark that the strong PRU proof is significantly more involved than standard PRU proof, and the reader may find it beneficial to start with the standard PRU proof. 

\paragraph{A new approach to random unitaries.}
More broadly, the path-recording oracle unlocks a new way to proving theorems about random unitaries. Before this work, analyzing mixed states such as $\mathbb{E}_{U} \ketbra*{\text{Adv}^U}$ often necessitated the use of Weingarten calculus, involving intricate asymptotic bounds on Weingarten functions through sophisticated combinatorial and representation-theoretic calculations. Our approach circumvents this complexity entirely.\footnote{Alternatively, one can view our technique as deriving a simplified and approximate version of the Weingarten calculus from purely elementary arguments.}

We demonstrate the power of this approach by giving an elementary proof of the ``gluing lemma'' recently proven by~\cite{schuster2024random}. This lemma states that if two Haar-random unitaries $U_1$ and $U_2$ \emph{overlap}, with $U_1$ acting on systems $\sA, \sB$ and $U_2$ on $\sB, \sC$ (where $\sB$ has a super-logarithmic number of qubits), then queries to $U_2 \cdot U_1$ are indistinguishable from queries to a larger Haar-random unitary $U$ acting on $\sA,\sB,\sC$. Using this lemma (and our~\cref{theorem:intro-standard-pru}), \cite{schuster2024random} constructed low-depth PRUs secure against forward queries. However, their proof of the gluing lemma is highly technical, relying on careful representation-theoretic analysis and tight bounds on Weingarten functions.

The path-recording oracle yields an elementary proof of the gluing lemma (see \cref{part:app}). The key insight is to replace the Haar-random unitaries with path-recording oracles. This reduces to showing that the composition of two independent path-recording oracles $V_2 \cdot V_1$, where $V_1$ acts on $(\sA,\sB,\sE_1)$ and $V_2$ acts on $(\sB,\sC,\sE_2)$, approximates a single path-recording oracle $V$ acting on $(\sA,\sB,\sC,\sE)$.

Given the central role of random unitaries in physics and quantum computing, we expect the path-recording framework to have broad applications in the future.

\subsection{Acknowledgments}

Special thanks to John Wright for many helpful suggestions and extensive discussions at every stage of this project, and to Ewin Tang for providing significant feedback on the manuscript. We also thank Thiago Bergamaschi, John Bostanci, Adam Bouland, Chi-Fang (Anthony) Chen, Lijie Chen, Tudor Giurgica-Tiron, Jeongwan Haah, Jonas Haferkamp, William Kretschmer, Alex Lombardi, Tony Metger, Thomas Schuster, Joseph Slote, Xinyu (Norah) Tan, Umesh Vazirani, Henry Yuen, and Mark Zhandry for valuable discussions and feedback.

This work was done while Fermi Ma was a postdoctoral fellow at the Simons Institute for the Theory of Computing, supported by DOE QSA grant FP00010905, NSF QLCI Grant 2016245 and DOE grant DE-SC0024124. Hsin-Yuan Huang acknowledges the visiting position at Center for Theoretical Physics, MIT. This work was conducted while both authors were at the Simons Institute for the Theory of Computing, supported by DOE QSA grant FP00010905.

\section{Preliminaries}

This section establishes basic notation, definitions, and lemmas that we use throughout the paper.

\paragraph{Notation.} We write $N \coloneqq 2^n$, where $n$ typically denotes the number of qubits. We write $[N] \coloneqq \{1, \ldots, N\}$ to denote the set of integers from $1$ to $N$, and we will identify $[N]$ with $\{0,1\}^n$ by associating each integer $i \in [N]$ with the string $x \in \{0,1\}^n$ corresponding to the binary representation of $i-1$. For any integer $1 \leq t \leq N$, let $[N]^t_{\dist}$ denote the set of length-$t$ sequences of distinct integers from $1$ to $N$, i.e.,
    \begin{equation}
        [N]^t_{\dist} \coloneqq \{(x_1,\dots,x_t) \in [N]^t: x_i \neq x_j \ \text{for all} \ i \neq j \}.
    \end{equation}
    For $t = 0$, we adopt the convention that $[N]^t_{\dist} \coloneqq \{ () \}$ is a set with a single element $()$ denoting a length-$0$ sequence. For any permutation $\pi \in \sSym_t$, let $S_{\pi}$ be a unitary that acts on $(\mathbb{C}^N)^t$ as follows:
    \begin{equation}
        S_{\pi}: \ket*{x_1,\dots,x_t} \mapsto \ket*{x_{\pi^{-1}(1)},\dots,x_{\pi^{-1}(t)}}.
    \end{equation}
    
\paragraph{Quantum registers.} We use capital sans-serif letters to label quantum registers. For a register~$\sA$, the associated Hilbert space is denoted $\calH_{\sA}$. When a quantum state is supported on multiple registers, such as $(\sA, \sB)$, this means that $\ket*{\psi} \in \calH_{\sA} \otimes \calH_{\sB}$. To clarify which systems a state is defined on, we sometimes include the register labels as subscripts in dark gray sans-serif font, e.g., $\ket*{\psi}_{\gsA \gsB}$. If a linear operator $U$ acts only on subsystem $\sA$, we may write this as $U_{\gsA}$. Such an operator can be extended to a larger system by acting trivially on other registers; for example, $(U_{\gsA} \otimes \Id_{\gsB}) \cdot \ket*{\psi}_{\gsA \gsB}$. To reduce notational clutter, we often omit the ``$\otimes \Id_{\gsB}$'' and simply write $U_{\gsA} \cdot \ket*{\psi}_{\gsA \gsB}$. Similarly, when summing operators that act on different registers, such as $U_{\gsA}$ and $V_{\gsA \gsB}$, we write $U_{\gsA} + V_{\gsA \gsB}$ to mean $U_{\gsA} \otimes \Id_{\gsB} + V_{\gsA \gsB}$.

Given a projector $\Pi$ acting on register $\sA$, we say that a state $\ket*{\psi} \in \calH_{\sA}$ is in the image of $\Pi$ if $\Pi \ket*{\psi} = \ket*{\psi}$. For a state $\ket*{\psi} \in \calH_{\sA} \otimes \calH_{\sB}$, we similarly say that $\ket*{\psi}$ is in the image of $\Pi_{\gsA}$ if $\Pi_{\gsA} \ket*{\psi}_{\gsA \gsB} = (\Pi_{\gsA} \otimes \Id_{\gsB}) \cdot \ket*{\psi}_{\gsA \gsB} = \ket*{\psi}_{\gsA \gsB}$.

Given a state $\ket*{\psi}$ on systems $(\sA, \sB)$, we denote the partial trace over system $\sB$ as $\Tr_{\sB}(\ketbra{\psi})$. Occasionally, we will write this as $\Tr_{-\sA}(\ketbra{\psi})$, where the minus sign indicates tracing out all systems except $\sA$.

\subsection{Relations and variable-length registers}
\label{subsec:relation-states}

Fix a choice of $N = 2^n$. A relation $R$ is defined as a \emph{multiset} $R = \{(x_1, y_1), \dots, (x_t, y_t)\}$ of ordered pairs $(x_i, y_i) \in [N]^2$. This definition deviates slightly from the standard notion of a relation, which is typically an ordinary set of ordered pairs without repeated elements. The \emph{size} of the relation refers to the number of ordered pairs in the relation, including multiplicities. We denote this by $\abs{R}$, as the size corresponds to the cardinality of $R$ as a multiset.

\begin{definition}
\label{def:set-of-relations}
    Let $\calR$ denote the infinite set of all relations $R$. For any $t \geq 0$, let $\calR_t$ denote the set of all size-$t$ relations. 
\end{definition}

\begin{definition}
    For a relation $R$, we use $\Dom(R)$ to denote the set
    \begin{equation}
        \Dom(R) = \{x: x \in [N], \exists y \ \text{s.t.} (x,y) \in R\},
    \end{equation}
    and $\Im(R)$ to denote the set
    \begin{equation}
        \Im(R) = \{y: y \in [N], \exists x \ \text{s.t.} (x,y) \in R\}.
    \end{equation}
    Note that while $R$ may be a multi-set, $\Dom(R)$ and $\Im(R)$ are ordinary sets, i.e., they will not have repeated elements.
\end{definition}

Each relation $R \in \calR$ is associated with a \emph{relation state} $\ket*{R}$, defined as follows.

\begin{notation}[Relation states]
\label{def:relation-states-repeated}
For a relation $R = \{(x_1,y_1),\dots,(x_t,y_t)\}$, define the corresponding relation state $\ket*{R}$ to be the state
\begin{align}
    \ket*{R} \coloneqq \frac{\sum_{\pi \in \sSym_t} \ket*{x_{\pi(1)},y_{\pi(1)},\dots,x_{\pi(t)},y_{\pi(t)}}}{\sqrt{t!\cdot  \sum_{(x, y) \in [N]^2} \num(R,(x,y))!}}.
\end{align}
where $\num(R,(x,y))$ denotes the number of times the tuple $(x,y)$ appears in $R$.
\end{notation}

An elementary counting argument yields the following result.

\begin{fact}
    For any relation $R \in \calR$, the state $\ket*{R}$ is a unit vector.
\end{fact}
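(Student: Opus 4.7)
The plan is to compute the squared norm of the unnormalized numerator directly and show that it matches the square of the normalization constant in the denominator. That is, I would verify
\begin{equation}
    \Big\lVert \sum_{\pi \in \sSym_t} \ket*{x_{\pi(1)},y_{\pi(1)},\dots,x_{\pi(t)},y_{\pi(t)}} \Big\rVert^2 \;=\; t! \cdot \prod_{(x,y) \in [N]^2} \num(R,(x,y))!,
\end{equation}
and conclude $\lVert \ket*{R} \rVert = 1$.

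The first step is to expand the squared norm as a double sum over pairs $(\pi,\sigma) \in \sSym_t \times \sSym_t$ of inner products of computational basis vectors on $(\BC^N)^{2t}$. Each such inner product is $0$ or $1$, equal to $1$ precisely when $(x_{\pi(i)},y_{\pi(i)}) = (x_{\sigma(i)},y_{\sigma(i)})$ for every $i \in [t]$. Substituting $\tau = \pi\sigma^{-1}$, this condition becomes $(x_{\tau(j)},y_{\tau(j)}) = (x_j,y_j)$ for every $j$, i.e., $\tau$ lies in the stabilizer subgroup $\mathrm{Stab}(R) \leq \sSym_t$ of the fixed sequence $((x_1,y_1),\dots,(x_t,y_t))$ under coordinate permutation.

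The main (and only nontrivial) step is to compute $\lvert \mathrm{Stab}(R) \rvert$. By a standard multiset orbit argument, a permutation $\tau$ stabilizes the sequence iff it maps the set of positions carrying a given tuple $(x,y) \in [N]^2$ among themselves. These positions can be permuted independently for distinct tuples, giving
\begin{equation}
    \lvert \mathrm{Stab}(R) \rvert \;=\; \prod_{(x,y) \in [N]^2} \num(R,(x,y))!.
\end{equation}
For each $\tau \in \mathrm{Stab}(R)$, there are exactly $t!$ ordered pairs $(\pi,\sigma) \in \sSym_t^2$ with $\pi\sigma^{-1} = \tau$ (one for each choice of $\sigma$), so the total number of contributing $(\pi,\sigma)$ pairs is $t! \cdot \prod_{(x,y)} \num(R,(x,y))!$. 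This matches the square of the denominator, completing the verification. No real obstacle arises; the argument is elementary counting, with the stabilizer computation being the one place that requires care.
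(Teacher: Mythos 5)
Your proof is correct and is precisely the ``elementary counting argument'' the paper alludes to without spelling out: expand the squared norm as a double sum over $\sSym_t \times \sSym_t$, reduce to counting the stabilizer of the ordered tuple, and note that the stabilizer factorizes over the distinct tuples, giving $\prod_{(x,y)} \num(R,(x,y))!$ elements, so the total count is $t!$ times that product. One small remark worth noting: the paper's display for $\ket*{R}$ in Notation~3 writes the denominator with $\sum_{(x,y)\in[N]^2}\num(R,(x,y))!$, but this must be a typo for $\prod_{(x,y)\in[N]^2}\num(R,(x,y))!$ --- exactly the quantity your computation produces, and the only normalization consistent with the paper's later simplification to $1/\sqrt{t!}$ when the tuples are distinct. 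You implicitly corrected this; good.
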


The relation states $\ket*{R}$ for $R \in \calR_t$ can also be viewed as the standard basis for the symmetric subspace of $(\mathbb{C}^{N^2})^{\otimes t}$. Note that this is only true because we allow for multi-set relations. Specifically, if $\Pi_{\ssym}^{N^2,t}$ denotes the projector onto the symmetric subspace of $(\mathbb{C}^{N^2})^{\otimes t}$, we have the equality
\begin{align}
    \Pi_{\ssym}^{N^2,t} = \sum_{R \in \calR_t} \ketbra{R}.
\end{align}
However, we will typically use the following notation to refer to this projector.

\begin{notation}
    For any integer $t \geq 0$, we define
    \begin{align}
        \Pi^{\calR}_{t} \coloneq \sum_{R \in \calR: \abs{R} = t} \ketbra*{R} = \Pi^{N^2,t}_{\ssym}.
    \end{align}
\end{notation}

\begin{notation}[Restricted sets of relations]
    Define the following restricted sets of relations:
    \begin{itemize}
        \item Let $\mathcal{R}_t^{\inj}$ be the set of all \emph{injective} relations, i.e., relations $R = \{(x_1,y_1),\dots,(x_t,y_t)\}$ of size $t$, where $(y_1,\dots,y_t) \in [N]^t_{\dist}$. Let $\mathcal{R}^{\inj} \coloneqq \cup_{t=0}^N \mathcal{R}_t^{\inj}$. 
        \item Let $\mathcal{R}_t^{\bij}$ be the set of all \emph{bijective} relations, i.e., relations $R = \{(x_1,y_1),\dots,(x_t,y_t)\}$ of size $t$, where $(x_1,\dots,x_t) \in [N]^t_{\dist}$ and $(y_1,\dots,y_t) \in [N]^t_{\dist}$. Let $\mathcal{R}^{\bij} \coloneqq \cup_{t=0}^N \mathcal{R}_t^{\bij}$. 
    \end{itemize}
\end{notation}
If the tuples in a relation $R = \{(x_1, y_1), \dots, (x_t, y_t)\}$ are distinct, i.e., $(x_i, y_i) \neq (x_j, y_j)$ for $i \neq j$, the normalization factor simplifies to $1/\sqrt{t!}$, i.e.,
\begin{align}
    \ket*{R} = \frac{1}{\sqrt{t!}} \sum_{\pi \in \sSym_t} \ket*{x_{\pi(1)}, y_{\pi(1)} , \dots, x_{\pi(t)}, y_{\pi(t)}}.
\end{align}
Note that any relation $R \in \calR^{\inj}$ or $R \in \calR^{\bij}$ satisfies this condition.

In both~\cref{part:standard,part:strong}, we will consider linear maps that send superpositions of $\ket*{R}$ for $R \in \calR_{t}$ to superpositions of $\ket*{R'}$ for $R' \in \calR_{t+1}$. This motivates the definition of \emph{variable-length registers}.

\subsubsection{Variable-length registers}
\label{prelim:variable-length-registers}

For every integer $t \geq 0$ let $\sR^{(t)}$ be a register associated with the Hilbert space $\calH_{\sR^{(t)}} \coloneqq (\mathbb{C}^{N} \otimes \mathbb{C}^{N})^{\otimes t}$. Let $\sR$ be a register corresponding to the infinite dimensional Hilbert space 
\begin{align}
    \calH_{\sR} \coloneqq \bigoplus_{t = 0}^\infty \calH_{\sR^{(t)}} = \bigoplus_{t = 0}^\infty (\mathbb{C}^{N} \otimes \mathbb{C}^{N})^{\otimes t}.
\end{align}
When $t = 0$, the space $(\mathbb{C}^{N} \otimes \mathbb{C}^{N})^{\otimes 0} = \mathbb{C}$ is a one-dimensional Hilbert space. Thus, $\calH_{\sR^{(t)}}$ is spanned by the states $\ket*{x_1,y_1,\dots,x_t,y_t}$ where $x_i,y_i \in [N]$. Note that the relation states $\ket*{R}$ for $R \in \calR_t$ span the symmetric subspace of $\calH_{\sR^{(t)}}$.

We will sometimes divide up the $\sR^{(t)}$ register into $\sR^{(t)} \coloneqq (\sR^{(t)}_{\sX},\sR^{(t)}_{\sY})$ where $\sR^{(t)}_{\sX}$ refers to the registers containing $\ket*{x_1,\dots,x_t}$ and $\sR^{(t)}_{\sY}$ refers to the registers containing $\ket*{y_1,\dots,y_t}$. We denote $\sR^{(t)}{\sX,i}$ as the register containing $\ket*{x_i}$ and $\sR^{(t)}{\sY,i}$ as the register containing $\ket*{y_i}$. Following our convention for defining the length/size of a relation $R$, we say that a state $\ket*{x_1,y_1,\dots,x_t,y_t}$ has length/size $t$. Two states of different lengths are orthogonal by definition, since $\calH_{\sR}$ is a direct sum $\bigoplus_{t=0}^\infty \calH_{\sR^{(t)}}$.

\begin{notation}[Extending fixed-length operators to variable-length]
    For any operator $O$ defined on the fixed-size Hilbert space $\calH_{\sR^{(t)}}$, we abuse notation by using $O$ to also refer to its extension on all of $\calH_{\sR}$. The extended operator is the direct sum of $O$ and the $0$ operator on $\calH_{\sR^{(t')}}$ for all $t' \neq t$. 
\end{notation}
Hence, if two operators $O_1$ and $O_2$ act on $\calH_{\sR^{(t)}}$ and $\calH_{\sR^{(t')}}$, respectively, then $O_1 + O_2$ is the sum of their extensions over all of $\calH_{\sR}$. We can now define the projector $\Pi^{\calR}$ that projects onto the span of all relation states.

\begin{notation} \label{not:pi-R-t-all-t}
    We define the projector
    \begin{align}
        \Pi^{\calR} \coloneqq \sum_{t = 0}^\infty \Pi^{\calR}_{t}
        = \sum_{R \in \calR} \ketbra*{R},
    \end{align}
    that projects onto the span of all relation states $\ket*{R}$ for all $R \in \calR$.
\end{notation}

Finally, we introduce the notion of variable-length tensor powers, which will be useful to describe applying an operator to each $\ket*{x_i,y_i}$ in a state $\ket*{x_1,y_1,\dots,x_t,y_t}$, in settings where $t$ is not explicitly known. 

\begin{notation}[Variable-length tensor powers]
For any unitary $U \in \mathcal{U}(N^2)$, let
\begin{align}
    U^{\otimes *} \coloneqq \sum_{t = 0}^{\infty} U^{\otimes t}
\end{align}
be a unitary that acts on the Hilbert space $\calH_{\sR}$.
\end{notation}

\subsubsection{Pairs of variable-length registers}

In~\cref{part:strong}, we will consider states of the form $\ket*{L}_{\gsL} \ket*{R}_{\gsR}$, where $\ket*{L}$ and $\ket*{R}$ are both relation states, and $\sL$ is another variable-length register defined analogously to $\sR$. Throughout~\cref{part:strong}, we will use the following definitions.

\begin{notation}[Fixed-length projectors] \label{not:fixed-leng-proj}
    For any integers $\ell,r \geq 0$, let $\Pi_{\ell,r}$ denote the projector acting on $\calH_{\sL} \otimes \calH_{\sR}$ that projects onto the fixed-length Hilbert space $\calH_{\sL^{(\ell)}} \otimes \calH_{\sR^{(r)}}$.
\end{notation}

\begin{notation}[Maximum-length projectors] \label{notation:pi-leq-t}
    For any integer $t \geq 0$, let $\Pi_{\leq t}$ denote the projector acting on $\calH_{\sL} \otimes \calH_{\sR}$ onto the Hilbert space $\bigoplus_{\ell,r \geq 0: \ell + r \leq t} \calH_{\sL^{(\ell)}} \otimes \calH_{\sR^{(r)}}$.
\end{notation}

\begin{notation}[Length-restricted operators]
\label{notation:length-restricted-ops}
For any operator $B$ that acts on the variable-length registers $\sR$ and $\sR$, let $B_{\ell,r} \coloneqq B \cdot \Pi_{\ell,r}$ denote the restriction of $B$ to input states where registers $\sR$ and $\sR$ have lengths $\ell$ and $r$. Let $B_{\leq t} \coloneqq B \cdot \Pi_{\leq t}$ denote the restriction of $B$ to inputs states where the combined length of $\sL$ and $\sR$ is at most $t$.
\end{notation}

Note that, with this notation, $(B{\leq t})^\dagger$ does not necessarily equal $(B^\dagger)_{\leq t}$. We adopt the convention that $B_{\leq t}^\dagger$ refers to $(B_{\leq t})^{\dagger}$.

\subsection{The Haar measure, unitary $t$-designs, and twirling channels}
\label{sec:haar-review}

\begin{definition}[Haar measure]
The Haar measure over the $n$-qubit unitary group $\calU(2^n)$ is the unique probability measure $\mu$ on $\calU(2^n)$ that is:
\begin{enumerate}
    \item Left-invariant: For any measurable set $S \subseteq \calU(2^n)$ and 
    any $V \in \calU(2^n)$, $\mu(VS) = \mu(S)$.
    \item Right-invariant: For any measurable set $S \subseteq \calU(2^n)$ and 
    any $V \in \calU(2^n)$, $\mu(SV) = \mu(S)$.
    \item Normalized: $\mu(\calU(2^n)) = 1$.
\end{enumerate}
The Haar measure provides a notion of uniform distribution over the unitary 
group.
\end{definition}

We will refer to the Haar measure as $\mu_{\mathsf{Haar}}$.

\begin{definition}[Unitary $t$-design]
\label{def:unitary-design}
A distribution $\frakD$ on $n$-qubit unitaries is a unitary $t$-design if
\begin{equation}
    \E_{U \sim \frakD} [U^{\otimes t} \otimes U^{\dagger,\otimes t}] = 
    \int_{\calU(2^n)} U^{\otimes t} \otimes U^{\dagger,\otimes t} d\mu(U),
\end{equation}
where $\mu$ is the Haar measure over the unitary group $\calU(2^n)$.
\end{definition}

\begin{notation}
    Define the equality projector
    \begin{equation} \label{eq:eq-projector}
    \Pi^{\mathsf{eq}} = \sum_{x \in [N]} \ketbra*{x} \otimes \ketbra*{x}.
    \end{equation}
\end{notation}

In the following, when we write $\mathbb{E}_{\psi}$ and $\mathbb{E}_{U}$ without any specified distribution, we always refer to the uniform distribution over pure states and the Haar measure over unitary groups, respectively. We will use the following standard fact about Haar-random states and the symmetric subspace.

\begin{fact}
\label{prop:sym-haar}
    The expectation over Haar measure satisfies
    \begin{align}
        \E_{\psi \gets \mathbb{C}^N} \ketbra*{\psi}^{\otimes 2} = \frac{\Pi_{\ssym}^{N,2}}{\Tr(\Pi_{\ssym}^{N,2})} = \frac{\Pi_{\ssym}^{N,2}}{\binom{N+1}{2}},
    \end{align}
    where $\Pi_{\ssym}^{N,k}$ is the projector onto the symmetric subspace of $(\mathbb{C}^N)^{\otimes k}$.
\end{fact}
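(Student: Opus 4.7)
The plan is to identify the operator $M \coloneqq \E_{\psi} \ketbra*{\psi}^{\otimes 2}$ by symmetry considerations, using the defining invariance properties of the Haar measure and a dimension-counting argument for the symmetric subspace. This is a textbook consequence of Schur–Weyl duality for two copies of $\mathbb{C}^N$, but it can be done in elementary terms.

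First I would observe that the uniform distribution on pure states in $\mathbb{C}^N$ can be generated by applying a Haar-random unitary $U \in \calU(N)$ to any fixed reference state, so by left-invariance of the Haar measure, for every $U \in \calU(N)$,
\begin{equation}
(U \otimes U) \cdot M \cdot (U^\dagger \otimes U^\dagger) = \E_{\psi} (U \ket*{\psi}\bra*{\psi} U^\dagger)^{\otimes 2} = M.
\end{equation}
Thus $M$ commutes with $U \otimes U$ for every unitary $U$. By Schur–Weyl duality, the commutant of $\{U \otimes U : U \in \calU(N)\}$ is spanned by the identity and the swap operator, or equivalently by the projectors $\Pi_{\ssym}^{N,2}$ and $\Pi_{\mathsf{anti}}^{N,2}$ onto the symmetric and antisymmetric subspaces of $\mathbb{C}^N \otimes \mathbb{C}^N$. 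So $M = \alpha \,\Pi_{\ssym}^{N,2} + \beta \,\Pi_{\mathsf{anti}}^{N,2}$ for some scalars $\alpha, \beta$.

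Next I would note that $\ket*{\psi}^{\otimes 2}$ is invariant under swapping the two tensor factors, hence lies in the symmetric subspace. Therefore $\ketbra*{\psi}^{\otimes 2}$ is supported on the symmetric subspace for every $\psi$, and this property is preserved by taking the expectation. Consequently $\beta = 0$ and $M = \alpha\, \Pi_{\ssym}^{N,2}$.

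Finally, taking the trace of both sides and using $\Tr(\ketbra*{\psi}^{\otimes 2}) = 1$ gives $1 = \alpha \cdot \Tr(\Pi_{\ssym}^{N,2})$. A short combinatorial count shows $\dim \mathrm{Sym}^2(\mathbb{C}^N) = \binom{N+1}{2}$ (an orthonormal basis is indexed by unordered pairs $\{i,j\}$ with $i,j \in [N]$, with appropriate normalization), so $\alpha = 1/\binom{N+1}{2}$, completing the identification. The only mild subtlety is the Schur–Weyl step, which I would either quote as standard or, if an elementary proof is preferred, replace by a direct calculation expanding $M$ in the basis $\{\ket*{i,j}\bra*{k,\ell}\}$ and using invariance under the diagonal $U \otimes U$ action with well-chosen $U$'s (permutations and phases) to force off-diagonal entries to vanish and diagonal entries within each symmetry class to be equal.
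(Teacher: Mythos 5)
Your proof is correct. Note that the paper itself states this as a standard fact and gives no proof, so there is nothing to compare against: your argument --- unitary invariance of $\E_{\psi}\ketbra*{\psi}^{\otimes 2}$ under conjugation by $U\otimes U$, support on the symmetric subspace (so the antisymmetric component vanishes), and trace normalization with $\Tr(\Pi_{\ssym}^{N,2})=\binom{N+1}{2}$ --- is exactly the standard derivation one would supply. As a minor simplification, you do not need the full commutant characterization from Schur--Weyl: since the operator is supported on the symmetric subspace and that subspace is an irreducible representation of the diagonal action $U\mapsto U^{\otimes 2}$, Schur's lemma already forces it to be a multiple of $\Pi_{\ssym}^{N,2}$ there.
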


We will use the following elementary claim about unitary $2$-designs in~\cref{part:standard,part:strong}.

\begin{claim}[Standard twirling]
\label{claim:equality-clifford-twirl}
    For any $n$-qubit unitary $2$-design $\frakD$,
    \begin{align}
        \E_{U\gets \frakD} \Big[ (U \otimes U)^\dagger \cdot \Pi^{\mathsf{eq}} \cdot (U \otimes U)\Big] = \frac{2}{N+1} \cdot \Pi_{\ssym}^{N, 2}.
    \end{align}
\end{claim}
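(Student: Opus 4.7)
The plan is to first use the $2$-design property to reduce the expectation to a Haar expectation, and then identify the resulting twirled operator via Schur--Weyl duality. Note that for any operator $X$ acting on $(\mathbb{C}^N)^{\otimes 2}$, the expectation $\E_{U\sim \frakD}[(U\otimes U)^\dagger \cdot X \cdot (U\otimes U)]$ depends only on the second moments of $U$, i.e., on $\E[U^{\otimes 2} \otimes (U^\dagger)^{\otimes 2}]$. By~\cref{def:unitary-design} this expectation agrees with the corresponding Haar average, so it suffices to prove the identity with $\frakD = \mu_{\mathsf{Haar}}$.

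Next I would invoke the standard consequence of Schur--Weyl duality for $(\mathbb{C}^N)^{\otimes 2}$: the commutant of $\{U^{\otimes 2} : U \in \calU(N)\}$ is spanned by $\Pi_{\ssym}^{N,2}$ and $\Pi_{\mathsf{anti}}^{N,2}$. Consequently, for any operator $X$, the Haar twirl can be written as
\begin{equation}
    \E_{U} \bigl[(U\otimes U)^\dagger \cdot X \cdot (U\otimes U)\bigr] = \frac{\Tr\bigl(\Pi_{\ssym}^{N,2}\, X\bigr)}{\Tr\bigl(\Pi_{\ssym}^{N,2}\bigr)}\cdot \Pi_{\ssym}^{N,2} + \frac{\Tr\bigl(\Pi_{\mathsf{anti}}^{N,2}\, X\bigr)}{\Tr\bigl(\Pi_{\mathsf{anti}}^{N,2}\bigr)}\cdot \Pi_{\mathsf{anti}}^{N,2}.
\end{equation}
I would either cite this as a standard fact or derive it in one line by noting that the Haar twirl is the orthogonal projector (in Hilbert--Schmidt inner product) onto the commutant, whose orthonormal basis consists of the two normalized projectors above.

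Finally, I would specialize to $X = \Pi^{\mathsf{eq}} = \sum_{x \in [N]} \ketbra*{x}\otimes \ketbra*{x}$. Each basis vector $\ket*{x}\ket*{x}$ lies in the symmetric subspace, so $\Pi^{\mathsf{eq}} \cdot \Pi_{\mathsf{anti}}^{N,2} = 0$ and $\Pi^{\mathsf{eq}} \cdot \Pi_{\ssym}^{N,2} = \Pi^{\mathsf{eq}}$. Hence $\Tr(\Pi_{\mathsf{anti}}^{N,2}\, \Pi^{\mathsf{eq}}) = 0$ and $\Tr(\Pi_{\ssym}^{N,2}\, \Pi^{\mathsf{eq}}) = \Tr(\Pi^{\mathsf{eq}}) = N$. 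Combined with $\Tr(\Pi_{\ssym}^{N,2}) = \binom{N+1}{2}$, this gives the coefficient $\frac{N}{\binom{N+1}{2}} = \frac{2}{N+1}$, yielding the claimed identity.

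There is no real obstacle: the only nontrivial ingredient is the Schur--Weyl description of the commutant, which is completely standard. The calculation essentially reduces to observing that $\Pi^{\mathsf{eq}}$ is a symmetric-subspace operator and computing two traces.
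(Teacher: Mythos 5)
Your proof is correct. You take a different but equally standard route from the paper. You invoke the full Schur--Weyl decomposition: the Haar twirl is the orthogonal projection onto the commutant of $\{U^{\otimes 2}\}$, spanned by $\Pi_{\ssym}^{N,2}$ and $\Pi_{\mathsf{anti}}^{N,2}$, then specialize to $X = \Pi^{\mathsf{eq}}$ and observe that the antisymmetric component vanishes because $\Pi^{\mathsf{eq}}$ is supported entirely on the symmetric subspace. The paper instead avoids the general twirl formula by exploiting the specific rank-one tensor structure of $\Pi^{\mathsf{eq}}$: it writes $\Pi^{\mathsf{eq}} = \sum_x \ketbra*{x} \otimes \ketbra*{x}$, conjugates term by term, notes that $U^\dagger \ket*{x}$ is a Haar-random state, and reduces directly to the identity $\E_\psi \ketbra*{\psi}^{\otimes 2} = \Pi_{\ssym}^{N,2}/\binom{N+1}{2}$ (Fact~\ref{prop:sym-haar}). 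Your approach is more general --- it would immediately handle any $X$, not just $\Pi^{\mathsf{eq}}$ --- while the paper's is more elementary in the sense that it only needs the single fact about the second moment of a Haar-random pure state rather than the full commutant characterization. Both land on the same trace computation $\Tr(\Pi_{\ssym}^{N,2}) = \binom{N+1}{2}$, and both are complete.
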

\begin{proof}
    \begin{align}
        \E_{U \gets \frakD} \Big[ (U^\dagger \otimes U^\dagger) \cdot \Pi^{\mathsf{eq}} \cdot (U \otimes U)\Big] &= \E_{U \gets \frakD} \sum_{x \in [N]} U^\dagger \ketbra*{x} U \otimes U^\dagger \ketbra*{x} U \tag{definition of $\Pi^{\mathsf{eq}}$}\\
        &= \E_{U} \sum_{x \in [N]} U^\dagger \ketbra*{x} U \otimes U^\dagger \ketbra*{x} U \tag{$\frakD$ is a $2$-design}\\
        &= N \cdot \E_{\psi} \ketbra*{\psi} \otimes \ketbra*{\psi} \tag{$U^\dagger \ket*{x}$ is a Haar-random state}\\
        &= \frac{2}{N+1} \cdot \Pi_{\ssym}^{N, 2} \tag{\cref{prop:sym-haar}}.
    \end{align}
\end{proof}

From the above claim, we immediately obtain the following lemma, which was also also used by~\cite{metger2024simple} to construct \emph{non-adaptive} PRUs.

\begin{lemma}[Twirling into the distinct subspace] \label{lem:almost-distinct-on-X}  
    Given two integers $n, t > 0$.
    Define the distinct subspace projector acting on $nt$ qubits as follows,
    \begin{equation}
        \Pi^{\dist} \coloneq \sum_{(x_1, \ldots, x_t) \in [N]_{\dist}^t} \ketbra*{x_1} \otimes \ldots \otimes \ketbra*{x_t}.
    \end{equation}
    For any $n$-qubit unitary $2$-design $\frakD$ and any state $\rho$ on at least $nt$ qubits, we have
    \begin{equation}
    \label{eq:pi-dist-rho-D}
    \Tr(\E_{ C \gets \frakD} (\Pi^{\dist} \otimes \Id) \cdot (C^{\otimes t} \otimes \Id) \cdot \rho \cdot (C^{\dagger, \otimes t} \otimes \Id) \cdot (\Pi^{\dist} \otimes \Id) ) \geq 1 - \frac{t(t-1)}{N+1}.
    \end{equation}
\end{lemma}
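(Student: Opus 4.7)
The plan is to bound the complementary ``non-distinct'' projector $\Pi^{\neq\dist} \coloneqq \Id - \Pi^{\dist}$ on the first $t$ registers by a union bound over pairs, and then apply the standard twirling identity (\cref{claim:equality-clifford-twirl}) separately on each pair. Concretely, I would start by writing
\begin{equation*}
\Pi^{\neq\dist} \preceq \sum_{1 \leq i < j \leq t} \Pi^{\eq}_{i,j},
\end{equation*}
where $\Pi^{\eq}_{i,j}$ denotes the equality projector $\Pi^{\eq}$ acting on the $i$-th and $j$-th $n$-qubit registers (and trivially elsewhere). This is just the usual fact that if the $t$-tuple fails to be distinct, then at least one pair collides.

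Next, I would rewrite the quantity to be bounded as $1$ minus the trace with $\Pi^{\neq\dist} \otimes \Id$, using cyclicity to move $C^{\dagger,\otimes t}$ next to the projector:
\begin{equation*}
\Tr\bigl( \E_{C \gets \frakD} (\Pi^{\neq\dist} \otimes \Id)\, (C^{\otimes t} \otimes \Id)\, \rho\, (C^{\dagger,\otimes t} \otimes \Id)\bigr) = \Tr\Bigl( \E_{C \gets \frakD} \bigl[(C^{\dagger,\otimes t} \otimes \Id)(\Pi^{\neq\dist} \otimes \Id)(C^{\otimes t} \otimes \Id)\bigr]\, \rho \Bigr).
\end{equation*}
Invoking the union bound and the fact that $\Pi^{\eq}_{i,j}$ acts trivially on the other $t-2$ registers, the expectation reduces on registers $i,j$ to $\E_C (C^{\dagger} \otimes C^{\dagger}) \Pi^{\eq} (C \otimes C)$, which by \cref{claim:equality-clifford-twirl} equals $\tfrac{2}{N+1} \Pi_{\ssym}^{N,2}$.

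Finally, since $\Pi_{\ssym}^{N,2} \preceq \Id$, each pair contributes at most $\tfrac{2}{N+1} \cdot \Tr(\rho) = \tfrac{2}{N+1}$ to the trace, and summing over the $\binom{t}{2}$ pairs yields the $\tfrac{t(t-1)}{N+1}$ bound, establishing the lemma. There is no real obstacle here: the proof is essentially a two-line application of the union bound composed with the already-established twirling identity, and the only care needed is in cleanly arguing that twirling each pair independently is valid because $\E_C[C^{\otimes t}\,(\cdot)\,C^{\dagger,\otimes t}]$ restricted to the two-register observable $\Pi^{\eq}_{i,j}$ depends only on the marginal twirl $\E_C[(C \otimes C)\,(\cdot)\,(C \otimes C)^\dagger]$.
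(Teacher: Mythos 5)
Your proposal is correct and matches the paper's proof essentially step for step: both bound $\Id - \Pi^{\dist}$ by the union $\sum_{i<j}\Pi^{\eq}_{i,j}$ of pairwise equality projectors, reduce each pair to the two-register twirl via \cref{claim:equality-clifford-twirl}, and use $\Pi^{N,2}_{\ssym}\preceq\Id$ together with $\Tr(\rho)=1$ to get the $\binom{t}{2}\cdot\tfrac{2}{N+1}=\tfrac{t(t-1)}{N+1}$ bound. No meaningful difference in route or level of care.
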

\begin{proof}
From the definition of the distinct subspace projector, we have
\begin{equation}
    \Id - \Pi^{\dist} = \sum_{(x_1, \ldots, x_t) \in [N]^t \setminus [N]_{\dist}^t} \ketbra*{x_1, \ldots, x_t}.
\end{equation}
Because for any $(x_1, \ldots, x_t)\in [N]^t \setminus [N]_{\dist}^t$, there exists $i\neq j$, such that $x_i = x_j$, we have
    \begin{equation}
        \sum_{(x_1, \ldots, x_t) \in [N]^t \setminus [N]_{\dist}^t} \ketbra*{x_1, \ldots, x_t} \preceq \sum_{1 \leq i < j \leq t} \Pi^{\mathsf{eq}}_{\darkgray{\sX_i}, \darkgray{\sX_j}},
    \end{equation}
    where $\preceq$ here denotes the PSD order and $\Pi^{\mathsf{eq}}_{\darkgray{\sX_i}, \darkgray{\sX_j}}$ is the equality projector in Eq.~\eqref{eq:eq-projector} on the $i$-th and $j$-th $n$-qubit register $\sX_{i}, \sX_{j}$. 
This implies the following:
    \begin{align} \label{eq:1minusdist}
        &1 - \Tr(\E_{ C \gets \frakD} (\Pi^{\dist} \otimes \Id) \cdot (C^{\otimes t} \otimes \Id) \cdot \rho \cdot (C^{\dagger, \otimes t} \otimes \Id) \cdot (\Pi^{\dist} \otimes \Id) )\\
        &= 1 - \Tr(\E_{ C \gets \frakD} (\Pi^{\dist} \otimes \Id) \cdot (C^{\otimes t} \otimes \Id) \cdot \rho \cdot (C^{\dagger, \otimes t} \otimes \Id) )\\
        &= \Tr(\E_{ C \gets \frakD} \left( \sum_{(x_1, \ldots, x_t) \in [N]^t \setminus [N]_{\dist}^t} \ketbra*{x_1, \ldots, x_t} \otimes \Id \right) \cdot (C^{\otimes t} \otimes \Id) \cdot \rho \cdot (C^{\dagger, \otimes t} \otimes \Id) ) \\
        &\leq \sum_{1 \leq i < j \leq t} \E_{ C \gets \frakD} \Tr( (\Pi^{\mathsf{eq}}_{\darkgray{\sX_i}, \darkgray{\sX_j}} \otimes \Id)  \cdot (C^{\otimes t} \otimes \Id) \cdot \rho \cdot (C^{\dagger, \otimes t} \otimes \Id) )\\
        &= \sum_{1 \leq i < j \leq t} \E_{ C \gets \frakD} \Tr( \Pi^{\mathsf{eq}}_{\darkgray{\sX_i}, \darkgray{\sX_j}}  \cdot C^{\otimes 2} \cdot \rho_{\darkgray{\sX_i}, \darkgray{\sX_j}} \cdot C^{\dagger, \otimes 2} ) \tag{where $\rho_{\darkgray{\sX_i}, \darkgray{\sX_j}} \coloneq \Tr_{-\sX_i, \sX_j}(\rho)$} \\
        &= \sum_{1 \leq i < j \leq t} \frac{2}{N+1} \Tr(\Pi_{\ssym}^{N, 2} \cdot \rho_{\darkgray{\sX_i}, \darkgray{\sX_j}}) \leq \sum_{1 \leq i < j \leq t} \frac{2}{N+1} = \frac{t(t-1)}{N}.
    \end{align}
    This completes the proof.
\end{proof}

The following claim will only be used in~\cref{part:strong}.

\begin{notation}
    Let
    \begin{align}
        \ket*{\EPR_N} \coloneqq \frac{1}{\sqrt{N}} \sum_{x \in [N]} \ket*{x} \ket*{x}.
    \end{align}
\end{notation}

\begin{claim}[Mixed twirling]
\label{claim:equality-clifford-conjugated-twirl}
    For any $n$-qubit unitary $2$-design $\frakD$,
    \begin{align}
        \E_{U\gets \frakD} \Big[ (U \otimes \overline{U})^\dagger \cdot \Pi^{\mathsf{eq}} \cdot (U \otimes \overline{U})\Big] = \ketbra*{\EPR_N} + \frac{1}{N+1}(\Id - \ketbra*{\EPR_N}).
    \end{align}
\end{claim}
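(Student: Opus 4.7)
The plan is to follow the direct computational approach used for the standard twirling claim (\cref{claim:equality-clifford-twirl}). By the $2$-design property of $\frakD$, I can immediately replace $\E_{U \gets \frakD}$ with the Haar average $\E_U$. The key identity is
\begin{align}
    (U \otimes \overline{U})^\dagger \ket*{x} \otimes \ket*{x} = (U^\dagger \ket*{x}) \otimes (\overline{U}^\dagger \ket*{x}) = \ket*{\psi} \otimes \ket*{\overline{\psi}},
\end{align}
where $\ket*{\psi} \coloneqq U^\dagger \ket*{x}$ is a Haar-random pure state and $\ket*{\overline{\psi}}$ denotes its entry-wise complex conjugate. Summing over $x$ then reduces the left-hand side to $N \cdot \E_{\psi} \ketbra*{\psi} \otimes \ketbra*{\overline{\psi}}$.

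Next, I compute $\E_{\psi} \ketbra*{\psi} \otimes \ketbra*{\overline{\psi}}$ entry-wise. Its $(i,j),(k,l)$ matrix element is $\E_{\psi} [\psi_i \psi_l \psi_j^* \psi_k^*]$, which by \cref{prop:sym-haar} together with $\Pi_{\ssym}^{N,2} = \tfrac{1}{2}(\Id + \SWAP)$ equals $\tfrac{1}{N(N+1)}(\delta_{ij}\delta_{lk} + \delta_{ik}\delta_{lj})$. Reassembling, the $\delta_{ij}\delta_{lk}$ term contributes $\sum_{i,k} \ket*{i,i}\bra*{k,k} = N \ketbra*{\EPR_N}$, while the $\delta_{ik}\delta_{lj}$ term contributes $\sum_{i,j} \ket*{i,j}\bra*{i,j} = \Id$, yielding
\begin{align}
    \E_{\psi} \ketbra*{\psi} \otimes \ketbra*{\overline{\psi}} = \frac{1}{N+1} \ketbra*{\EPR_N} + \frac{1}{N(N+1)} \Id.
\end{align}

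Multiplying by $N$ gives $\tfrac{N}{N+1} \ketbra*{\EPR_N} + \tfrac{1}{N+1}\Id$, which I then rewrite as $\ketbra*{\EPR_N} + \tfrac{1}{N+1}(\Id - \ketbra*{\EPR_N})$ to match the claim. I expect the only subtle point is verifying the identity $(U \otimes \overline{U})^\dagger \ket*{x,x} = \ket*{\psi, \overline{\psi}}$: one must check that $(\overline{U})^\dagger = U^T$ so that $(\overline{U})^\dagger \ket*{x} = \overline{U^\dagger \ket*{x}} = \ket*{\overline{\psi}}$. Once this conjugation bookkeeping is handled, the remaining fourth-moment computation parallels the proof of \cref{claim:equality-clifford-twirl} almost verbatim and presents no further obstacle.
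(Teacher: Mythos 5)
Your computation is correct, but you take a genuinely different route from the paper. The paper's proof never computes a fourth moment from scratch: it observes the identity $(U \otimes \overline{U})^\dagger X (U \otimes \overline{U}) = \big((U\otimes U)^\dagger X^{T_B} (U\otimes U)\big)^{T_B}$ and that $\Pi^{\eq}$ is invariant under the partial transpose $T_B$, which immediately reduces the problem to the already-proved \cref{claim:equality-clifford-twirl}; the remaining work is just to expand $\big(\tfrac{2}{N+1}\Pi^{N,2}_{\ssym}\big)^{T_B}$ in the standard basis and collect the $\ketbra*{\EPR_N}$ and $\Id$ pieces. Your proof instead goes back to first principles: it rewrites the left-hand side as $N\cdot\E_\psi\,\ketbra*{\psi}\otimes\ketbra*{\overline\psi}$ and evaluates the fourth Haar moment $\E_\psi[\psi_i\psi_l\psi_j^*\psi_k^*]=\tfrac{\delta_{ij}\delta_{lk}+\delta_{ik}\delta_{lj}}{N(N+1)}$ entry by entry. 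Both arguments ultimately rest on the same fact, $\E_\psi\ketbra*\psi^{\otimes 2}=\Pi^{N,2}_{\ssym}/\binom{N+1}{2}$, but the paper's version is more modular (it literally cites \cref{claim:equality-clifford-twirl} and transposes the answer) while yours is more self-contained and makes it transparent that the $\delta_{ij}\delta_{lk}$ Wick contraction is precisely what produces the $\ketbra*{\EPR_N}$ piece. Your conjugation bookkeeping $(\overline U)^\dagger\ket*{x}=U^T\ket*{x}=\overline{U^\dagger\ket*{x}}$ is also correct, and the index convention you chose (the $(i,j),(k,l)$ entry is $\psi_i\psi_l\psi_j^*\psi_k^*$) is consistent with the reassembly step, so there is no gap.
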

\begin{proof}
    Label the registers that $U$ and $\overline{U}$ act on as $A$ and $B$ respectively. For any operator $X$ acting on $A,B$, define the partial transpose as
    \begin{align}
        X^{T_B} = \Big( \sum_{i,j,k,\ell} X_{ijkl} \ketbra*{i}{j}_A \otimes \ketbra*{k}{\ell}_B \Big)^{T_B} =  \sum_{i,j,k,\ell} X_{ijkl} \ketbra*{i}{j}_A \otimes \ketbra*{\ell}{k}_B.
    \end{align}
    We will use the identity
    \begin{align}
        (U \otimes \overline{U})^\dagger \cdot X \cdot (U \otimes \overline{U}) = \Big( (U \otimes U)^\dagger \cdot X^{T_B} \cdot (U \otimes U) \Big)^{T_B}.
    \end{align}
    Since $(\Pi^{\eq})^{T_B} = \Pi^{\eq}$,
    \begin{align}
        &\E_{U\gets \frakD} (U \otimes \overline{U})^\dagger \cdot \Pi^{\mathsf{eq}} \cdot (U \otimes \overline{U})\\
        &= \Big( \E_{U \gets \frakD} (U \otimes U)^\dagger \cdot \Pi^{\mathsf{eq}} \cdot (U \otimes U)\Big)^{T_B}\\
        &= \Big( \frac{2}{N+1} \cdot \Pi^{N,2}_{\ssym} \Big)^{T_B} \tag{by~\cref{claim:equality-clifford-twirl}}\\
        &= \frac{2}{N+1} \cdot \Bigg( \sum_{x \in [N]} \ketbra*{xx}{xx} + \sum_{x,y \in [N], x<y} \Big(\frac{\ket*{xy} + \ket*{yx}}{\sqrt{2}} \Big) \Big(\frac{\bra*{xy} + \bra*{yx}}{\sqrt{2}} \Big) \Bigg)^{T_B}\\
        &= \frac{2}{N+1} \cdot \Bigg( \sum_{x \in [N]} \ketbra*{xx}{xx} + \frac{1}{2} \sum_{x,y \in [N], x<y} \Big(\ketbra*{xy} + \ketbra*{xy}{yx} + \ketbra*{yx}{xy} + \ketbra*{yx} \Big) \Bigg)^{T_B}\\
        &= \frac{2}{N+1} \cdot \Bigg( \sum_{x \in [N]} \ketbra*{xx}{xx} + \frac{1}{2} \sum_{x,y \in [N], x<y} \Big(\ketbra*{xy} + \ketbra*{xx}{yy} + \ketbra*{yy}{xx} + \ketbra*{yx} \Big) \Bigg)\\
        &= \frac{2}{N+1} \cdot \Bigg( \frac{1}{2}\sum_{x,y \in [N]} \ketbra*{xx}{yy} + \frac{1}{2}\sum_{x,y \in [N]} \ketbra*{xy}\Bigg) \\
        &= \frac{1}{N+1} \cdot \Id + \frac{N}{N+1} \ketbra*{\EPR_N} \\
        &= \ketbra*{\EPR_N} + \frac{1}{N+1}(\Id - \ketbra*{\EPR_N}).
    \end{align}
    This completes the proof.
\end{proof}

\subsection{Oracle adversaries}

We first define oracle adversaries that make only forward queries to an $n$-qubit unitary oracle $\calO$. This definition will be used exclusively in~\cref{part:standard}.

\begin{definition}[Oracle adversaries with forward queries, used in~\cref{part:standard}]
    A $t$-query oracle adversary $\Adv$ that makes only forward queries is parameterized by a sequence of $(n+m)$-qubit unitaries $(A_1,\dots,A_t)$, which act on registers $(\sA,\sB)$, where $\sA$ is the $n$-qubit query register and $\sB$ is an $m$-qubit ancilla. We assume without loss of generality that the adversary's initial state is $\ket*{0^{n+m}}_{\gsA \gsB}$. The state of the algorithm after $t$ queries to $\calO$ is
    \begin{align}
        \ket*{\Adv_t^{\calO}}_{\gsA \gsB} \coloneqq \prod_{i = 1}^t \Big( \calO_{\gsA} \cdot A_{i,\gsA \gsB} \Big) \ket*{0^{n+m}}_{\gsA \gsB}.
    \end{align}
\end{definition} 

We also define an oracle adversary that can make both forward and inverse queries to an $n$-qubit unitary oracle $\calO$. This definition will be used exclusively in~\cref{part:strong}.

\begin{definition}[Oracle adversaries with forward and inverse queries, used in~\cref{part:strong}]
    A $t$-query oracle adversary $\Adv$ that makes both forward and inverse queries is parameterized by 
    \begin{itemize}
        \item a sequence of $(n+m)$-qubit unitaries $(A_1,\dots,A_t)$, which act on registers $(\sA,\sB)$, where $\sA$ is the $n$-qubit query register and $\sB$ is an $m$-qubit ancilla, and
        \item a sequence of bits $(b_1,\dots,b_t)$ where $b_i = 0$ means that the adversary's $i$th oracle query is to $\calO$, and $b_i = 1$ means that query is to $\calO^\dagger$.
    \end{itemize}
    We assume without loss of generality that the adversary's initial state is $\ket*{0^{n+m}}_{\gsA \gsB}$. The state of the algorithm after $t$ queries to $\calO$ is
    \begin{align}
        \ket*{\Adv_t^{\calO}}_{\gsA \gsB} \coloneqq \prod_{i = 1}^t \Bigg( \Big( (1-b_i) \cdot \calO_{\gsA} + b_i \cdot \calO^\dagger_{\gsA} \Big) \cdot A_{i,\gsA \gsB} \Bigg) \ket*{0^{n+m}}_{\gsA \gsB}.
    \end{align}
\end{definition} 

\subsection{Pseudorandom unitaries}

\begin{definition}[pseudorandom unitaries] \label{def: PRU-t(lambda)}
We say $\{ \mathcal{U}_n \}_{n \in \mathbb{N}}$ is a secure PRU if, for all $n \in \mathbb{N}$, $\mathcal{U}_n = \{ U_{k} \}_{k \in \mathcal{K}_{n}}$ is a set of $n$-qubit unitaries where $\mathcal{K}_{n}$ denotes the keyspace, satisfying the following:
\begin{itemize}
    \item \textbf{Efficient computation:} There exists a $\mathrm{poly}(n)$-time quantum algorithm that implements the $n$-qubit unitary $U_{k}$ for all $k \in \mathcal{K}_{n}$.
    \item \textbf{Indistinguishability from Haar:} For any oracle adversary $\calA$ that runs in time $\poly(n)$ (the runtime is the total number of gates that $\calA$ uses, counting oracle gates as $1$), and measures a two-outcome observable $D_{\calA}$ with eigenvalues $\{0, 1\}$ after the queries, we have
    \begin{equation} \label{eq:distinguishability-adv}
        \left| \E_{\mathcal{O} \leftarrow \mathcal{U}_n}  \Tr\left( \, D_{\calA} \cdot \ketbra*{\calA^{\calO}}_{\gsA \gsB} \, \right) - \E_{\mathcal{O} \sim \mathsf{Haar}} \Tr\left( \, D_{\calA} \cdot \ketbra*{\calA^{\calO}}_{\gsA \gsB} \, \right) \right| \leq \mathsf{negl}(n),
    \end{equation}
    where $\mathsf{negl}(n)$ is any function that is $o(1/n^c)$ for all $c >0$. 
\end{itemize}
A \textbf{standard PRU} (i.e., the original~\cite{ji2018pseudorandom} notion) is one where indistinguishability holds against oracle adversaries that only make forward queries to $\calO$. A \textbf{strong PRU} is one where indistinguishability holds against oracle adversaries that make both forward and inverse queries to $\calO$.
\end{definition}

\subsection{Useful lemmas}

The following lemma will be used in~\cref{part:standard} to bound the distance between a pair of mixed states who purifications are related by a projection that acts only on the purifying register. 

\begin{lemma}
\label{lemma:gentle}
    Let $\rho_{\gsC \gsD}$ be a density matrix on registers $\sC, \sD$ and let $\Pi_{\gsD}$ be a projector that acts on register $\sD$. Then
    \begin{align}
        \norm{\Tr_{\sD}(\rho_{\gsC \gsD}) - \Tr_{\sD}(\Pi_{\gsD} \cdot \rho_{\gsC \gsD} \cdot \Pi_{\gsD})}_1 = 1- \Tr(\Pi_{\gsD} \cdot \rho_{\gsC \gsD}).
    \end{align}
\end{lemma}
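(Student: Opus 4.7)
The plan is to show that the operator $\Tr_{\sD}(\rho_{\gsC \gsD}) - \Tr_{\sD}(\Pi_{\gsD} \cdot \rho_{\gsC \gsD} \cdot \Pi_{\gsD})$ is positive semidefinite. Once this is established, its trace norm equals its trace, and the trace calculation is immediate: since $\Pi_{\gsD}^2 = \Pi_{\gsD}$ and the trace is cyclic, $\Tr(\Pi_{\gsD} \cdot \rho_{\gsC \gsD} \cdot \Pi_{\gsD}) = \Tr(\Pi_{\gsD} \cdot \rho_{\gsC \gsD})$, so the trace of the difference is $1 - \Tr(\Pi_{\gsD} \cdot \rho_{\gsC \gsD})$, matching the right-hand side of the claim.

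The core step is the PSD assertion. The key observation I would use is that because $\Pi_{\gsD}$ acts only on register $\sD$, partial trace over $\sD$ behaves cyclically with respect to $\Pi_{\gsD}$: for any operator $X$ on $\calH_{\sC} \otimes \calH_{\sD}$, one has
\begin{equation}
    \Tr_{\sD}(\Pi_{\gsD} \cdot X) = \Tr_{\sD}(X \cdot \Pi_{\gsD}).
\end{equation}
This is a one-line check in a basis of $\calH_{\sD}$ that diagonalizes $\Pi_{\gsD}$. Applying this identity twice yields
\begin{equation}
    \Tr_{\sD}(\Pi_{\gsD} \cdot \rho_{\gsC \gsD} \cdot \Pi_{\gsD}) = \Tr_{\sD}(\Pi_{\gsD}^2 \cdot \rho_{\gsC \gsD}) = \Tr_{\sD}(\Pi_{\gsD} \cdot \rho_{\gsC \gsD}),
\end{equation}
so that
\begin{equation}
    \Tr_{\sD}(\rho_{\gsC \gsD}) - \Tr_{\sD}(\Pi_{\gsD} \cdot \rho_{\gsC \gsD} \cdot \Pi_{\gsD}) = \Tr_{\sD}((\Id - \Pi_{\gsD}) \cdot \rho_{\gsC \gsD}) = \Tr_{\sD}((\Id - \Pi_{\gsD}) \cdot \rho_{\gsC \gsD} \cdot (\Id - \Pi_{\gsD})),
\end{equation}
where the last equality uses the same cyclicity identity applied to the projector $\Id - \Pi_{\gsD}$. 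The right-hand side is the partial trace of a positive semidefinite operator and is therefore PSD.

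There is no real obstacle here: the statement is a clean identity that reduces to two elementary facts, namely cyclicity of partial trace for operators supported on the traced register, and the preservation of positivity under partial trace. The only subtlety is the justification of the cyclicity step, which is handled by the basis argument sketched above.
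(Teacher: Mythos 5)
Your proposal is correct and takes essentially the same route as the paper: both use the cyclicity of the partial trace with respect to an operator supported on the traced-out register to rewrite the difference as $\Tr_{\sD}\bigl((\Id - \Pi_{\gsD}) \cdot \rho_{\gsC \gsD} \cdot (\Id - \Pi_{\gsD})\bigr)$, observe that this is PSD so its trace norm equals its trace, and simplify. The paper first writes $\Tr_{\sD}(\rho)$ as the sum of the two "sandwiched" pieces and then subtracts, whereas you go directly to the difference, but that is only a cosmetic reordering.
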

\begin{proof}
    We can decompose $\Tr_{\sD}(\rho_{\gsC \gsD})$ as follows:
    \begin{align}
        \Tr_{\sD}(\rho_{\gsC \gsD}) &= \Tr_{\sD}(\rho_{\gsC \gsD} \cdot \Pi_{\gsD}) + \Tr_{\sD}(\rho_{\gsC \gsD} \cdot (\Id - \Pi_{\gsD}))\\
        &= \Tr_{\sD}(\Pi_{\gsD} \cdot \rho_{\gsC \gsD} \cdot \Pi_{\gsD}) + \Tr_{\sD}((\Id - \Pi_{\gsD}) \cdot \rho_{\gsC \gsD} \cdot (\Id - \Pi_{\gsD})) \label{eq:cyclic-trace-lemma}
    \end{align}
    where the second equality uses the fact that $\Pi_{\gsD} = \Id_{\gsC} \otimes \Pi'_{\gsD}$, which allows us to invoke the cyclic property of $\Tr_{\sD}$. Using~\cref{eq:cyclic-trace-lemma}, we have
    \begin{align}
        &\norm{\Tr_{\sD}(\rho_{\gsC \gsD}) - \Tr_{\sD}(\Pi_{\gsD} \cdot \rho_{\gsC \gsD} \cdot \Pi_{\gsD})}_1\\
        &= \norm{\Tr_{\sD}((\Id - \Pi_{\gsD}) \cdot \rho_{\gsC \gsD} \cdot (\Id - \Pi_{\gsD}))}_1\\
        &= \Tr((\Id - \Pi_{\gsD}) \cdot \rho_{\gsC \gsD} \cdot (\Id - \Pi_{\gsD})) \tag{since $\norm{M}_1 = \Tr(M)$ for PSD $M$}\\
        &= \Tr((\Id - \Pi_{\gsD}) \cdot \rho_{\gsC \gsD})\\
        &= 1 - \Tr(\Pi_{\gsD} \cdot \rho_{\gsC \gsD}).
    \end{align}
\end{proof}

We will use the following ``sequential'' gentle measurement lemma in~\cref{part:strong}.

\begin{lemma}[sequential gentle measurement]  \label{lem:seq-gentleM-pure}
    Let $\ket*{\psi}$ be a normalized state, $P_1,\dots,P_t$ be projectors, and $U_1,\dots,U_t$ be unitaries.
    \begin{align}
        \norm{U_t \ldots U_1 \ket*{\psi} -  P_t U_t \ldots P_{1} U_1 \ket*{\psi}}_2 \leq t \sqrt{1 - \norm{P_t U_t \ldots P_{1} U_1 \ket*{\psi}}_2^2}.
    \end{align}
\end{lemma}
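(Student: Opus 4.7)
The plan is to use a hybrid argument that interpolates between $U_t\cdots U_1\ket{\psi}$ and $P_t U_t \cdots P_1 U_1 \ket{\psi}$ by inserting the projectors one at a time, then bound each incremental error by the total loss of norm.

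First I would define, for each $k \in \{0,1,\dots,t\}$, the partial states
\[
\ket{\psi_k} \coloneqq P_k U_k \cdots P_1 U_1 \ket{\psi}, \qquad \ket{\psi_0} \coloneqq \ket{\psi},
\]
and the hybrids
\[
\ket{\gamma_k} \coloneqq U_t \cdots U_{k+1}\, P_k U_k \cdots P_1 U_1 \ket{\psi},
\]
so that $\ket{\gamma_0} = U_t\cdots U_1\ket{\psi}$ and $\ket{\gamma_t} = \ket{\psi_t} = P_t U_t \cdots P_1 U_1 \ket{\psi}$. Each consecutive pair differs by a single inserted $I - P_k$:
\[
\ket{\gamma_{k-1}} - \ket{\gamma_k} = U_t \cdots U_{k+1}\, (I - P_k)\, U_k \ket{\psi_{k-1}},
\]
so since $U_{k+1},\dots,U_t$ are unitaries, $\|\ket{\gamma_{k-1}} - \ket{\gamma_k}\|_2 = \|(I-P_k) U_k \ket{\psi_{k-1}}\|_2$.

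The key observation is that $P_k$ is a projector and $U_k$ is unitary, so
\[
\|(I - P_k)U_k \ket{\psi_{k-1}}\|_2^2 \;=\; \|U_k \ket{\psi_{k-1}}\|_2^2 - \|P_k U_k \ket{\psi_{k-1}}\|_2^2 \;=\; \|\ket{\psi_{k-1}}\|_2^2 - \|\ket{\psi_k}\|_2^2.
\]
Since $\|\ket{\psi_k}\|_2^2$ is nonincreasing in $k$ (each $P_k$ is a contraction and each $U_k$ is an isometry), we have $\|\ket{\psi_{k-1}}\|_2^2 - \|\ket{\psi_k}\|_2^2 \le 1 - \|\ket{\psi_t}\|_2^2$, i.e., each incremental error is at most $\sqrt{1 - \|P_t U_t \cdots P_1 U_1 \ket{\psi}\|_2^2}$.

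Finally I would apply the triangle inequality to the telescoping sum:
\[
\|\ket{\gamma_0} - \ket{\gamma_t}\|_2 \;\le\; \sum_{k=1}^{t} \|\ket{\gamma_{k-1}} - \ket{\gamma_k}\|_2 \;\le\; t\sqrt{1 - \|P_t U_t \cdots P_1 U_1 \ket{\psi}\|_2^2},
\]
which gives the stated bound. There is no real obstacle here—the argument is entirely elementary; the only subtlety is correctly identifying the norm-loss identity $\|(I-P_k)U_k \ket{\psi_{k-1}}\|_2^2 = \|\ket{\psi_{k-1}}\|_2^2 - \|\ket{\psi_k}\|_2^2$ and using monotonicity of $\|\ket{\psi_k}\|_2^2$ to uniformly bound each hybrid step by the final norm deficit.
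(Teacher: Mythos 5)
Your proof is correct and takes essentially the same route as the paper: the paper's induction, once unrolled, is exactly your hybrid/telescoping decomposition, with each single-projector insertion bounded by the final norm deficit using that unitaries preserve norm and projectors do not increase it. The only cosmetic difference is that you bound each step with the exact identity $\norm{(\Id-P_k)U_k\ket*{\psi_{k-1}}}_2^2 = \norm{\ket*{\psi_{k-1}}}_2^2 - \norm{\ket*{\psi_k}}_2^2$, whereas the paper uses the (equivalent after relaxation) single-step gentle measurement lemma $\norm{(\Id-\Pi)\ket*{\phi}}_2 \le \sqrt{1-\norm{\Pi\ket*{\phi}}_2^2}$.
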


To prove this, we will need the following version of the standard gentle measurement lemma.

\begin{lemma}[gentle measurement] \label{lem:gentleM-pure}
For any projector $\Pi$ and sub-normalized state $\ket*{\psi}$ satisfying $\braket{\psi} \leq 1$, we have
\begin{equation}
    \norm{ (\Id - \Pi) \ket*{\psi}}_2 \leq \sqrt{1 - \norm{ \Pi \ket*{\psi} }_2^2 }.
\end{equation}
\end{lemma}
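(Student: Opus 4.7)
The plan is to exploit the orthogonal decomposition induced by $\Pi$. Since $\Pi$ is a projector, $\Id - \Pi$ is also a projector, and the ranges of $\Pi$ and $\Id - \Pi$ are orthogonal subspaces whose direct sum is the whole Hilbert space. I would write $\ket*{\psi} = \Pi \ket*{\psi} + (\Id - \Pi) \ket*{\psi}$ and observe that the two summands are orthogonal vectors.

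Next I would apply the Pythagorean identity to this decomposition:
\begin{equation}
    \braket*{\psi} = \norm{\Pi \ket*{\psi}}_2^2 + \norm{(\Id - \Pi) \ket*{\psi}}_2^2.
\end{equation}
Rearranging gives $\norm{(\Id - \Pi)\ket*{\psi}}_2^2 = \braket*{\psi} - \norm{\Pi \ket*{\psi}}_2^2$. Since the hypothesis is $\braket*{\psi} \leq 1$, we can upper bound the right-hand side by $1 - \norm{\Pi \ket*{\psi}}_2^2$, and taking the square root (both sides are nonnegative) yields the claimed inequality.

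There is essentially no main obstacle here: the only subtlety is remembering that $\ket*{\psi}$ is sub-normalized rather than normalized, which is exactly why the bound features $1 - \norm{\Pi \ket*{\psi}}_2^2$ instead of $\braket*{\psi} - \norm{\Pi \ket*{\psi}}_2^2$. The entire argument is one display plus one inequality, so the write-up should be just a few lines. This lemma will then be fed into the sequential gentle measurement lemma (\cref{lem:seq-gentleM-pure}) via a telescoping/hybrid argument, where at each step $\norm{(\Id - P_i) U_i \cdots P_1 U_1 \ket*{\psi}}_2$ is controlled by the above, and the sub-normalization hypothesis $\braket*{\psi} \leq 1$ is what makes the inductive application legal even though intermediate states $P_{i-1} U_{i-1} \cdots P_1 U_1 \ket*{\psi}$ are themselves sub-normalized.
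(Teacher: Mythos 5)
Your proof is correct and is essentially the same as the paper's: the paper computes $\norm{(\Id-\Pi)\ket*{\psi}}_2^2 = \bra*{\psi}(\Id-\Pi)\ket*{\psi} = \braket{\psi} - \norm{\Pi\ket*{\psi}}_2^2$ directly using idempotence of $\Id-\Pi$, which is just the algebraic form of the Pythagorean decomposition you invoke, and then applies $\braket{\psi}\leq 1$ in the same way.
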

\begin{proof}[Proof of~\cref{lem:gentleM-pure}]
    By direct expansion, we have
    \begin{align}
        \norm{\ket*{\psi} - \Pi \ket*{\psi}}_2^2 = \bra*{\psi} (\Id - \Pi) \ket*{\psi} = \braket{\psi} - \bra*{\psi} \Pi \ket*{\psi} \leq 1 - \norm{\Pi \ket*{\psi}}^2_2.
    \end{align}
\end{proof}

\begin{proof}[Proof of~\cref{lem:seq-gentleM-pure}]
    We prove this lemma by induction.
    For $t = 0$, we have $\norm{\ket*{\psi} - \ket*{\psi}}_2 = 0 = 1 - \norm{\ket*{\psi}}_2^2$. So the base case holds.
    Suppose the inductive hypothesis holds for $t-1$, i.e.,
    \begin{align}
        \norm{U_{t-1} \ldots U_1 \ket*{\psi} -  P_{t-1} U_{t-1} \ldots P_{1} U_1 \ket*{\psi}}_2 & \leq (t-1) \sqrt{1 - \norm{P_{t-1} U_{t-1} \ldots P_{1} U_1 \ket*{\psi}}_2^2}\\
        & = (t-1) \sqrt{1 - \norm{U_t P_{t-1} U_{t-1} \ldots P_{1} U_1 \ket*{\psi}}_2^2}\\
        & \leq (t-1) \sqrt{1 - \norm{P_t U_t P_{t-1} U_{t-1} \ldots P_{1} U_1 \ket*{\psi}}_2^2}.
    \end{align}
    The second line uses the unitary invariance of $\norm{\cdot}_2$.
    The third line uses the fact that $P_t$ is a projector and hence cannot increase the norm.
    We can use the unitary invariance of $\norm{\cdot}_2$ to obtain
    \begin{equation}
        \norm{U_{t-1} \ldots U_1 \ket*{\psi} -  P_{t-1} U_{t-1} \ldots P_{1} U_1 \ket*{\psi}}_2 = \norm{U_{t} \ldots U_1 \ket*{\psi} -  U_t P_{t-1} U_{t-1} \ldots P_{1} U_1 \ket*{\psi}}_2.
    \end{equation}
    Next we use \cref{lem:gentleM-pure} to obtain
    \begin{equation}
        \norm{(\Id - P_t) U_t P_{t-1} U_{t-1} \ldots P_{1} U_1 \ket*{\psi}}_2 \leq \sqrt{1 - \norm{P_t U_t \ldots P_{1} U_1 \ket*{\psi}}_2^2}.
    \end{equation}
    Together, we have
    \begin{align}
        &\norm{U_t \ldots U_1 \ket*{\psi} -  P_t U_t \ldots P_{1} U_1 \ket*{\psi}}_2\\
        &\leq \norm{U_{t-1} \ldots U_1 \ket*{\psi} -  P_{t-1} U_{t-1} \ldots P_{1} U_1 \ket*{\psi}}_2 + \norm{(\Id - P_t) U_t P_{t-1} U_{t-1} \ldots P_{1} U_1 \ket*{\psi}}_2\\
        &\leq (t-1) \sqrt{1 - \norm{P_t U_t \ldots P_{1} U_1 \ket*{\psi}}_2^2} + \sqrt{1 - \norm{P_t U_t \ldots P_{1} U_1 \ket*{\psi}}_2^2}.
    \end{align}
    This concludes the proof.
\end{proof}

\newpage
\part{Standard PRUs}
\label{part:standard}
\vspace{1em}

The goal of~\cref{part:standard} is to construct standard PRUs (i.e., the definition of~\cite{ji2018pseudorandom}), which are secure against adversaries that only make forward queries to the unitary oracle.

\section{The purified permutation-function oracle}
\label{sec:PF-oracle}

In this section, we analyze the view of an adversary that makes forward queries to an oracle for $P_{\pi} \cdot F_{f}$, for uniformly random $\pi \gets \sSym_N$ and $f \gets \{0,1\}^N$. These operators are defined as
\begin{align}
    P_{\pi} \coloneqq \sum_{x \in [N]} \ketbra{\pi(x)}{x} \quad \text{and} \quad F_f \coloneqq \sum_{x \in [N]} (-1)^{f(x)} \ketbra{x}.
\end{align}
Our first step will be to consider a purification of the adversary's state where the randomness of $\pi$ and $f$ is replaced by the uniform superposition
\begin{align}
    \frac{1}{\sqrt{N!}} \sum_{\pi \in \sSym_N} \ket*{\pi}_{\gsP} \otimes \frac{1}{\sqrt{2^N}} \sum_{f \in \{0,1\}^N} \ket*{f}_{\gsF},
\end{align}
and each query is implemented by the purified permutation-function oracle $\pfo$, which applies $P_{\pi} \cdot F_f$ controlled on $\ket*{\pi} \ket*{f}$.

\begin{definition}[purified permutation-function oracle]
    The purified permutation-function oracle $\pfo$ is a unitary acting on registers $\sA,\sP,\sF$, where \begin{itemize}
        \item $\sP$ is a register associated with the Hilbert space $\calH_{\sP}$,  defined to be the span of the orthonormal states $\ket*{\pi}$ for all $\pi \in \sSym_N$. 
        \item $\sF$ is a register associated with the Hilbert space $\calH_{\sF}$, defined to be the span of the orthonormal states $\ket*{f}$ for all $f \in \{0,1\}^N$. 
    \end{itemize}
    The unitary $\pfo$ is defined to act as follows:
    \begin{equation}
        \pfo_{\gsA\gsP\gsF} \ket*{x}_{\gsA} \ket*{\pi}_{\gsP} \ket*{f}_{\gsF} \coloneqq (-1)^{f(x)} \ket*{\pi(x)}_{\gsA} \ket*{\pi}_{\gsP} \ket*{f}_{\gsF}, \label{eq:pfo-definition}
    \end{equation}
    for all $x \in [N], \pi \in \sSym_N,$ and $f \in \{0, 1\}^N$.
\end{definition}


When $\sP$ and $\sF$ are initialized to the uniform superposition over permutations and functions respectively, the view of an adversary that queries the $\pfo$ is equivalent to the view of an adversary that queries the standard oracle $P_\pi \cdot F_f$, for uniformly random $\pi \gets \sSym_N$ and $f \gets \{0,1\}^N$. 

\begin{claim}[Equivalence of the purified and standard oracles] \label{claim:purified-vs-standard-PFO}
    For any oracle adversary $\Adv$, the following oracle instantiations are perfectly indistinguishable:
    \begin{itemize}
        \item (Queries to a random $P_\pi \cdot F_f$) Sample a uniformly random $\pi \gets \sSym_N, f \gets \{0,1\}^N$. On each query, apply $P_\pi \cdot F_f$ to register $\sA$.
        \item (Queries to $\pfo$) Initialize registers $\sP,\sF$ to $\frac{1}{\sqrt{N!}} \sum_{\pi \in \sSym_N} \ket*{\pi}_{\gsP} \otimes \frac{1}{\sqrt{2^N}} \sum_{f \in \{0,1\}^N} \ket*{f}_{\gsF}$. At each query, apply $\pfo$ to registers $\sA,\sP,\sF$.
    \end{itemize}
\end{claim}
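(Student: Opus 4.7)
The plan is a standard purification argument: show that after $t$ queries in the purified model, the global state is a coherent superposition indexed by $(\pi,f)$, where each branch holds exactly the adversary's state in the corresponding standard-oracle experiment, tensored with $\ket*{\pi}_{\gsP} \ket*{f}_{\gsF}$. Since the branches are orthogonal on $\sP\sF$, tracing out $\sP\sF$ recovers precisely the uniform mixture $\E_{\pi,f} \ketbra*{\Adv^{P_\pi \cdot F_f}}$, matching the standard-oracle view on registers $\sA,\sB$.

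First I would prove by induction on the number of queries $i = 0, 1, \dots, t$ that the joint state in the purified experiment equals
\begin{equation}
    \ket*{\Psi_i}_{\gsA \gsB \gsP \gsF} \;=\; \frac{1}{\sqrt{N! \cdot 2^N}} \sum_{\pi \in \sSym_N} \sum_{f \in \{0,1\}^N} \ket*{\Adv_i^{P_\pi \cdot F_f}}_{\gsA \gsB} \otimes \ket*{\pi}_{\gsP} \otimes \ket*{f}_{\gsF},
\end{equation}
where $\ket*{\Adv_i^{P_\pi \cdot F_f}}$ denotes the state of $\Adv$ in the standard experiment after $i$ queries to $P_\pi \cdot F_f$. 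The base case $i = 0$ holds by the specified initialization. For the inductive step, the adversary's unitary $A_{i+1}$ acts trivially on $\sP, \sF$, so it commutes with the decomposition above. Then by the defining action \eqref{eq:pfo-definition} and linearity,
\begin{equation}
    \pfo_{\gsA \gsP \gsF} \Big( \ket*{x}_{\gsA} \ket*{\pi}_{\gsP} \ket*{f}_{\gsF} \Big) = \big( (P_\pi \cdot F_f)_{\gsA} \ket*{x}_{\gsA} \big) \otimes \ket*{\pi}_{\gsP} \ket*{f}_{\gsF},
\end{equation}
so applying $\pfo$ to $A_{i+1} \ket*{\Psi_i}$ preserves the $\ket*{\pi}\ket*{f}$ labels and evolves the $\sA\sB$-branch by exactly $(P_\pi \cdot F_f)_{\gsA}$, yielding $\ket*{\Psi_{i+1}}$.

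Finally, I would trace out $\sP, \sF$. Since $\{\ket*{\pi}\}_\pi$ and $\{\ket*{f}\}_f$ are orthonormal families,
\begin{equation}
    \Tr_{\sP, \sF}\big( \ketbra*{\Psi_t} \big) = \frac{1}{N! \cdot 2^N} \sum_{\pi, f} \ketbra*{\Adv_t^{P_\pi \cdot F_f}}_{\gsA \gsB} = \E_{\pi \gets \sSym_N, f \gets \{0,1\}^N} \ketbra*{\Adv_t^{P_\pi \cdot F_f}}_{\gsA \gsB},
\end{equation}
which is the reduced state in the standard experiment. Since the adversary's final two-outcome measurement $D_\Adv$ acts only on $\sA, \sB$, the acceptance probabilities in the two experiments coincide exactly, giving perfect indistinguishability. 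There is no real obstacle here beyond careful bookkeeping; the only thing to be mindful of is that $\pfo$ does not touch $\sP, \sF$ as registers (it only reads from them), which is precisely why the purification factorizes cleanly query-by-query.
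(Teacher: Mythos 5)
Your proof is correct. It takes a slightly different route from the paper's: you prove by induction the explicit branch decomposition of the global purified state, $\ket*{\Psi_i} \propto \sum_{\pi,f} \ket*{\Adv_i^{P_\pi\cdot F_f}}_{\gsA\gsB}\otimes\ket*{\pi}_{\gsP}\ket*{f}_{\gsF}$, and then trace out $\sP,\sF$ using orthonormality of the labels. The paper instead argues abstractly via deferred measurement: the adversary's view is unchanged if $\sP,\sF$ are measured at the end; since $\pfo$ is controlled on $\sP,\sF$, that measurement commutes with every query and can be pushed to the beginning, where it is exactly sampling $\pi,f$ uniformly. The two arguments establish the same fact; yours is more explicit and yields as a byproduct the closed form of the purified state (which the paper also derives separately, in its explicit-form fact for $\ket*{\Adv^{\pfo\cdot G}_t}$), while the paper's commutation argument is shorter and avoids any state bookkeeping. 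Both are complete; your only implicit step worth stating is that the final two-outcome measurement depends only on the reduced state on $\sA,\sB$, which you do note.
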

\begin{proof}
    Since the adversary's view does not contain the $\sP,\sF$ registers, the adversary's view in the second case is unchanged if the $\sP,\sF$ registers are measured at the end. 
    Since $\pfo$ is controlled on the $\sP,\sF$ registers, the queries to $\pfo$ commute with the measurement of the $\sP,\sF$ registers.
    Hence, measuring the $\sP,\sF$ registers at the end produces the same view as measuring at the beginning, which is equivalent to the first case.
\end{proof}

The key to understanding the oracle $\pfo$ is to consider how it acts on the following ``$\pf$-relation states'', defined below.

\begin{definition}[$\pf$-relation state]
    For $0 \leq t \leq N$ and $R = \{(x_1, y_1), \dots, (x_t, y_t)\} \in \calR_t$, let
    \begin{align}
        \ket*{\pf_R}_{\gsP \gsF} \coloneqq \frac{1}{\sqrt{(N-t)!}} \sum_{\pi \in \sSym_N} \delta_{\pi,R} \ket*{\pi}_{\gsP} \otimes \frac{1}{\sqrt{2^N}} \sum_{f \in \{0,1\}^N} (-1)^{f(x_1) + \cdots + f(x_t)} \ket*{f}_{\gsF},\label{eq:def-phi-S}
    \end{align}
    where $\delta_{\pi,R}$ is an indicator variable that equals $1$ if $\pi(x) = y$ for all $(x,y) \in R$, and is $0$ otherwise.
\end{definition}

Note that for $t = 0$ and $R = \varnothing$, the $\pf$-relation state $\ket*{\pf_{\varnothing}}_{\gsP \gsF}$ is the uniform superposition over all permutations $\pi \in \sSym_N$ and all functions $f \in \{0, 1\}^N$,
\begin{align} \label{eq:phi-varnothing}
    \ket*{\pf_{\varnothing}}_{\gsP \gsF} \coloneqq \frac{1}{\sqrt{N!}} \sum_{\pi \in \sSym_N} \ket*{\pi}_{\gsP} \otimes \frac{1}{\sqrt{2^N}} \sum_{f \in \{0,1\}^N} \ket*{f}_{\gsF}.
\end{align}

\subsection{Orthonormality of the $\pf$-relation states}

\begin{claim}[Orthonormality of the distinct sets of $\pf$-relation states]
\label{claim:phi-S-orthogonal}
    $\{\ket*{\pf_R}\}_{R \in \calR^{\bij}}$ forms a set of orthonormal vectors.
\end{claim}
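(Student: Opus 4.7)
The plan is to compute $\braket{\pf_R}{\pf_{R'}}$ directly for arbitrary $R, R' \in \calR^{\bij}$ and show it equals $\delta_{R, R'}$. The crucial observation is that $\ket*{\pf_R}_{\gsP\gsF}$ factorizes as a tensor product of a state on $\sP$ and a state on $\sF$, so the inner product splits as a product
\begin{equation*}
    \braket{\pf_R}{\pf_{R'}} = \underbrace{\frac{1}{\sqrt{(N-t)!(N-t')!}} \sum_{\pi \in \sSym_N} \delta_{\pi, R} \delta_{\pi, R'}}_{\text{permutation factor}} \cdot \underbrace{\frac{1}{2^N} \sum_{f \in \{0,1\}^N} (-1)^{\sum_i f(x_i) + \sum_j f(x'_j)}}_{\text{function factor}}.
\end{equation*}
I will compute the two factors separately and combine. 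The main work is bookkeeping which condition on $(R, R')$ makes each factor vanish.

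For the diagonal case $R = R'$ (with $|R| = t$), I first check each factor is $1$. For the permutation factor, since $R$ is bijective, the $N - t$ indices outside $\Dom(R)$ can be mapped to the $N - t$ indices outside $\Im(R)$ in any of $(N - t)!$ ways; so $\sum_\pi \delta_{\pi, R} = (N-t)!$ and the factor equals $1$. For the function factor, every term in the sum is $+1$ and we get $2^N / 2^N = 1$. Hence $\ket*{\pf_R}$ is a unit vector.

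For orthogonality, take $R \neq R'$ with $R, R' \in \calR^{\bij}$. I analyze the function factor first: it is a character sum on $\F_2^N$, equal to $1$ if the linear form $f \mapsto \sum_i f(x_i) + \sum_j f(x'_j) \pmod 2$ is identically zero, and $0$ otherwise. Because $R$ and $R'$ are bijective, $x_1, \ldots, x_t$ are distinct and likewise $x'_1, \ldots, x'_{t'}$ are distinct, so each coordinate $x \in [N]$ contributes $0$, $1$, or $2$ to the multiplicity; the linear form vanishes iff $\Dom(R) = \Dom(R')$ (which in particular forces $t = t'$). If $\Dom(R) \neq \Dom(R')$, the function factor kills the inner product. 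Otherwise $\Dom(R) = \Dom(R')$ but $R \neq R'$, so there is some $x \in \Dom(R)$ on which $R$ and $R'$ disagree; no permutation can simultaneously satisfy the two contradictory constraints, so $\delta_{\pi, R} \delta_{\pi, R'} = 0$ for every $\pi$ and the permutation factor vanishes. Either way $\braket{\pf_R}{\pf_{R'}} = 0$, completing the proof.

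I do not anticipate any real obstacle — the argument is a direct character-sum calculation — but the proof does rely essentially on the bijectivity hypothesis, since distinctness of both the $x_i$'s and the $y_i$'s is what makes the parity sum collapse to the condition $\Dom(R) = \Dom(R')$ and what guarantees $(N-t)!$ extensions in the normalization.
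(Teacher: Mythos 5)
Your proof is correct and essentially the same as the paper's. Both reduce $\braket*{\pf_R}{\pf_{R'}}$ to the product of a permutation factor (vanishing unless $R=R'$ once the domains agree) and a function-register factor (vanishing unless $\Dom(R)=\Dom(R')$); the paper packages the latter via an explicit isometry that sends the $\sF$-state to the set label $\ket*{\Dom(R)}$, which is just a notational wrapper for the same $\mathbb{F}_2^N$ character-sum observation you compute directly.
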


\begin{proof}[Proof of~\cref{claim:phi-S-orthogonal}]
    We first recall the definition of $\ket*{\pf_R}$:
    \begin{align}
        \ket*{\pf_R}_{\gsP\gsF} = \frac{1}{\sqrt{(N-t)!}} \sum_{\pi \in \sSym_N} \delta_{\pi,R} \ket*{\pi}_{\gsP} \otimes \frac{1}{\sqrt{2^N}} \sum_{f \in \{0,1\}^N} (-1)^{f(x_1) + \cdots + f(x_t)} \ket*{f}_{\gsF}.
    \end{align}
    For $x \in [N]$, let $e_x \in \{0,1\}^N$ denote the $N$-dimensional vector that has a $1$ in the $x$-th position, and is $0$ everywhere else. Then by writing $f(x)$ as $f(x) = f \cdot e_x$, we get
    \begin{align}
        \frac{1}{\sqrt{2^N}} \sum_{f \in \{0,1\}^N} (-1)^{f(x_1) + \cdots + f(x_t)} \ket*{f}_{\gsF} &= \frac{1}{\sqrt{2^N}} \sum_{f \in \{0,1\}^N} (-1)^{f \cdot (e_{x_1} + \cdots + e_{x_t})} \ket*{f}_{\gsF} \\
        &= H^{\otimes N} \ket*{e_{x_1} + \cdots + e_{x_t} \ (\mathrm{mod} \ 2)}_{\gsF}.
    \end{align}
    When $x_1,\dots,x_t$ are distinct, $e_{x_1} + \cdots + e_{x_t} (\mod 2)$ is a vector in $\{0,1\}^N$ whose $x$-th entry is $1$ if $x \in \{x_1,\dots,x_t\}$, and $0$ otherwise. Since this is simply the indicator vector for the set $\{x_1,\dots,x_t\}$, there exists an isometry that maps
    \begin{align}
        \frac{1}{\sqrt{2^N}} \sum_{f \in \{0,1\}^N} (-1)^{f(x_1) + \cdots + f(x_t)} \ket*{f}_{\gsF} \mapsto \ket*{\{x_1,\dots,x_t\}}.
    \end{align}
    Applying this to the $\sF$ register of $\ket*{\pf_R}$, this tells us there is an isometry $M$ such that for all $R \in\calR^{\bij}$, 
    \begin{align}
        M: \ket*{\pf_R} \mapsto \frac{1}{\sqrt{(N-t)!}} \sum_{\pi \in \sSym_N} \delta_{\pi,R} \ket*{\pi}_{\gsP} \otimes \ket*{\{x_1,\dots,x_t\}}.
    \end{align}
    Consider $R, S \in\calR^{\bij}$, where $R = \{(x_1,y_1),\dots,(x_{\abs{R}},y_{\abs{R}})\}$ and $S = \{(x'_1,y'_1),\dots,(x'_{\abs{S}},y'_{\abs{S}})\}$.
    \begin{align}
        \braket{\pf_R}{\pf_S} &= \bra*{\pf_R} M^\dagger \cdot M \ket*{\pf_S}\\
        &= \frac{1}{\sqrt{(N-\abs{R})! (N-\abs{S})!}} \sum_{\pi \in \sSym_N} \delta_{\pi,R} \cdot \delta_{\pi,S} \braket*{\{x_1,\dots,x_{\abs{R}}\}}{\{x'_1,\dots,x'_{\abs{S}}\}} \label{eq:inner-prod-phi}.
    \end{align}
    This expression is equal to zero if $\Dom(R) \neq \Dom(S)$ due to the $\braket*{\{x_1,\dots,x_{\abs{R}}\}}{\{x'_1,\dots,x'_{\abs{S}}\}}$ term. Thus, it remains to consider $R,S$ such that $\Dom(R) = \Dom(S)$. This means that $\abs{R} = \abs{S}$ and thus~\cref{eq:inner-prod-phi} simplifies to
    \begin{align}
        \frac{1}{(N-\abs{R})!} \sum_{\pi \in \sSym_N} \delta_{\pi,R} \cdot \delta_{\pi,S}.
    \end{align}
    There are two cases to consider:
    \begin{itemize}
        \item In the first case, $R \neq S$. Then there exists $x, y, y'$ such that $(x,y) \in R$, $(x,y') \in S$, and $y \neq y'$. But then the above expression will be $0$, since there are no permutations $\pi$ satisfying both $\pi(x) = y$ and $\pi(x) = y'$. 
        \item In the other case, $R = S$. Then the sum is over all permutations $P$ such that $\pi(x) = y$ for all $(x,y) \in R$. There are $(N - \abs{R})!$ such permutations, and so in this case the sum becomes $1$. 
    \end{itemize}
    This completes the proof.
\end{proof}

\subsection{How $\pfo$ acts on the $\pf$-relation states}

\begin{claim}[Action of $\pfo$ on $\pf$-relation states]
\label{claim:pfo-action}
    For $0 \leq t < N$, $R \in \calR_t$ and $x \in [N]$,
    \begin{align}
        \pfo \ket*{x}_{\gsA} \ket*{\pf_R}_{\gsP \gsF} = \frac{1}{\sqrt{N-\abs{R}}} \sum_{y \in [N]} \ket*{y}_{\gsA} \ket*{\pf_{R \cup \{(x,y)\}}}_{\gsP \gsF}. \label{eq:pfo-map}
    \end{align}
\end{claim}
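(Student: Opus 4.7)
The approach is a direct computation that unfolds both sides using the definitions and exploits one key identity on indicator variables. First, I will substitute the definition of $\ket*{\pf_R}_{\gsP\gsF}$ into the LHS and apply $\pfo$ termwise using its defining action from~\eqref{eq:pfo-definition}. This turns $\pfo \ket*{x}_{\gsA}\ket*{\pf_R}_{\gsP\gsF}$ into
\begin{align}
\frac{1}{\sqrt{(N-t)!}}\sum_{\pi \in \sSym_N}\delta_{\pi,R}\ket*{\pi(x)}_{\gsA}\ket*{\pi}_{\gsP}\otimes\frac{1}{\sqrt{2^N}}\sum_{f\in\{0,1\}^N}(-1)^{f(x_1)+\cdots+f(x_t)+f(x)}\ket*{f}_{\gsF},
\end{align}
where the extra phase $(-1)^{f(x)}$ from the $\pfo$ query combines with the existing phase on the $\sF$ register.

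\textbf{Key step.} Next, I will insert a resolution of the identity on the $\sA$ register by writing $\ket*{\pi(x)}_{\gsA} = \sum_{y\in[N]} \delta_{y,\pi(x)}\ket*{y}_{\gsA}$ and pull the sum over $y$ outside. The product $\delta_{\pi,R}\cdot \delta_{y,\pi(x)}$ equals $\delta_{\pi, R\cup\{(x,y)\}}$, since $\pi$ satisfies all the constraints of $R$ together with the new constraint $\pi(x)=y$ exactly when it is consistent with the enlarged multiset $R\cup\{(x,y)\}$. The phase $(-1)^{f(x_1)+\cdots+f(x_t)+f(x)}$ is precisely the phase in the definition of $\ket*{\pf_{R\cup\{(x,y)\}}}_{\gsP\gsF}$, since $R\cup\{(x,y)\}$ is a size-$(t+1)$ relation whose $x$-coordinates are $x_1,\dots,x_t,x$. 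After this rewriting, the expression becomes
\begin{align}
\sum_{y\in[N]}\ket*{y}_{\gsA}\cdot\frac{1}{\sqrt{(N-t)!}}\sum_{\pi}\delta_{\pi,R\cup\{(x,y)\}}\ket*{\pi}_{\gsP}\otimes\frac{1}{\sqrt{2^N}}\sum_{f}(-1)^{f(x_1)+\cdots+f(x_t)+f(x)}\ket*{f}_{\gsF}.
\end{align}

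\textbf{Normalization.} Finally, I will match the prefactor $\tfrac{1}{\sqrt{(N-t)!}}$ to the normalization $\tfrac{1}{\sqrt{(N-t-1)!}}$ that appears in the definition of $\ket*{\pf_{R\cup\{(x,y)\}}}$ by factoring out $\tfrac{1}{\sqrt{N-t}} = \tfrac{1}{\sqrt{N-\abs{R}}}$. This produces exactly the stated RHS. The potential subtlety — and the only real obstacle — is handling those $y$ for which $R\cup\{(x,y)\}$ is inconsistent with any permutation (e.g., $y\in\Im(R)$, or $x\in\Dom(R)$ with a different recorded image). In all such cases, $\delta_{\pi,R\cup\{(x,y)\}}=0$ for every $\pi$, so $\ket*{\pf_{R\cup\{(x,y)\}}}=0$ by definition, and those terms simply drop out of the sum on both sides. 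As a sanity check, when $R\in\calR^{\bij}$ and $x\notin\Dom(R)$, exactly $N-\abs{R}$ values of $y$ survive, and by~\cref{claim:phi-S-orthogonal} the resulting vectors are orthonormal, so the RHS is a unit vector, matching the fact that $\pfo$ is a unitary applied to the unit vector $\ket*{x}_{\gsA}\ket*{\pf_R}_{\gsP\gsF}$.
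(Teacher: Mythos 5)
Your proof is correct and follows essentially the same route as the paper's own argument: expand the definitions of $\pfo$ and $\ket*{\pf_R}$, insert $\ket*{\pi(x)} = \sum_{y} \delta_{\pi(x)=y}\ket*{y}$, merge the indicators via $\delta_{\pi,R}\,\delta_{\pi(x)=y} = \delta_{\pi,R\cup\{(x,y)\}}$, and absorb the normalization mismatch into the factor $1/\sqrt{N-t}$. Your explicit remark that terms with inconsistent $R\cup\{(x,y)\}$ contribute the zero vector (and hence drop out on both sides) is a harmless addition that the paper leaves implicit.
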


\begin{proof}[Proof of~\cref{claim:pfo-action}]
    From the definitions of $\pfo$ and $\ket*{\pf_R}$ (\cref{eq:pfo-definition} and \cref{eq:def-phi-S}), we have
    \begin{align}
        &\pfo \ket*{x}_{\gsA} \ket*{\pf_R}_{\gsP\gsF} \nonumber \\
        &= \sum_{\pi \in \sSym_N} (-1)^{f(x)}\ket*{\pi(x)}_{\gsA} \frac{1}{\sqrt{(N-t)!}}
        \delta_{\pi,R} \ket*{\pi}_{\gsP}  \otimes \frac{1}{\sqrt{2^N}} \sum_{f \in \{0,1\}^N} (-1)^{f(x_1) + \cdots + f(x_t)} \ket*{f}_{\gsF}.\label{eq:plug-in-pfo}
    \end{align}
    We now rewrite the right-hand side of~\cref{eq:plug-in-pfo} using the substitution $\ket*{\pi(x)} = \sum_{y \in [N]} \delta_{\pi(x) =y} \ket*{y}$. This gives
    \begin{align}
        (\ref{eq:plug-in-pfo}) &=  \sum_{\pi \in \sSym_N} (-1)^{f(x)} \sum_{y \in [N]} \delta_{\pi,\{(x,y)\}} \ket*{y}_{\gsA} \frac{1}{\sqrt{(N-t)!}}
        \delta_{\pi,R} \ket*{\pi}_{\gsP} \nonumber \\
        & \quad \otimes \frac{1}{\sqrt{2^N}} \sum_{f \in \{0,1\}^N} (-1)^{f(x_1) + \cdots + f(x_t)} \ket*{f}_{\gsF} \label{eq:plug-in-pfo-2}.
    \end{align}
    Since $\delta_{\pi,R} \cdot \delta_{\pi,\{(x,y)\}} = \delta_{\pi, R \cup\{(x,y)\}}$, we can rearrange the expression to get
    \begin{align}
        (\ref{eq:plug-in-pfo-2}) &= \frac{1}{\sqrt{N-t}} \sum_{y \in [N]} \ket*{y}_{\gsA} \frac{1}{\sqrt{(N-t-1)!}} \sum_{\pi \in \sSym_N} \delta_{\pi,R \cup\{(x,y)\}} \ket*{\pi} \nonumber \\
        & \quad \otimes \frac{1}{\sqrt{2^N}} \sum_{f \in \{0,1\}^N} (-1)^{f(x_1) + \cdots + f(x_t) + f(x)} \ket*{f}_{\gsF}\\
        &= \frac{1}{\sqrt{N-t}} \sum_{y \in [N]} \ket*{y}_{\gsA} \ket*{\pf_{R \cup\{(x,y)\}}}_{\gsP\gsF},
    \end{align}
    which completes the proof.
\end{proof}

\section{The path-recording oracle $V$}
\label{sec:compressed-haar}

In this section, we define the \emph{path-recording oracle}. The path-recording oracle $\pr$ acts on an $n$-qubit query register $\sA$ held by the adversary, as well as a variable-length relation $\sR$ containing a relation state $\ket*{R}$ (see~\cref{subsec:relation-states}). In section~\cref{subsec:relation-cho-pfo}, we connect the path-recording oracle $\pr$ to the $\pfo$ oracle. In~\cref{subsec:imp-forward-q}, we sketch how to implement $V$ efficiently. 

\subsection{Defining $\pr$}
\label{subsec:define-pr-oracle}

\begin{definition}[Path-recording oracle]
\label{def:choisometry}
The path-recording oracle $\pr$ is a linear map $\pr: \calH_{\sA} \otimes \calH_{\sR} \rightarrow \calH_{\sA} \otimes \calH_{\sR}$ defined as follows. For all $x \in [N]$ and $R \in \calR^{\inj}$ such that $\abs{R} < N$,
\begin{equation}
    \pr: \ket*{x}_{\gsA} \ket*{R}_{\gsR} \mapsto \frac{1}{\sqrt{N - \abs{R}}} \sum_{\substack{y \in [N], \\ y \not\in \Im(R)}} \ket*{y}_{\gsA} \ket*{R \cup\{(x,y)\}}_{\gsR}.
\end{equation}
Note that $R \cup\{(x,y)\} \in \calR^{\inj}$ since $y \notin \Im(R)$.
\end{definition}

\begin{lemma}[Partial isometry]
\label{lem:cho-isometry}
The path-recording oracle $\pr$ is an isometry on the subspace of $\calH_{\sA} \otimes \calH_{\sR}$ spanned by the states $\ket*{x}\ket*{R}$ for $x \in [N]$ and $R \in \calR^{\inj}$ such that $\abs{R} < N$.
\end{lemma}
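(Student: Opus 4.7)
The plan is to verify that $\pr$ preserves inner products on the spanning family $\{\ket*{x}_{\gsA}\ket*{R}_{\gsR} : x \in [N],\, R \in \calR^{\inj},\, \abs{R} < N\}$ of the declared initial subspace. These vectors are already orthonormal: the $\ket*{x}$ form part of the standard basis, and for each $t$ the relation states $\{\ket*{R}\}_{R \in \calR_t}$ form an orthonormal basis for the symmetric subspace of $\calH_{\sR^{(t)}}$ (as recorded by $\Pi^{\calR}_t = \sum_{R \in \calR_t} \ketbra*{R}$), while relation states of different size live in orthogonal summands of $\calH_{\sR} = \bigoplus_{t \geq 0} \calH_{\sR^{(t)}}$. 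By linearity it therefore suffices to check that $\bra*{x}_{\gsA}\bra*{R}_{\gsR} \pr^\dagger \pr \ket*{x'}_{\gsA}\ket*{R'}_{\gsR} = \delta_{x,x'}\,\delta_{R,R'}$ for any two such basis elements.

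I would then plug in the definition of $\pr$ from~\cref{def:choisometry} to get
\begin{align}
\bra*{x}_{\gsA}\bra*{R}_{\gsR} \pr^\dagger \pr \ket*{x'}_{\gsA}\ket*{R'}_{\gsR} = \frac{1}{\sqrt{(N-\abs{R})(N-\abs{R'})}} \sum_{\substack{y \notin \Im(R)\\ y' \notin \Im(R')}} \braket{y}{y'}\, \braket*{R \cup \{(x,y)\}}{R' \cup \{(x',y')\}}.
\end{align}
The factor $\braket{y}{y'}$ is $\delta_{y,y'}$, collapsing the double sum to a single sum over $y \in [N] \setminus (\Im(R) \cup \Im(R'))$. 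Since $R, R' \in \calR^{\inj}$ and $y \notin \Im(R) \cup \Im(R')$, the enlarged relations $R \cup \{(x,y)\}$ and $R' \cup \{(x',y)\}$ are again injective, so their relation states belong to the orthonormal family $\{\ket*{R}\}_{R \in \calR^{\inj}}$ and the remaining inner product is the Kronecker delta $\delta_{R \cup \{(x,y)\},\, R' \cup \{(x',y)\}}$.

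The main step is then a short case analysis on this multiset delta. If $\abs{R} \neq \abs{R'}$ the two enlarged relations have different sizes and never match. When $\abs{R} = \abs{R'}$, injectivity of $R \cup \{(x,y)\}$ combined with $y \notin \Im(R)$ forces $y$ to appear in the image of $R \cup \{(x,y)\}$ exactly once, paired with $x$; similarly $y$ is paired uniquely with $x'$ in $R' \cup \{(x',y)\}$, so equality of the two multisets forces $x = x'$, and then removing the unique pair containing $y$ from both sides yields $R = R'$. Conversely, on the diagonal $x = x'$, $R = R'$, the delta fires for every $y \notin \Im(R)$, giving $N - \abs{R}$ terms each equal to $1/(N - \abs{R})$ and summing to $1$.

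I do not expect a genuine obstacle; the only subtle point is pinning down where the injectivity hypothesis is used. It enters in exactly two related places: to ensure the enlarged relations $R \cup \{(x,y)\}$ remain in $\calR^{\inj}$, so that the output relation states are orthonormal, and to ensure that the newly inserted pair $(x,y)$ is the \emph{unique} pair in $R \cup \{(x,y)\}$ with second coordinate $y$, so that matching multisets pins down first $x$ and then $R$. Dropping injectivity breaks both properties, which is precisely why the statement is only a partial isometry rather than a global isometry on $\calH_{\sA} \otimes \calH_{\sR}$.
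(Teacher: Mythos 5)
Your proof is correct and follows essentially the same route as the paper: expand $\pr^\dagger \pr$ on the orthonormal family $\ket*{x}\ket*{R}$, collapse the double sum via $\braket{y}{y'} = \delta_{y,y'}$, and case-split on whether $(x,R)=(x',R')$. The only difference is cosmetic — you spell out the multiset argument (that $y \notin \Im(R)\cup\Im(R')$ forces the pair with second coordinate $y$ to be the unique one, pinning down $x=x'$ and then $R=R'$) where the paper states the vanishing in Case 2a more tersely; this is a welcome bit of extra rigor, not a different method.
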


\begin{proof}[Proof of Lemma~\ref{lem:cho-isometry}]
To prove that $\pr$ is an isometry on the specified subspace, it suffices to show that for all $x,x' \in [N]$ and $R,R' \in \calR^{\inj}$ with $\abs{R}, \abs{R'} < N$,
\begin{equation}
    \bra*{x'}_{\gsA} \bra*{R'}_{\gsR} \pr^\dagger \cdot \pr \ket*{x}_{\gsA} \ket*{R}_{\gsR} = \braket*{x'}{x}_{\gsA} \cdot \braket*{R'}{R}_{\gsR} \label{eq:cho-partial-isometry}.
\end{equation}
We proceed by considering two cases:

\begin{itemize}
\item \textbf{Case 1:} $\abs{R} \neq \abs{R'}$. $\pr \ket*{x}_{\gsA} \ket*{R}_{\gsR}$ and $\pr \ket*{x'}_{\gsA} \ket*{R'}_{\gsR}$ are orthogonal because, by the definition of $\pr$, these two states are supported on relation states of different sizes. Therefore, the left-hand side of~\cref{eq:cho-partial-isometry} is zero, which equals the right-hand side, since $\braket*{R'}{R}_{\gsR} = 0$ for $\abs{R} \neq \abs{R'}$.
\item \textbf{Case 2:} $\abs{R} = \abs{R'} = t$ for some $0 \leq t \leq N-1$. In this case, we expand the left-hand side:
\begin{align}
    &\bra*{x'}_{\gsA} \bra*{R'}_{\gsR} \pr^\dagger \cdot \pr \ket*{x}_{\gsA} \ket*{R}_{\gsR} \nonumber \\
    &= \Bigg( \frac{1}{\sqrt{N - t}} \sum_{\substack{y' \in [N],\\ y' \not\in \Im(R')}} \bra*{y'}_{\gsA} \bra*{R' \cup \{(x',y')\}}_{\gsR} \Bigg) \cdot \Bigg( \frac{1}{\sqrt{N - t}} \sum_{\substack{y \in [N],\\ y \not\in \Im(R)}} \ket*{y}_{\gsA} \ket*{R \cup\{(x,y)\}}_{\gsR} \Bigg) \\
    &= \frac{1}{N-t} \sum_{\substack{y \in [N] \\ y\not \in \Im(R') \cup \Im(R)}} \braket*{R' \cup \{(x',y)\}}{R \cup\{(x,y)\}}_{\gsR} \label{eq:chot-is-isometry}
\end{align}
Now, we consider two sub-cases:
\begin{itemize}
\item \textbf{Case 2a:} $(x,R) \neq (x',R')$. For $y \not\in \Im(R) \cup \Im(R')$, the term $\braket*{R' \cup \{(x',y)\}}{R \cup \{(x,y)\}}_{\gsR}$ is always zero because either $x \neq x'$ or $R \neq R'$. Therefore, Eq.~\eqref{eq:chot-is-isometry} is equal to zero, which matches the right-hand side of the original equation.
\item \textbf{Case 2b:} $(x,R) = (x',R')$. In this case, we have:
\begin{align}
    \eqref{eq:chot-is-isometry} &= \frac{1}{N-t} \sum_{y \in [N] \setminus \Im(R)} \braket*{R \cup \{(x,y)\}}_{\gsR} \\
    &= \frac{1}{N-t} \cdot (N-t) \cdot 1 = 1,
\end{align}
which again matches the right-hand side of the original equation.
\end{itemize}
\end{itemize}

This shows that~\cref{eq:cho-partial-isometry} holds in all cases, completing the proof.
\end{proof}

Next, we define the state $\ket*{\Adv^{V}_t}_{\gsA \gsB \gsR}$ to be the state of the state of the entire system after the adversary has made $t$ queries to the path recording oracle, with the $\sR$ register initialized to $\ket*{\varnothing}$, the state associated with the empty set.

\begin{definition}
Given a $t$-query adversary $\Adv$ specified by a $t$-tuple of unitaries $(A_{1, \gsA \gsB},\dots,A_{t, \gsA \gsB})$, define the state
\begin{equation}
    \ket*{\Adv^{\pr}_t}_{\gsA \gsB \gsR} \coloneqq \prod_{i = 1}^t \Big( \pr \cdot A_{i, \gsA \gsB} \Big) \ket*{0}_{\gsA \gsB} \ket*{\varnothing}_{\gsR}.
\end{equation}
\end{definition}

In fact, it will be useful to define a version of this state in which an arbitrary $n$-qubit unitary $G$ is applied to the adversary's query register $\sA$ before each query to $V$.  

\begin{definition}\label{def:chostate}
Given an $n$-qubit unitary $G$ and a $t$-query adversary $\Adv$ specified by a $t$-tuple of unitaries $(A_{1, \gsA \gsB},\dots,A_{t, \gsA \gsB})$, define the state
\begin{equation}
    \ket*{\Adv^{\pr \cdot G}_t}_{\gsA \gsB \gsR} \coloneqq \prod_{i = 1}^t \Big( \pr \cdot G_{\gsA} \cdot A_{i, \gsA \gsB} \Big) \ket*{0}_{\gsA \gsB} \ket*{\varnothing}_{\gsR}.
\end{equation}
\end{definition}

One consequence of~\cref{lem:cho-isometry} is that $\ket*{\Adv^{\pr \cdot G}_t}_{\gsA \gsB \gsR}$ has unit norm as long as $t \leq N$. 

\begin{lemma}[$\ket*{\Adv^{\pr \cdot G}_t}_{\gsA \gsB \gsR}$ has unit norm]
\label{lemma:purified-cho-state-unit-norm}
    For any adversary $\Adv$ making $t \leq N$ forward queries, and any $n$-qubit unitary $G$, $\ket*{\Adv^{\pr \cdot G}_t}_{\gsA \gsB \gsR}$ has unit norm.
\end{lemma}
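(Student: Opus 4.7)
The plan is to prove this by induction on the number of queries, showing a slightly stronger invariant: for every $0 \leq k \leq t$, the intermediate state
\begin{equation}
    \ket*{\Adv^{\pr \cdot G}_k}_{\gsA \gsB \gsR} \coloneqq \prod_{i=1}^k \Big( \pr \cdot G_{\gsA} \cdot A_{i,\gsA\gsB} \Big) \ket*{0}_{\gsA\gsB} \ket*{\varnothing}_{\gsR}
\end{equation}
is a unit vector lying in the span of states $\ket*{\psi}_{\gsA\gsB} \otimes \ket*{R}_{\gsR}$ with $R \in \calR^{\inj}$ and $\abs{R} = k$. The base case $k=0$ is immediate, since $\varnothing \in \calR^{\inj}$ has size $0$ and the initial state $\ket*{0}_{\gsA\gsB}\ket*{\varnothing}_{\gsR}$ is a unit vector.

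For the inductive step, assume the claim holds at stage $k-1$ for some $1 \leq k \leq t$. The unitaries $A_{k,\gsA\gsB}$ and $G_{\gsA}$ act as the identity on $\sR$, so they preserve both the unit norm and the support on $R \in \calR^{\inj}$ with $\abs{R} = k-1$. To then apply $V$, I need $\abs{R} = k-1 < N$; this holds because $k \leq t \leq N$. By~\cref{lem:cho-isometry}, $V$ is an isometry on exactly this subspace, so applying $V$ preserves the norm. Moreover, by the definition of $V$ (\cref{def:choisometry}), the image is supported on states $\ket*{y}_{\gsA} \ket*{R \cup \{(x,y)\}}_{\gsR}$ with $y \notin \Im(R)$, so the updated relation satisfies $\Im(R \cup \{(x,y)\}) = \Im(R) \sqcup \{y\}$, which still has distinct image elements. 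Hence $R \cup \{(x,y)\} \in \calR^{\inj}$ with size $k$, closing the induction.

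Setting $k = t$ yields the lemma. There is no real obstacle here beyond bookkeeping: the only subtlety is verifying that the injective-relation invariant is maintained under $V$ (guaranteed by the $y \notin \Im(R)$ restriction in the definition) and that the size condition $\abs{R} < N$ required to invoke~\cref{lem:cho-isometry} is never violated (guaranteed by $t \leq N$).
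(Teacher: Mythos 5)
Your proof is correct and takes essentially the same approach as the paper: induction on the number of queries, maintaining the invariant that the state is unit-norm and supported on injective relations of the current size, with unitarity of $A_k, G$ and the partial-isometry property of $V$ (\cref{lem:cho-isometry}) doing the work at each step. The only difference from the paper's write-up is the choice of index (you step from $k-1$ to $k$; the paper steps from $t$ to $t+1$), which is purely cosmetic.
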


\begin{proof}[Proof of~\cref{lemma:purified-cho-state-unit-norm}]
We say that a state on registers $(\sA, \sB, \sR)$ is supported on $\calR^{\inj}$ if the state is contained in the span of $\ket*{x}_{\gsA} \ket*{z}_{\gsB} \ket*{R}_{\gsR}$ for $R \in \calR^{\inj}$ and any $x,z$.
We will prove by induction on $t$ that for all $0 \leq t \leq N$, $\ket*{\Adv^{\pr \cdot G}_t}_{\gsA \gsB \gsR}$ is a unit-norm state supported on $\calR_t^{\inj}$.

\vspace{0.3em}
\noindent \textbf{Base case ($t=0$):} 
$\ket*{\Adv^{\pr \cdot G}_0} = \ket*{0}_{\gsA \gsB} \ket*{\varnothing}_{\gsR}$. This state clearly has unit norm, and $\ket*{\varnothing}_{\gsR} \in \calR_0^{\inj}$, so the claim holds for $t=0$.

\vspace{0.3em}
\noindent \textbf{Inductive step:} 
Assume the claim is true for some $0 \leq t < N$, i.e., $\ket*{\Adv^{\pr \cdot G}_t}_{\gsA \gsB \gsR}$ is a unit-norm state supported on $\calR_t^{\inj}$. We will prove that it must hold for $t+1$.
By definition, we have:
\begin{equation}
    \ket*{\Adv^{\pr \cdot G}_{t+1}} = \pr \cdot G_{\gsA} \cdot A_{t+1, \gsA \gsB} \ket*{\Adv^{\pr \cdot G}_t}
\end{equation}
This state is unit norm because:
\begin{enumerate}
    \item $G_{\gsA} \cdot A_{t+1, \gsA \gsB}$ is a unitary that acts only on the $\sA$ and $\sB$ registers, and so $G_{\gsA} \cdot A_{t+1, \gsA \gsB} \ket*{\Adv^{\pr \cdot G}_t}$ is still a unit-norm state supported on $\calR_t^{\inj}$.
    \item By~\cref{lem:cho-isometry}, $\pr$ is an isometry on states supported on $\calR_t^{\inj}$. Moreover, the definition of $\pr$, ensures that it maps states supported on $\calR_t^{\inj}$ to states supported on $\calR_{t+1}^{\inj}$ for $0 \leq t < N$. Thus, $\ket*{\Psi_{t+1}^G}$ is a unit-norm state supported on $\calR_{t+1}^{\inj}$.
\end{enumerate}
Hence, for all $0 \leq t \leq N$, $\ket*{\Adv^{\pr \cdot G}_t}_{\gsA \gsB \gsR}$ is a unit-norm state supported on $\calR_t^{\inj}$.
\end{proof}

\subsection{Right unitary invariance}
\label{subsec:cuo-right-unitary-inv}

Our next step is to prove that $\pr$ satisfies \emph{right unitary invariance}: for any unitary $G$, queries to $\pr \cdot G_{\gsA}$ are perfectly indistinguishable from queries to $V$, from the point of view of the adversary who cannot access the purifying register $\sR$. This is captured by the following lemma.

\begin{lemma}[Right unitary invariance]\label{lem:cho-transfer}
For any $n$-qubit unitary $G$, we have
\begin{equation}
    \ket*{\Adv^{\pr \cdot G}_t}_{\gsA \gsB \gsR} = (G_{\darkgray{\sR_{\sX, 1}^{(t)}}} \otimes \ldots \otimes G_{\darkgray{\sR_{\sX, t}^{(t)}}}) \ket*{\Adv^{\pr}_t}_{\gsA \gsB \gsR}.
\end{equation}
\end{lemma}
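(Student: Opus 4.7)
I would prove the lemma by induction on $t$. The base case $t=0$ is immediate, since $\ket*{\Adv^{\pr \cdot G}_0} = \ket*{\Adv^{\pr}_0} = \ket*{0}_{\gsA\gsB}\ket*{\varnothing}_{\gsR}$ and no factor of $G$ appears on either side. For the inductive step, writing $G^{\otimes t}_{\gsR_\gsX^{(t)}}$ as shorthand for $G_{\gsR_{\gsX,1}^{(t)}} \otimes \cdots \otimes G_{\gsR_{\gsX,t}^{(t)}}$, the inductive hypothesis together with the fact that $A_{t+1}$ acts on $(\sA,\sB)$ and hence commutes with $G^{\otimes t}_{\gsR_\gsX^{(t)}}$ yields
\begin{align}
    \ket*{\Adv^{\pr \cdot G}_{t+1}} = \pr \cdot G_{\gsA} \cdot G^{\otimes t}_{\gsR_\gsX^{(t)}} \cdot A_{t+1} \ket*{\Adv^{\pr}_t}.
\end{align}
By \cref{lemma:purified-cho-state-unit-norm}, $A_{t+1}\ket*{\Adv^{\pr}_t}$ is supported on computational basis states $\ket*{x}_{\gsA}\ket*{z}_{\gsB}\ket*{R}_{\gsR}$ with $R \in \calR_t^{\inj}$, so it suffices to establish the single-query operator identity
\begin{align}
    \pr \cdot G_{\gsA} \cdot G^{\otimes t}_{\gsR_\gsX^{(t)}} \ket*{x}_{\gsA}\ket*{R}_{\gsR} = G^{\otimes (t+1)}_{\gsR_\gsX^{(t+1)}} \cdot \pr \ket*{x}_{\gsA}\ket*{R}_{\gsR}
\end{align}
for all $x \in [N]$ and $R \in \calR_t^{\inj}$ with $t < N$.

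\textbf{Reducing to a tuple-append identity.} To prove the single-query identity, I would peel off the outer sum over $y$ in the definition of $\pr$. For each $y \in [N]$, let $T_y$ be the linear map defined on basis states by $T_y\ket*{x'}_{\gsA}\ket*{S}_{\gsR} = \ket*{S \cup \{(x',y)\}}_{\gsR}$ whenever $S \in \calR^{\inj}$ and $y \notin \Im(S)$, extended by linearity. Because $G^{\otimes t}_{\gsR_\gsX^{(t)}}$ touches only the $x$-registers, the state $G^{\otimes t}_{\gsR_\gsX^{(t)}}\ket*{R}$ is supported on $\ket*{S}$ with $\Im(S)=\Im(R)$, so the set of admissible $y$'s coincides on both sides. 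The identity therefore reduces to showing, for each $y \notin \Im(R)$,
\begin{align}
    T_y \cdot G_{\gsA} \cdot G^{\otimes t}_{\gsR_\gsX^{(t)}}\ket*{x}_{\gsA}\ket*{R}_{\gsR} = G^{\otimes(t+1)}_{\gsR_\gsX^{(t+1)}} \cdot T_y \ket*{x}_{\gsA}\ket*{R}_{\gsR}.
\end{align}

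\textbf{The tuple-append identity and main obstacle.} The cleanest route to the tuple-append identity is via an \emph{ordered} analogue $T_y^{\mathrm{ord}}\ket*{x'}\ket*{(x_1,y_1),\dots,(x_t,y_t)} = \ket*{(x_1,y_1),\dots,(x_t,y_t),(x',y)}$, which simply appends to position $t+1$ in the natural ordering. For $T_y^{\mathrm{ord}}$, the identity $T_y^{\mathrm{ord}} \cdot G_{\gsA} \cdot G^{\otimes t}_{\gsR_\gsX^{(t)}} = G^{\otimes(t+1)}_{\gsR_\gsX^{(t+1)}} \cdot T_y^{\mathrm{ord}}$ holds by direct inspection: appending to the last slot cannot interact with the first $t$ slots, and the pre-query $G_{\gsA}$ on the input becomes $G$ on the newly appended $x$-slot. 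The symmetric relation state $\ket*{R\cup\{(x,y)\}}$ is a scalar multiple (a factor of $1/\sqrt{t+1}$, appearing on both sides) of the full symmetrization of $T_y^{\mathrm{ord}}\ket*{x}\ket*{R}$ over the $t+1$ pair-positions, and the symmetrization projector on $(\mathbb{C}^{N^2})^{\otimes(t+1)}$ commutes with $G^{\otimes(t+1)}_{\gsR_\gsX^{(t+1)}}$ because the latter applies the same operator in every $x$-slot. I expect the main obstacle to be this ordered-to-symmetric bookkeeping: one must verify that $G^{\otimes t}_{\gsR_\gsX^{(t)}}$ applied to a symmetric $\ket*{R}$, combined with $G_{\gsA}$ on the query register and then the symmetrization introduced by $T_y$, really reproduces $G^{\otimes(t+1)}_{\gsR_\gsX^{(t+1)}}$ acting on the symmetric $\ket*{R\cup\{(x,y)\}}$. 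Injectivity of $R$ and the assumption $y \notin \Im(R)$ ensure all tuples are distinct, so the normalizations remain consistent throughout.
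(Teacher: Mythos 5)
Your proof is correct, and it takes a genuinely different route from the paper's. The paper proves the lemma in one shot: it writes out the fully expanded form of $\ket*{\Adv^{\pr \cdot G}_t}$ (the sum over all $(x_1,\dots,x_t)$ and $(y_1,\dots,y_t) \in [N]^t_{\dist}$, with the symmetrizing $\frac{1}{\sqrt{t!}}\sum_\pi S_\pi$ factor made explicit), and then uses the register-transfer identity
\begin{equation*}
    \sum_{x \in [N]} \ket*{x}_{\darkgray{\sR_{\sX, i}^{(t)}}} \otimes \bra*{x}_{\gsA} G_{\gsA} = \sum_{x \in [N]} G_{\darkgray{\sR_{\sX, i}^{(t)}}} \ket*{x}_{\darkgray{\sR_{\sX, i}^{(t)}}} \otimes \bra*{x}_{\gsA}
\end{equation*}
to slide $G$ from the query register onto each $\sR_{\sX,i}^{(t)}$, finishing with the observation that $G^{\otimes t}$ commutes with $S_\pi$. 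Your proof instead inducts on $t$, reduces to a single-query commutation $\pr \cdot G_{\gsA} \cdot G^{\otimes t}_{\gsR_\gsX^{(t)}} = G^{\otimes (t+1)}_{\gsR_\gsX^{(t+1)}} \cdot \pr$ on the relevant injective-relation subspace, and verifies that by peeling off the $y$-sum and passing through the ordered append map $T_y^{\mathrm{ord}}$ before symmetrizing. Both proofs ultimately rest on the same two facts --- that appending $(x,y)$ to the recording register transports $G$ from $\sA$ to a new $x$-slot, and that $G^{\otimes k}$ commutes with symmetrization --- but your inductive framing localizes the work to a single query and avoids manipulating the full $t$-fold sum, at the cost of some bookkeeping to justify the reduction (specifically, verifying that the oracle's post-$G^{\otimes t}$ input stays in the span of $\ket*{x}\ket*{S}$ with $S$ injective and $\Im(S)=\Im(R)$, which you handle correctly). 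One small imprecision worth noting: $A_{t+1}\ket*{\Adv^{\pr}_t}$ is supported on relation states $\ket*{R}$, which are not computational basis states of $\calH_{\sR}$ but rather an orthonormal basis of the symmetric subspace; the argument is unaffected, but the phrasing "computational basis states $\ket*{R}$" is slightly off. Also, your parenthetical normalization factor should read $\sqrt{t+1}$ rather than $1/\sqrt{t+1}$ (so that $\ket*{R \cup \{(x,y)\}} = \sqrt{t+1}\, P_{\sSym}\, T_y^{\mathrm{ord}}\ket*{x}\ket*{R}$), but since, as you note, the same factor appears on both sides, the identity is unaffected.
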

Note that
\begin{align}
    &\Tr_{\sR}(\ketbra*{\Adv^{\pr \cdot G}_t}_{\gsA \gsB \gsR}) \nonumber \\
    &= \Tr_{\sR}((G_{\darkgray{\sR_{\sX, 1}^{(t)}}} \otimes \ldots \otimes G_{\darkgray{\sR_{\sX, t}^{(t)}}}) \ketbra*{\Adv^{\pr}_t}_{\gsA \gsB \gsR} (G_{\darkgray{\sR_{\sX, 1}^{(t)}}} \otimes \ldots \otimes G_{\darkgray{\sR_{\sX, t}^{(t)}}})^\dagger) \tag{by~\cref{lem:cho-transfer}}\\
    &= \Tr_{\sR}(\ketbra*{\Adv^{\pr}_t}_{\gsA \gsB \gsR}), \tag{by the cyclic property of $\Tr_{\sR}$}
\end{align}
where the first line corresponds to the adversary's view after making $t$ queries to $\pr \cdot G_{\gsA}$, and the last line corresponds to its view after making $t$ queries to $\pr$.

\begin{fact}[Explicit form] \label{fact:expli-form-cho-state}
From the definition of $\pr$ and $\ket*{R}_{\gsR}$, we can expand out $\ket*{\Adv^{\pr \cdot G}_t}_{\gsA \gsB \gsR}$ to obtain
\begin{align}
\ket*{\Adv^{\pr \cdot G}_t}_{\gsA \gsB \gsR} &= \sqrt{\frac{(N-t)!}{N!}} \sum_{\substack{(x_1, \ldots, x_t) \in [N]^t \\ (y_1, \ldots, y_t) \in [N]_{\dist}^t}}
\left[ \, \prod_{i = 1}^t \Big( \ketbra*{y_i}{x_i}_{\gsA} \cdot G_{\gsA} \cdot A_{i, \gsA \gsB} \Big) \ket*{0}_{\gsA \gsB} \, \right] \otimes \ket*{ \{ (x_i, y_i) \}_{i=1}^t }_{\gsR}\\
&= \sqrt{\frac{(N-t)!}{N!}} \sum_{\substack{(x_1, \ldots, x_t) \in [N]^t \\ (y_1, \ldots, y_t) \in [N]_{\dist}^t}}
\left[ \, \prod_{i = 1}^t \Big( \ketbra*{y_i}{x_i}_{\gsA} \cdot G_{\gsA} \cdot A_{i, \gsA \gsB} \Big) \ket*{0}_{\gsA \gsB} \, \right] \nonumber \\
&\otimes \frac{1}{\sqrt{t!}} \sum_{\pi \in \sSym_t} \left( S_{\pi} \ket*{x_1}_{\darkgray{\sR_{\sX, 1}^{(t)}}} \dots \ket*{x_t}_{\darkgray{\sR_{\sX, t}^{(t)}}} \right) \otimes \left( S_{\pi} \ket*{y_1}_{\darkgray{\sR_{\sY, 1}^{(t)}}} \dots \ket*{y_t}_{\darkgray{\sR_{\sY, t}^{(t)}}} \right),
\end{align}
\end{fact}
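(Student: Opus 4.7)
The proof proceeds by induction on $t$. For the base case $t=0$, both sides reduce to $\ket*{0}_{\gsA\gsB}\ket*{\varnothing}_{\gsR}$: the outer sum collapses to the single empty sequence (by the convention $[N]^0_{\dist} = \{()\}$), the prefactor $\sqrt{N!/N!} = 1$, the empty product is the identity, and $\ket*{\varnothing}_{\gsR}$ is the relation state associated with the empty multiset.

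For the inductive step, I would assume the formula holds at step $t$. By~\cref{lem:cho-isometry} and~\cref{lemma:purified-cho-state-unit-norm}, $\ket*{\Adv^{\pr\cdot G}_t}$ is supported on relation states indexed by $\calR^{\inj}_t$, so every $R$ in the expansion has size exactly $t$. Prepending $\pr\cdot G_{\gsA}\cdot A_{t+1,\gsA\gsB}$ to both sides, the factor $G_{\gsA}\cdot A_{t+1,\gsA\gsB}$ acts only on $(\sA,\sB)$ and therefore slots directly in front of the bracketed product. To evaluate $\pr$, insert the resolution of identity $\Id_{\gsA}=\sum_{x_{t+1}\in[N]}\ketbra*{x_{t+1}}{x_{t+1}}_{\gsA}$ immediately after $G_{\gsA} A_{t+1}$ and apply the defining action of $\pr$ on each basis state $\ket*{x_{t+1}}_{\gsA}\ket*{R}_{\gsR}$ with $\abs{R}=t$, namely
\begin{equation}
    \pr\ket*{x_{t+1}}_{\gsA}\ket*{R}_{\gsR} = \frac{1}{\sqrt{N-t}}\sum_{y_{t+1}\notin\Im(R)}\ket*{y_{t+1}}_{\gsA}\ket*{R\cup\{(x_{t+1},y_{t+1})\}}_{\gsR}.
\end{equation}
The outgoing $\ket*{y_{t+1}}_{\gsA}$ combines with $\bra*{x_{t+1}}$ to form the $i=t+1$ factor $\ketbra*{y_{t+1}}{x_{t+1}}_{\gsA}\cdot G_{\gsA}\cdot A_{t+1,\gsA\gsB}$. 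The constraint $y_{t+1}\notin\Im(R)=\{y_1,\dots,y_t\}$, combined with the inductive assumption $(y_1,\dots,y_t)\in[N]^t_{\dist}$, extends to $(y_1,\dots,y_{t+1})\in[N]^{t+1}_{\dist}$, while the prefactor updates via $\sqrt{(N-t)!/N!}\cdot(1/\sqrt{N-t})=\sqrt{(N-t-1)!/N!}$, matching the claimed form at step $t+1$.

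For the second displayed equality, I would simply invoke~\cref{def:relation-states-repeated}: since every $y_i$ appearing in the sum is distinct, the pairs $(x_i,y_i)$ are automatically pairwise distinct, so the normalization $1/\sqrt{t!\sum_{(x,y)\in[N]^2}\num(R,(x,y))!}$ reduces to $1/\sqrt{t!}$, and the symmetrizing sum over $\pi\in\sSym_t$ in the definition of $\ket*{R}$ is precisely the action of the permutation unitaries $S_\pi$ on the ordered tensors $\ket*{x_1}_{\darkgray{\sR_{\sX,1}^{(t)}}}\cdots\ket*{x_t}_{\darkgray{\sR_{\sX,t}^{(t)}}}$ and $\ket*{y_1}_{\darkgray{\sR_{\sY,1}^{(t)}}}\cdots\ket*{y_t}_{\darkgray{\sR_{\sY,t}^{(t)}}}$ (noting that summing over $\pi$ and over $\pi^{-1}$ are the same).

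I expect no substantive obstacle: the only bookkeeping subtlety is ensuring the injectivity invariant propagates through the induction, which is supplied cleanly by~\cref{lem:cho-isometry}, and the arithmetic simplification of the normalization factor at each step follows mechanically.
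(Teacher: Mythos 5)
Your proof is correct and is essentially the argument the paper has in mind: the paper states this as a Fact without proof ("expand out from the definitions"), and your induction on $t$ — base case $\ket*{0}\ket*{\varnothing}$, then applying $\pr \cdot G_{\gsA} A_{t+1}$ with a resolution of identity on $\sA$, tracking the prefactor $\sqrt{(N-t)!/N!}\cdot(N-t)^{-1/2}$ and the propagation of distinctness of the $y_i$'s — is exactly that expansion made explicit, with no circularity since neither \cref{lem:cho-isometry} nor \cref{lemma:purified-cho-state-unit-norm} relies on this Fact. The passage to the second displayed form via $\ket*{R}=\frac{1}{\sqrt{t!}}\sum_{\pi}S_\pi\ket*{x_1,\dots,x_t}\otimes S_\pi\ket*{y_1,\dots,y_t}$ (using distinctness of the pairs and $\pi\leftrightarrow\pi^{-1}$) is also right.
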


\begin{proof}[Proof of~\cref{lem:cho-transfer}]
Our proof will use the following trivial identities for registers $\sA$ and $(\sR_{\sX, i}^{(t)})_{i \in [N]}$:
\begin{align}
 \sum_{z \in [N]} \ketbra*{z}_{\gsA}
 &= \Id_{\gsA}, \label{eq:zz-ID-A} \\
 \sum_{z \in [N]} \ketbra*{z}_{\darkgray{\sR_{\sX, i}^{(t)}}} &= \Id_{\darkgray{\sR_{\sX, i}^{(t)}}}. \label{eq:zz-ID-Xi}
\end{align}
For any $n$-qubit unitary $G$ and $x, z \in [N]$, we have
\begin{equation} \label{eq:xGz-ID}
\bra*{x}_{\gsA} G_{\gsA} \ket*{z}_{\gsA}
= \bra*{x}_{\darkgray{\sR_{\sX, i}^{(t)}}} G_{\darkgray{\sR_{\sX, i}^{(t)}}} \ket*{z}_{\darkgray{\sR_{\sX, i}^{(t)}}}.  
\end{equation}
Therefore, we have
\begin{align}
    \sum_{x \in [N]} \ket*{x}_{\darkgray{\sR_{\sX, i}^{(t)}}} \otimes \bra*{x}_{\gsA} G_{\gsA} &= \sum_{x, z \in [N]} \ket*{x}_{\darkgray{\sR_{\sX, i}^{(t)}}} \otimes \left(\bra*{x}_{\gsA} G_{\gsA} \ket*{z}_{\gsA}\right) \bra*{z}_{\gsA} \tag{Using Eq.~\eqref{eq:zz-ID-A}} \\
    &= \sum_{x, z \in [N]} \ket*{x}_{\darkgray{\sR_{\sX, i}^{(t)}}} \otimes \left(\bra*{x}_{\darkgray{\sR_{\sX, i}^{(t)}}} G_{\darkgray{\sR_{\sX, i}^{(t)}}} \ket*{z}_{\darkgray{\sR_{\sX, i}^{(t)}}}\right) \bra*{z}_{\gsA} \tag{Using Eq.~\eqref{eq:xGz-ID}} \\
    &= \sum_{x, z \in [N]} \left( \ketbra*{x}_{\darkgray{\sR_{\sX, i}^{(t)}}} G_{\darkgray{\sR_{\sX, i}^{(t)}}} \ket*{z}_{\darkgray{\sR_{\sX, i}^{(t)}}}\right) \otimes \bra*{z}_{\gsA} \\
    &= \sum_{z \in [N]} G_{\darkgray{\sR_{\sX, i}^{(t)}}} \ket*{z}_{\darkgray{\sR_{\sX, i}^{(t)}}} \otimes \bra*{z}_{\gsA} \tag{Using Eq.~\eqref{eq:zz-ID-Xi}} \\
    &= \sum_{x \in [N]} G_{\darkgray{\sR_{\sX, i}^{(t)}}} \ket*{x}_{\darkgray{\sR_{\sX, i}^{(t)}}} \otimes \bra*{x}_{\gsA}. \tag{Relabeling $z$ with $x$}
\end{align}
Applying the above identity to registers $\darkgray{\sR_{\sX, 1}^{(t)}}, \ldots, \darkgray{\sR_{\sX, t}^{(t)}}$ to \cref{fact:expli-form-cho-state} yields
\begin{align}
\ket*{\Adv^{\pr \cdot G}_t} &= \sqrt{\frac{(N-t)!}{N!}} \sum_{\substack{(x_1, \ldots, x_t) \in [N]^t \\ (y_1, \ldots, y_t) \in [N]_{\dist}^t}}
\left[ \, \prod_{i = 1}^t \Big( \ket*{y_i}_{\gsA} \otimes \bra*{x_i}_{\gsA} G_{\gsA} \cdot A_{i, \gsA \gsB} \Big) \ket*{0}_{\gsA \gsB} \, \right] \otimes\\
&\frac{1}{\sqrt{t!}} \sum_{\pi \in \sSym_t} \left( S_{\pi} \ket*{x_1}_{\darkgray{\sR_{\sX, 1}^{(t)}}} \dots \ket*{x_t}_{\darkgray{\sR_{\sX, t}^{(t)}}} \right) \otimes \left( S_{\pi} \ket*{y_1}_{\darkgray{\sR_{\sY, 1}^{(t)}}} \dots \ket*{y_t}_{\darkgray{\sR_{\sY, t}^{(t)}}} \right) \\
&= \sqrt{\frac{(N-t)!}{N!}} \sum_{\substack{(x_1, \ldots, x_t) \in [N]^t \\ (y_1, \ldots, y_t) \in [N]_{\dist}^t}}
\left[ \, \prod_{i = 1}^t \Big( \ketbra*{y_i}{x_i}_{\gsA} \cdot A_{i, \gsA \gsB} \Big) \ket*{0}_{\gsA \gsB} \, \right] \otimes\\
&\frac{1}{\sqrt{t!}} \sum_{\pi \in \sSym_t} \left( S_{\pi} \, G_{\darkgray{\sR_{\sX, 1}^{(t)}}} \ket*{x_1}_{\darkgray{\sR_{\sX, 1}^{(t)}}} \dots 
G_{\darkgray{\sR_{\sX, t}^{(t)}}} \ket*{x_t}_{\darkgray{\sR_{\sX, t}^{(t)}}} \right) \otimes \left( S_{\pi} \ket*{y_1}_{\darkgray{\sR_{\sY, 1}^{(t)}}} \dots \ket*{y_t}_{\darkgray{\sR_{\sY, t}^{(t)}}} \right) \\
&= (G_{\darkgray{\sR_{\sX, 1}^{(t)}}} \otimes \ldots \otimes G_{\darkgray{\sR_{\sX, t}^{(t)}}}) \ket*{\Adv^{\pr}_t}
\end{align}
The last line follows from the fact that $(G_{\darkgray{\sR_{\sX, 1}^{(t)}}} \otimes \ldots \otimes G_{\darkgray{\sR_{\sX, t}^{(t)}}})$ acts identically on all $t$ registers, so
\begin{equation}
    S_{\pi} \cdot (G_{\darkgray{\sR_{\sX, 1}^{(t)}}} \otimes \ldots \otimes G_{\darkgray{\sR_{\sX, t}^{(t)}}}) = (G_{\darkgray{\sR_{\sX, 1}^{(t)}}} \otimes \ldots \otimes G_{\darkgray{\sR_{\sX, t}^{(t)}}}) \cdot S_{\pi}.
\end{equation}
This concludes the proof.
\end{proof}

\begin{corollary}[Trace distance between original state and the projected state] \label{corollary:distinct-X}
    \begin{equation}
\norm{\Tr_{\sR}\left(\Pi^{\dist}_{\gsR_{\darkgray{X}}^{\darkgray{(t)}}} \cdot \E_{ C \gets \frakD} \ketbra*{\Adv^{\pr \cdot C}_t}_{\gsA \gsB \gsR} \cdot \Pi^{\dist}_{\gsR_{\darkgray{X}}^{\darkgray{(t)}}}  \right) - \Tr_{\sR}\left(\E_{ C \gets \frakD} \ketbra*{\Adv^{\pr \cdot C}_t}_{\gsA \gsB \gsR} \right)}_1 \leq \frac{t(t-1)}{N+1}.
    \end{equation}
\end{corollary}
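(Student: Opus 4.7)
The plan is to combine three ingredients already established in the excerpt: the gentle measurement lemma (\cref{lemma:gentle}), the right unitary invariance of $\pr$ (\cref{lem:cho-transfer}), and the twirling-into-distinct-subspace bound (\cref{lem:almost-distinct-on-X}).

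First, I would apply \cref{lemma:gentle} with $\sC = (\sA,\sB)$, $\sD = \sR$, $\rho_{\gsC\gsD} = \E_{C \gets \frakD} \ketbra*{\Adv^{\pr \cdot C}_t}_{\gsA \gsB \gsR}$, and $\Pi_{\gsD} = \Pi^{\dist}_{\gsR_X^{(t)}}$ (note that this projector is supported on a subregister of $\sR$, so it qualifies as a projector on $\sD$). This reduces the trace-distance bound to showing
\begin{equation}
    \Tr\Bigl( \Pi^{\dist}_{\gsR_X^{(t)}} \cdot \E_{C \gets \frakD} \ketbra*{\Adv^{\pr \cdot C}_t}_{\gsA \gsB \gsR} \Bigr) \geq 1 - \frac{t(t-1)}{N+1}.
\end{equation}

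Next, I would apply right unitary invariance (\cref{lem:cho-transfer}), which gives
\begin{equation}
    \ket*{\Adv^{\pr \cdot C}_t}_{\gsA \gsB \gsR} = \bigl(C_{\darkgray{\sR^{(t)}_{\sX,1}}} \otimes \cdots \otimes C_{\darkgray{\sR^{(t)}_{\sX,t}}}\bigr) \ket*{\Adv^{\pr}_t}_{\gsA \gsB \gsR},
\end{equation}
so the twirl over $C \gets \frakD$ acts as a $t$-fold tensor twirl on the registers $\sR_X^{(t)}$. The quantity inside the trace therefore has exactly the form required by \cref{lem:almost-distinct-on-X}, with the role of the ``first $t$ registers'' played by $\sR_X^{(t)}$ and the role of the ``remaining registers'' played by $\sA, \sB$ together with $\sR_Y^{(t)}$. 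Since $\frakD$ is a $2$-design, \cref{lem:almost-distinct-on-X} yields the desired lower bound $1 - \frac{t(t-1)}{N+1}$ immediately.

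The whole proof is essentially bookkeeping; the only mild subtlety is matching registers, namely recognizing that $\Pi^{\dist}_{\gsR_X^{(t)}}$ lives on the purifying side (so \cref{lemma:gentle} applies) and that after invoking \cref{lem:cho-transfer} the $C$'s sit on precisely the same registers as $\Pi^{\dist}$ (so \cref{lem:almost-distinct-on-X} applies). No new estimate or combinatorial argument is required.
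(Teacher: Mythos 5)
Your proposal is correct and follows the same three-step route as the paper's own proof: reduce to a fidelity via \cref{lemma:gentle}, shift the $C$-twirl onto the purifying registers $\sR_X^{(t)}$ via \cref{lem:cho-transfer}, and then invoke \cref{lem:almost-distinct-on-X}. The register bookkeeping you flag (that $\Pi^{\dist}$ sits on a subregister of $\sR$ and that the $C^{\otimes t}$ lands on exactly those registers) is indeed the only point needing care, and you handle it correctly.
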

\begin{proof} The trace distance can be bounded as follows,
\begin{align}
    &\norm{\Tr_{\sR}\left(\Pi^{\dist}_{\gsR_{\darkgray{X}}^{\darkgray{(t)}}} \cdot \E_{ C \gets \frakD} \ketbra*{\Adv^{\pr \cdot C}_t}_{\gsA \gsB \gsR} \cdot \Pi^{\dist}_{\gsR_{\darkgray{X}}^{\darkgray{(t)}}}  \right) - \Tr_{\sR}\left(\E_{ C \gets \frakD} \ketbra*{\Adv^{\pr \cdot C}_t}_{\gsA \gsB \gsR} \right)}_1\\
    &= 1 - \Tr\left(\E_{ C \gets \frakD} \Pi^{\dist}_{\gsR_{\darkgray{X}}^{\darkgray{(t)}}} \cdot \ketbra*{\Adv^{\pr \cdot C}_t}_{\gsA \gsB \gsR} \cdot \Pi^{\dist}_{\gsR_{\darkgray{X}}^{\darkgray{(t)}}}\right) \tag{\cref{lemma:gentle}}\\
    &= 1 - \Tr\left( \E_{ C \gets \frakD} \Pi^{\dist}_{\darkgray{\sR_{\sX}^{(t)}}} \cdot C^{\otimes t}_{\darkgray{\sR^{(t)}_{\sX}}} \cdot \ketbra*{\Adv^{\pr}_t}_{\gsA \gsB \gsP \gsF} \cdot C^{\otimes t}_{\darkgray{\sR^{(t)}_{\sX}}} \cdot \Pi^{\dist}_{\gsR_{\darkgray{X}}^{\darkgray{(t)}}} \right) \tag{By~\cref{lem:cho-transfer}}\\
    &\leq \frac{t(t-1)}{N+1}, \tag{By~\cref{lem:almost-distinct-on-X}}
\end{align}
which completes the proof of this corollary.
\end{proof}

\subsection{Relating $V$ to $\pfo$} 
\label{subsec:relation-cho-pfo}

We now connect the path-recording oracle $\pr$ to the $\pfo$ oracle defined previously. We begin by defining the $\pfo$ analog of $\ket*{\Adv^{\pr \cdot G}_t}_{\gsA \gsB \gsR}$.

\begin{definition}
    Given an $n$-qubit unitary $G$ and a $t$-query adversary $\Adv$ specified by a $t$-tuple of unitaries $(A_{1, \gsA \gsB},\dots,A_{t, \gsA \gsB})$, define
    \begin{align}
        \ket*{\Adv^{\pfo \cdot G}_t}_{\gsA \gsB \gsP \gsF} \coloneqq \prod_{i = 1}^t \Big( \pfo \cdot G_{\gsA} \cdot A_{i,\gsA \gsB} \Big) \ket*{0}_{\gsA \gsB} \ket*{\pf_{\varnothing}}_{\gsP \gsF}.
    \end{align}
    Recall that
    \begin{align}
    \ket*{\pf_{\varnothing}}_{\gsP \gsF} \coloneqq \frac{1}{\sqrt{N!}} \sum_{\pi \in \sSym_N} \ket*{\pi}_{\gsP} \otimes \frac{1}{\sqrt{2^N}} \sum_{f \in \{0,1\}^N} \ket*{f}_{\gsF}.
\end{align}
\end{definition}

We can expand the definition of $\ket*{\Adv^{\pfo \cdot G}_t}_{\gsA \gsB \gsP \gsF}$ to obtain the following.

\begin{fact}[Explicit form of $\ket*{\Adv^{\pfo \cdot G}_t}_{\gsA \gsB \gsP \gsF}$]
\label{fact:expli-form-pfo-state}
    \begin{align}
        \ket*{\Adv^{\pfo \cdot G}_t}_{\gsA \gsB \gsP \gsF} &= \sqrt{\frac{(N-t)!}{N!}} \sum_{\substack{(x_1, \ldots, x_t) \in [N]^t \\ (y_1, \ldots, y_t) \in [N]^t}}
\left[ \, \prod_{i = 1}^t \Big( \ketbra*{y_i}{x_i}_{\gsA} \cdot G_{\gsA} \cdot A_{i,\gsA \gsB} \Big) \ket*{0}_{\gsA \gsB} \, \right]  \otimes \ket*{\pf_{\{(x_i,y_i)\}_{i = 1}^t}}_{\gsP \gsF}.
    \end{align}
\end{fact}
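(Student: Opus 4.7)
The plan is to prove this fact by straightforward induction on $t$, using~\cref{claim:pfo-action} (the action of $\pfo$ on $\pf$-relation states) as the key inductive tool. There is no conceptual obstacle; the proof is essentially bookkeeping, and the only care needed is to correctly combine the resolution of identity $\sum_{x_{t+1}} \ketbra*{x_{t+1}}_{\gsA} = \Id_{\gsA}$ with the sum over $y_{t+1}$ coming from $\pfo$ to produce the rank-one operators $\ketbra*{y_{t+1}}{x_{t+1}}_{\gsA}$ that appear in the stated expression.

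For the base case $t=0$, the claimed right-hand side is just $\ket*{0}_{\gsA\gsB} \otimes \ket*{\pf_{\varnothing}}_{\gsP\gsF}$ (the empty product is the identity, the empty sum is a single term, and the prefactor is $\sqrt{N!/N!}=1$), which matches the definition of $\ket*{\Adv^{\pfo \cdot G}_0}$.

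For the inductive step, assume the formula holds at step $t$, and apply $\pfo \cdot G_{\gsA} \cdot A_{t+1,\gsA\gsB}$ to both sides. On the $\sA\sB$ registers, insert a resolution of the identity $\sum_{x_{t+1} \in [N]} \ketbra*{x_{t+1}}_{\gsA}$ immediately to the left of $G_{\gsA} \cdot A_{t+1,\gsA\gsB}$; this lets us write the pre-$\pfo$ state as a sum over $x_{t+1}$ of terms of the form $\ket*{x_{t+1}}_{\gsA} \otimes \bigl(\bra*{x_{t+1}}_{\gsA} G_{\gsA} A_{t+1,\gsA\gsB} \cdot [\,\text{inductive state}\,]\bigr)$, tensored with the (size-$t$) $\pf$-relation state on $\sP\sF$. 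Now apply~\cref{claim:pfo-action} to each $\ket*{x_{t+1}}_{\gsA} \ket*{\pf_{R_t}}_{\gsP\gsF}$ where $R_t = \{(x_i,y_i)\}_{i=1}^t$: since $\abs{R_t} = t$, this produces the prefactor $\frac{1}{\sqrt{N-t}}$ and a sum over $y_{t+1} \in [N]$ of $\ket*{y_{t+1}}_{\gsA} \ket*{\pf_{R_t \cup \{(x_{t+1},y_{t+1})\}}}_{\gsP\gsF}$.

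Finally, reassemble: the tensor $\ket*{y_{t+1}}_{\gsA} \otimes \bra*{x_{t+1}}_{\gsA} G_{\gsA} A_{t+1,\gsA\gsB}$ equals $\ketbra*{y_{t+1}}{x_{t+1}}_{\gsA} \cdot G_{\gsA} \cdot A_{t+1,\gsA\gsB}$ (acting on $\sA\sB$), and combining the previous prefactor $\sqrt{(N-t)!/N!}$ with the new $\frac{1}{\sqrt{N-t}}$ gives $\sqrt{(N-t-1)!/N!} = \sqrt{(N-(t+1))!/N!}$. The summation indices now run over $(x_1,\dots,x_{t+1}) \in [N]^{t+1}$ and $(y_1,\dots,y_{t+1}) \in [N]^{t+1}$, and the $\pf$-register label has become $\{(x_i,y_i)\}_{i=1}^{t+1}$. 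This is precisely the claimed formula at step $t+1$, completing the induction. The one point worth double-checking is that~\cref{claim:pfo-action} requires $\abs{R} < N$, which is guaranteed as long as $t < N$; the statement implicitly assumes $t \leq N$ since otherwise the prefactor $\sqrt{(N-t)!/N!}$ is not defined, so there is no issue.
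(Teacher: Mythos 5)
Your induction is correct and is exactly the derivation the paper intends: the paper states this fact as a direct expansion of the definition, which amounts to repeatedly applying \cref{claim:pfo-action} starting from $\ket*{\pf_{\varnothing}}$, precisely as you do (and \cref{claim:pfo-action} is stated for arbitrary $R \in \calR_t$, so possible collisions among the $x_i$ cause no trouble). The bookkeeping of the prefactor and the $t<N$ caveat are handled correctly, so there is nothing to add.
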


While the state $\ket*{\pf_{\{(x_i,y_i)\}_{i = 1}^t}}$ is supported on an exponential number of qubits, we can compress the environment using the following linear operator $\Compress$.
By~\cref{claim:phi-S-orthogonal}, $\Compress$ is a partial isometry. Intuitively, $\Compress$ ``compresses'' the state $\ket*{\pf_R}$, which requires an exponential number of qubits $n$, to $\ket*{R}$, which is only as big as the size of the relation. 

\begin{definition}
    Define $\Compress: \calH_{\sP} \otimes \calH_{\sF} \rightarrow \calH_{\sR}$ to be
    \begin{align}
        \Compress \coloneqq \sum_{R \in\calR^{\bij}} \ketbra*{R}{\pf_R}
    \end{align}
\end{definition}

Next, we will use $\Compress$ to relate the path-recording oracle $\pr$ to the purified permutation-function oracle. To do so, we will need to define the following projectors.

\begin{definition}[Distinct subspace projector]
Given $0 \leq t \leq N$. Let
\begin{align}
    \Pi^{\dist}_{\darkgray{\sR_{\sX}^{(t)}}} &\coloneq \sum_{(x_1, \ldots, x_t) \in [N]_{\dist}^t} \ketbra*{x_1}_{\darkgray{\sR_{\sX, 1}^{(t)}}} \otimes \ldots \otimes \ketbra*{x_t}_{\darkgray{\sR_{\sX, t}^{(t)}}}.
\end{align}
\end{definition}

\begin{definition}[Distinct subspace projector for $\pf$-relation states]
    Let
    \begin{align}
    \widetilde{\Pi}^{\dist}_{\gsP \gsF} \coloneqq \sum_{\substack{R \in\calR^{\bij},\\ \abs{R} = t}} \ketbra*{\pf_R}.
    \end{align}
\end{definition}

\begin{lemma}[Relating $\pr$ and $\pfo$ states]
\label{lemma:compress-cho-pfo-state}
    For all $n$-qubit unitaries $G$,
    \begin{align}
        \Compress \cdot \widetilde{\Pi}^{\dist}_{\gsP \gsF} \cdot \ket*{\Adv^{\pfo \cdot G}_t}_{\gsA \gsB \gsP \gsF} = \Pi^{\dist}_{\darkgray{\sR_{\sX}^{(t)}}} \cdot \ket*{\Adv^{\pr \cdot G}_t}
    \end{align}
\end{lemma}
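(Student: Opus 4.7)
The plan is to expand both sides of the claimed equality using the explicit Feynman-path forms in~\cref{fact:expli-form-pfo-state} and~\cref{fact:expli-form-cho-state}, and then check that the two projectors restrict both sides to the same sum, namely
\[
\sqrt{\frac{(N-t)!}{N!}} \sum_{\substack{(x_1, \ldots, x_t) \in [N]^t_{\dist} \\ (y_1, \ldots, y_t) \in [N]^t_{\dist}}}
\left[ \prod_{i = 1}^t \Big( \ketbra*{y_i}{x_i}_{\gsA} \cdot G_{\gsA} \cdot A_{i, \gsA \gsB} \Big) \ket*{0}_{\gsA \gsB} \right] \otimes \ket*{\{(x_i, y_i)\}_{i=1}^t}_{\gsR}.
\]
The key observation is that although $\widetilde{\Pi}^{\dist}_{\gsP\gsF}$ and $\Pi^{\dist}_{\darkgray{\sR_{\sX}^{(t)}}}$ act on completely different registers, both effectively enforce the bijectivity of the recorded relation $\{(x_i,y_i)\}_{i=1}^t$; the map $\Compress$ then provides the clean translation between the two encodings.

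For the left-hand side, starting from~\cref{fact:expli-form-pfo-state}, the sum is initially over all $(x_1,\ldots,x_t),(y_1,\ldots,y_t) \in [N]^t$. I need to verify that $\widetilde{\Pi}^{\dist}_{\gsP\gsF} \ket*{\pf_{R'}} = 0$ for every multiset $R' = \{(x_i,y_i)\}_{i=1}^t$ that is not bijective. There are two subcases: if $R'$ contains $(x,y)$ and $(x,y')$ with $y \neq y'$, then no $\pi$ satisfies $\delta_{\pi,R'} = 1$, so $\ket*{\pf_{R'}}$ is already zero; if $R'$ instead contains some repeated pair, then the $\sF$-register part of $\ket*{\pf_{R'}}$ is a Fourier character $(-1)^{\sum_i f(x_i)}$ whose indicator vector has Hamming weight strictly less than $t$ (because each duplicated $x$ contributes an even exponent), while every bijective $\ket*{\pf_R}$ has $\sF$-part a Fourier character of weight exactly $t$, so orthogonality of Fourier characters forces $\braket*{\pf_R}{\pf_{R'}} = 0$ for all $R \in \calR^{\bij}_t$. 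Hence applying $\widetilde{\Pi}^{\dist}_{\gsP\gsF}$ restricts the sum to indices where $\{(x_i,y_i)\}_{i=1}^t$ is bijective, i.e., to $(x_1,\ldots,x_t),(y_1,\ldots,y_t) \in [N]^t_{\dist}$, and applying $\Compress = \sum_{R\in\calR^{\bij}}\ketbra*{R}{\pf_R}$ then replaces each surviving $\ket*{\pf_{\{(x_i,y_i)\}}}$ with $\ket*{\{(x_i,y_i)\}}$ on the $\sR$ register, producing the displayed sum.

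For the right-hand side, I start from~\cref{fact:expli-form-cho-state}, whose sum is already restricted to $(y_1,\ldots,y_t) \in [N]^t_{\dist}$ because $\pr$ only inserts $y$'s outside $\Im(R)$. Then $\Pi^{\dist}_{\darkgray{\sR_{\sX}^{(t)}}}$ acts on each relation state $\ket*{\{(x_i,y_i)\}_{i=1}^t} = \frac{1}{\sqrt{t!}} \sum_{\pi \in \sSym_t} S_\pi \ket*{x_1, y_1, \ldots, x_t, y_t}$ by projecting the $x$-entries onto distinct tuples; since every ordering permutes the same multiset of $x$'s onto $\sR_{\sX}^{(t)}$, the projector preserves the state when $(x_1,\ldots,x_t) \in [N]^t_{\dist}$ and annihilates it otherwise. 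The surviving sum is again indexed by $(x_1,\ldots,x_t),(y_1,\ldots,y_t) \in [N]^t_{\dist}$ with the same adversary-side operator and the same relation state, matching the LHS. The only delicate step is the Fourier-orthogonality argument for $\widetilde{\Pi}^{\dist}_{\gsP\gsF}$ on non-bijective $\pf$-states; everything else is bookkeeping.
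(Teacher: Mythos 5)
Your approach matches the paper's: expand both sides via~\cref{fact:expli-form-pfo-state} and~\cref{fact:expli-form-cho-state}, apply the projectors, and note that $\Compress$ converts each surviving $\ket*{\pf_R}$ to $\ket*{R}$. The extra Fourier-orthogonality argument you supply---justifying why $\widetilde{\Pi}^{\dist}_{\gsP\gsF}$ annihilates the non-bijective $\ket*{\pf_{R'}}$---addresses a step the paper asserts without proof, and the underlying ideas (vanishing permutation sum, Fourier-weight drop for repeated pairs) are correct.

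However, your two-way case split for non-bijective $R'$ is not exhaustive. You handle (i) $R'$ containing $(x,y)$ and $(x,y')$ with $y\neq y'$, and (ii) $R'$ containing a repeated pair, but you omit the case where all $x_i$'s are distinct while two of the $y_i$'s coincide; for example $R'=\{(1,1),(2,1)\}$ is non-bijective yet fits neither sub-case. That case is dispatched by the same mechanism as (i): no permutation $\pi$ satisfies $\pi(1)=\pi(2)=1$, so $\delta_{\pi,R'}=0$ for all $\pi$ and $\ket*{\pf_{R'}}=0$. A cleaner exhaustive split is: if $R'$ contains two \emph{distinct} tuples sharing a coordinate (first or second), the permutation sum is empty and $\ket*{\pf_{R'}}=0$; otherwise non-bijectivity forces a repeated pair, and your Fourier-weight argument applies. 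With that one-line repair the proof is complete.
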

\begin{proof}
    By~\cref{fact:expli-form-cho-state}, we have
    \begin{align}
        \ket*{\Adv^{\pr \cdot G}_t} &= \sqrt{\frac{(N-t)!}{N!}} \sum_{\substack{(x_1, \ldots, x_t) \in [N]^t \\ (y_1, \ldots, y_t) \in [N]_{\dist}^t}}
\left[ \, \prod_{i = 1}^t \Big( \ketbra*{y_i}{x_i}_{\gsA} \cdot G_{\gsA} \cdot A_{i, \gsA \gsB} \Big) \ket*{0}_{\gsA \gsB} \, \right] \otimes \ket*{\{(x_i,y_i)\}_{i = 1}^t}_{\gsR}.
    \end{align}
    Applying $\Pi^{\dist}_{\darkgray{\sR_{\sX}^{(t)}}}$ to this state selects the terms corresponding to $(x_1,\dots,x_t) \in [N]^t_{\dist}$: 
    \begin{align}
        \Pi^{\dist}_{\darkgray{\sR_{\sX}^{(t)}}} \cdot \ket*{\Adv^{\pr \cdot G}_t} &= \sqrt{\frac{(N-t)!}{N!}} \sum_{\substack{(x_1, \ldots, x_t) \in [N]_{\dist}^t \\ (y_1, \ldots, y_t) \in [N]_{\dist}^t}}
\left[ \, \prod_{i = 1}^t \Big( \ketbra*{y_i}{x_i}_{\gsA} \cdot G_{\gsA} \cdot A_{i, \gsA \gsB} \Big) \ket*{0}_{\gsA \gsB} \, \right] \otimes \ket*{\{(x_i,y_i)\}_{i = 1}^t}_{\gsR}. \label{eq:pi-psi-t}
    \end{align}
    By~\cref{fact:expli-form-pfo-state},
    \begin{align}
        \ket*{\Adv^{\pfo \cdot G}_t}_{\gsA \gsB \gsP \gsF} &= \sqrt{\frac{(N-t)!}{N!}} \sum_{\substack{(x_1, \ldots, x_t) \in [N]^t \\ (y_1, \ldots, y_t) \in [N]^t}}
\left[ \, \prod_{i = 1}^t \Big( \ketbra*{y_i}{x_i}_{\gsA} \cdot G_{\gsA} \cdot A_{i, \gsA \gsB} \Big) \ket*{0}_{\gsA \gsB} \, \right]  \otimes \ket*{\pf_{\{(x_i,y_i)\}_{i = 1}^t}}_{\gsP \gsF}.
    \end{align}
    Applying $\widetilde{\Pi}^{\dist}_{\gsP \gsF}$ selects the terms corresponding to $(x_1,\dots,x_t) \in [N]^t_{\dist}$ and $(y_1,\dots,y_t) \in [N]^t_{\dist}$:
    \begin{align}
    &\widetilde{\Pi}^{\dist}_{\gsP \gsF} \cdot \ket*{\Adv^{\pfo \cdot G}_t}_{\gsA \gsB \gsP \gsF}\\
    &= \sqrt{\frac{(N-t)!}{N!}} \sum_{\substack{(x_1, \ldots, x_t) \in [N]_{\dist}^t \\ (y_1, \ldots, y_t) \in [N]_{\dist}^t}}
    \left[ \, \prod_{i = 1}^t \Big( \ketbra*{y_i}{x_i}_{\gsA} \cdot G_{\gsA} \cdot A_{i, \gsA \gsB} \Big) \ket*{0}_{\gsA \gsB} \, \right]  \otimes \ket*{\pf_{\{(x_i,y_i)\}_{i = 1}^t}}_{\gsP \gsF}. \label{eq:tilde-pi-phi-t}
    \end{align}
    Since $\Compress$ maps $\ket*{\pf_R}$ to $\ket*{R}$ for all $R \in \calR^{\bij}$, applying $\Compress$ to the right-hand side of~\cref{eq:tilde-pi-phi-t} yields the right-hand side of~\cref{eq:pi-psi-t}, which proves the claim. 
    \end{proof}

\begin{corollary}[Trace distance between original state and the projected state]
\label{corollary:phi-projector-small-dist}
\begin{align}
    \norm{\Tr_{\sP \sF}\left(\E_{ C \gets \frakD} \ketbra*{\Adv^{\pfo \cdot C}_t}_{\gsA \gsB \gsP \gsF}\right) - \Tr_{\sP \sF}\left(\widetilde{\Pi}^{\dist}_{\gsP \gsF} \cdot \E_{ C \gets \frakD} \ketbra*{\Adv^{\pfo \cdot C}_t}_{\gsA \gsB \gsP \gsF} \cdot \widetilde{\Pi}^{\dist}_{\gsP \gsF} \right)}_1 \leq \frac{t(t-1)}{N+1}.
\end{align}
\end{corollary}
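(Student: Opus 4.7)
The plan is to mirror the proof of \cref{corollary:distinct-X}, transferring the trace-distance computation from the $\pfo$ side to the $\pr$ side via \cref{lemma:compress-cho-pfo-state}. First I would apply \cref{lemma:gentle} with $\rho = \E_{C \gets \frakD} \ketbra*{\Adv^{\pfo \cdot C}_t}$ and projector $\widetilde{\Pi}^{\dist}_{\gsP \gsF}$ to rewrite the left-hand side as
\begin{equation}
    1 - \Tr\!\left(\widetilde{\Pi}^{\dist}_{\gsP \gsF} \cdot \E_{C \gets \frakD} \ketbra*{\Adv^{\pfo \cdot C}_t}\right) = 1 - \E_{C \gets \frakD} \norm{\widetilde{\Pi}^{\dist}_{\gsP \gsF} \ket*{\Adv^{\pfo \cdot C}_t}}_2^2,
\end{equation}
where I have pulled the expectation out by linearity of trace.

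Next I would translate this norm to the corresponding quantity on the $\pr$ side. By \cref{lemma:compress-cho-pfo-state}, $\Compress \cdot \widetilde{\Pi}^{\dist}_{\gsP \gsF} \ket*{\Adv^{\pfo \cdot C}_t} = \Pi^{\dist}_{\darkgray{\sR_{\sX}^{(t)}}} \ket*{\Adv^{\pr \cdot C}_t}$. Since $\Compress = \sum_{R \in \calR^{\bij}} \ketbra*{R}{\pf_R}$, and the collection $\{\ket*{\pf_R}\}_{R \in \calR^{\bij}}$ is orthonormal by \cref{claim:phi-S-orthogonal} while $\{\ket*{R}\}_{R \in \calR^{\bij}}$ is orthonormal by construction, the operator $\Compress$ acts as a partial isometry whose domain includes the image of $\widetilde{\Pi}^{\dist}_{\gsP \gsF}$. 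Hence
\begin{equation}
    \norm{\widetilde{\Pi}^{\dist}_{\gsP \gsF} \ket*{\Adv^{\pfo \cdot C}_t}}_2^2 = \norm{\Compress \cdot \widetilde{\Pi}^{\dist}_{\gsP \gsF} \ket*{\Adv^{\pfo \cdot C}_t}}_2^2 = \norm{\Pi^{\dist}_{\darkgray{\sR_{\sX}^{(t)}}} \ket*{\Adv^{\pr \cdot C}_t}}_2^2.
\end{equation}

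Finally I would bound the right-hand quantity exactly as in the proof of \cref{corollary:distinct-X}: apply \cref{lem:cho-transfer} to transfer $C$ from the query register $\sA$ onto registers $\darkgray{\sR_{\sX,1}^{(t)}}, \ldots, \darkgray{\sR_{\sX,t}^{(t)}}$, and then invoke \cref{lem:almost-distinct-on-X} with $\rho = \ketbra*{\Adv^{\pr}_t}$ to obtain $\E_{C \gets \frakD} \Tr(\Pi^{\dist}_{\darkgray{\sR_{\sX}^{(t)}}} \cdot \ketbra*{\Adv^{\pr \cdot C}_t}) \geq 1 - t(t-1)/(N+1)$. Chaining these equalities and the inequality gives the claimed bound. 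The only step requiring any care is the isometry claim for $\Compress$, and this reduces entirely to \cref{claim:phi-S-orthogonal}; beyond assembling the already-established lemmas there is no substantive obstacle.
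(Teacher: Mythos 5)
Your proposal is correct and takes essentially the same route as the paper: apply \cref{lemma:gentle}, transfer to the $\pr$ side via \cref{lemma:compress-cho-pfo-state} and the isometry property of $\Compress$, then invoke \cref{lem:cho-transfer} and \cref{lem:almost-distinct-on-X} exactly as in \cref{corollary:distinct-X}. The only cosmetic difference is that the paper writes the norm-preservation step as the operator identity $\widetilde{\Pi}^{\dist}_{\gsP\gsF} = \Compress^\dagger \Compress \widetilde{\Pi}^{\dist}_{\gsP\gsF}$ together with cyclicity of trace, whereas you argue the same fact directly in terms of the partial-isometry domain; these are logically identical.
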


\begin{proof}
    By~\cref{lemma:gentle}, we have
    \begin{align}
        &\norm{\Tr_{\sP \sF}\left(\E_{ C \gets \frakD} \ketbra*{\Adv^{\pfo \cdot C}_t}_{\gsA \gsB \gsP \gsF}\right) - \Tr_{\sP \sF}\left(\widetilde{\Pi}^{\dist}_{\gsP \gsF} \cdot \E_{ C \gets \frakD} \ketbra*{\Adv^{\pfo \cdot C}_t}_{\gsA \gsB \gsP \gsF} \cdot \widetilde{\Pi}^{\dist}_{\gsP \gsF} \right)}_1 \nonumber \\
        &= 1 - \Tr\left(\widetilde{\Pi}^{\dist}_{\gsP \gsF} \cdot \E_{ C \gets \frakD} \ketbra*{\Adv^{\pfo \cdot C}_t}_{\gsA \gsB \gsP \gsF} \cdot \widetilde{\Pi}^{\dist}_{\gsP \gsF} \right) \label{eq:pre-compress}
    \end{align}
    Next, observe that $\widetilde{\Pi}^{\dist}_{\gsP \gsF} = \Compress^\dagger \cdot \Compress \cdot \widetilde{\Pi}^{\dist}_{\gsP \gsF}$ since
    \begin{align}
        &\Compress^\dagger \cdot \Compress \cdot \widetilde{\Pi}^{\dist}_{\gsP \gsF}\\
        &= \Big( \sum_{R \in\calR^{\bij}} \ketbra*{\pf_R}{R} \Big) \cdot \Big( \sum_{R \in\calR^{\bij}} \ketbra*{R}{\pf_R} \Big) \cdot \Big( \sum_{\substack{R \in\calR^{\bij},\\ \abs{R} = t}} \ketbra*{\pf_R} \Big) \\
        &= \sum_{\substack{R \in\calR^{\bij},\\ \abs{R} = t}} \ketbra*{\pf_R} = \widetilde{\Pi}^{\dist}_{\gsP \gsF}.
    \end{align}
    By plugging this identity into (\ref{eq:pre-compress}), we get
    \begin{align}
        (\ref{eq:pre-compress}) &= 1 - \Tr\left(\Compress^\dagger \cdot \Compress \cdot \widetilde{\Pi}^{\dist}_{\gsP \gsF} \cdot \E_{ C \gets \frakD} \ketbra*{\Adv^{\pfo \cdot C}_t}_{\gsA \gsB \gsP \gsF} \cdot \widetilde{\Pi}^{\dist}_{\gsP \gsF} \right) \\
        &= 1 - \Tr\left(\Compress \cdot \widetilde{\Pi}^{\dist}_{\gsP \gsF} \cdot \E_{ C \gets \frakD} \ketbra*{\Adv^{\pfo \cdot C}_t}_{\gsA \gsB \gsP \gsF} \cdot \widetilde{\Pi}^{\dist}_{\gsP \gsF} \cdot \Compress^\dagger \right) \\
        &= 1 - \Tr\left( \E_{ C \gets \frakD} \Pi^{\dist}_{\darkgray{\sR_{\sX}^{(t)}}} \cdot \ketbra*{\Adv^{\pr \cdot C}_t}_{\gsA \gsB \gsR} \cdot \Pi^{\dist}_{\gsR_{\darkgray{X}}^{\darkgray{(t)}}} \right) \tag{By~\cref{lemma:compress-cho-pfo-state}}\\
        &\leq \frac{t(t-1)}{N+1} \tag{By~\cref{corollary:distinct-X}},
    \end{align}
    which completes the proof.
\end{proof}

\section{The PRU proof}
\label{sec:hybrids-full-list}

\subsection{Setup}

We define a distribution over $n$-qubit unitaries parameterized by any $n$-qubit unitary $2$-design $\frakD$.

\begin{definition}[$\mathsf{PRU}(\frakD)$ distribution]
    Let $\frakD$ be a distribution supported on $\calU(N)$. The distribution $\mathsf{PF}({\frakD})$ is defined as follows:
    \begin{enumerate}
        \item Sample a uniformly random permutation $\pi \gets \sSym_{N}$, a uniformly random $f \gets \{0,1\}^N$, and a uniformly random $n$-qubit unitary $C \gets \frakD$.
        \item Output the unitary $\calO \coloneqq P_\pi \cdot F_f \cdot C$.
    \end{enumerate} 
\end{definition}

The goal of this section is to prove the following theorem.

\begin{theorem}[$\mathsf{PF}(\frakD)$ is indistinguishable from Haar-random]\label{thm:statistical-PRU} Let $\Adv$ be a $t$-query oracle adversary that only makes forward queries, and let $\frakD$ be an exact unitary $2$-design. Then
\begin{align}
    \TD\left(\E_{\calO \gets \mathsf{PF}(\frakD)} \ketbra*{\Adv_t^{\calO}}, \E_{\calO \gets \mu_{\mathsf{Haar}}} \ketbra*{\Adv_t^{\calO}} \right) \leq \frac{4t(t-1)}{N+1}
\end{align}
\end{theorem}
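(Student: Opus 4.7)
The plan is to introduce the path-recording reduced state $\rho_V := \Tr_{\sR} \ketbra*{\Adv^{\pr}_t}_{\gsA \gsB \gsR}$ as a common intermediate, and prove that for every exact $2$-design $\frakD$,
\begin{align}
\TD\!\left(\E_{\calO \gets \mathsf{PF}(\frakD)} \ketbra*{\Adv_t^{\calO}},\ \rho_V\right) \leq \frac{2t(t-1)}{N+1}.
\end{align}
The crucial observation for closing the argument is that the Haar measure is itself a $2$-design \emph{and} satisfies $\mathsf{PF}(\mu_{\mathsf{Haar}}) = \mu_{\mathsf{Haar}}$: for any fixed $\pi \in \sSym_N$ and $f \in \{0,1\}^N$, left-invariance of Haar says that $P_\pi F_f U$ is Haar-distributed when $U$ is. Hence the same bound holds for $\E_{\calO \gets \mu_{\mathsf{Haar}}} \ketbra*{\Adv_t^\calO}$, and the theorem follows from the triangle inequality, giving $4t(t-1)/(N+1)$.

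To establish the key bound, I would chain the four ingredients already stated in the excerpt. First, Claim~\ref{claim:purified-vs-standard-PFO} identifies $\E_{\calO \gets \mathsf{PF}(\frakD)} \ketbra*{\Adv_t^\calO}$ with $\E_{C \gets \frakD}\, \Tr_{\sP\sF} \ketbra*{\Adv^{\pfo \cdot C}_t}$. Second, Corollary~\ref{corollary:phi-projector-small-dist} shows that inserting $\widetilde{\Pi}^{\dist}_{\gsP \gsF}$ inside the expectation perturbs this state by at most $t(t-1)/(N+1)$ in trace distance. Third, Lemma~\ref{lemma:compress-cho-pfo-state} gives the identity $\Compress \cdot \widetilde{\Pi}^{\dist}_{\gsP \gsF} \ket*{\Adv^{\pfo \cdot C}_t} = \Pi^{\dist}_{\darkgray{\sR_{\sX}^{(t)}}} \ket*{\Adv^{\pr \cdot C}_t}$; combined with $\Compress^\dagger \Compress\, \widetilde{\Pi}^{\dist}_{\gsP \gsF} = \widetilde{\Pi}^{\dist}_{\gsP \gsF}$ (immediate from Claim~\ref{claim:phi-S-orthogonal}), this yields the \emph{exact} equality of reduced states
\begin{align}
\Tr_{\sP\sF}\!\left(\widetilde{\Pi}^{\dist}_{\gsP \gsF}\, \ketbra*{\Adv^{\pfo \cdot C}_t}\, \widetilde{\Pi}^{\dist}_{\gsP \gsF}\right) = \Tr_{\sR}\!\left(\Pi^{\dist}_{\darkgray{\sR_{\sX}^{(t)}}}\, \ketbra*{\Adv^{\pr \cdot C}_t}\, \Pi^{\dist}_{\darkgray{\sR_{\sX}^{(t)}}}\right).
\end{align}

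Fourth, removing $\Pi^{\dist}_{\darkgray{\sR_{\sX}^{(t)}}}$ on the path-recording side costs at most another $t(t-1)/(N+1)$ in trace distance by Corollary~\ref{corollary:distinct-X}, producing $\E_{C \gets \frakD}\, \Tr_{\sR} \ketbra*{\Adv^{\pr \cdot C}_t}$. Finally, the right unitary invariance of $\pr$ (Lemma~\ref{lem:cho-transfer}) makes the partial trace independent of $C$ and equal to $\rho_V$. Summing the two projection errors gives the $2t(t-1)/(N+1)$ bound; instantiating $\frakD = \mu_{\mathsf{Haar}}$ and applying the triangle inequality completes the proof. The main subtle step is the third one: I need to verify that although $\Compress$ is only a \emph{partial} isometry on $\calH_\sP \otimes \calH_\sF$, its restriction to the image of $\widetilde{\Pi}^{\dist}_{\gsP \gsF}$ is genuinely isometric onto a subspace of $\calH_\sR$, so that $\Tr_{\sP\sF}[X] = \Tr_\sR[\Compress \cdot X \cdot \Compress^\dagger]$ for any $X$ supported on that image. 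Everything else is a direct application of the lemmas already stated.
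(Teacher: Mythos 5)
Your proposal is correct and follows essentially the same route as the paper: the same hybrid chain through the purified $\pfo$ oracle, the $\widetilde{\Pi}^{\dist}$ projection, the $\Compress$ isometry, the $\Pi^{\dist}_{\darkgray{\sR_{\sX}^{(t)}}}$ projection, and right unitary invariance, followed by instantiating $\frakD = \mu_{\mathsf{Haar}}$ and applying the triangle inequality. The ``subtle step'' you flag about $\Compress$ being isometric on the image of $\widetilde{\Pi}^{\dist}_{\gsP \gsF}$ is exactly what the paper invokes (via \cref{claim:phi-S-orthogonal}) to justify the equality $\rho_2^{(\frakD)} = \rho_3^{(\frakD)}$ in its proof of \cref{lemma:pfd-cho}.
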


Since quantum-secure pseudorandom permutations and pseudorandom functions exist assuming one-way functions~\cite{zhandry2016note,zhandry2021construct}, the existence of computationally-secure PRU follows immediately from~\cref{thm:statistical-PRU}.

\begin{theorem}
    If quantum-secure one-way functions exist, then pseudorandom unitaries exist.
\end{theorem}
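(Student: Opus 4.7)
The plan is to combine the statistical indistinguishability result (the preceding theorem), which gives distance $O(t^2/N)$ between $\mathsf{PF}(\frakD)$ and Haar for any exact $2$-design $\frakD$, with a standard hybrid argument that replaces each truly random component with a pseudorandom one. For efficiency, I fix $\frakD$ to be the uniform distribution over the $n$-qubit Clifford group, which is an exact $2$-design and which admits an efficient sampling algorithm from $O(n^2)$ uniformly random bits. The candidate PRU family is then
\begin{equation}
U_k = P_{\pi_{k_1}} \cdot F_{f_{k_2}} \cdot C_{k_3},
\end{equation}
where $\pi_{k_1}$ is a quantum-secure pseudorandom permutation on $[N]$, $f_{k_2}$ is a quantum-secure pseudorandom function $\{0,1\}^n \to \{0,1\}$, and $C_{k_3}$ is a Clifford sampled via the efficient canonical form applied to pseudorandom bits produced by a PRF keyed by $k_3$. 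Each of these components is implementable by a $\poly(n)$-size quantum circuit acting on $n$ qubits plus polynomial ancilla, so $U_k$ is efficient.

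Next I would set up four hybrids $H_0, H_1, H_2, H_3$ viewed by a $\poly(n)$-time quantum oracle adversary $\Adv$ making at most $t = \poly(n)$ forward queries:
\begin{itemize}
    \item $H_0$: queries to $U_k = P_{\pi_{k_1}} \cdot F_{f_{k_2}} \cdot C_{k_3}$ with $k = (k_1,k_2,k_3)$ uniform.
    \item $H_1$: replace $C_{k_3}$ by a truly uniform Clifford $C$.
    \item $H_2$: replace $F_{f_{k_2}}$ by $F_f$ for a truly uniform $f \gets \{0,1\}^N$.
    \item $H_3$: replace $P_{\pi_{k_1}}$ by $P_\pi$ for a truly uniform $\pi \gets \sSym_N$; this is exactly $\mathsf{PF}(\mathrm{Clifford})$.
\end{itemize}
Each consecutive pair $H_i \approx H_{i+1}$ follows from a direct reduction: an adversary distinguishing them would, by simulating the remaining (efficient) components on its own, yield a $\poly(n)$-time quantum distinguisher for the PRG/PRF/PRP instance. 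The PRP and PRF steps invoke the constructions of Zhandry from any quantum-secure one-way function \cite{zhandry2016note,zhandry2021construct}, which give quantum-query security against $\poly(n)$-query, $\poly(n)$-time adversaries; the Clifford step uses that a PRG (and hence PRF) from the same assumption suffices to pseudorandomly seed the standard efficient sampler for the Clifford group. Finally, by the preceding theorem $\mathsf{PF}(\mathrm{Clifford})$ is within trace distance $4t(t-1)/(N+1) = \mathsf{negl}(n)$ of Haar on the post-query state, so $H_3$ is statistically close to a Haar oracle.

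Assembling the hybrids by the triangle inequality for trace distance yields that the distinguishing advantage of $\Adv$ between queries to $U_k$ and queries to a Haar-random $\calO$ is bounded by the sum of four negligible terms, hence negligible. This establishes the second bullet of \cref{def: PRU-t(lambda)}, while efficiency of each of $P_{\pi_{k_1}}$, $F_{f_{k_2}}$, $C_{k_3}$ gives the first bullet, proving that $\{U_k\}$ is a PRU.

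The genuinely new content has already been absorbed into the statistical theorem, so the main obstacle in this final step is purely bookkeeping: one must be careful that the pseudorandomness assumptions used are \emph{quantum-query} secure rather than only classical-query secure, because the adversary queries $U_k$ in superposition and this in turn queries $\pi_{k_1}$, $f_{k_2}$, and $C_{k_3}$ in superposition. Zhandry's constructions are stated with exactly this guarantee, so the reductions go through without modification; no further technical difficulty arises.
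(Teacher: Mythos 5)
Your argument is correct and is exactly the hybrid argument the paper leaves implicit: the paper dispenses with this corollary in a single sentence citing~\cite{zhandry2016note,zhandry2021construct} and~\cref{thm:statistical-PRU}, and your four-hybrid chain spells out precisely what is meant. One detail worth making explicit: in the $H_2 \to H_3$ reduction the distinguisher must simulate quantum-query access to the \emph{truly random} $f$ already substituted in $H_2$ using only $\poly(n)$ resources, which is not an instance of ``simulating an efficient component'' (the function has exponential description length) but instead relies on Zhandry's fact that a $2q$-wise independent function is perfectly indistinguishable from a uniformly random one by any $q$-query quantum algorithm.
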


The main technical component of the proof of~\cref{thm:statistical-PRU} is the following lemma.

\begin{lemma}[$\pru(\frakD)$ is indistinguishable from $\pr$]\label{lemma:pfd-cho}
     Let $\Adv$ be a $t$-query oracle adversary and let $\frakD$ be an exact unitary $2$-design. Then
     \begin{align}
        \TD\left(\E_{\calO \gets \mathsf{PF}(\frakD)} \ketbra*{\Adv_t^{\calO}}, \,\,\, \Tr_{\sR}\left( \ketbra*{\Adv^{\pr}_t}_{\gsA \gsB \gsR} \right) \right) \leq \frac{2t(t-1)}{N+1} \label{eq:intermediate-step-main-thm}
    \end{align}
\end{lemma}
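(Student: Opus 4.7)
\textbf{Proof proposal for~\cref{lemma:pfd-cho}.} The plan is to route from the real PRU state $\E_{\calO \gets \mathsf{PF}(\frakD)} \ketbra*{\Adv_t^{\calO}}$ to the path-recording state $\Tr_{\sR}(\ketbra*{\Adv^{\pr}_t})$ via two intermediate states, using all four ingredients already in place: the purified oracle equivalence (\cref{claim:purified-vs-standard-PFO}), the distinct-subspace bounds on each side (\cref{corollary:phi-projector-small-dist,corollary:distinct-X}), the compression isometry (\cref{lemma:compress-cho-pfo-state}), and right unitary invariance (\cref{lem:cho-transfer}).

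First, I would use~\cref{claim:purified-vs-standard-PFO} to rewrite
\begin{align}
    \E_{\calO \gets \mathsf{PF}(\frakD)} \ketbra*{\Adv_t^{\calO}} = \Tr_{\sP \sF}\Big( \E_{C \gets \frakD} \ketbra*{\Adv^{\pfo \cdot C}_t}_{\gsA \gsB \gsP \gsF} \Big),
\end{align}
since a random $\pi$ and $f$ can be replaced by a purified query to $\pfo$ with $\sP,\sF$ initialized to $\ket*{\pf_\varnothing}$, and the random Clifford $C$ applied before each query matches the $G=C$ slot in the definition of $\ket*{\Adv^{\pfo \cdot G}_t}$. Next, \cref{corollary:phi-projector-small-dist} lets me insert the distinct projector $\widetilde{\Pi}^{\dist}_{\gsP \gsF}$ around this expectation at the cost of at most $t(t-1)/(N+1)$ in trace distance. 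Symmetrically, \cref{corollary:distinct-X} lets me insert $\Pi^{\dist}_{\darkgray{\sR_\sX^{(t)}}}$ around $\E_{C} \ketbra*{\Adv^{\pr \cdot C}_t}$ at the same cost.

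The bridge step is to show
\begin{align}
    \Tr_{\sP\sF}\Big( \widetilde{\Pi}^{\dist}_{\gsP \gsF} \cdot \E_{C} \ketbra*{\Adv^{\pfo \cdot C}_t} \cdot \widetilde{\Pi}^{\dist}_{\gsP \gsF}\Big) = \Tr_{\sR}\Big( \Pi^{\dist}_{\darkgray{\sR_\sX^{(t)}}} \cdot \E_{C} \ketbra*{\Adv^{\pr \cdot C}_t} \cdot \Pi^{\dist}_{\darkgray{\sR_\sX^{(t)}}}\Big).
\end{align}
This is where I expect the one subtle point: I need that $\Compress$, viewed as an isometry on the image of $\widetilde{\Pi}^{\dist}_{\gsP \gsF}$ (i.e., on $\mathrm{span}\{\ket*{\pf_R} : R \in \calR^{\bij}\}$, which is orthonormal by~\cref{claim:phi-S-orthogonal}), converts the reduced state on $\sA\sB$ obtained by tracing out $\sP\sF$ into the same reduced state obtained by tracing out $\sR$. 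Concretely, because $\Compress^\dagger \Compress$ equals $\widetilde{\Pi}^{\dist}_{\gsP \gsF}$ restricted to the relevant subspace, for any state $\ket*{\Psi}$ in the image of $\widetilde{\Pi}^{\dist}_{\gsP \gsF}$ one has $\Tr_{\sP\sF}(\ketbra*{\Psi}) = \Tr_{\sR}((\Id_{\gsA\gsB} \otimes \Compress) \ketbra*{\Psi} (\Id_{\gsA\gsB} \otimes \Compress)^\dagger)$. Applying this to $\widetilde{\Pi}^{\dist}_{\gsP\gsF} \ket*{\Adv^{\pfo\cdot C}_t}$ and using~\cref{lemma:compress-cho-pfo-state} to rewrite $\Compress \cdot \widetilde{\Pi}^{\dist}_{\gsP\gsF} \ket*{\Adv^{\pfo\cdot C}_t} = \Pi^{\dist}_{\darkgray{\sR_\sX^{(t)}}} \ket*{\Adv^{\pr\cdot C}_t}$ yields the desired identity after taking expectation over $C$.

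Finally, I invoke~\cref{lem:cho-transfer}: since conjugating by the variable-length tensor power of $C$ on the $\sR_\sX$ subregister leaves the reduced state on $\sA\sB$ unchanged, $\Tr_{\sR}(\ketbra*{\Adv^{\pr\cdot C}_t}) = \Tr_{\sR}(\ketbra*{\Adv^{\pr}_t})$ for every $C$, so the expectation collapses. Chaining the three comparisons by the triangle inequality gives the total bound $2t(t-1)/(N+1)$. The main obstacle, as noted, is justifying the partial-trace swap through $\Compress$; everything else is bookkeeping that slots into the stated corollaries and lemmas.
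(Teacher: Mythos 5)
Your proposal matches the paper's proof essentially step for step: the same hybrid chain $\rho_0^{(\frakD)} = \rho_1^{(\frakD)} \approx \rho_2^{(\frakD)} = \rho_3^{(\frakD)} \approx \rho_4^{(\frakD)} = \rho_5$, invoking \cref{claim:purified-vs-standard-PFO}, \cref{corollary:phi-projector-small-dist}, \cref{lemma:compress-cho-pfo-state}, \cref{corollary:distinct-X}, and \cref{lem:cho-transfer} in the same places, with the same $t(t-1)/(N+1)$ costs at the two lossy steps. The ``subtle point'' you flag about swapping the partial trace through $\Compress$ is correct and is exactly the justification the paper gives for the middle equality.
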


\paragraph{\cref{lemma:pfd-cho} implies~\cref{thm:statistical-PRU}.} \cref{lemma:pfd-cho} implies~\cref{thm:statistical-PRU} by the following argument. We can instantiate $\frakD = \mu_{\mathsf{Haar}}$, i.e., $\frakD$ outputs a Haar-random $n$-qubit unitary. Then the output of $\pru(\mathfrak{D}) = \pru(\mu_{\mathsf{Haar}})$ is $ P_{\pi} \cdot F_f \cdot C$ for random $\pi,f$ and Haar-random $C$. By invariance of the Haar measure, this is exactly the same as outputting a Haar-random unitary. Thus, we have the following corollary of~\cref{lemma:pfd-cho}.

\begin{theorem}[$\pr$ is indistinguishable from Haar random]\label{theorem:haar-cho}
     Let $\Adv$ be a $t$-query oracle adversary. Then
     \begin{align}
        \TD\left(\E_{\calO \sim \mu_{\mathsf{Haar}}} \ketbra*{\Adv_t^{\calO}}, \Tr_{\sR}\left( \ketbra*{\Adv^{\pr}_t}_{\gsA \gsB \gsR} \right) \right) \leq \frac{2t(t-1)}{N+1}
    \end{align}
\end{theorem}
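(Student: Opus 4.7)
The plan is to derive Theorem \ref{theorem:haar-cho} as an immediate corollary of Lemma \ref{lemma:pfd-cho} by instantiating the $2$-design $\frakD$ with the Haar measure itself. First, I would observe that the Haar measure $\mu_{\mathsf{Haar}}$ is trivially an exact unitary $t$-design for every $t \geq 1$ (both sides of the defining equation in Definition \ref{def:unitary-design} are literally the same integral), so in particular it qualifies as an exact unitary $2$-design. Applying Lemma \ref{lemma:pfd-cho} with $\frakD = \mu_{\mathsf{Haar}}$ therefore yields
\[
\TD\!\left(\E_{\calO \gets \mathsf{PF}(\mu_{\mathsf{Haar}})} \ketbra*{\Adv_t^{\calO}},\;\; \Tr_{\sR}\!\left( \ketbra*{\Adv^{\pr}_t}_{\gsA \gsB \gsR} \right) \right) \leq \frac{2t(t-1)}{N+1}.
\]

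Next, I would argue that the distribution $\mathsf{PF}(\mu_{\mathsf{Haar}})$ on $\calU(N)$ coincides exactly with $\mu_{\mathsf{Haar}}$. A sample from $\mathsf{PF}(\mu_{\mathsf{Haar}})$ is $\calO = P_\pi \cdot F_f \cdot C$ with $\pi \gets \sSym_N$, $f \gets \{0,1\}^N$, and $C \gets \mu_{\mathsf{Haar}}$ drawn independently. For any fixed $\pi,f$, the prefix $P_\pi \cdot F_f$ is a fixed unitary, and by the left-invariance property of the Haar measure (Definition at the start of Section \ref{sec:haar-review}), the product $P_\pi \cdot F_f \cdot C$ has the same distribution as $C$ itself, namely $\mu_{\mathsf{Haar}}$. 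Averaging over $\pi$ and $f$ therefore leaves the distribution unchanged, so
\[
\E_{\calO \gets \mathsf{PF}(\mu_{\mathsf{Haar}})} \ketbra*{\Adv_t^{\calO}} \;=\; \E_{\calO \sim \mu_{\mathsf{Haar}}} \ketbra*{\Adv_t^{\calO}}.
\]
Substituting this equality into the bound above gives the desired inequality, completing the proof.

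There is essentially no obstacle in this step: the entire content of Theorem \ref{theorem:haar-cho} is packaged inside Lemma \ref{lemma:pfd-cho}, and the remaining work is a one-line invocation of Haar left-invariance to identify $\mathsf{PF}(\mu_{\mathsf{Haar}})$ with $\mu_{\mathsf{Haar}}$. The real technical difficulty lies upstream in Lemma \ref{lemma:pfd-cho} itself (which requires the purified oracle analysis of Section \ref{sec:PF-oracle} together with the right unitary invariance of Section \ref{subsec:cuo-right-unitary-inv} and the $2$-design twirling bounds of Corollaries \ref{corollary:distinct-X} and \ref{corollary:phi-projector-small-dist}); once that lemma is in hand, Theorem \ref{theorem:haar-cho} follows without any further quantitative work.
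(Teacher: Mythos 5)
Your proposal is correct and matches the paper's own argument: the paper derives \cref{theorem:haar-cho} from \cref{lemma:pfd-cho} exactly by instantiating $\frakD = \mu_{\mathsf{Haar}}$ (which is trivially an exact $2$-design) and invoking invariance of the Haar measure to identify $\mathsf{PF}(\mu_{\mathsf{Haar}})$ with $\mu_{\mathsf{Haar}}$. Your left-invariance justification is the right one, and no further work is needed.
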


\cref{thm:statistical-PRU} follows from combining~\cref{lemma:pfd-cho,theorem:haar-cho} using the triangle inequality. It remains to prove~\cref{lemma:pfd-cho}.

\subsection{Proof of~\cref{lemma:pfd-cho}}

\begin{proof}[Proof of~\cref{lemma:pfd-cho}]
    We will use a hybrid argument. Define the mixed states
\begin{align}
    \rho^{(\frakD)}_0 &\coloneqq \E_{\calO \gets \mathsf{PF}(\frakD)} \ketbra*{\Adv_t^{\calO}}\\
    \rho^{(\frakD)}_1 &\coloneqq \Tr_{\sP \sF}\left(\E_{ C \gets \frakD} \ketbra*{\Adv^{\pfo \cdot C}_t}_{\gsA \gsB \gsP \gsF}\right)\\
    \rho^{(\frakD)}_2 &\coloneqq \Tr_{\sP \sF}\left(\widetilde{\Pi}^{\dist}_{\gsP \gsF} \cdot \E_{ C \gets \frakD} \ketbra*{\Adv^{\pfo \cdot C}_t}_{\gsA \gsB \gsP \gsF} \cdot \widetilde{\Pi}^{\dist}_{\gsP \gsF} \right)\\
    \rho^{(\frakD)}_3 &\coloneqq \Tr_{\sR}\left(\Pi^{\dist}_{\gsR_{\darkgray{X}}^{\darkgray{(t)}}} \cdot \E_{ C \gets \frakD} \ketbra*{\Adv^{\pr \cdot C}_t}_{\gsA \gsB \gsR} \cdot \Pi^{\dist}_{\gsR_{\darkgray{X}}^{\darkgray{(t)}}}  \right)\\
    \rho^{(\frakD)}_4 &\coloneqq \Tr_{\sR}\left(\E_{ C \gets \frakD} \ketbra*{\Adv^{\pr \cdot C}_t}_{\gsA \gsB \gsR} \right)\\
    \rho_5 &\coloneqq \Tr_{\sR}\left( \ketbra*{\Adv^{\pr}_t}_{\gsA \gsB \gsR} \right).
\end{align}
We argue indistinguishability between each consecutive pair of mixed states:
\begin{itemize}
    \item $\rho^{(\frakD)}_0 = \rho^{(\frakD)}_1$ by~\cref{claim:purified-vs-standard-PFO}.
    \item $\norm*{\rho^{(\frakD)}_1 - \rho^{(\frakD)}_2}_1 \leq t(t-1)/(N+1)$ by~\cref{corollary:phi-projector-small-dist}.
    \item $\rho^{(\frakD)}_2 = \rho^{(\frakD)}_3$, since by~\cref{lemma:compress-cho-pfo-state}, these are two mixed states whose purifications are related by the $\Compress$ isometry, which only acts on the purifying register. 
    \item $\norm*{\rho^{(\frakD)}_3 - \rho^{(\frakD)}_4}_1 \leq t(t-1)/(N+1)$ by~\cref{corollary:distinct-X}.
    \item $\rho^{(\frakD)}_4 = \rho^{(\frakD)}_5$ since
    \begin{align}
        \rho^{(\frakD)}_4 &= \E_{ C \gets \frakD} \Tr_{\sR}\left( \ketbra*{\Adv^{\pr \cdot C}_t}_{\gsA \gsB \gsR} \right) = \E_{ C \gets \frakD} \Tr_{\sR}\left( \ketbra*{\Adv^{\pr}_t}_{\gsA \gsB \gsR} \right) = \rho_5,
    \end{align}
    where the second equality follows from~\cref{lem:cho-transfer}, which states that for any $C$, $\ketbra*{\Adv^{\pr \cdot C}_t}$ and $\ketbra*{\Adv^{\pr}_t}$ are related by a unitary on the purifying register.
\end{itemize}
Using the triangle inequality, we obtain~\cref{eq:intermediate-step-main-thm}, which completes the proof.
\end{proof}

\newpage
\part{Strong PRUs}
\label{part:strong}

The goal of~\cref{part:strong} is to construct strong PRUs, which are secure against adversaries that make both forward and inverse queries to the unitary oracle. It is important to note that several operators that were defined in~\cref{part:standard}, including $\pfo, \Compress$ and $V$, will be have new definitions in~\cref{part:strong}.

\section{The purified permutation-function oracle}
\label{sec:PF3-oracle}

In this section, we analyze the view of an adversary that makes queries to an oracle $P_{\pi} \cdot F_{f}$, for uniformly random $\pi \gets \sSym_N$ and a random \textbf{ternary} function $f \gets \{0,1,2\}^N$. We will do this by analyzing the \emph{purified permutation-function permutation} oracle, which uses a purification of $\pi$ and $f$.

\begin{definition}[Purified permutation-function oracle] 

    The purified permutation-function oracle $\spfo$ is a unitary acting on registers $\sA,\sP,\sF$, where
    \begin{itemize}
        \item $\sP$ is a register associated with the Hilbert space $\calH_{\sP}$,  defined to be the span of the orthonormal states $\ket*{\pi}$ for all $\pi \in \sSym_N$. 
        \item $\sF$ is a register associated with the Hilbert space $\calH_{\sF}$, defined to be the span of the orthonormal states $\ket*{f}$ for all $f \in \{0,1,2\}^N$. 
    \end{itemize}
    The unitary $\spfo$ is defined to act as follows:
    \begin{align}
        \spfo_{\gsA\gsP\gsF} \ket*{x}_{\gsA} \ket*{\pi}_{\gsP} \ket*{f}_{\gsF} & \coloneqq \omega_3^{f(x)} \ket*{\pi(x)}_{\gsA} \ket*{\pi}_{\gsP} \ket*{f}_{\gsF}, \label{eq:pfo-definition-strong}\\
        &= \sum_{y \in [N]} \ket*{y}_{\gsA} \delta_{\pi(x) = y} \ket*{\pi} \omega_3^{f(x)} \ket*{f},
    \end{align}
    for all $x \in [N], \pi \in \sSym_N,$ and $f \in \{0, 1, 2\}^N$.
    Here, $\omega_3 = \exp( 2 \pi i / 3)$.
\end{definition}

The action of $\spfo^\dagger$ is
\begin{align}
    \spfo^\dagger \ket*{y}_{\gsA} \ket*{\pi} \ket*{f} &=  \sum_{x \in [N]} \ket*{x}_{\gsA} \delta_{\pi(x) = y} \ket*{\pi} \omega_3^{-f(x)} \ket*{f}.
\end{align}

The view of an adversary that queries the purified oracle is equivalent to the view of an adversary that queries the standard oracle $P_\pi \cdot F_f$, for uniformly random $\pi \gets \sSym_N$ and $f \gets \{0,1,2\}^N$. 

\begin{claim}[Equivalence of purified and standard oracles] \label{claim:purified-vs-standard-ternary-PFO}
    For any oracle adversary $\Adv$, the following oracle instantiations are perfectly indistinguishable:
    \begin{itemize}
        \item (Queries to a random $P_\pi \cdot F_f$) Sample a uniformly random $\pi \gets \sSym_N, f \gets \{0,1, 2\}^N$. On each query, apply $P_\pi \cdot F_f$ to register $\sA$.
        \item (Queries to $\spfo$) Initialize registers $\sP,\sF$ to $\frac{1}{\sqrt{N!}} \sum_{\pi \in \sSym_N} \ket*{\pi}_{\gsP} \otimes \frac{1}{\sqrt{2^N}} \sum_{f \in \{0,1, 2\}^N} \ket*{f}_{\gsF}$. At each query, apply $\spfo$ to registers $\sA,\sP,\sF$.
    \end{itemize}
\end{claim}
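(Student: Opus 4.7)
The plan is to mimic the deferred-measurement argument used for~\cref{claim:purified-vs-standard-PFO}, with one small observation to handle the fact that the adversary in~\cref{part:strong} is allowed both forward and inverse queries. The key structural property I want to use is that $\spfo$ is a unitary ``controlled on $\ket*{\pi}_{\gsP}\ket*{f}_{\gsF}$'': for each fixed $(\pi,f)$ in the computational basis of $\sP,\sF$, it applies the $\sA$-register unitary $P_\pi F_f$ and leaves $\sP,\sF$ unchanged. The same is true of its inverse, since the inverse of a controlled unitary is the controlled inverse, i.e., $\spfo^\dagger$ applies $(P_\pi F_f)^\dagger$ on $\sA$ controlled on $\ket*{\pi}_{\gsP}\ket*{f}_{\gsF}$. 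Consequently, every oracle call in the second experiment commutes with any computational-basis measurement of the registers $\sP,\sF$.

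Given this, the argument proceeds in three short steps. First, I would observe that because the adversary never touches $\sP,\sF$, its final view (and hence the distribution of its measurement outcome) is unchanged if we measure $\sP,\sF$ in the computational basis at the very end of the experiment. Second, using the commutativity above, I can push this terminal measurement back through each query $A_i$ (which acts only on $\sA,\sB$) and each oracle call (whether it is $\spfo$ or $\spfo^\dagger$, by the controlled structure), arriving at the experiment where $\sP,\sF$ are measured immediately after initialization. Third, measuring the initial uniform superposition $\tfrac{1}{\sqrt{N!}}\sum_\pi \ket*{\pi}_{\gsP} \otimes \tfrac{1}{\sqrt{3^N}}\sum_f \ket*{f}_{\gsF}$ yields a uniformly random pair $(\pi,f)\in \sSym_N\times \{0,1,2\}^N$, and conditioned on this outcome, each subsequent forward query to $\spfo$ acts on $\sA$ precisely as $P_\pi F_f$ and each inverse query as $(P_\pi F_f)^\dagger$. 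This is exactly the first experiment.

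There is essentially no technical obstacle; the only thing to be careful about relative to~\cref{claim:purified-vs-standard-PFO} is verifying the controlled-unitary structure of $\spfo^\dagger$, which I would do by a one-line computation from the explicit formula $\spfo^\dagger \ket*{y}_{\gsA}\ket*{\pi}_{\gsP}\ket*{f}_{\gsF} = \omega_3^{-f(\pi^{-1}(y))} \ket*{\pi^{-1}(y)}_{\gsA}\ket*{\pi}_{\gsP}\ket*{f}_{\gsF}$, to make explicit that the $\sP,\sF$ registers remain diagonal. With that in hand, the deferred-measurement chain runs identically for forward and inverse queries, and the two experiments produce identical mixed states on $(\sA,\sB)$ at every intermediate time step, which is stronger than what the claim requires.
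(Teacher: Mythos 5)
Your proposal is correct and follows essentially the same deferred-measurement argument the paper uses (the paper simply refers back to the proof of \cref{claim:purified-vs-standard-PFO} in \cref{part:standard}). Your extra observation that $\spfo^\dagger$ is also controlled on $\sP,\sF$---so that inverse queries likewise commute with the terminal measurement---is exactly the small point that makes the Part~I argument carry over, and it is correct.
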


\noindent The proof is the same as the proof of~\cref{claim:purified-vs-standard-PFO} in~\cref{part:standard}.

Next, we define the following states on the $\sP, \sF$ registers.

\begin{definition}[$\mathsf{pf}$-relation state]
    For $L = \{(x_1, y_1), \dots, (x_\ell, y_\ell)\} \in \calR_\ell$ and $R = \{(x'_1,y'_1),\dots,(x'_r,y'_r)\} \in \calR_r$, where $\ell$ and $r$ are non-negative integers such that $\ell + r \leq N$, let
    \begin{align}
        \ket*{\pf_{L,R}} \coloneqq \frac{1}{\sqrt{(N-\ell-r)!}} \sum_{\pi \in \sSym_N} \delta_{\pi,L \cup R} \ket*{\pi} \frac{1}{\sqrt{3^N}} \sum_{f \in \{0,1,2\}^N} \omega_3^{f(x_1) + \cdots + f(x_\ell) - (f(x'_1) + \cdots + f(x'_r))} \ket*{f},
    \end{align}
    where $\delta_{\pi,L \cup R}$ is an indicator variable that equals $1$ if $\pi(x) = y$ for all $(x,y) \in L \cup R$, and is $0$ otherwise.
\end{definition}

Note that when $\ell = r= 0$, i.e., $L = R = \varnothing$ are both the empty relation, the $\pf$-relation state $\ket*{\pf_{\varnothing,\varnothing}}_{\gsP \gsF}$ is the uniform superposition over all permutations $\pi \in \sSym_N$ and all ternary functions $f \in \{0, 1, 2\}^N$,
    \begin{align}
        \ket*{\pf_{\varnothing,\varnothing}}_{\gsP \gsF} \coloneqq \frac{1}{\sqrt{N!}} \sum_{\pi \in \sSym_N} \ket*{\pi}_{\gsP} \otimes \frac{1}{\sqrt{3^N}} \sum_{f \in \{0,1\}^N} \ket*{f}_{\gsF}.
    \end{align}

Recall that a relation $R$ is \emph{bijective} if and only if $\abs{\Im(R)} = \abs{\Dom(R)} = \abs{R}$. Equivalently, writing $R = \{(x_1,y_1),\dots,(x_t,y_t)\}$, $R$ is bijective if $x_1,\dots,x_t$ are all distinct, and $y_1,\dots,y_t$ are also all distinct.

\begin{definition}
    Let $\calR^{2,\dist}$ be the set of all ordered pairs of relations $(L,R) \in \calR^2$ where $L \cup R$ is a bijective relation.
\end{definition}

\subsection{Orthonormality of the $\pf$-relation states}

\begin{claim}[Orthonormality of $\mathsf{pf}$-relation states]
\label{claim:phi-L-R-orthogonal}
    $\{\ket*{\pf_{L,R}}\}_{(L,R) \in \calR^{2,\dist}}$ is an orthonormal set of vectors. 
\end{claim}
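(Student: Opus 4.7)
The plan is to mimic the proof of~\cref{claim:phi-S-orthogonal} from~\cref{part:standard}, but working over $\mathbb{Z}_3$ instead of $\mathbb{Z}_2$. The crucial feature of the hypothesis $(L,R) \in \calR^{2,\dist}$ is that $L \cup R$ is bijective, so the inputs $x_1,\dots,x_\ell,x'_1,\dots,x'_r$ are all distinct and $\Dom(L) \cap \Dom(R) = \varnothing$. Consequently, in the exponent $f(x_1)+\cdots+f(x_\ell)-f(x'_1)-\cdots-f(x'_r)$, every $f(x)$ appears with coefficient in $\{-1,0,+1\}$, and a single Fourier-type isometry on $\sF$ will collapse the $\sF$ factor of $\ket*{\pf_{L,R}}$ to one computational basis vector.

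Concretely, I would define the charge vector $v_{L,R} \in \{0,1,2\}^N$ by $v_{L,R}(x) = 1$ if $x \in \Dom(L)$, $v_{L,R}(x) = 2$ if $x \in \Dom(R)$, and $v_{L,R}(x) = 0$ otherwise. Using the $\mathbb{Z}_3^N$ Fourier identity
\begin{align}
\frac{1}{\sqrt{3^N}}\sum_{f \in \{0,1,2\}^N} \omega_3^{f(x_1)+\cdots+f(x_\ell)-f(x'_1)-\cdots-f(x'_r)}\ket*{f}_{\gsF} = (\QFT_3)^{\otimes N}\ket*{v_{L,R}}_{\gsF},
\end{align}
the unitary $M \coloneqq ((\QFT_3)^\dagger)^{\otimes N}$ acting on $\sF$ maps $\ket*{\pf_{L,R}}$ to $\frac{1}{\sqrt{(N-\ell-r)!}} \sum_{\pi \in \sSym_N} \delta_{\pi,L\cup R} \ket*{\pi}_{\gsP} \otimes \ket*{v_{L,R}}_{\gsF}$. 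Since $M$ preserves inner products, the claim reduces to showing
\begin{align}
\frac{\braket*{v_{L,R}}{v_{L',R'}}}{\sqrt{(N-\ell-r)!(N-\ell'-r')!}} \sum_{\pi \in \sSym_N} \delta_{\pi,L\cup R}\,\delta_{\pi,L'\cup R'} = \delta_{L,L'}\,\delta_{R,R'}.
\end{align}

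From here, a case analysis parallel to~\cref{claim:phi-S-orthogonal} finishes the proof. If $v_{L,R} \neq v_{L',R'}$ the inner product is $0$; otherwise $\Dom(L)=\Dom(L')$ and $\Dom(R)=\Dom(R')$, which forces $\ell=\ell'$ and $r=r'$. The permutation sum then counts permutations extending both partial bijections: it vanishes whenever $L\cup R$ and $L'\cup R'$ disagree on some shared input, and otherwise equals $(N-\ell-r)!$; combined with the equal-domain constraint, $L\cup R = L'\cup R'$ forces $L = L'$ and $R = R'$ by bijectivity. The main obstacle is really just bookkeeping, but it is worth flagging that bijectivity of $L\cup R$ is used in two distinct places: (i) to ensure the entries of $v_{L,R}$ lie in $\{0,1,-1\}$ so that the QFT genuinely compresses $\sF$ to a single basis vector labeled by $(\Dom(L),\Dom(R))$, and (ii) to pass from equality of bijective relations $L\cup R = L'\cup R'$ on a common domain to equality of the pieces $L=L'$, $R=R'$. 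Both steps fail outside $\calR^{2,\dist}$, which is precisely why the claim is stated only for pairs in that set.
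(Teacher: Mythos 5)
Your proposal is correct and takes essentially the same route as the paper: both exploit bijectivity of $L \cup R$ to apply the $\mathbb{Z}_3$ Fourier transform and collapse the $\sF$ register to a single basis vector labeled by $(\Dom(L),\Dom(R))$. The only difference is cosmetic—the paper follows up with two further controlled isometries to map $\ket*{\pf_{L,R}}$ all the way to $\ket*{L}\ket*{R}$, whereas you finish with the direct permutation-counting inner-product case analysis (in the style of the paper's proof of \cref{claim:phi-S-orthogonal}), and the two bookkeeping steps are equivalent.
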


\begin{proof}[Proof of~\cref{claim:phi-L-R-orthogonal}]
    For $x \in [N]$, let $e_x \in \{0,1,2\}^N$ denote the $N$-dimensional vector that has a $1$ in the $x$-th position, and is $0$ everywhere else. Then by writing $f(x)$ as $f(x) = f \cdot e_x$, we get
    \begin{align}
        &\frac{1}{\sqrt{3^N}} \sum_{f \in \{0,1,2\}^N} \omega_3^{f(x_1) + \cdots + f(x_\ell) - (f(x_1') + \cdots f(x_r'))} \ket*{f}_{\gsF} \\
        &= \frac{1}{\sqrt{3^N}} \sum_{f \in \{0,1,2\}^N} \omega_3^{f \cdot (e_{x_1} + \cdots + e_{x_\ell}) - f \cdot (e_{x_1'} + \cdots + e_{x_r'})} \ket*{f}_{\gsF} \\
        &= \mathsf{QFT}_3^{\otimes N} \ket*{(e_{x_1} + \cdots + e_{x_\ell}) - (e_{x'_1} + \cdots + e_{x'_r})  \ (\mathrm{mod} \ 3)}_{\gsF},
    \end{align}
    where $\mathsf{QFT}_3$ denotes the $3$-ary quantum Fourier transform. When $\{x_1,\dots,x_\ell,x'_1,\dots,x'_r\}$ are all distinct, there is a bijection between $(e_{x_1} + \cdots + e_{x_\ell}) - (e_{x'_1} + \cdots + e_{x'_r})$ and the sets $\{x_1,\dots,x_\ell\},\{x'_1,\dots,x'_r\}$: the first set corresponds to the indices where the vector is $1$, and the second set is the indices where the vector is $-1 \equiv 2 \ (\mathrm{mod} \ 3)$. Thus, there is an isometry that maps
    \begin{align}
        \frac{1}{\sqrt{3^N}} \sum_{f \in \{0,1,2\}^N} \omega_3^{f(x_1) + \cdots + f(x_\ell) - (f(x_1') + \cdots f(x_r'))} \ket*{f}_{\gsF} \mapsto \ket*{\{x_1,\dots,x_\ell\}}\ket*{\{x_1',\dots,x_r'\}},
    \end{align}
    whenever $\{x_1,\dots,x_\ell,x'_1,\dots,x'_r\}$ are all distinct. Thus, for any $L = \{(x_1,y_1),\dots,(x_\ell,y_\ell)\}$, $R = \{(x_1',y_1'),\dots,(x_r',y_r')\}$ where $L \cup R$ is a bijective relation, applying this isometry to the $\sF$ register of $\ket*{\pf_{L,R}}$ yields
    \begin{align}
        \ket*{\pf_{L,R}} \mapsto \frac{1}{\sqrt{(N-\ell-r)!}} \sum_{\pi \in \sSym_N} \delta_{\pi,L \cup R} \ket*{\pi}_{\gsP} \otimes \ket*{\{x_1,\dots,x_\ell\}} \ket*{\{x_1',\dots,x_r'\}}.
    \end{align}
    Next, we can apply an isometry that, controlled on $\ket*{\pi}$, sends each $x_i$ to the tuple $(x_i,\pi(x_i)) = (x_i,y_i)$. The result is
    \begin{align}
        \frac{1}{\sqrt{(N-\ell-r)!}} \sum_{\pi \in \sSym_N} \delta_{\pi,L \cup R} \ket*{\pi}_{\gsP} \otimes \ket*{\{(x_1,y_1),\dots,(x_\ell,y_\ell)\}} \ket*{\{(x_1',y_1'),\dots,(x_r',y_r')\}}.
    \end{align}
    Finally, controlled on the last two registers, we can uncompute the superposition on the $\sP$ register. The result is 
    \begin{align}
        \ket*{\{(x_1,y_1),\dots,(x_\ell,y_\ell)\}} \ket*{\{(x_1',y_1'),\dots,(x_r',y_r')\}} = \ket*{L} \ket*{R}.
    \end{align}
    This completes the proof.
\end{proof}

\begin{definition}
    Define the partial isometry $\Compress: \calH_{\sP} \otimes \calH_{\sF} \rightarrow \calH_{\sL} \otimes \calH_{\sR}$ to be
    \begin{align}
        \Compress \coloneqq \sum_{(L,R) \in \calR^{2,\dist}} \ket*{L}_{\gsL} \otimes \ket*{R}_{\gsR} \cdot \bra{\pf_R}_{\gsP \gsF}
    \end{align}
\end{definition}

Here, $\sL$ and $\sR$ are variable-length registers as defined in~\cref{subsec:relation-states}. Note that $\Compress$ is a partial isometry by~\cref{claim:phi-L-R-orthogonal}.

\subsection{How $\spfo$ acts on the $\pf$-relation states}

\begin{claim}[Action of $\spfo$]
\label{claim:ternary-pfo-action}
    For any $(L,R) \in \calR^{2,\dist}$ and $x \in [N]$ such that $x \not\in \Dom(L \cup R)$, we have
    \begin{align}
        \spfo \ket*{x}_{\gsA} \ket*{\pf_{L,R}}_{\gsP \gsF} = \frac{1}{\sqrt{N-\abs{L \cup R}}} \sum_{\substack{y \in [N]:\\ y\not\in \Im(L \cup R)}} \ket*{y}_{\gsA} \ket*{\pf_{L \cup \{(x,y)\},R}}_{\gsP \gsF}. \label{eq:tpfo-map}
    \end{align}
    Similarly, for any $(L,R) \in \calR^{2,\dist}$ and $y \in [N]$ such that $y \not\in \Im(L \cup R)$, we have
    \begin{align}
        \spfo^\dagger \ket*{y}_{\gsA} \ket*{\pf_{L,R}}_{\gsP \gsF} = \frac{1}{\sqrt{N-\abs{L \cup R}}} \sum_{\substack{x \in [N]:\\ x\not\in \Dom(L \cup R)}} \ket*{x}_{\gsA} \ket*{\pf_{L,R \cup \{(x,y)\}}}_{\gsP \gsF}. \label{eq:tpfo-inverse-map}
    \end{align}
\end{claim}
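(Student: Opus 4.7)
The plan is to mirror the proof of Claim~\ref{claim:pfo-action} from Part~\ref{part:standard}, adapting for (i) the ternary phase $\omega_3$ in place of $\pm 1$, and (ii) the two-relation structure $(L,R)$ rather than a single relation. Both identities follow by a direct expansion of $\ket*{\pf_{L,R}}$ in the $\ket*{\pi}\ket*{f}$ basis, applying $\spfo$ (resp.\ $\spfo^\dagger$) term by term, and re-identifying the resulting vector as a superposition of $\pf$-relation states with one tuple added.

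Concretely, for the forward identity, first substitute the definition of $\ket*{\pf_{L,R}}$ and apply $\spfo \ket*{x}\ket*{\pi}\ket*{f} = \omega_3^{f(x)}\ket*{\pi(x)}\ket*{\pi}\ket*{f}$. Next, expand $\ket*{\pi(x)} = \sum_{y \in [N]} \delta_{\pi(x)=y}\ket*{y}$ and combine the two indicators via the identity $\delta_{\pi,L\cup R}\cdot\delta_{\pi(x)=y} = \delta_{\pi,(L\cup\{(x,y)\})\cup R}$, which is valid precisely because the hypothesis $x \notin \Dom(L\cup R)$ ensures that adding $(x,y)$ does not conflict with the existing constraints on $\pi$. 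Since $\pi$ is a permutation and $\Im(L\cup R)$ is already fixed, there is no $\pi$ satisfying the augmented constraint when $y \in \Im(L\cup R)$, so those terms vanish and the sum collapses to $y \notin \Im(L\cup R)$. The accumulated phase on the $\sF$ register becomes $\omega_3^{f(x) + f(x_1) + \cdots + f(x_\ell) - f(x_1') - \cdots - f(x_r')}$, which is exactly the phase in the definition of $\ket*{\pf_{L\cup\{(x,y)\},R}}$ (since the new tuple $(x,y)$ is placed in $L$ and contributes $+f(x)$). The normalization shifts from $1/\sqrt{(N-\abs{L\cup R})!}$ to $1/\sqrt{(N-\abs{L\cup R}-1)!}$, producing the overall prefactor $1/\sqrt{N-\abs{L\cup R}}$.

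The inverse identity~\eqref{eq:tpfo-inverse-map} follows by the same argument using $\spfo^\dagger \ket*{y}\ket*{\pi}\ket*{f} = \sum_{x \in [N]} \delta_{\pi(x)=y}\,\omega_3^{-f(x)}\ket*{x}\ket*{\pi}\ket*{f}$. The sign flip $\omega_3^{-f(x)}$ now matches the definition of $\ket*{\pf_{L,R\cup\{(x,y)\}}}$, in which the new tuple is inserted into $R$ and contributes $-f(x)$ to the phase. The roles of the support conditions swap: the hypothesis $y \notin \Im(L\cup R)$ replaces $x \notin \Dom(L\cup R)$, and the collapse of the sum to $x \notin \Dom(L\cup R)$ arises automatically from the permutation constraint after inserting $(x,y)$.

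The only real subtlety is the phase bookkeeping, namely ensuring that the $+f(x)$ contribution from a forward query corresponds to inserting a tuple into $L$ while the $-f(x)$ contribution from an inverse query corresponds to inserting into $R$. This sign asymmetry is exactly what later permits a two-relation $(L,R)$ path-recording interpretation, and it is what forces the choice of a ternary (rather than binary) phase, since $\omega_3^{+f(x)}$ and $\omega_3^{-f(x)}$ must be genuinely distinguishable orthogonal contributions in the $\sF$ register. Beyond this, the argument is a routine index-and-normalization calculation.
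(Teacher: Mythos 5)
Your proposal is correct and follows essentially the same route as the paper's proof: expand $\ket{\pf_{L,R}}$ in the $\ket{\pi}\ket{f}$ basis, apply $\spfo$ (resp.\ $\spfo^\dagger$), merge the permutation indicators, observe that $x\notin\Dom(L\cup R)$ forces the coefficient of $\ket{y}$ to vanish when $y\in\Im(L\cup R)$, and track the normalization shift from $(N-\ell-r)!$ to $(N-\ell-r-1)!$ to extract the factor $1/\sqrt{N-\abs{L\cup R}}$. The inverse case by symmetry is exactly how the paper handles it as well.
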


\begin{proof}[Proof of~\cref{claim:ternary-pfo-action}]
    Recall that 
    \begin{align}
        \spfo_{\gsA\gsP\gsF} \ket*{x}_{\gsA} \ket*{\pi}_{\gsP} \ket*{f}_{\gsF} = \sum_{y \in [N]} \ket*{y}_{\gsA} \delta_{\pi(x) = y} \ket*{\pi} \omega_3^{f(x)} \ket*{f}.
    \end{align}
    Let us write $L = \{(x_1,y_1),\dots,(x_\ell,y_\ell)\}$ and $R = \{(x'_1,y'_1),\dots,(x'_r,y'_r)\}$. Then
    \begin{align}
        \ket*{\pf_{L,R}}_{\gsP \gsF} = \frac{1}{\sqrt{(N-\ell-r)!}} \sum_{\pi \in \sSym_N} \delta_{\pi,L \cup R} \ket*{\pi}_{\gsP} \frac{1}{\sqrt{3^N}} \sum_{f \in \{0,1,2\}^N} \omega_3^{f(x_1) + \cdots + f(x_\ell) - f(x'_1) - \cdots - f(x'_r))} \ket*{f}_{\gsF},
    \end{align}
    Thus, we have
    \begin{align}
        &\spfo_{\gsA \gsP \gsF} \ket*{x}_{\gsA} \ket*{\pf_{L,R}}_{\gsP \gsF} \\
        &= \sum_{y \in [N]} \ket*{y}_{\gsA} \frac{1}{\sqrt{(N-\ell-r)!}} \sum_{\pi \in \sSym_N} \textcolor{red}{\delta_{\pi(x) = y}} \cdot \delta_{\pi,L \cup R} \ket*{\pi}_{\gsP} \nonumber \\
        & \quad \frac{1}{\sqrt{3^N}} \sum_{f \in \{0,1,2\}^N} \omega_3^{f(x_1) + \cdots + f(x_\ell) + \textcolor{red}{f(x)} - f(x'_1) - \cdots - f(x'_r))} \ket*{f}_{\gsF}.
    \end{align}
    In this sum, $\ket*{y}$ has a coefficient of $0$ whenever $y \in \Im(L \cup R)$, since in that case the constraints that $\delta_{\pi, L \cup R}$ and $\delta_{\pi(x) = y}$ are impossible to satisfy since $x\not\in \Dom(L \cup R)$, and thus satisfying both constraints would require $y$ to have two different preimages under the permutation $\pi$. We can therefore rewrite the above sum as
    \begin{align}
        &\spfo_{\gsA \gsP \gsF} \ket*{x}_{\gsA} \ket*{\pf_{L,R}}_{\gsP \gsF} \\
        &= \frac{1}{\sqrt{N - \ell - r}} \sum_{\substack{y \in [N]:\\ y\not\in \Im(L \cup R)}} \ket*{y}_{\gsA} \frac{1}{\sqrt{(N-\ell-1-r)!}} \sum_{\pi \in \sSym_N} \delta_{\pi,L \cup \{(x,y)\} \cup R} \ket*{\pi}_{\gsP} \nonumber \\
        & \quad \frac{1}{\sqrt{3^N}} \sum_{f \in \{0,1,2\}^N} \omega_3^{f(x_1) + \cdots + f(x_\ell) + f(x) - f(x'_1) - \cdots - f(x'_r))} \ket*{f}_{\gsF}\\
        &= \frac{1}{\sqrt{N - \ell - r}} \sum_{y\in [N]} \ket*{y}_{\gsA} \ket*{\pf_{L \cup \{(x,y)\},R}}_{\gsP \gsF}.
    \end{align}
    This completes the proof of~\cref{eq:tpfo-map}. Since $\spfo^\dagger$ applies the map
    \begin{align}
        \spfo^\dagger \ket*{y}_{\gsA} \ket*{\pi} \ket*{f} &=  \sum_{x \in [N]} \ket*{x}_{\gsA} \delta_{\pi(x) = y} \ket*{\pi} \omega_3^{-f(x)} \ket*{f},
    \end{align}
    the proof for~\cref{eq:tpfo-inverse-map} follows by a symmetric argument. 
\end{proof}

\section{The partial path-recording oracle $W$}
 
In the previous section, we proved~\cref{claim:ternary-pfo-action}, which partially characterizes how the unitaries $\spfo$ and $\spfo^\dagger$ act in terms of states $\ket*{x}_{\gsA} \ket*{\pf_{L,R}}_{\gsP \gsF}$. We also proved that there exists an isometry $\Compress$ that maps $\ket*{\pf_{L,R}}_{\gsP \gsF}$ to $\ket*{L}_{\gsL} \ket*{R}_{\gsR}$ for all pairs of relations $L,R$ such that their union $L \cup R$ is a bijective relation.
In this section, we will define a linear operator $W$ that we call the \emph{partial path recording oracle}. This $W$ operator, up to isometry, implements a restricted version of the $\spfo$ operator. In particular, we have the following.
\begin{itemize}
    \item On states of the form $\ket*{x}_{\gsA} \ket*{L}_{\gsL} \ket*{R}_{\gsR}$ such that $L \cup R$ is a bijection and $x\not\in \Dom(L \cup R)$, the linear map $W$ performs exactly the same map as $\spfo$ (up to isometry).
    \item On states of the form $\ket*{y}_{\gsA} \ket*{L}_{\gsL} \ket*{R}_{\gsR}$ such that $L \cup R$ is a bijection and $y\not\in \Im(L \cup R)$, the linear map  $W^\dagger$ performs exactly the same map as $\spfo^\dagger$ (up to isometry).
\end{itemize}
In the above, ``up to isometry'' refers to the isometry $\Compress$ that maps $\ket*{\pf_{L,R}}_{\gsP \gsF}$ to $\ket*{L}_{\gsL} \ket*{R}_{\gsR}$. Formally, the registers $\sL$ and $\sR$ are both \emph{variable-length} registers that store the two relations $L$ and $R$. We refer the reader to~\cref{prelim:variable-length-registers,subsec:relation-states} in the Preliminaries section for our definitions of variable-length registers, relations, and relation states.

\paragraph{The role of the $W$ operator in our proof.} Looking ahead to our main proof, we will show that if $C,D$ are sampled from any $n$-qubit $2$-design, then an adversary (making both forward and inverse queries) cannot distinguish between an oracle that implements $D_{\gsA} \cdot \spfo \cdot C_{\gsA}$ and an oracle that implements $D_{\gsA} \cdot W \cdot C_{\gsA}$, except with negligible advantage. Thus, even though $W$ only behaves like (a compressed version of) $\spfo$ on a restricted subspace, we will show that the twirling of $C, D$ prevents the adversary from detecting the difference.

In the next section, we will show that the $W$ operator can also be seen as a restricted version of another linear operator $V$ that we call the \emph{path-recording oracle}. The connection between $W$ and $V$ plays a crucial role in our proof; see~\cref{sec:symmetric-V} for further discussion.

\subsection{Defining $W^L$ and $W^R$}

Before we define $W$, we will first define helper operators $W^L$ and $W^R$. The $W^L$ operator is defined to capture the (partial) characterization of $\spfo$ given in~\cref{eq:tpfo-map}, while $W^R$ is defined to capture the (partial) characterization of $\spfo^\dagger$ given in~\cref{eq:tpfo-inverse-map}.

\begin{definition}[$W^L$ and $W^R$]
\label{def:ternary-pfo-action}
    Define $W^L$ to be the linear map such that for any $(L,R) \in \calR^{2,\dist}$ and $x \in [N]$ such that $x\not\in \Dom(L \cup R)$,
    \begin{align}
        W^L \cdot \ket*{x}_{\gsA} \ket*{L}_{\gsL} \ket*{R}_{\gsR} \coloneqq \frac{1}{\sqrt{N-\abs{L \cup R}}} \sum_{\substack{y \in [N]:\\ y\not\in \Im(L \cup R)}} \ket*{y}_{\gsA} \ket*{L \cup \{(x,y)\}}_{\gsL} \ket*{R}_{\gsR}. \label{eq:WL-def}
    \end{align}
    Similarly, define $W^R$ be the linear map such that for any $(L,R) \in \calR^{2,\dist}$ and $y \in [N]$ such that $y\not\in \Im(L \cup R)$,
    \begin{align}
        W^R \cdot \ket*{y}_{\gsA} \ket*{L}_{\gsL} \ket*{R}_{\gsR} \coloneqq \frac{1}{\sqrt{N-\abs{L \cup R}}} \sum_{\substack{x \in [N]:\\ x\not\in \Dom(L \cup R)}} \ket*{x}_{\gsA} \ket*{L}_{\gsL} \ket*{R \cup \{(x,y)\}}_{\gsR}. \label{eq:WR-def}
    \end{align}
\end{definition}

It is useful to define the following projectors to describe the actions of $W^L, W^R$.

\begin{definition}[Bijective-relation projectors] \label{def:bij-proj}
    Define the projectors
    \begin{align}
        \Pi^{\bij}_{\gsL \gsR} \coloneq \sum_{(L,R) \in \mathcal{R}^{2,\dist}} \ketbra*{L}_{\gsL} \otimes \ketbra*{R}_{\gsR}, \quad\quad \Pi^{\bij}_{\leq t, \gsL \gsR} \coloneq \Pi^{\bij}_{\gsL \gsR} \cdot \Pi_{\leq t, \gsL \gsR} = \Pi_{\leq t, \gsL \gsR} \cdot \Pi^{\bij}_{\gsL \gsR},
    \end{align}
    where the projector $\Pi_{\leq t, \gsL \gsR}$ is the maximum-length projector defined in \cref{notation:pi-leq-t}.
\end{definition}

By the definition of $W^L$ and $W^R$, we have the following fact about the action of $W^L$ and $W^R$ on states with a bounded length.

\begin{fact} \label{fact:WLWR-space-leqi}
    For any integer $i \geq 0$, $W^L, W^R$ map states in the subspace associated to the projector $\Id_{\gsA} \otimes \Pi^{\bij}_{\leq i, \gsL \gsR}$ into the subspace associated with the projector $\Id_{\gsA} \otimes \Pi^{\bij}_{\leq i+1, \gsL \gsR}$.
\end{fact}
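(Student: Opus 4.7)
The plan is to verify the claim directly from the definitions of $W^L$, $W^R$, and $\Pi^{\bij}_{\leq i, \gsL \gsR}$. First I would note that the image of $\Id_{\gsA} \otimes \Pi^{\bij}_{\leq i, \gsL \gsR}$ is spanned by basis states $\ket*{x}_{\gsA} \ket*{L}_{\gsL} \ket*{R}_{\gsR}$ with $(L, R) \in \calR^{2, \dist}$ and $\abs{L} + \abs{R} \leq i$. By linearity of $W^L$ and $W^R$ it suffices to track the image of each such basis state, adopting the convention that $W^L$ (resp.\ $W^R$) annihilates any basis state outside the domain specified in \cref{def:ternary-pfo-action} (in particular, the states where $x \in \Dom(L \cup R)$, on which the explicit definition is silent); since $0$ lies in every subspace, such cases contribute nothing to worry about.

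Next I would handle $W^L$ on the good basis states, i.e.\ those with $x \notin \Dom(L \cup R)$. By \cref{eq:WL-def} the output is a superposition of states $\ket*{y}_{\gsA} \ket*{L \cup \{(x,y)\}}_{\gsL} \ket*{R}_{\gsR}$ over $y \notin \Im(L \cup R)$. The crux is to verify that each such term lies in the image of $\Id_{\gsA} \otimes \Pi^{\bij}_{\leq i+1, \gsL \gsR}$, which amounts to two bookkeeping checks. The length check is immediate: $\abs{L \cup \{(x,y)\}} + \abs{R} = \abs{L} + \abs{R} + 1 \leq i + 1$. The bijectivity check requires showing that $(L \cup \{(x,y)\}) \cup R = (L \cup R) \cup \{(x,y)\}$ remains bijective; this holds because $L \cup R$ already has distinct domain and image elements, and the two conditions $x \notin \Dom(L \cup R)$ and $y \notin \Im(L \cup R)$ ensure that prepending $(x,y)$ introduces no new collisions in either coordinate.

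The argument for $W^R$ is entirely symmetric under swapping the roles of $\sL$ with $\sR$ and of $\Dom$ with $\Im$, using \cref{eq:WR-def} in place of \cref{eq:WL-def}: the good basis states are those with $x \notin \Im(L \cup R)$, the output is supported on $\ket*{x'}_{\gsA} \ket*{L}_{\gsL} \ket*{R \cup \{(x',x)\}}_{\gsR}$ for $x' \notin \Dom(L \cup R)$, and the same length and bijectivity checks apply. I do not anticipate any genuine obstacle; the two non-membership conditions built into the definitions of $W^L$ and $W^R$ are precisely tailored to preserve bijectivity of $L \cup R$ upon augmentation by a single tuple, and the length bound follows from the trivial observation that each application appends at most one pair to exactly one of $L$ or $R$.
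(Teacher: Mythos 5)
Your proof is correct and follows the same route the paper implicitly takes: the paper states this as a Fact immediate from the definitions of $W^L$ and $W^R$, and your argument just spells that out (zero on basis states outside the specified domain, and on the good basis states the conditions $x \notin \Dom(L \cup R)$, $y \notin \Im(L \cup R)$ ensure the augmented union stays bijective while the total length grows by exactly one). Nothing is missing.
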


The following property follows from the relation between $W^L, W^R$ and $\spfo, \spfo^\dagger$.

\begin{claim}
    $W^L$ and $W^R$ are both partial isometries.
\end{claim}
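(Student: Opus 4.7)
The plan is to reduce the claim to the unitarity of $\spfo$ through the partial isometry $\Compress$. I will show that on the subspace spanned by $\ket*{x}_{\gsA}\ket*{L}_{\gsL}\ket*{R}_{\gsR}$ with $(L,R) \in \calR^{2,\dist}$ and $x \notin \Dom(L \cup R)$, the operator $W^L$ agrees with
\begin{align}
(\Id_{\gsA} \otimes \Compress) \cdot \spfo \cdot (\Id_{\gsA} \otimes \Compress^\dagger).
\end{align}
Once this identity is established, the partial isometry property follows immediately: $\Compress^\dagger$ sends each basis state $\ket*{L}_{\gsL}\ket*{R}_{\gsR}$ with $(L,R) \in \calR^{2,\dist}$ to an orthonormal $\pf$-relation state (by \cref{claim:phi-L-R-orthogonal}), $\spfo$ is unitary so preserves orthonormality, and $\Compress$ then sends the resulting $\pf$-relation states back to an orthonormal set of $\ket*{L'}_{\gsL}\ket*{R'}_{\gsR}$ vectors. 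Thus $W^L$ maps an orthonormal basis of its defining domain into an orthonormal set, which is exactly the definition of a partial isometry.

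To verify the identity, I will plug $\ket*{x}_{\gsA}\ket*{\pf_{L,R}}_{\gsP\gsF}$ into the action of $\spfo$ given by \cref{claim:ternary-pfo-action}, obtaining the uniform superposition over $y \notin \Im(L \cup R)$ of $\ket*{y}_{\gsA}\ket*{\pf_{L \cup \{(x,y)\},R}}_{\gsP \gsF}$. The key bookkeeping step is to check that each pair $(L \cup \{(x,y)\}, R)$ still lies in $\calR^{2,\dist}$, so that $\Compress$ acts on $\ket*{\pf_{L \cup \{(x,y)\},R}}_{\gsP\gsF}$ to produce $\ket*{L \cup \{(x,y)\}}_{\gsL}\ket*{R}_{\gsR}$ as claimed. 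This follows because $x \notin \Dom(L \cup R)$ and $y \notin \Im(L \cup R)$ jointly guarantee that $L \cup \{(x,y)\} \cup R$ remains a bijective relation. Applying $\Compress$ then recovers the right-hand side of \cref{eq:WL-def}, completing the verification.

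The argument for $W^R$ is entirely symmetric: I will establish the analogous identity $W^R = (\Id_{\gsA} \otimes \Compress) \cdot \spfo^\dagger \cdot (\Id_{\gsA} \otimes \Compress^\dagger)$ on its defining domain, using \cref{eq:tpfo-inverse-map} in place of \cref{eq:tpfo-map} and the constraint $y \notin \Im(L \cup R)$ in place of $x \notin \Dom(L \cup R)$. I do not anticipate a real obstacle; the only point requiring care is tracking the bijectivity constraint on $L \cup R$ to ensure that every invocation of $\Compress$ or $\Compress^\dagger$ stays within its domain of isometry. A more direct route, if preferred, would be to expand $\bra*{x'}_{\gsA}\bra*{L'}_{\gsL}\bra*{R'}_{\gsR} (W^L)^\dagger W^L \ket*{x}_{\gsA}\ket*{L}_{\gsL}\ket*{R}_{\gsR}$ and case-split on $(x,L,R)$ versus $(x',L',R')$ in the style of \cref{lem:cho-isometry}, but the reduction above is cleaner and reuses machinery the paper has already developed.
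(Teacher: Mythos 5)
Your proposal is correct and follows essentially the same route as the paper's proof: both reduce the partial-isometry property of $W^L$ and $W^R$ to the unitarity of $\spfo$ via the $\Compress$ partial isometry, using \cref{claim:phi-L-R-orthogonal} and \cref{claim:ternary-pfo-action}. You are more explicit than the paper about the one bookkeeping step — checking that $x \notin \Dom(L\cup R)$ and $y \notin \Im(L\cup R)$ guarantee $(L\cup\{(x,y)\},R)\in\calR^{2,\dist}$, so the image of $\spfo$ lands inside the domain of $\Compress$ — which is a detail the paper's terse "up to relabeling" glosses over.
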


\begin{proof}
    Since $\spfo$ is a unitary operator, the operator obtained  by restricting the domain of $\spfo$ to the span of the states $\ket*{x} \ket*{\pf_{L,R}}$ is a partial isometry. Up to relabeling $\ket*{\pf_{L,R}}$ as $\ket*{L,R}$ (i.e., applying the partial isometry $\Compress$), this is $W^L$. Similarly, $\spfo^\dagger$ is a unitary, and the operator obtained by restricting $\spfo^\dagger$ to the span of states $\ket*{y} \ket*{\pf_{L,R}}$ is a partial isometry. Up to relabeling $\ket*{\pf_{L,R}}$ as $\ket*{L,R}$, this is $W^R$.
\end{proof}

\begin{notation}
    For a partial isometry $G$, let $\calD(G)$ and $\calI(G)$ denote its domain and image. Let $\Pi^{\calD(G)} = G^\dagger \cdot G$ and $\Pi^{\calI(G)} = G \cdot G^\dagger$ denote the orthogonal projectors onto $\calD(G)$ and $\calI(G)$.
\end{notation}

\begin{claim} \label{claim:pileqt-commutes-with-WLWRDomIm}
For all integers $t \geq 0$,
    $\Pi_{\leq t}$ commutes with $\Pi^{\calD(W^L)}$, $\Pi^{\calI(W^L)}$, $\Pi^{\calD(W^R)}$, and $\Pi^{\calI(W^R)}$.
\end{claim}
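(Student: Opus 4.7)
The plan is to show that each of the four projectors $\Pi^{\calD(W^L)}$, $\Pi^{\calI(W^L)}$, $\Pi^{\calD(W^R)}$, and $\Pi^{\calI(W^R)}$ is block-diagonal with respect to the decomposition of $\calH_{\sL} \otimes \calH_{\sR}$ by total length $\ell + r$. Since $\Pi_{\leq t}$ is manifestly block-diagonal in this same grading by its definition (\cref{notation:pi-leq-t}), commutativity with $\Pi_{\leq t}$ will then follow immediately.

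First I would unpack~\cref{def:ternary-pfo-action}: a basis state $\ket*{x}_{\gsA} \ket*{L}_{\gsL} \ket*{R}_{\gsR}$ in the domain of $W^L$ has combined $\sL\sR$-length $\ell + r$ (where $\ell = |L|$ and $r = |R|$), and $W^L$ sends it into the span of basis states of combined length $(\ell+1) + r = \ell + r + 1$. The same reasoning applies to $W^R$, which raises the combined length by $1$ via the $\sR$ side. Letting $\widetilde{\Pi}_s \coloneqq \sum_{\ell + r = s} \Pi_{\ell,r}$ denote the projector onto states of exact combined length $s$, we have $\Pi_{\leq t} = \sum_{s=0}^{t} \widetilde{\Pi}_s$, and the observation above lets us write
\begin{equation}
    W^L = \sum_{s \geq 0} W^L_s, \qquad W^L_s \coloneqq \widetilde{\Pi}_{s+1} \cdot W^L \cdot \widetilde{\Pi}_s,
\end{equation}
and analogously for $W^R$.

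Next I would note that because the projectors $\{\widetilde{\Pi}_s\}_s$ are mutually orthogonal in $s$, the family $\{W^L_s\}_s$ consists of partial isometries with mutually orthogonal domains and mutually orthogonal images. The cross terms therefore vanish and
\begin{equation}
    \Pi^{\calD(W^L)} = (W^L)^\dagger W^L = \sum_{s \geq 0} (W^L_s)^\dagger W^L_s, \qquad \Pi^{\calI(W^L)} = W^L (W^L)^\dagger = \sum_{s \geq 0} W^L_s (W^L_s)^\dagger,
\end{equation}
where the $s$-th summand is supported on the range of $\widetilde{\Pi}_s$ (respectively $\widetilde{\Pi}_{s+1}$). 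Hence $\Pi^{\calD(W^L)}$ and $\Pi^{\calI(W^L)}$ each commute with every $\widetilde{\Pi}_s$, and therefore with the partial sum $\Pi_{\leq t}$. The identical argument applies to $W^R$.

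The only substantive point to verify --- really the whole content of the claim --- is that $W^L$ and $W^R$ raise combined $\sL\sR$-length by exactly one, which is immediate from~\cref{eq:WL-def} and~\cref{eq:WR-def}. So I do not anticipate any genuine obstacle; the statement is bookkeeping about the natural $\mathbb{Z}_{\geq 0}$-grading on $\calH_{\sL} \otimes \calH_{\sR}$ by total length, already implicitly captured by~\cref{fact:WLWR-space-leqi}.
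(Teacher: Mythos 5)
Your proposal is correct and follows essentially the same route as the paper: both arguments boil down to the observation that $W^L$ and $W^R$ raise the combined $\sL\sR$-length by exactly one (the paper records this as~\cref{fact:WLWR-space-leqi}), so that $W^{L,\dagger}W^L$, $W^LW^{L,\dagger}$, etc.\ preserve the grading by total length. Your version makes the block-diagonal decomposition $\Pi^{\calD(W^L)} = \sum_s (W^L_s)^\dagger W^L_s$ explicit, which handles the domain and image projectors uniformly (the paper instead treats the $t=0$ case of $\Pi^{\calI(W^L)}$ as a small separate observation); one nitpick is that the parenthetical ``partial isometries'' for the graded pieces $W^L_s$ is not needed for (nor immediately justified by) the orthogonality of the $\widetilde{\Pi}_s$ — what your argument actually uses is only that the cross terms $(W^L_s)^\dagger W^L_{s'}$ and $W^L_s (W^L_{s'})^\dagger$ vanish for $s\neq s'$, which does follow directly.
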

\begin{proof}
    By~\cref{fact:WLWR-space-leqi}, $\Pi^{\calD(W^L)} = W^{L,\dagger} \cdot W^{L}$ maps states from $\Id_{\gsA} \otimes \Pi^{\bij}_{\leq t,\gsL \gsR}$ to $\Id_{\gsA} \otimes \Pi^{\bij}_{\leq t,\gsL \gsR}$ for $t \geq 0$. This implies that $\Pi^{\calD(W^L)}$ commutes with $\Pi_{\leq t}$ for all $t \geq 0$.
    By~\cref{fact:WLWR-space-leqi}, $\Pi^{\calI(W^L)} = W^{L} \cdot W^{L,\dagger}$ maps states from $\Id_{\gsA} \otimes \Pi^{\bij}_{\leq t+1,\gsL \gsR}$ to $\Id_{\gsA} \otimes \Pi^{\bij}_{\leq t+1,\gsL \gsR}$ for $t+1 \geq 0$. This implies that $\Pi^{\calD(W^L)}$ commutes with $\Pi_{\leq t}$ for all $t \geq 1$. Additionally, $\Pi^{\calI(W^L)} = W^{L} \cdot W^{L,\dagger}$ has no support on $\Pi_{\leq 0}$, and thus it commutes with $\Pi_{\leq 0}$.
    By symmetric arguments, we obtain the analogous statements for $W^R$.
\end{proof}

It will be useful to state the connection between the $W^L, W^R$, and $\spfo$ more formally. 

\begin{fact} We have
    \begin{align}
    W^L &= \Compress \cdot \spfo \cdot \Compress^\dagger \cdot \Pi^{\calD(W^L)} =
    \Pi^{\calI(W^L)} \cdot \Compress \cdot \spfo \cdot \Compress^\dagger, \label{eq:WL-relation-pfo}\\
    W^R &= \Compress \cdot \spfo^\dagger \cdot \Compress^\dagger \cdot \Pi^{\calD(W^R)} = 
    \Pi^{\calI(W^R)} \cdot \Compress \cdot \spfo^\dagger \cdot \Compress^\dagger. \label{eq:WR-relation-pfo}
\end{align}
\end{fact}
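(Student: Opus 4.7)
The plan is to verify each of the four operator identities on standard basis states $\ket*{x}_{\gsA} \ket*{L}_{\gsL} \ket*{R}_{\gsR}$ (or $\ket*{y}_{\gsA} \ket*{L}_{\gsL} \ket*{R}_{\gsR}$ for $W^R$), using the explicit action of $\spfo, \spfo^\dagger$ on $\pf$-relation states from~\cref{claim:ternary-pfo-action}. A preliminary step is to identify $\Pi^{\calD(W^L)}$ from~\cref{def:ternary-pfo-action} as the projector onto $\mathrm{span}\{\ket*{x}_{\gsA}\ket*{L}_{\gsL}\ket*{R}_{\gsR} : (L,R) \in \calR^{2,\dist},\ x \notin \Dom(L \cup R)\}$, and symmetrically for $\Pi^{\calD(W^R)}$ via the condition $y \notin \Im(L \cup R)$.

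For the first equality $W^L = \Compress \cdot \spfo \cdot \Compress^\dagger \cdot \Pi^{\calD(W^L)}$, a basis state in $\calD(W^L)$ is sent by $\Compress^\dagger$ to $\ket*{x}\ket*{\pf_{L,R}}$; by~\cref{claim:ternary-pfo-action} $\spfo$ produces $\tfrac{1}{\sqrt{N - \abs{L \cup R}}} \sum_{y \notin \Im(L \cup R)} \ket*{y}\ket*{\pf_{L \cup \{(x,y)\},R}}$; and since each updated pair $(L \cup \{(x,y)\}, R)$ still lies in $\calR^{2,\dist}$ (using $x \notin \Dom(L \cup R)$ and $y \notin \Im(L \cup R)$), $\Compress$ faithfully recompresses each summand, reproducing $W^L\ket*{x}\ket*{L}\ket*{R}$. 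Off the domain, $\Pi^{\calD(W^L)}$ kills the input and $W^L$ also vanishes there.

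The second equality $W^L = \Pi^{\calI(W^L)} \cdot \Compress \cdot \spfo \cdot \Compress^\dagger$ is the main obstacle, and I will verify it by a case analysis of where $x$ sits relative to $\Dom(L \cup R)$. If $(L,R) \notin \calR^{2,\dist}$, $\Compress^\dagger$ kills the input and both sides vanish. If $(L,R) \in \calR^{2,\dist}$ and $x \notin \Dom(L \cup R)$, the chain coincides with the first equality and the output manifestly lies in $\calI(W^L)$, so $\Pi^{\calI(W^L)}$ acts as the identity. When $x \in \Dom(L)$ with $(x, y_0) \in L$, using $2 \equiv -1 \pmod{3}$ in the ternary phase shows $\spfo \ket*{x}\ket*{\pf_{L,R}} = \ket*{y_0}\ket*{\pf_{L \setminus \{(x,y_0)\},\, R \cup \{(x,y_0)\}}}$, which $\Compress$ maps to the basis state $\ket*{y_0}\ket*{L \setminus \{(x,y_0)\}}\ket*{R \cup \{(x,y_0)\}}$. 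When $x \in \Dom(R)$ with $(x, y_0) \in R$, the exponent of $f(x)$ becomes $-1 + 1 \equiv 0 \pmod{3}$ and drops out; the resulting $(\sP,\sF)$-state is not itself a $\pf$-state, but decomposing the $\sF$-register via $\mathsf{QFT}_3^{\otimes N}$ as in the proof of~\cref{claim:phi-L-R-orthogonal} shows that only $\ket*{\pf_{L,\, R \setminus \{(x, y_0)\}}}$ has nonzero overlap with it, and $\Compress$ yields the sub-normalized basis state $\tfrac{1}{\sqrt{N-\abs{L \cup R}+1}} \ket*{y_0}\ket*{L}\ket*{R \setminus \{(x,y_0)\}}$. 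In both sub-cases, the resulting basis state is orthogonal to $\calI(W^L)$: for $\ket*{y_0}\ket*{L'}\ket*{R'}$ to appear in any $W^L\ket*{x''}\ket*{L''}\ket*{R''}$ would require a pair $(x'', y_0) \in L'$, but bijectivity of $L \cup R$ together with $y_0 \in \Im(L \cup R)$ forbids a second appearance of $y_0$ in $L'$. Hence $\Pi^{\calI(W^L)}$ kills the output and both sides vanish, matching $W^L\ket*{x}\ket*{L}\ket*{R} = 0$.

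The two $W^R$ identities follow by a symmetric argument with the roles of $(L, \Dom)$ and $(R, \Im)$ swapped and $\spfo, \spfo^\dagger$ interchanged; the ternary phases pick up $\omega_3^{-f(x)}$ instead of $\omega_3^{+f(x)}$, but the case analysis and orthogonality-to-image arguments are otherwise identical.
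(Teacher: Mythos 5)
Your proof is correct, and in the case of the second equality in each line it supplies a genuinely needed argument that the paper leaves implicit. The paper states this as a \emph{Fact} with no proof, implicitly relying on \cref{claim:ternary-pfo-action} (and the observation in \cref{claim:W-partial-isometry} that $W^L$ is ``$\spfo$ restricted to $\mathrm{span}\{\ket*{x}\ket*{\pf_{L,R}}\}$, up to relabeling''). That observation gives the first equality $W^L = \Compress\cdot\spfo\cdot\Compress^\dagger\cdot\Pi^{\calD(W^L)}$ essentially for free, exactly as you argue. However, the second form $W^L = \Pi^{\calI(W^L)}\cdot\Compress\cdot\spfo\cdot\Compress^\dagger$ does \emph{not} follow from general facts about partial isometries alone: setting $M\coloneqq\Compress\cdot\spfo\cdot\Compress^\dagger$, one needs the additional property that $M$ maps $\calD(W^L)^\perp$ (within the image of $\Compress\Compress^\dagger$) into $\calI(W^L)^\perp$, which is equivalent to $\Pi^{\calI(W^L)}\cdot M\cdot(\Id-\Pi^{\calD(W^L)})=0$. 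Your case analysis on where $x$ sits relative to $\Dom(L\cup R)$ establishes precisely this. In the $x\in\Dom(L)$ sub-case the ternary-phase identity $2\equiv -1\ (\mathrm{mod}\ 3)$ cleanly produces a $\pf$-state with the tuple $(x,y_0)$ moved from $L$ to $R$; in the $x\in\Dom(R)$ sub-case the phase cancels, the $\sP$-register becomes over-constrained, and the overlap computation with $\ket*{\pf_{L,R\setminus\{(x,y_0)\}}}$ gives the coefficient $1/\sqrt{N-\abs{L\cup R}+1}$, as you state. In both sub-cases, the resulting basis state $\ket*{y_0}\ket*{L'}\ket*{R'}$ has $y_0\notin\Im(L')$ because $L\cup R$ is bijective, hence it is orthogonal to $\calI(W^L)$. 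This is exactly the content that makes the Fact true and that a reader must reconstruct; your write-up is a faithful and correct expansion of what the paper compresses into a single unproved statement.
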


\subsection{Defining $W$}

We now use $W^L$ and $W^R$ to define the partial path-recording oracle $W$.

\begin{definition} \label{def:symmetric-W}
    The partial path-recording oracle is the operator $W$ defined as
    \begin{equation}
        W \coloneqq W^L + W^{R,\dagger}.
    \end{equation}
\end{definition}

From \cref{fact:WLWR-space-leqi}, we immediately obtain the following fact.

\begin{fact} \label{fact:W-space-leqi}
    $\calD(W)$, $\calI(W)$ are subspaces of the image of $\Id_{\gsA} \otimes \Pi^{\bij}_{\gsL \gsR}$. Moreover, for any integer $i \geq 0$, $W$ and $W^\dagger$ map states in the subspace associated to the projector $\Id_{\gsA} \otimes \Pi^{\bij}_{\leq i, \gsL \gsR}$ into the subspace associated with the projector $\Id_{\gsA} \otimes \Pi^{\bij}_{\leq i+1, \gsL \gsR}$.
\end{fact}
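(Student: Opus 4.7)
The plan is to reduce both claims to already-established properties of $W^L$ and $W^R$, with essentially no further calculation.

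For the first claim, I would begin by observing that by \cref{def:ternary-pfo-action}, $W^L$ and $W^R$ are specified only on basis states $\ket{x}_{\gsA}\ket{L}_{\gsL}\ket{R}_{\gsR}$ with $(L,R) \in \calR^{2,\dist}$, and their outputs are again superpositions of basis states indexed by bijective pairs (adjoining a new pair $(x,y)$ with $x \notin \Dom(L \cup R)$ and $y \notin \Im(L \cup R)$ preserves bijectivity of the union). Since $W^L, W^R$ are partial isometries, they vanish outside this bijective subspace. Hence $\calD(W^L), \calI(W^L), \calD(W^R), \calI(W^R)$ are all contained in the image of $\Id_\gsA \otimes \Pi^{\bij}_{\gsL\gsR}$, and taking adjoints swaps domain and image, so the analogous containment holds for $W^{L,\dagger}$ and $W^{R,\dagger}$. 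Since $W = W^L + W^{R,\dagger}$ is a sum of two operators whose supports and images lie in the bijective subspace, the support (the orthogonal complement of $\ker W$) and the image of $W$ lie there as well; the identical argument applied to $W^\dagger = W^{L,\dagger} + W^R$ handles the adjoint.

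For the length bound, I would invoke \cref{fact:WLWR-space-leqi}, which tells us that $W^L$ and $W^R$ each map the image of $\Id_\gsA \otimes \Pi^{\bij}_{\leq i, \gsL\gsR}$ into that of $\Id_\gsA \otimes \Pi^{\bij}_{\leq i+1, \gsL\gsR}$. Inspecting \cref{eq:WL-def} and \cref{eq:WR-def} shows further that $W^L$ and $W^R$ change the total length $\ell + r$ by exactly $+1$, so their adjoints $W^{L,\dagger}$ and $W^{R,\dagger}$ decrease the length by exactly $1$ (annihilating the length-zero sector). In particular, $W^{L,\dagger}$ and $W^{R,\dagger}$ send the image of $\Id_\gsA \otimes \Pi^{\bij}_{\leq i,\gsL\gsR}$ into a subspace of the image of $\Id_\gsA \otimes \Pi^{\bij}_{\leq i+1,\gsL\gsR}$. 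Summing, $W = W^L + W^{R,\dagger}$ and $W^\dagger = W^{L,\dagger} + W^R$ each map the $\Id_\gsA \otimes \Pi^{\bij}_{\leq i,\gsL\gsR}$ subspace into the $\Id_\gsA \otimes \Pi^{\bij}_{\leq i+1,\gsL\gsR}$ subspace, as claimed.

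The main difficulty is really just keeping track of what happens under the adjoint operation: because $W^R$ is length-increasing, $W^{R,\dagger}$ is length-decreasing, and this is why combining an up-by-one operator with a down-by-one operator still fits into a length-increase bound of $+1$. No new technical content beyond the two referenced facts is required.
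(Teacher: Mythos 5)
Your proof is correct, and it follows essentially the same approach the paper takes — the paper simply asserts that Fact~\ref{fact:W-space-leqi} "immediately" follows from Fact~\ref{fact:WLWR-space-leqi}, while you spell out the two short steps that make this rigorous: (i) $W^L$ and $W^R$ (and hence their adjoints) are supported in and map into the bijective subspace, so the same holds for the sums $W = W^L + W^{R,\dagger}$ and $W^\dagger = W^{L,\dagger} + W^R$; and (ii) $W^L, W^R$ raise total length by exactly one, so $W^{L,\dagger}, W^{R,\dagger}$ lower it by one, and both $\leq i+1$ and $\leq i-1$ sit inside $\leq i+1$. Your explicit observation that the adjoints are length-decreasing is precisely the detail the paper leaves implicit, and it is the correct one.
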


\begin{claim}\label{claim:W-partial-isometry}
    $W$ is a partial isometry.
\end{claim}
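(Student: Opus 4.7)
The plan is to verify that $W^\dagger W$ is a projector, which is equivalent to $W$ being a partial isometry. Expanding,
\begin{align}
    W^\dagger W = (W^L + W^{R,\dagger})^\dagger (W^L + W^{R,\dagger}) = W^{L,\dagger} W^L + W^R W^{R,\dagger} + W^R W^L + W^{L,\dagger} W^{R,\dagger}.
\end{align}
Since $W^L$ and $W^R$ are both partial isometries, the first two terms $W^{L,\dagger} W^L = \Pi^{\calD(W^L)}$ and $W^R W^{R,\dagger} = \Pi^{\calI(W^R)}$ are already projectors. The remaining work is to show (i) the cross terms $W^R W^L$ and $W^{L,\dagger} W^{R,\dagger}$ both vanish, and (ii) the two projectors $\Pi^{\calD(W^L)}$ and $\Pi^{\calI(W^R)}$ are orthogonal so that their sum is again a projector.

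For the cross terms, I would check that $W^R W^L = 0$ directly on basis states in $\calD(W^L)$: for any $(L,R) \in \calR^{2,\dist}$ and $x \notin \Dom(L \cup R)$, \cref{eq:WL-def} yields a superposition of states $\ket*{y}_{\gsA} \ket*{L \cup \{(x,y)\}}_{\gsL} \ket*{R}_{\gsR}$ with $y \notin \Im(L \cup R)$. But each such term has $y \in \Im(L \cup \{(x,y)\} \cup R)$, so it lies outside $\calD(W^R)$ (which requires the $\sA$-register label to be \emph{not} in the image of the combined relation), forcing $W^R$ to annihilate it. Taking adjoints gives $W^{L,\dagger} W^{R,\dagger} = (W^R W^L)^\dagger = 0$.

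For the orthogonality of $\Pi^{\calD(W^L)}$ and $\Pi^{\calI(W^R)}$, the idea is that they live on disjoint subspaces distinguished by whether the $\sA$-register content appears in the domain of the relation on register $\sR$. Explicitly, $\calD(W^L)$ is spanned by basis states $\ket*{x}_{\gsA}\ket*{L}_{\gsL}\ket*{R}_{\gsR}$ with $L \cup R$ bijective and $x \notin \Dom(L \cup R)$. On the other hand, by \cref{eq:WR-def}, $W^R$ sends any input in its domain into a superposition whose basis terms all take the form $\ket*{x}_{\gsA} \ket*{L}_{\gsL} \ket*{R \cup \{(x,y)\}}_{\gsR}$ with $x \in \Dom(R \cup \{(x,y)\}) \subseteq \Dom(L \cup R \cup \{(x,y)\})$. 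Thus every basis state in the support of $\calI(W^R)$ has the $\sA$-label contained in the domain of the combined relation, while every basis state in $\calD(W^L)$ has it \emph{outside} the domain, giving $\Pi^{\calD(W^L)} \cdot \Pi^{\calI(W^R)} = 0$.

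Combining these, $W^\dagger W = \Pi^{\calD(W^L)} + \Pi^{\calI(W^R)}$ is a sum of orthogonal projectors, hence a projector, so $W$ is a partial isometry. The only slightly subtle step is the orthogonality of the two projectors; the cross-term vanishing is a one-line consequence of the definitions of $W^L$ and $W^R$, and I do not anticipate obstacles beyond carefully tracking the bijectivity and domain/image conditions.
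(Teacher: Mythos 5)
Your proof is correct and takes essentially the same route as the paper. The only cosmetic difference is that you verify $W^\dagger W$ is a projector directly instead of invoking the general fact that a sum of two partial isometries with orthogonal domains and orthogonal images is again a partial isometry; your cross-term vanishing $W^R W^L = 0$ is equivalent to the paper's $\calI(W^L) \perp \calD(W^R)$ (orthogonal images of $W^L$ and $W^{R,\dagger}$, since $\ker(W^R) = \calD(W^R)^\perp$ for a partial isometry), and your projector orthogonality $\Pi^{\calD(W^L)} \cdot \Pi^{\calI(W^R)} = 0$ is exactly the paper's orthogonal-domains condition.
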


\begin{proof}[Proof of~\cref{claim:W-partial-isometry}]
    Since $W^L$ and $W^R$ (and hence $W^{R,\dagger}$) are partial isometries, the operator $W = W^L + W^{R,\dagger}$ is a partial isometry as long as both of the following are true:
    \begin{itemize}
        \item The subspaces $\calD(W^L)$ and $\calD(W^{R,\dagger}) = \calI(W^R)$ are orthogonal, i.e., $W$ is a sum of two partial isometries with orthogonal domains.
        \item The subspaces $\calI(W^L)$ and $\calI(W^{R,\dagger}) = \calD(W^R)$ are orthogonal, i.e., $W$ is a sum of two partial isometries with orthogonal images. 
    \end{itemize}
    $\calD(W^L)$ and $\calI(W^R)$ are orthogonal because $\calD(W^L)$ is only supported on states $\ket*{x} \ket*{L} \ket*{R}$ where $x\not\in \Dom(L \cup R)$, while $\calI(W^R)$ is only supported on states $\ket*{x} \ket*{L} \ket*{R}$ where $x \in \Dom(L \cup R)$ (this can be seen by inspecting the right-hand-side of~\cref{eq:WR-def}). A symmetric argument shows that $\calD(W^R)$ and $\calI(W^L)$ are also orthogonal, which completes the proof.
\end{proof}

In fact, our proof of~\cref{claim:W-partial-isometry} establishes the following relationship between the domain and image of $W$ and the domain and image of $W^L$ and $W^R$.

\begin{fact}
\label{fact:domain-image-W-WL-WR}
    The domain and image of $W$ are given by
    \begin{align}
        \Pi^{\calD(W)} &= \Pi^{\calD(W^L)} + \Pi^{\calI(W^R)}, \label{eq:expand-DW}\\
        \Pi^{\calI(W)} &= \Pi^{\calD(W^R)} + \Pi^{\calI(W^L)}. \label{eq:expand-IW}
    \end{align}
\end{fact}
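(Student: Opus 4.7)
}

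The plan is to compute $\Pi^{\calD(W)} = W^\dagger W$ and $\Pi^{\calI(W)} = W W^\dagger$ directly from the definition $W = W^L + W^{R,\dagger}$, expand the resulting four-term sums, and then use the orthogonality relations established in the proof of Claim~\ref{claim:W-partial-isometry} to kill the two cross terms in each expansion. Specifically, in the proof of Claim~\ref{claim:W-partial-isometry} we already verified the two facts
\begin{equation}
    \calD(W^L) \perp \calI(W^R) = \calD(W^{R,\dagger}), \qquad \calI(W^L) \perp \calD(W^R) = \calI(W^{R,\dagger}),
\end{equation}
which are exactly what we need here.

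For the first equation of the fact, I would expand
\begin{align}
    W^\dagger W
    &= (W^{L,\dagger} + W^R)(W^L + W^{R,\dagger}) \\
    &= W^{L,\dagger} W^L + W^R W^{R,\dagger} + W^{L,\dagger} W^{R,\dagger} + W^R W^L \\
    &= \Pi^{\calD(W^L)} + \Pi^{\calI(W^R)} + W^{L,\dagger} W^{R,\dagger} + W^R W^L,
\end{align}
using the standard partial-isometry identities $W^{L,\dagger} W^L = \Pi^{\calD(W^L)}$ and $W^R W^{R,\dagger} = \Pi^{\calI(W^R)}$. To eliminate the cross term $W^R W^L$, note that for any vector $v$, the vector $W^L v$ lies in $\calI(W^L)$, which by the second orthogonality relation is contained in $\calD(W^R)^\perp = \ker(W^R)$; hence $W^R W^L = 0$. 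The other cross term $W^{L,\dagger} W^{R,\dagger}$ is the adjoint of $W^R W^L$ and therefore also vanishes. This gives \eqref{eq:expand-DW}.

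For the second equation, I would run the symmetric computation
\begin{align}
    W W^\dagger
    &= (W^L + W^{R,\dagger})(W^{L,\dagger} + W^R) \\
    &= W^L W^{L,\dagger} + W^{R,\dagger} W^R + W^L W^R + W^{R,\dagger} W^{L,\dagger} \\
    &= \Pi^{\calI(W^L)} + \Pi^{\calD(W^R)} + W^L W^R + W^{R,\dagger} W^{L,\dagger},
\end{align}
and then use the first orthogonality relation $\calI(W^R) \perp \calD(W^L)$ to conclude $W^L W^R = 0$ (since $W^R v \in \calI(W^R) \subseteq \ker(W^L)$) and hence also $W^{R,\dagger} W^{L,\dagger} = 0$. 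This yields \eqref{eq:expand-IW}. I do not foresee a real obstacle: the entire argument is essentially bookkeeping on top of the orthogonality facts that were already established, together with the standard partial-isometry identities. The only thing to be mildly careful about is consistently tracking which of $\calD,\calI$ corresponds to $W^{R,\dagger}$ versus $W^R$.
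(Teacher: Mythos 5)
Your proof is correct and takes essentially the same route the paper intends: the paper presents this fact as an immediate byproduct of its proof of Claim~\ref{claim:W-partial-isometry}, which relies on precisely the two orthogonality relations $\calD(W^L) \perp \calI(W^R)$ and $\calI(W^L) \perp \calD(W^R)$. Your calculation simply makes explicit the ``sum of partial isometries with orthogonal domains and orthogonal images'' bookkeeping that the paper leaves implicit, and the cross-term cancellations are handled correctly.
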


\begin{claim} \label{claim:pileqt-commutes-with-manything}
For all integers $t \geq 0$,
    $\Pi_{\leq t}$ commutes with $\Pi^{\calD(W)}$ and $\Pi^{\calI(W)}$.
\end{claim}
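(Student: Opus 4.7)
The plan is to derive the claim directly from two ingredients already established in the excerpt: \cref{fact:domain-image-W-WL-WR}, which decomposes $\Pi^{\calD(W)}$ and $\Pi^{\calI(W)}$ as sums of the domain/image projectors of $W^L$ and $W^R$, and \cref{claim:pileqt-commutes-with-WLWRDomIm}, which says that $\Pi_{\leq t}$ commutes with each of $\Pi^{\calD(W^L)}$, $\Pi^{\calI(W^L)}$, $\Pi^{\calD(W^R)}$, and $\Pi^{\calI(W^R)}$.

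Concretely, I will first invoke \cref{fact:domain-image-W-WL-WR} to write
\begin{align}
    \Pi^{\calD(W)} = \Pi^{\calD(W^L)} + \Pi^{\calI(W^R)}, \qquad \Pi^{\calI(W)} = \Pi^{\calD(W^R)} + \Pi^{\calI(W^L)}.
\end{align}
Then, using \cref{claim:pileqt-commutes-with-WLWRDomIm}, each summand commutes with $\Pi_{\leq t}$. Since commuting with a fixed operator is preserved under taking sums (if $[A,X]=0$ and $[B,X]=0$ then $[A+B,X]=0$), both $\Pi^{\calD(W)}$ and $\Pi^{\calI(W)}$ commute with $\Pi_{\leq t}$, giving the claim.

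There is essentially no obstacle here: this claim is a bookkeeping corollary of the previous two results, and the only thing to double-check is that the decomposition in \cref{fact:domain-image-W-WL-WR} is stated as an \emph{equality of projectors} (which requires the domain orthogonality established in the proof of \cref{claim:W-partial-isometry}), so that the sum is a genuine projector and not just an operator identity. Once that is noted, the proof reduces to a one-line application of linearity of the commutator.
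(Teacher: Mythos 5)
Your proof is correct and follows the same route as the paper: the paper's proof also cites \cref{claim:pileqt-commutes-with-WLWRDomIm} and (implicitly) the decomposition from \cref{fact:domain-image-W-WL-WR}, then concludes by linearity of the commutator. Your write-up just makes the implicit use of \cref{fact:domain-image-W-WL-WR} explicit, which is a fine bit of additional bookkeeping but not a different argument.
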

\begin{proof}
This follows immediately from \cref{claim:pileqt-commutes-with-WLWRDomIm}, which states that the projector $\Pi_{\leq t}$ commutes with the projectors $\Pi^{\calD(W^L)}$, $\Pi^{\calI(W^L)}$, $\Pi^{\calD(W^R)}$, $\Pi^{\calI(W^R)}$.
\end{proof}

\begin{corollary} \label{cor:PiLRR2D-PiDW-projector-lesseq-t}
    For all integers $t \geq 0$, the image of $\Pi^{\calD(W)}_{\leq t, \gsA \gsL \gsR}$ is a subspace of the image of $\Id_{\gsA} \otimes \Pi^{\bij}_{\leq t, \gsL \gsR}$. Similarly, the image of $\Pi^{\calI(W)}_{\leq t, \gsA \gsL \gsR}$ is a subspace of the image of $\Id_{\gsA} \otimes \Pi^{\bij}_{\leq t, \gsL \gsR}$.
\end{corollary}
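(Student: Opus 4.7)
The plan is to derive the corollary directly from \cref{fact:W-space-leqi} and \cref{claim:pileqt-commutes-with-manything}, without introducing any new machinery. The key observation is that $\Pi^{\calD(W)}_{\leq t, \gsA \gsL \gsR} = \Pi^{\calD(W)} \cdot \Pi_{\leq t}$ (by \cref{notation:length-restricted-ops}) is actually itself a projector, because the two factors commute by \cref{claim:pileqt-commutes-with-manything}. Consequently, its image is exactly the intersection of the images of $\Pi^{\calD(W)}$ and $\Pi_{\leq t}$.

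First, I would note that by \cref{fact:W-space-leqi}, $\calD(W)$ — which is by definition the image of $\Pi^{\calD(W)}$ — lies inside the image of $\Id_{\gsA} \otimes \Pi^{\bij}_{\gsL \gsR}$. Next, I would observe that $\Pi^{\bij}_{\gsL \gsR}$ and $\Pi_{\leq t, \gsL \gsR}$ commute, since $\Pi^{\bij}_{\gsL \gsR}$ is diagonal in the relation-state basis $\{\ket*{L}_{\gsL} \ket*{R}_{\gsR}\}$ and $\Pi_{\leq t, \gsL \gsR}$ is diagonal in any basis respecting the length decomposition; indeed, \cref{def:bij-proj} already defines $\Pi^{\bij}_{\leq t, \gsL \gsR}$ as their product. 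Putting these facts together, the image of $\Pi^{\calD(W)}_{\leq t, \gsA \gsL \gsR}$ equals $\calD(W) \cap \operatorname{Im}(\Pi_{\leq t})$, which is contained in
\[
\operatorname{Im}(\Id_{\gsA} \otimes \Pi^{\bij}_{\gsL \gsR}) \cap \operatorname{Im}(\Pi_{\leq t}) = \operatorname{Im}(\Id_{\gsA} \otimes \Pi^{\bij}_{\leq t, \gsL \gsR}),
\]
which is the desired conclusion for $\calD(W)$.

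The argument for $\calI(W)$ is essentially identical: \cref{fact:W-space-leqi} states that $\calI(W)$ is also inside the image of $\Id_{\gsA} \otimes \Pi^{\bij}_{\gsL \gsR}$, and \cref{claim:pileqt-commutes-with-manything} tells us that $\Pi^{\calI(W)}$ commutes with $\Pi_{\leq t}$, so the same chain of containments applies with $\Pi^{\calI(W)}$ in place of $\Pi^{\calD(W)}$. There is no real obstacle to this argument — it is essentially a bookkeeping exercise confirming that the two projector facts already established combine in the expected way. The only subtle point, which I would make sure to flag, is that commutativity is what lets us pass between ``$\Pi^{\calD(W)} \cdot \Pi_{\leq t}$ is a projector'' and ``its image is the intersection of the two images,'' so the invocation of \cref{claim:pileqt-commutes-with-manything} is doing genuine work and should be explicitly cited.
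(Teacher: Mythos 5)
Your proposal is correct and follows exactly the route the paper intends: the paper states \cref{cor:PiLRR2D-PiDW-projector-lesseq-t} with no written proof, treating it as an immediate consequence of \cref{fact:W-space-leqi} (domain and image of $W$ lie in the image of $\Id_{\gsA} \otimes \Pi^{\bij}_{\gsL \gsR}$) and \cref{claim:pileqt-commutes-with-manything} (commutation with $\Pi_{\leq t}$), which are precisely the two ingredients you combine, together with the standard fact that the product of commuting projectors has image equal to the intersection of their images. Your write-up simply makes this implicit argument explicit, so nothing further is needed.
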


Using~\cref{fact:domain-image-W-WL-WR}, we can now establish the following relationship between $W$ and $\spfo$. 

\begin{claim}[$W$ is a restriction of $\spfo$ up to isometry]
\label{claim:relate-W-and-spfo}
We have
    \begin{align}
        W &= \Compress \cdot \spfo \cdot \Compress^\dagger \cdot \Pi^{\calD(W)},\label{eq:compress-proof-spfo-goal-1}\\
        W^\dagger &= \Compress \cdot \spfo^\dagger \cdot \Compress^\dagger \cdot \Pi^{\calI(W)}. \label{eq:compress-proof-spfo-goal-2}
    \end{align}
\end{claim}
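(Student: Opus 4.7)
The plan is to verify both identities by direct algebraic expansion, leveraging the two given decompositions of $\Pi^{\calD(W)}$ and $\Pi^{\calI(W)}$ from \cref{fact:domain-image-W-WL-WR}, together with the two equivalent expressions for $W^L$ and $W^R$ in terms of $\spfo$ (equations \eqref{eq:WL-relation-pfo} and \eqref{eq:WR-relation-pfo}). Since $W = W^L + W^{R,\dagger}$ by definition, the goal reduces to matching each summand on the right-hand side of \eqref{eq:compress-proof-spfo-goal-1} with $W^L$ and $W^{R,\dagger}$, respectively.

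Concretely, for \eqref{eq:compress-proof-spfo-goal-1}, I would first substitute $\Pi^{\calD(W)} = \Pi^{\calD(W^L)} + \Pi^{\calI(W^R)}$ from \eqref{eq:expand-DW} and distribute to get
\begin{equation*}
\Compress \cdot \spfo \cdot \Compress^\dagger \cdot \Pi^{\calD(W)}
= \Compress \cdot \spfo \cdot \Compress^\dagger \cdot \Pi^{\calD(W^L)}
+ \Compress \cdot \spfo \cdot \Compress^\dagger \cdot \Pi^{\calI(W^R)}.
\end{equation*}
The first term equals $W^L$ by the first equality in \eqref{eq:WL-relation-pfo}. For the second term, I would take the adjoint of the second equality in \eqref{eq:WR-relation-pfo}, namely $W^R = \Pi^{\calI(W^R)} \cdot \Compress \cdot \spfo^\dagger \cdot \Compress^\dagger$, which (using $\spfo^{\dagger\dagger} = \spfo$ and that projectors are self-adjoint) gives $W^{R,\dagger} = \Compress \cdot \spfo \cdot \Compress^\dagger \cdot \Pi^{\calI(W^R)}$. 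Substituting yields $W^L + W^{R,\dagger} = W$, proving \eqref{eq:compress-proof-spfo-goal-1}.

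For \eqref{eq:compress-proof-spfo-goal-2}, the argument is completely symmetric: I would expand $\Pi^{\calI(W)} = \Pi^{\calD(W^R)} + \Pi^{\calI(W^L)}$ via \eqref{eq:expand-IW}, identify $\Compress \cdot \spfo^\dagger \cdot \Compress^\dagger \cdot \Pi^{\calD(W^R)} = W^R$ from the first equality in \eqref{eq:WR-relation-pfo}, and identify $\Compress \cdot \spfo^\dagger \cdot \Compress^\dagger \cdot \Pi^{\calI(W^L)} = W^{L,\dagger}$ by taking the adjoint of the second equality in \eqref{eq:WL-relation-pfo}. The sum $W^R + W^{L,\dagger}$ equals $(W^L + W^{R,\dagger})^\dagger = W^\dagger$, as needed.

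There is no real obstacle here; the claim is essentially bookkeeping once \cref{fact:domain-image-W-WL-WR} and the two-sided versions of \eqref{eq:WL-relation-pfo} and \eqref{eq:WR-relation-pfo} are in hand. The only subtlety worth flagging is that one must use the form of $W^L$ with the projector on the \emph{left} (and likewise for $W^R$) in order to take the adjoint cleanly, since the two $\Pi$ factors in \eqref{eq:WL-relation-pfo}--\eqref{eq:WR-relation-pfo} live on different sides and are generally distinct projectors.
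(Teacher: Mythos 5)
Your proposal is correct and follows exactly the same route as the paper: expand $\Pi^{\calD(W)}$ via \cref{fact:domain-image-W-WL-WR}, distribute, and match each term against the two restatements in \cref{eq:WL-relation-pfo}--\cref{eq:WR-relation-pfo} (the paper leaves the adjoint step for the $W^{R,\dagger}$ term implicit; you spell it out, which is a clean touch). The only difference is cosmetic — the paper proves \cref{eq:compress-proof-spfo-goal-1} in detail and cites symmetry for \cref{eq:compress-proof-spfo-goal-2}, whereas you sketch both.
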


In words,~\cref{claim:relate-W-and-spfo} says that for any state in $\calD(W)$, the domain of $W$, the action of $W$ is the same as $\spfo$ up to isometry. Additionally, it says that for any state in the image in $\calI(W)$, the image of $W$, the action of $W^\dagger$ is the same as $\spfo^\dagger$ up to isometry.

\begin{proof}[Proof of~\cref{claim:relate-W-and-spfo}]
    We will prove the first equality, \cref{eq:compress-proof-spfo-goal-1}; the second equality, \cref{eq:compress-proof-spfo-goal-2}, follows from a symmetric argument. From \cref{eq:WL-relation-pfo} and \cref{eq:WR-relation-pfo}, we have
    \begin{align}
        \Compress \cdot \spfo \cdot \Compress^\dagger \cdot \Pi^{\calD(W^L)} &= W^L, \label{eq:compress-proof-goal-1-spfo}\\
        \Compress \cdot \spfo \cdot \Compress^\dagger \cdot \Pi^{\calI(W^R)} &= W^{R,\dagger}.\label{eq:compress-proof-goal-2-spfo}
    \end{align}
    Summing~\cref{eq:compress-proof-goal-1-spfo,eq:compress-proof-goal-2-spfo} yields
    \begin{align}
        \Compress \cdot \spfo \cdot \Compress^\dagger \cdot (\Pi^{\calD(W^L)} + \Pi^{\calI(W^R)}) = W^L + W^{R,\dagger},
    \end{align}
    and plugging in $\Pi^{\calD(W)} = \Pi^{\calD(W^L)} + \Pi^{\calI(W^R)}$ from~\cref{eq:expand-DW} and $W = W^L + W^{R,\dagger}$ yields~\cref{eq:compress-proof-spfo-goal-1}.
\end{proof}

\section{The path-recording oracle $V$}
\label{sec:symmetric-V}

In the previous section, we defined a linear operator $W$ and showed that $W$ acts as a restricted version of $\spfo$, up to an application of the $\Compress$ isometry. In this section, we will introduce a second linear operator $V$, which will satisfy a number of key properties that will be crucial for our proof. We will show that $V$ satisfies the following properties:
\begin{itemize}
    \item \textbf{$V$ is indistinguishable from $W$ under twirling}, i.e., for $C,D$ sampled from any $n$-qubit unitary $2$-design $\frakD$, an adversary making forward and inverse queries cannot distinguish between queries to $D_{\gsA} \cdot V \cdot C_{\gsA}$ and queries to $D_{\gsA} \cdot W \cdot C_{\gsA}$.
    \item \textbf{$V$ satisfies approximate unitary invariance}, which we will use to conclude the following: an adversary making forward and inverse queries cannot distinguish between queries to $D_{\gsA} \cdot V \cdot C_{\gsA}$ for $C,D$ sampled from any $n$-qubit unitary $2$-design $\frakD$, and plain queries to $V$.\footnote{For technical reasons, our main proof will handle both of these bullets in one argument.}
\end{itemize}

We will refer to $V$ as the \emph{path-recording oracle}. We remark that this definition of $V$ is different from the one given in~\cref{part:standard}, as this $V$ will need to be designed to handle forward and inverse queries. In~\cref{subsec:imp-forward-inverse-q} we describe how to implement $V$ efficiently.

\subsection{Defining $V^L$ and $V^R$}

To define $V$, we first introduce helper operators $V^L$ and $V^R$.

\begin{definition}[left and right partial isometries] \label{def:V-sym-PRO}
    Let $V^L$ be the linear operator that acts as follows. For $x \in [N]$ and $(L,R) \in \mathcal{R}^{2,\leq N-1}$,
    \begin{align}
        V^L \cdot \ket*{x}_{\gsA} \ket*{L}_{\gsL} \ket*{R}_{\gsR} \coloneqq \sum_{\substack{y \in [N]:\\ y\not\in \Im(L \cup R)}} \frac{1}{\sqrt{N - \abs{\Im(L \cup R)}}} \ket*{y}_{\gsA} \ket*{L \cup \{(x,y)\}}_{\gsL} \ket*{R}_{\gsR}.
    \end{align}
    Define $V^R$ to be the linear operator such that for all $y \in [N]$ and $(L,R) \in \mathcal{R}^{2,\leq N-1}$,
    \begin{align}
        V^R \cdot \ket*{y}_{\gsA} \ket*{L}_{\gsL} \ket*{R}_{\gsR} \coloneqq \sum_{\substack{x \in [N]:\\ x\not\in \Dom(L \cup R)}} \frac{1}{\sqrt{N - \abs{\Dom(L \cup R)}}} \ket*{x}_{\gsA} \ket*{L}_{\gsL} \ket*{R \cup \{(x, y)\} }_{\gsR}.
    \end{align}
    By construction, $V^L$ and $V^R$ take states in $\Id_{\gsA} \otimes \Pi^{\calR^2}_{\leq i, \gsL \gsR}$ to $\Id_{\gsA} \otimes \Pi^{\calR^2}_{\leq i+1, \gsL \gsR}$.
\end{definition}

\paragraph{Why these definitions of $V^L$ and $V^R$?} On states of the form $\ket*{x} \ket*{L} \ket*{R}$ within the domain of $W^L$, the operators $W^L$ and $V^L$ act in the same way. However, the domain of $W^L$ is limited to states $\ket*{x} \ket*{L} \ket*{R}$ where $L \cup R$ forms a bijection and $x \notin \Dom(L \cup R)$ (which also implies that $\abs{L \cup R} \leq N-1$). On the other hand, the definition of $V^L$ extends $W^L$ so that it acts on all $\ket*{x} \ket*{L} \ket*{R}$ satisfying $\abs{L \cup R} \leq N-1$. In particular, we have dropped the requirement that $L \cup R$ is a bijection and that $x\not\in \Dom(L \cup R)$. An analogous relationship holds between $V^R$ and $W^R$. We define these extended operators, $V^L$ and $V^R$, to establish a property known as (approximate) unitary invariance (see~\cref{claim:VL-VR-approx-invariance}). Importantly, this property holds only for the extended operators $V^L$ and $V^R$, and not for the original $W^L$ and $W^R$ operators.

\begin{claim}
\label{claim:VL-VR-isometry}
    $V^L$ and $V^R$ are partial isometries.
\end{claim}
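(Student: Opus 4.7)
The plan is to verify directly that $V^L$ maps an orthonormal basis of its natural domain to an orthonormal set; the argument for $V^R$ will then follow by the symmetric substitution $\sL \leftrightarrow \sR$, $x \leftrightarrow y$, and $\Dom \leftrightarrow \Im$.

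First I would identify the orthonormal basis. Since the relation states $\{\ket*{L}\}_{L \in \calR}$ and $\{\ket*{R}\}_{R \in \calR}$ are orthonormal, the collection $\{\ket*{x}_{\gsA} \ket*{L}_{\gsL} \ket*{R}_{\gsR} : x \in [N], \ (L,R) \in \calR^{2,\leq N-1}\}$ is an orthonormal basis of the subspace on which $V^L$ is defined. Expanding the definition, a direct inner product calculation gives
\begin{align*}
&\bra*{x'}_{\gsA} \bra*{L'}_{\gsL} \bra*{R'}_{\gsR} V^{L,\dagger} V^L \ket*{x}_{\gsA} \ket*{L}_{\gsL} \ket*{R}_{\gsR}\\
&\quad = \frac{\braket*{R'}{R}}{\sqrt{(N - \abs{\Im(L \cup R)})(N - \abs{\Im(L' \cup R')})}} \sum_{\substack{y\not\in \Im(L\cup R)\\ y\not\in \Im(L'\cup R')}} \braket*{L' \cup \{(x',y)\}}{L \cup \{(x,y)\}},
\end{align*}
using orthonormality of the $\sA$ register states to collapse $y = y'$. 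The factor $\braket*{R'}{R}$ forces $R = R'$, so I would then restrict to that case.

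The key step---and what I expect to be the main content of the argument---is to show that, under the restriction $R = R'$, if the summand $\braket*{L' \cup \{(x',y)\}}{L \cup \{(x,y)\}}$ is nonzero for any allowed $y$, then necessarily $(x, L) = (x', L')$. The observation is that since $y \not\in \Im(L \cup R) \supseteq \Im(L)$, the label $y$ occurs as a second coordinate of the multiset $L \cup \{(x,y)\}$ exactly once, in the pair $(x, y)$; by the same reasoning, $y$ occurs as a second coordinate of $L' \cup \{(x',y)\}$ exactly once, in the pair $(x', y)$. A multiset equality of the two relations therefore forces $(x,y) = (x',y)$, hence $x = x'$; cancelling this shared pair from both sides yields $L = L'$ as multisets. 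This combinatorial argument is the place where the otherwise-mysterious side conditions $y \not\in \Im(L \cup R)$ in the definition of $V^L$ earn their keep.

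Given this structural claim, the inner product computation collapses cleanly: when $(x, L, R) = (x', L', R')$, each of the $N - \abs{\Im(L \cup R)}$ surviving terms equals $1$ and the expression evaluates to $1$; when $(x, L, R) \neq (x', L', R')$, every summand vanishes and the inner product is $0$. So $V^L$ sends an orthonormal basis of its domain to an orthonormal set, which is precisely the statement that $V^L$ is a partial isometry. The analogous argument with $\sL \leftrightarrow \sR$ and first/second coordinates swapped---now invoking $x \not\in \Dom(L \cup R)$ to identify the unique newly inserted pair in $R \cup \{(x,y)\}$---gives the claim for $V^R$.
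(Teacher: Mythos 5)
Your proposal is correct and follows essentially the same route as the paper: both compute the matrix elements of $V^{L,\dagger} V^L$ on the basis states $\ket*{x}\ket*{L}\ket*{R}$, use $\braket*{R'}{R}$ and the $\sA$-register orthonormality to collapse to $R=R'$ and $y=y'$, and then argue that since $y \notin \Im(L)\cup\Im(L')$ the multiset equality $L'\cup\{(x',y)\} = L\cup\{(x,y)\}$ forces $x=x'$ and $L=L'$, after which the surviving $N-\abs{\Im(L\cup R)}$ terms sum to $1$. The $V^R$ case is handled by the same symmetric argument in both.
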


\begin{proof}
    We will give the proof for $V^L$; the proof for $V^R$ follows by a symmetric argument. $V^L$ is a partial isometry if and only if $V^L \cdot V^{L,\dagger}$ is the orthogonal projector onto $\calD(V^L)$. From the definition of $V^L$, we can see that its domain is
    \begin{align}
        \calD(V^L) = \mathrm{span}\{\ket*{x}_{\gsA} \ket*{L}_{\gsL} \ket*{R}_{\gsR}: x \in [N], (L, R) \in \mathcal{R}^{2,\leq N-1}\}.
    \end{align}
    It suffices to show that for all $x,x' \in [N]$, and $(L,R) \in \mathcal{R}^{2,\leq N-1}$ and $(L',R') \in \calR^{2,\leq N-1}$ that
    \begin{align}
        \bra*{x'}_{\gsA} \bra*{L'}_{\gsL} \bra*{R'}_{\gsR} \cdot V^{L,\dagger} \cdot V^L \cdot \ket*{x}_{\gsA} \ket*{L}_{\gsL} \ket*{R}_{\gsR} = \braket*{x'}{x}_{\gsA} \braket*{L'}{L}_{\gsL} \braket*{R'}{R}_{\gsR}.
    \end{align}
    We can expand out the LHS as
    \begin{align}
        \Big(\sum_{y' \not\in \Im(L' \cup R')} \frac{\bra*{y'}_{\gsA} \bra*{L' \cup x'y'}_{\gsL} \bra*{R'}_{\gsR}}{\sqrt{N - \abs{\Im(L' \cup R')}}} \Big)\cdot \Big(\sum_{y \not\in \Im(L \cup R)} \frac{\ket*{y}_{\gsA} \ket*{L \cup \{(x,y)\}}_{\gsL} \ket*{R}_{\gsR}}{\sqrt{N - \abs{\Im(L \cup R)}}} \Big)
    \end{align}
    The summand is zero unless $y' = y$, $L' \cup x'y' = L \cup \{(x,y)\}$, and $R' = R$. Combining the first two constraints, we have $L' \cup x'y = L \cup \{(x,y)\}$. Since $y$ does not appear in either $\Im(L')$ or $\Im(L)$, this implies $x ' = x$ and $L' = L$. This means that the sum is $0$ unless $x = x'$, $L = L'$ and $R = R'$. When these constraints are satisfied, the sum becomes $\sum_{y \not\in \Im(L \cup R)} 1/(N - \abs{\Im(L \cup R)}) = 1$. This completes the proof that $V^L$ is a partial isometry.
\end{proof}

\subsection{Defining $V$}

\begin{definition}
\label{def:symmetric-V}
    The path-recording oracle is the operator $V$ defined as
    \begin{align}
        V &= V^L \cdot (\Id - V^R \cdot V^{R,\dagger}) + (\Id - V^L \cdot V^{L,\dagger}) \cdot V^{R,\dagger}.
    \end{align}
    By construction, $V$ and $V^\dagger$ take states in $\Id_{\gsA} \otimes \Pi^{\calR^2}_{\leq i, \gsL \gsR}$ to $\Id_{\gsA} \otimes \Pi^{\calR^2}_{\leq i+1, \gsL \gsR}$ for any integer $i \geq 0$.
\end{definition}

\paragraph{Why this definition of $V$?} Recall that since we defined $W \coloneqq W^L + W^{R,\dagger}$, it might seem natural to define $V \coloneqq V^L + V^{R,\dagger}$. However, if we defined $V$ this way, it would \emph{not} be a partial isometry. As we showed in the proof of~\cref{claim:W-partial-isometry}, $W^L + W^{R,\dagger}$ is a partial isometry because $W^L$ and $W^{R,\dagger}$ do not ``overlap'', i.e., they are partial isometries with orthogonal domains and orthogonal images . On the other hand, this is not true for $V^L$ and $V^{R,\dagger}$. Thus, in order to ensure that $V$ is a partial isometry, we need to ``project out'' the overlap between $V^L$ and $V^{R,\dagger}$. 

\begin{claim} \label{claim:V-partial-iso}
    $V$ is a partial isometry.
\end{claim}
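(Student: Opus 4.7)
The plan is to decompose $V = A + B$ with $A \coloneqq V^L(\Id - V^R V^{R,\dagger})$ and $B \coloneqq (\Id - V^L V^{L,\dagger}) V^{R,\dagger}$, and to show that $A$ and $B$ are themselves partial isometries whose domains are mutually orthogonal and whose images are mutually orthogonal. Given these, $V^\dagger V = A^\dagger A + B^\dagger B$ is a sum of mutually orthogonal projectors, hence a projector, which is exactly the partial-isometry condition for $V$.

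First I would establish the orthogonality of domains and images of $A$ and $B$. Using the standard partial-isometry identities $V^{L,\dagger}(\Id - V^L V^{L,\dagger}) = 0$ and $(\Id - V^R V^{R,\dagger}) V^R = 0$ for $V^L, V^R$ from \cref{claim:VL-VR-isometry}, direct substitution gives
\begin{align}
    A^\dagger B &= (\Id - V^R V^{R,\dagger}) \cdot V^{L,\dagger}(\Id - V^L V^{L,\dagger}) \cdot V^{R,\dagger} = 0, \\
    A B^\dagger &= V^L \cdot (\Id - V^R V^{R,\dagger}) V^R \cdot (\Id - V^L V^{L,\dagger}) = 0,
\end{align}
and taking adjoints yields $B^\dagger A = B A^\dagger = 0$. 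Consequently $V^\dagger V = A^\dagger A + B^\dagger B$, and moreover $(A^\dagger A)(B^\dagger B) = A^\dagger (A B^\dagger) B = 0$, so it remains only to argue that $A^\dagger A$ and $B^\dagger B$ are each projectors.

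Next I would verify that $A$ is a partial isometry, which boils down to showing that the projector $\Id - V^R V^{R,\dagger}$ commutes with $Q_L \coloneqq V^{L,\dagger} V^L$: once this holds, $A^\dagger A = (\Id - V^R V^{R,\dagger}) \, Q_L \, (\Id - V^R V^{R,\dagger}) = Q_L (\Id - V^R V^{R,\dagger})$ is a product of two commuting projectors, hence a projector. The commutativity follows from \emph{length grading} of the $(\sL, \sR)$ registers: $Q_L$ equals $\Id_{\gsA} \otimes \Pi_{\leq N-1, \gsL \gsR}$ (since $V^L$ is defined precisely on inputs with $\abs{L} + \abs{R} \leq N-1$), while $V^R$ strictly raises the total length by one (by the closing sentence of \cref{def:V-sym-PRO}), so $V^R V^{R,\dagger}$ decomposes as a direct sum of projectors on the length-$1, 2, \ldots, N$ sectors, each of which commutes with every length projector. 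A symmetric argument handles $B$, completing the proof.

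The main obstacle is really the length-grading step, which is also what motivates the somewhat unusual definition of $V$ in \cref{def:symmetric-V}: the naive guess $V^L + V^{R,\dagger}$ would fail to be a partial isometry because its constituent partial isometries have overlapping domains and overlapping images, and the projectors $\Id - V^R V^{R,\dagger}$ and $\Id - V^L V^{L,\dagger}$ are inserted precisely to carve $A$ and $B$ into orthogonal blocks. Verifying that these carving projectors respect the domain/image structure of $V^L$ and $V^R$ through the length grading is the real content of the claim.
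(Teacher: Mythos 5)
Your high-level structure is the same as the paper's proof: decompose $V = A + B$ with $A = V^L(\Id - V^R V^{R,\dagger})$ and $B = (\Id - V^L V^{L,\dagger}) V^{R,\dagger}$, show each summand is a partial isometry by establishing a commutativity fact, and show the domains and images of $A$ and $B$ are mutually orthogonal. The orthogonality part is correct — your direct computations $A^\dagger B = 0$ and $A B^\dagger = 0$ via the partial-isometry identities $V^{L,\dagger}(\Id - V^L V^{L,\dagger}) = 0$ and $(\Id - V^R V^{R,\dagger}) V^R = 0$ are valid, and a bit more explicit than the paper's subspace-containment phrasing.

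There is, however, a gap in the commutativity step for showing that $A$ is a partial isometry. You assert $Q_L \coloneqq V^{L,\dagger} V^L = \Id_{\gsA} \otimes \Pi_{\leq N-1, \gsL \gsR}$, but this is not the domain projector of $V^L$. Per \cref{def:V-sym-PRO}, $V^L$ is defined only on states $\ket*{x}\ket*{L}\ket*{R}$ where $\ket*{L}$ and $\ket*{R}$ are \emph{relation states}, i.e.\ it vanishes outside the symmetric subspace of each $(\mathbb{C}^{N^2})^{\otimes \ell}$, $(\mathbb{C}^{N^2})^{\otimes r}$ block. So $Q_L = \Id_{\gsA} \otimes \Pi^{\calR^2}_{\leq N-1, \gsL \gsR}$, which is a strict subprojector of $\Pi_{\leq N-1, \gsL \gsR}$ once $N-1 \geq 2$. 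Your length-grading argument establishes that $V^R V^{R,\dagger}$ commutes with every plain length projector $\Pi_{\leq t}$, but that alone does not give commutativity with $\Pi^{\calR^2}_{\leq N-1}$, since $\Pi^{\calR^2}_{\leq N-1}$ is not a length projector — within each $(\ell, r)$ sector it is the nontrivial symmetric-subspace projector, not the identity, and two block-diagonal operators need not commute unless one of them is scalar in each block. To close the gap you also need the observation that $V^R V^{R,\dagger}$ is supported on the relation-state subspace (the image of $V^R$ is spanned by $\ket*{x}\ket*{L}\ket*{R'}$ with $L, R'$ relations, so $V^R V^{R,\dagger}$ commutes with $\Id_{\gsA} \otimes \Pi^{\calR^2}_{\gsL \gsR}$). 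Combining this with the length-grading fact gives commutativity with $\Pi^{\calR^2}_{\leq N-1} = (\Id_{\gsA} \otimes \Pi^{\calR^2}_{\gsL \gsR}) \cdot \Pi_{\leq N-1}$, after which your argument goes through and matches the paper's.
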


\begin{proof}
    We will first show that $V^L \cdot (\Id - V^R \cdot V^{R,\dagger})$ is a partial isometry. This is true if and only if $(\Id - V^R \cdot V^{R,\dagger}) \cdot V^{L,\dagger} \cdot V^L \cdot (\Id - V^R \cdot V^{R,\dagger})$ is a projector. To show that this operator is a projector, it suffices to show that $\Pi^{\calD(V^L)} = V^{L,\dagger} \cdot V^L $ and $\Pi^{\calI(V^R)} = V^R \cdot V^{R,\dagger}$ commute. From the definition of $V^L$, its domain is the image of the projector $\Id_{\gsA} \otimes \Pi^{\calR^2}_{\leq N-1, \gsL \gsR}$. Since $V^R$ takes states in $\Pi^{\calR^2}_{\leq i, \gsL \gsR}$ to $\Pi^{\calR^2}_{\leq i+1, \gsL \gsR}$ (for $0 \leq i \leq N-1$), it follows that $V^R \cdot V^{R,\dagger}$ takes states in $\Pi^{\calR^2}_{\leq i+1, \gsL \gsR}$ to $\Pi^{\calR^2}_{\leq i+1, \gsL \gsR}$ (for $0 \leq i \leq N-1$). In particular, this means it commutes with $\Id_{\gsA} \otimes \Pi^{\calR^2}_{\leq N-1}$. Using a symmetric argument, we can conclude that $(\Id - V^L \cdot V^{L,\dagger}) \cdot V^{R,\dagger}$ is also a partial isometry.

    Now, we just need to show that the sum of these two partial isometries is a partial isometry. It suffices to show that their domains are orthogonal and their images are orthogonal. To see that their domains are orthogonal, note that the domain of $V^L \cdot (\Id - V^R \cdot V^{R,\dagger})$ is a subspace of $\Id - \Pi^{\calI(V^R)}$, while the domain of $(\Id - V^L \cdot V^{L,\dagger}) \cdot V^{R,\dagger}$ is a subspace of $\Pi^{\calI(V^R)}$, and hence they are orthogonal. A symmetric argument shows their images are orthogonal. This completes the proof.
\end{proof}

$V$ being a partial isometry implies that any state generated by an adversary that queries $V$ and $V^\dagger$ will have a norm at most $1$.
This is an important property that will be central to our strong PRU proof.
Recall that in the standard PRU proof of \cref{part:standard}, the path-recording oracle acts as an isometry on all states that can be generated by querying the path-recording oracle.
This first property of $V$ being a partial isometry is a relaxation of the isometric property of the standard path-recording oracle.
While $V$ is a partial isometry, we will later show that the state generated by an adversary that queries $V$ and $V^\dagger$ will have a norm close to one for subexponential number of queries.

\subsection{Two-sided unitary invariance}

The path-recording oracle $V$ satisfies an (approximate) two-sided unitary invariance property, which we state below.

\begin{definition} \label{def:multi-rot-Q}
    For any $n$-qubit unitary $C,D$, define
    \begin{align}
        Q[C,D] &\coloneqq (C \otimes D^T)^{\otimes *}_{\gsL} \otimes (\overline{C} \otimes D^{\dagger})^{\otimes *}_{\gsR}.
    \end{align}
\end{definition}

\begin{claim}[two-sided unitary invariance]
\label{claim:two-sided-invariance}
    For any integer $0 \leq t \leq N-1$ and any pair of $n$-qubit unitaries $C,D$,
    \begin{align}
        \norm{D_{\gsA} \cdot V_{\leq t} \cdot C_{\gsA} \otimes Q[C,D]_{\gsL \gsR}
        - Q[C,D]_{\gsL \gsR} \cdot V_{\leq t}}_{\opnorm} & \leq 16\sqrt{\frac{2t(t+1)}{N}},\\
        \norm{C_{\gsA}^\dagger \cdot (V^\dagger)_{\leq t} \cdot D_{\gsA}^\dagger \otimes Q[C,D]_{\gsL \gsR}
        - Q[C,D]_{\gsL \gsR} \cdot (V^\dagger)_{\leq t}}_{\opnorm} & \leq 16\sqrt{\frac{2t(t+1)}{N}},
    \end{align}
\end{claim}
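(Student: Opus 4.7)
My plan is to introduce an \emph{idealized} path-recording oracle $V^{\mathrm{ideal}}$ that satisfies \emph{exact} two-sided unitary invariance, and then bound the operator-norm distance $\norm{V_{\leq t} - V^{\mathrm{ideal}}_{\leq t}}_{\mathrm{op}}$ on the bounded-length subspace. Concretely, I would define $V^{L,\mathrm{ideal}}$ and $V^{R,\mathrm{ideal}}$ by dropping the ``freshness'' restrictions $y\notin\Im(L\cup R)$ and $x\notin\Dom(L\cup R)$ appearing in \cref{def:V-sym-PRO}:
\begin{align*}
V^{L,\mathrm{ideal}}\ket{x}_{\gsA}\ket{L}_{\gsL}\ket{R}_{\gsR} &\coloneqq \tfrac{1}{\sqrt{N}}\sum_{y\in[N]} \ket{y}_{\gsA}\ket{L\cup\{(x,y)\}}_{\gsL}\ket{R}_{\gsR},
\end{align*}
and symmetrically for $V^{R,\mathrm{ideal}}$. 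Then let $V^{\mathrm{ideal}}$ mirror \cref{def:symmetric-V}:
\[
V^{\mathrm{ideal}} \coloneqq V^{L,\mathrm{ideal}}(\Id - V^{R,\mathrm{ideal}} V^{R,\mathrm{ideal},\dagger}) + (\Id - V^{L,\mathrm{ideal}} V^{L,\mathrm{ideal},\dagger}) V^{R,\mathrm{ideal},\dagger}.
\]
The triangle inequality then reduces the claim to (i) exact invariance of $V^{\mathrm{ideal}}$, and (ii) a bound $\norm{V_{\leq t} - V^{\mathrm{ideal}}_{\leq t}}_{\mathrm{op}} = O(\sqrt{t(t+1)/N})$.

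For (i), I would first establish the building-block invariances $D_{\gsA} V^{L,\mathrm{ideal}} C_{\gsA} \otimes Q[C,D] = Q[C,D] V^{L,\mathrm{ideal}}$ and $C_{\gsA}^\dagger V^{R,\mathrm{ideal}} D_{\gsA}^\dagger \otimes Q[C,D] = Q[C,D] V^{R,\mathrm{ideal}}$. These follow from the EPR-pair transfer identity $(U\otimes\Id)\ket{\mathrm{EPR}} = (\Id\otimes U^T)\ket{\mathrm{EPR}}$: after $V^{L,\mathrm{ideal}}$ acts, register $\sA$ is maximally entangled with the $y$-slot of the newly inserted pair in $\sL$, so a $D$ on $\sA$ applied after $V^{L,\mathrm{ideal}}$ transfers to $D^T$ on that $y$-slot and a $C$ on $\sA$ applied before $V^{L,\mathrm{ideal}}$ becomes $C$ on the new $x$-slot---matching exactly the factor $(C\otimes D^T)$ that $Q[C,D]$ applies to each $\sL$-pair. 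The $V^{R,\mathrm{ideal}}$ identity is analogous, and its adjoint (after multiplying by $Q[C,D]$ and using $Q[C,D]Q[C,D]^\dagger = \Id$) yields $D V^{R,\mathrm{ideal},\dagger} C \otimes Q[C,D] = Q[C,D] V^{R,\mathrm{ideal},\dagger}$. Taking further adjoints produces the covariance relations $D V^{L,\mathrm{ideal}} V^{L,\mathrm{ideal},\dagger} D^\dagger = Q V^{L,\mathrm{ideal}} V^{L,\mathrm{ideal},\dagger} Q^\dagger$ and its $R$-analogue. Substituting these into the definition of $V^{\mathrm{ideal}}$ and passing $C,D,Q$ across each factor verifies $D V^{\mathrm{ideal}} C \otimes Q = Q V^{\mathrm{ideal}}$ by a direct algebraic calculation; the inverse-query invariance is obtained by taking the adjoint of this identity.

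For (ii), a direct calculation on any basis state $\ket{x}\ket{L}\ket{R}$ with $\abs{L}+\abs{R}\leq t$ shows that the only difference between $V^L$ and $V^{L,\mathrm{ideal}}$ comes from the $m = \abs{\Im(L\cup R)} \leq t$ forbidden $y$-values and the renormalization $\tfrac{1}{\sqrt{N-m}}$ vs.\ $\tfrac{1}{\sqrt{N}}$, which a direct computation bounds by $O(\sqrt{m/N})$ in 2-norm; the analogous bound holds for $V^R - V^{R,\mathrm{ideal}}$. I would then telescope $V - V^{\mathrm{ideal}}$ into a constant number of terms of the form (building-block difference)$\,\cdot\,$(bounded-norm operator)---including the projector-level differences $V^L V^{L,\dagger} - V^{L,\mathrm{ideal}} V^{L,\mathrm{ideal},\dagger}$ and its $R$-analogue---and combine via triangle inequality. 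Plugging the resulting bound into \[\norm{D V_{\leq t} C \otimes Q - Q V_{\leq t}}_{\mathrm{op}} \leq 2\norm{V_{\leq t} - V^{\mathrm{ideal}}_{\leq t}}_{\mathrm{op}}\] finishes the forward claim, and the inverse claim is handled identically after adjointing.

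The hardest part will be controlling the projector-level differences in (ii): unlike $V$ itself, which is a genuine partial isometry by \cref{claim:V-partial-iso}, the idealized $V^{L,\mathrm{ideal}}$ is \emph{not}, because distinct basis inputs can collide---for instance, $V^{L,\mathrm{ideal}}\ket{x'}\ket{\{(x,y)\}}$ and $V^{L,\mathrm{ideal}}\ket{x}\ket{\{(x',y)\}}$ share a $\tfrac{1}{\sqrt{N}}\ket{y}\ket{\{(x,y),(x',y)\}}$ component. Because of this, $V^{L,\mathrm{ideal}} V^{L,\mathrm{ideal},\dagger}$ is not a projector, and the clean covariance identity does not immediately produce a commutation with $Q[C,D]$; careful accounting of these collision-induced corrections---which scale with the number of existing pairs in $L\cup R$---is what ultimately produces the combined $\sqrt{t(t+1)/N}$ rate rather than the naive $\sqrt{t/N}$.
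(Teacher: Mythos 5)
Your overall architecture---compare $V$ to an idealized, exactly invariant insertion operator and then telescope through the four terms in \cref{def:symmetric-V}, treating the projector terms via products---is the same as the paper's. However, there is a genuine gap in your step (i): the operator $V^{L,\mathrm{ideal}}$ as you define it is \emph{not} exactly two-sided invariant. Relation states are symmetrized multiset states (\cref{def:relation-states-repeated}), so ``insert $(x,y)$ with uniform amplitude $1/\sqrt{N}$'' is not the same as ``append $\ket*{x,y}$ and symmetrize'': on collision components the latter produces $\sqrt{\num(L,(x,y))+1}\cdot\ket*{L\cup\{(x,y)\}}$ rather than $\ket*{L\cup\{(x,y)\}}$. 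Your EPR-transfer argument implicitly works in an ordered (unsymmetrized) representation where ``the newly inserted slot'' is well defined; after symmetrization the inserted pair is indistinguishable from existing ones, and the multiplicity weights are exactly what restores covariance. Indeed, the paper's $E^L$ (\cref{def:ELER-definition}) is your $V^{L,\mathrm{ideal}}$ with the extra weights $\sqrt{\num(L,(x,y))+1}$, and it is exactly invariant (\cref{claim:E-invariance}); since exact invariance is a linear condition, your operator is invariant iff $\Delta \coloneqq E^L - V^{L,\mathrm{ideal}}$ is, and it is not. For instance, taking $D=\Id$,
\begin{align}
  (\Delta\, C_{\gsA})\,\ket*{x_0}_{\gsA}\ket*{\{(x_1,y_1)\}}_{\gsL}\ket*{\varnothing}_{\gsR} &= \tfrac{\sqrt{2}-1}{\sqrt{N}}\;\bra*{x_1}C\ket*{x_0}\;\ket*{y_1}_{\gsA}\,\ket*{x_1,y_1,x_1,y_1}_{\gsL}\,\ket*{\varnothing}_{\gsR},
\end{align}
whereas
\begin{align}
  \bigl(Q[C,\Id]\,\Delta\, Q[C,\Id]^\dagger\bigr)\,\ket*{x_0}_{\gsA}\ket*{\{(x_1,y_1)\}}_{\gsL}\ket*{\varnothing}_{\gsR} &= \tfrac{\sqrt{2}-1}{\sqrt{N}}\;\overline{\bra*{x_1}C\ket*{x_0}}\;\ket*{y_1}_{\gsA}\,\bigl(C\ket*{x_0}\otimes\ket*{y_1}\bigr)^{\otimes 2}_{\gsL}\,\ket*{\varnothing}_{\gsR},
\end{align}
and for generic $C$ (with $\bra*{x_1}C\ket*{x_0}\neq 0$ and $C\ket*{x_0}$ not proportional to $\ket*{x_1}$) these disagree. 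So the identity $D_{\gsA}V^{\mathrm{ideal}}C_{\gsA}=Q[C,D]\,V^{\mathrm{ideal}}\,Q[C,D]^\dagger$ that your whole plan rests on is false as stated.

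The repair is precisely the paper's route: use the weighted insertion operators $E^L,E^R$ of \cref{def:ELER-definition} as the idealized building blocks (exact invariance is \cref{claim:E-invariance}, proven via the append-and-symmetrize form in \cref{claim:alt-form-of-E}), together with $\norm{V^L_{\leq t}-E^L_{\leq t}}_{\opnorm}\leq\sqrt{2t(t+1)/N}$ (\cref{claim:VL-EL-close}); then your telescoping through the four terms of $V$, including the projector terms, is essentially \cref{claim:VL-VR-approx-invariance,claim:VLproj-VRproj-approx-invariance} and the paper's final assembly. Two further cautions for your step (ii): a per-basis-state $2$-norm bound of $O(\sqrt{m/N})$ does not by itself give an operator-norm bound, because images of distinct basis inputs overlap---the paper's \cref{claim:VL-EL-close} needs a Cauchy--Schwarz argument over the coefficients (the orthogonal $\ket*{v},\ket*{w}$ decomposition)---and if you keep a composite $V^{\mathrm{ideal}}$ you must also track that $E^L,E^R$ have operator norm slightly exceeding $1$ (of order $\sqrt{1+t/N}$ on $\Pi_{\leq t}$) when bounding the telescoped product terms.
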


\cref{claim:two-sided-invariance} is proven in~\cref{sec:symmetric-V-approx-invariance}.
The two-sided unitary invariance of $V$ allows us to move the random unitaries $C$ and $D$ acting on system register $\sA$ to the purifying registers $\sL, \sR$.

\subsection{$W$ is a restriction of $V$}

We now show that $W$ is a restriction of $V$. First, we need the following basic facts relating $W^L, W^R, V^L$, and $V^R$ that follow immediately from the definitions of these operators.

\begin{fact}
    We have
    \begin{itemize}
        \item $W^L$ is a restriction of $V^L$ and $W^R$ is a restriction of $V^R$:
        \begin{align}
        W^L &= V^L \cdot \Pi^{\calD(W^L)} = \Pi^{\calI(W^L)} \cdot V^L \label{eq:WL-VL-relation} \\
        W^R &= V^R \cdot \Pi^{\calD(W^R)} = \Pi^{\calI(W^R)} \cdot V^R \label{eq:WR-VR-relation}
        \end{align}
        \item The image of $V^R$ is in the kernel of $W^L$, and the image of $V^L$ is in the kernel of $W^R$, i.e.,
        \begin{align}
        W^L \cdot V^R = W^R \cdot V^L &= 0 \label{eq:im-VR-kernel-WL},
    \end{align}
    \end{itemize}
\end{fact}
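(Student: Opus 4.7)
The plan is to verify each identity by direct comparison of how the operators act on basis states $\ket*{x}_{\gsA} \ket*{L}_{\gsL} \ket*{R}_{\gsR}$, using the explicit definitions of $W^L, W^R$ in \cref{def:ternary-pfo-action} and $V^L, V^R$ in \cref{def:V-sym-PRO}. The crucial observation that makes the first two identities work is that for any $(L,R) \in \calR^{2,\dist}$, the relation $L \cup R$ is bijective, so $\abs*{L \cup R} = \abs*{\Im(L \cup R)} = \abs*{\Dom(L \cup R)}$, and the normalization factor $1/\sqrt{N - \abs{L \cup R}}$ used in $W^L$ coincides with the factor $1/\sqrt{N - \abs{\Im(L \cup R)}}$ used in $V^L$ (and symmetrically for $W^R$ vs.\ $V^R$).

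For the identity $W^L = V^L \cdot \Pi^{\calD(W^L)}$, I would argue as follows. The projector $\Pi^{\calD(W^L)}$ selects exactly the basis vectors $\ket*{x}\ket*{L}\ket*{R}$ with $(L,R) \in \calR^{2,\dist}$ and $x \notin \Dom(L \cup R)$, i.e., the basis of $\calD(W^L)$. On each such basis vector, the above matching of normalizations makes the output superpositions of $V^L$ and $W^L$ identical term by term. A symmetric argument handles $W^R = V^R \cdot \Pi^{\calD(W^R)}$. For the ``image'' version $W^L = \Pi^{\calI(W^L)} \cdot V^L$, I would instead show that on any basis vector outside $\calD(W^L)$, the image under $V^L$ is orthogonal to $\calI(W^L)$, and hence killed by $\Pi^{\calI(W^L)}$. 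Since $\calI(W^L) \subseteq \Id_{\gsA} \otimes \Pi^{\bij}_{\gsL \gsR}$, it suffices to check that $V^L$ produces only states whose relation labels fail to form a bijection. This splits into two cases: (i) if $L \cup R$ is not bijective then neither is $L \cup \{(x,y)\} \cup R$; (ii) if $L \cup R$ is bijective but $x \in \Dom(L \cup R)$, then after adjoining $(x,y)$ the element $x$ now appears twice in $\Dom(L \cup \{(x,y)\} \cup R)$, again destroying bijectivity. The $W^R$ analogue follows symmetrically.

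For the kernel equalities $W^L \cdot V^R = 0$ and $W^R \cdot V^L = 0$, I would trace the composition on a basis vector $\ket*{y}\ket*{L}\ket*{R}$. By definition, $V^R$ returns a superposition of $\ket*{x}\ket*{L}\ket*{R \cup \{(x,y)\}}$ over $x \notin \Dom(L \cup R)$; in every such term, the element $x$ now lies in $\Dom(L \cup (R \cup \{(x,y)\}))$. Consequently, each term is outside $\calD(W^L)$ (which requires the $\sA$-register value to sit outside the combined domain), so $W^L$ annihilates it. The other equality $W^R \cdot V^L = 0$ follows identically with the roles of $\Dom$ and $\Im$, and of $L$ and $R$, swapped.

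The main ``obstacle'' is not mathematical depth but careful bookkeeping: one must keep straight the distinctions between the full variable-length register, the bijective subspace $\Pi^{\bij}_{\gsL \gsR}$, and the more restricted spaces $\calD(W^L), \calI(W^L)$, and correctly enumerate the failure modes of bijectivity when adjoining a new pair $(x,y)$. Once these are organized, every claim reduces to a one-line verification from the definitions.
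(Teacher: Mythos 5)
Your proposal is correct, and it matches the intended (and unstated) proof: the paper presents this as a ``Fact'' that ``follows immediately from the definitions,'' and your basis-state verification — using the coincidence of normalizations $\abs{L \cup R} = \abs{\Im(L \cup R)}$ on bijective relations, the case analysis for the image-projector identity, and the observation that $V^R$'s output lands the $\sA$-register value inside the combined domain — is exactly that verification spelled out.
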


\begin{lemma} \label{lem:P1P2P1-prop}
    If $\Pi_1$ and $\Pi_2$ are projectors, and $\Pi_1 = \Pi_1 \Pi_2 \Pi_1$ then $\Pi_1$ is a subspace of $\Pi_2$.
\end{lemma}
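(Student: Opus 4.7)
The plan is to show that any unit vector in the image of $\Pi_1$ must also be fixed by $\Pi_2$, which is the definition of containment of subspaces.

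First, I would take an arbitrary unit vector $\ket{\psi}$ in the image of $\Pi_1$, so $\Pi_1 \ket{\psi} = \ket{\psi}$ and hence also $\bra{\psi} \Pi_1 = \bra{\psi}$ (by self-adjointness of $\Pi_1$). Applying the hypothesis $\Pi_1 = \Pi_1 \Pi_2 \Pi_1$ gives
\begin{equation}
    \ket{\psi} = \Pi_1 \ket{\psi} = \Pi_1 \Pi_2 \Pi_1 \ket{\psi} = \Pi_1 \Pi_2 \ket{\psi},
\end{equation}
and then taking the inner product with $\bra{\psi}$ yields
\begin{equation}
    1 = \braket{\psi}{\psi} = \bra{\psi} \Pi_1 \Pi_2 \ket{\psi} = \bra{\psi} \Pi_2 \ket{\psi} = \norm{\Pi_2 \ket{\psi}}_2^2.
\end{equation}

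The key step is then to invoke the basic fact that for any projector $\Pi_2$ and unit vector $\ket{\psi}$, one has $\norm{\Pi_2 \ket{\psi}}_2 \leq 1$ with equality if and only if $\Pi_2 \ket{\psi} = \ket{\psi}$ (for example, by decomposing $\ket{\psi} = \Pi_2 \ket{\psi} + (\Id - \Pi_2) \ket{\psi}$ into orthogonal components and applying the Pythagorean identity). Combined with the above, this forces $\Pi_2 \ket{\psi} = \ket{\psi}$, so $\ket{\psi}$ lies in the image of $\Pi_2$. Since $\ket{\psi}$ was an arbitrary unit vector in the image of $\Pi_1$, the image of $\Pi_1$ is contained in the image of $\Pi_2$.

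I do not expect any real obstacle here; the proof is purely algebraic and uses only the idempotence and self-adjointness of the projectors together with the Pythagorean identity. The only thing worth stating carefully is the phrase ``$\Pi_1$ is a subspace of $\Pi_2$,'' which I would interpret as the image of $\Pi_1$ being a subspace of the image of $\Pi_2$, equivalently $\Pi_2 \Pi_1 = \Pi_1$.
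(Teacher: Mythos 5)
Your proof is correct and takes essentially the same approach as the paper: fix a unit vector $\ket{\psi}$ in the image of $\Pi_1$, use the hypothesis to show $\bra{\psi}\Pi_2\ket{\psi} = 1$, and conclude $\Pi_2\ket{\psi} = \ket{\psi}$. The only cosmetic difference is that you spell out the Pythagorean argument for the final implication, whereas the paper states it as a one-line consequence of $\Pi_2$ being a projector.
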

\begin{proof}
    Consider any normalized state $\ket*{\psi} \in \Pi_1$, i.e., $\Pi_1 \ket*{\psi} = \ket*{\psi}$. We have the following identity,
    \begin{equation}
        1 = \bra*{\psi} \Pi_1 \ket*{\psi} = \bra*{\psi} \Pi_1 \Pi_2 \Pi_1 \ket*{\psi} = \bra*{\psi} \Pi_2 \ket*{\psi}.
    \end{equation}
    Because $\Pi_2$ is a projector and $\bra*{\psi} \Pi_2 \ket*{\psi} = 1$, we have $\ket*{\psi} \in \Pi_2$.
\end{proof}

\begin{lemma} \label{lem:partial-iso-V1V2-prop}
    Consider any partial isometries $V_1, V_2$. If $V_2 = V_1 \cdot \Pi^{\calD(V_2)}$, then $\calD(V_2)$ is a subspace of $\calD(V_1)$. And if $ V_2 = \Pi^{\calI(V_2)} \cdot V_1$, then $\calI(V_2)$ is a subspace of $\calI(V_1)$.
\end{lemma}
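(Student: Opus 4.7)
The plan is to reduce both statements directly to \cref{lem:P1P2P1-prop} by using the defining identities of a partial isometry, namely $V^\dagger V = \Pi^{\calD(V)}$ and $V V^\dagger = \Pi^{\calI(V)}$.

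For the first implication, I would start from the hypothesis $V_2 = V_1 \cdot \Pi^{\calD(V_2)}$ and take adjoints to obtain $V_2^\dagger = \Pi^{\calD(V_2)} \cdot V_1^\dagger$. Substituting both into $\Pi^{\calD(V_2)} = V_2^\dagger V_2$ yields
\begin{equation}
\Pi^{\calD(V_2)} \;=\; \Pi^{\calD(V_2)} \cdot V_1^\dagger V_1 \cdot \Pi^{\calD(V_2)} \;=\; \Pi^{\calD(V_2)} \cdot \Pi^{\calD(V_1)} \cdot \Pi^{\calD(V_2)},
\end{equation}
where the second equality uses that $V_1$ is a partial isometry. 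Now applying \cref{lem:P1P2P1-prop} with $\Pi_1 = \Pi^{\calD(V_2)}$ and $\Pi_2 = \Pi^{\calD(V_1)}$ concludes that $\calD(V_2) \subseteq \calD(V_1)$.

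For the second implication, the argument is entirely symmetric: from $V_2 = \Pi^{\calI(V_2)} \cdot V_1$ and its adjoint, we compute
\begin{equation}
\Pi^{\calI(V_2)} \;=\; V_2 V_2^\dagger \;=\; \Pi^{\calI(V_2)} \cdot V_1 V_1^\dagger \cdot \Pi^{\calI(V_2)} \;=\; \Pi^{\calI(V_2)} \cdot \Pi^{\calI(V_1)} \cdot \Pi^{\calI(V_2)},
\end{equation}
and \cref{lem:P1P2P1-prop} then gives $\calI(V_2) \subseteq \calI(V_1)$.

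There is no real obstacle here: the lemma is essentially just the observation that the image projector and the domain projector of a partial isometry are recovered from the two orders of $V^\dagger V$ and $V V^\dagger$, so the factorization hypotheses immediately ``sandwich'' one projector between two copies of the other. The only thing to be careful about is matching the correct side (domain vs.\ image) when taking adjoints, which is why the two cases are handled separately rather than by a single unified argument.
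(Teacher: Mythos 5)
Your proof is correct and matches the paper's argument essentially line for line: both cases substitute the factorization hypothesis into $V_2^\dagger V_2$ (resp.\ $V_2 V_2^\dagger$), use $V_1^\dagger V_1 = \Pi^{\calD(V_1)}$ (resp.\ $V_1 V_1^\dagger = \Pi^{\calI(V_1)}$), and then invoke \cref{lem:P1P2P1-prop}. There is nothing to add or change.
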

\begin{proof}
    From $V_2 = V_1 \cdot \Pi^{\calD(V_2)}$, we have
    \begin{equation}
        \Pi^{\calD(V_2)} = V_2^\dagger \cdot V_2 = \Pi^{\calD(V_2)} \cdot V_1^\dagger \cdot V_1 \cdot \Pi^{\calD(V_2)} = \Pi^{\calD(V_2)} \cdot \Pi^{\calD(V_1)} \cdot \Pi^{\calD(V_2)}.
    \end{equation}
    Hence from \cref{lem:P1P2P1-prop}, we have $\calD(V_2)$ is a subspace of $\calD(V_1)$.

    From $V_2 = \Pi^{\calI(V_2)} \cdot V_1$, we have
    \begin{equation}
        \Pi^{\calI(V_2)} = V_2 \cdot V_2^\dagger = \Pi^{\calI(V_2)} \cdot V_1 \cdot V_1^\dagger \cdot \Pi^{\calI(V_2)} = \Pi^{\calI(V_2)} \cdot \Pi^{\calI(V_1)} \cdot \Pi^{\calI(V_2)}.
    \end{equation}
    Hence from \cref{lem:P1P2P1-prop}, we have $\calI(V_2)$ is a subspace of $\calI(V_1)$.
\end{proof}

\begin{corollary} \label{cor:image-WL-VL-WR-VR}
    $\calI(W^L)$ is a subspace of $\calI(V^L)$. And $\calI(W^R)$ is a subspace of $\calI(V^R)$.
\end{corollary}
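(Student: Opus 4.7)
The plan is to apply \cref{lem:partial-iso-V1V2-prop} directly to the identities in \cref{eq:WL-VL-relation,eq:WR-VR-relation}, which express $W^L$ and $W^R$ as left-projected restrictions of $V^L$ and $V^R$ respectively. Since $W^L, W^R, V^L, V^R$ are all partial isometries (by \cref{claim:VL-VR-isometry} and the fact that $W^L, W^R$ are partial isometries established earlier), the hypotheses of \cref{lem:partial-iso-V1V2-prop} are met.

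Concretely, from \cref{eq:WL-VL-relation} we have $W^L = \Pi^{\calI(W^L)} \cdot V^L$, so invoking the second clause of \cref{lem:partial-iso-V1V2-prop} with $V_1 \coloneqq V^L$ and $V_2 \coloneqq W^L$ yields $\calI(W^L) \subseteq \calI(V^L)$. Analogously, from \cref{eq:WR-VR-relation} we have $W^R = \Pi^{\calI(W^R)} \cdot V^R$, and the same lemma with $V_1 \coloneqq V^R$, $V_2 \coloneqq W^R$ gives $\calI(W^R) \subseteq \calI(V^R)$.

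There is essentially no obstacle here: the corollary is just a two-line unpacking of the preceding lemma applied to the two displayed restriction identities. The only thing to double-check is that the relevant operators really are partial isometries (so that \cref{lem:partial-iso-V1V2-prop} applies), which is immediate from the already-established claims. Thus the proof will be a short paragraph that cites \cref{eq:WL-VL-relation,eq:WR-VR-relation,lem:partial-iso-V1V2-prop} and records the two conclusions.
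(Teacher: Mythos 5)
Your proposal is correct and matches the paper's proof exactly: both invoke the second clause of \cref{lem:partial-iso-V1V2-prop} with the right-hand identities $W^L = \Pi^{\calI(W^L)} \cdot V^L$ and $W^R = \Pi^{\calI(W^R)} \cdot V^R$ from \cref{eq:WL-VL-relation,eq:WR-VR-relation}. Nothing is missing.
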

\begin{proof}
    This follows immediately from \cref{eq:WL-VL-relation}, \cref{eq:WR-VR-relation}, and \cref{lem:partial-iso-V1V2-prop}.
\end{proof}
    
\begin{claim}[$W$ is a restriction of $V$]
\label{claim:relate-W-and-V}
We have
    \begin{align}
        W &= V \cdot \Pi^{\calD(W)}, \label{eq:relate-W-and-V-goal1}\\
        W^\dagger &= V^\dagger \cdot \Pi^{\calI(W)}. \label{eq:relate-W-and-V-goal2}
    \end{align}
\end{claim}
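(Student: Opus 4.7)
The plan is to prove \cref{eq:relate-W-and-V-goal1} directly and deduce \cref{eq:relate-W-and-V-goal2} by a symmetric argument (alternatively, by taking adjoints and relabeling the roles of $L,R$ and $V^L,V^R$). For the first equation, I would expand $\Pi^{\calD(W)}$ using \cref{eq:expand-DW} as $\Pi^{\calD(W^L)} + \Pi^{\calI(W^R)}$, split $V \cdot \Pi^{\calD(W)}$ into the two summands, and argue each summand separately collapses to exactly one of $W^L$ or $W^{R,\dagger}$.

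\textbf{First summand: $V \cdot \Pi^{\calD(W^L)} = W^L$.} Expanding the definition of $V$,
\begin{equation}
V \cdot \Pi^{\calD(W^L)} = V^L \cdot (\Id - V^R V^{R,\dagger}) \cdot \Pi^{\calD(W^L)} + (\Id - V^L V^{L,\dagger}) \cdot V^{R,\dagger} \cdot \Pi^{\calD(W^L)}.
\end{equation}
The key observation is that $V^{R,\dagger} \cdot \Pi^{\calD(W^L)} = 0$: states in $\calD(W^L)$ are supported on $\ket*{x}\ket*{L}\ket*{R}$ with $(L,R) \in \calR^{2,\dist}$ and $x \notin \Dom(L \cup R)$, whereas (directly from \cref{def:V-sym-PRO}) $\calI(V^R)$ is spanned by states whose $\sA$-register label lies in $\Dom$ of their $\sR$-register relation, so $\calD(W^L) \perp \calI(V^R)$. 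This simultaneously kills both the $V^R V^{R,\dagger}$ term and the entire right summand, leaving $V^L \cdot \Pi^{\calD(W^L)} = W^L$ by \cref{eq:WL-VL-relation}.

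\textbf{Second summand: $V \cdot \Pi^{\calI(W^R)} = W^{R,\dagger}$.} Again expanding,
\begin{equation}
V \cdot \Pi^{\calI(W^R)} = V^L \cdot (\Id - V^R V^{R,\dagger}) \cdot \Pi^{\calI(W^R)} + (\Id - V^L V^{L,\dagger}) \cdot V^{R,\dagger} \cdot \Pi^{\calI(W^R)}.
\end{equation}
For the first term: by \cref{cor:image-WL-VL-WR-VR}, $\calI(W^R) \subseteq \calI(V^R)$, so $V^R V^{R,\dagger} \cdot \Pi^{\calI(W^R)} = \Pi^{\calI(V^R)} \cdot \Pi^{\calI(W^R)} = \Pi^{\calI(W^R)}$, and the first term vanishes. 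For the second term: taking the adjoint of $W^R = \Pi^{\calI(W^R)} \cdot V^R$ (\cref{eq:WR-VR-relation}) gives $V^{R,\dagger} \cdot \Pi^{\calI(W^R)} = W^{R,\dagger}$, so the second term simplifies to $(\Id - V^L V^{L,\dagger}) \cdot W^{R,\dagger}$. Finally, $V^L V^{L,\dagger} \cdot W^{R,\dagger} = 0$ because its adjoint $W^R \cdot V^L V^{L,\dagger}$ vanishes by \cref{eq:im-VR-kernel-WL}. Thus the second summand equals $W^{R,\dagger}$, and adding both gives $V \cdot \Pi^{\calD(W)} = W^L + W^{R,\dagger} = W$.

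\textbf{Anticipated obstacle.} The only nontrivial verification is the orthogonality $\calD(W^L) \perp \calI(V^R)$ (and its analogue $\calD(W^R) \perp \calI(V^L)$ needed for the adjoint equation). These follow by directly inspecting what the definitions of $W^L$ and $V^R$ say about the $\sA$-register label versus $\Dom$ of the $\sR$ relation, but they are the only place in the argument that genuinely uses the structure of the individual operators rather than abstract partial-isometry bookkeeping. Once this is in hand, everything else is straightforward manipulation using \cref{fact:domain-image-W-WL-WR}, \cref{eq:WL-VL-relation}--\cref{eq:im-VR-kernel-WL}, and \cref{cor:image-WL-VL-WR-VR}.
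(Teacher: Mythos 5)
Your proposal is correct and takes essentially the same route as the paper's proof: decompose $\Pi^{\calD(W)} = \Pi^{\calD(W^L)} + \Pi^{\calI(W^R)}$, show $V \cdot \Pi^{\calD(W^L)} = W^L$ and $V \cdot \Pi^{\calI(W^R)} = W^{R,\dagger}$ using $W^L = V^L \cdot \Pi^{\calD(W^L)}$, $W^R = \Pi^{\calI(W^R)} \cdot V^R$, $\calI(W^R) \subseteq \calI(V^R)$, and $W^L \cdot V^R = W^R \cdot V^L = 0$, then handle the adjoint equation by symmetry. The only cosmetic difference is that you justify $V^{R,\dagger} \cdot \Pi^{\calD(W^L)} = 0$ by directly checking the orthogonality $\calD(W^L) \perp \calI(V^R)$ from the definitions, whereas the paper derives it algebraically as $V^{R,\dagger} \cdot \Pi^{\calD(W^L)} = (W^L \cdot V^R)^\dagger \cdot W^L = 0$; both are sound.
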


In words,~\cref{claim:relate-W-and-V} says that for any state in $\calD(W)$, the domain of $W$, the action of $W$ is the same as $V$. Additionally, it says that for any state in the image in $\calI(W)$, the image of $W$, the action of $W^\dagger$ is the same as $V^\dagger$.

\begin{proof}[Proof of~\cref{claim:relate-W-and-V}]
    To prove~\cref{eq:relate-W-and-V-goal1}, it suffices to show that
    \begin{align}
        V \cdot \Pi^{\calD(W^L)} &= W^L \label{eq:VW-proof-goal-1},\\
        V \cdot \Pi^{\calI(W^R)} &= W^{R,\dagger}. \label{eq:VW-proof-goal-2}
    \end{align}
    This is because summing these two equations gives
    \begin{align}
        V \cdot (\Pi^{\calD(W^L)} + \Pi^{\calI(W^R)}) = W^L + W^{R,\dagger},
    \end{align}
    and plugging in $\Pi^{\calD(W)} = \Pi^{\calD(W^L)} + \Pi^{\calI(W^R)}$ from~\cref{eq:expand-DW} and $W = W^L + W^{R,\dagger}$ yields~\cref{eq:relate-W-and-V-goal1}. It remains to prove~\cref{eq:VW-proof-goal-1,eq:VW-proof-goal-2}.
    \begin{itemize}
        \item \textbf{Proof of~\cref{eq:VW-proof-goal-1}}. By the definition of $V$, we have
    \begin{align}
        V \cdot \Pi^{\calD(W^L)} = \Big(V^L \cdot (\Id - V^R \cdot V^{R,\dagger}) + (\Id - V^L \cdot V^{L,\dagger}) \cdot V^{R,\dagger} \Big) \cdot \Pi^{\calD(W^L)}.
    \end{align}
    Note that $V^{R,\dagger} \cdot \Pi^{D({W^L})} = V^{R,\dagger} \cdot W^{L,\dagger} \cdot W^L = (W^L \cdot V^R)^{\dagger} \cdot W^L = 0$, where the final equality uses~\cref{eq:im-VR-kernel-WL}. Thus, 
    \begin{align}
        V \cdot \Pi^{\calD(W^L)} = V^L \cdot \Pi^{\calD(W^L)} = W^L,
    \end{align}
    where the second equality follows from~\cref{eq:WL-VL-relation}.
    \item \textbf{Proof of~\cref{eq:VW-proof-goal-2}}. By the definition of $V$, \begin{align}
        V \cdot \Pi^{\calI(W^R)} = \Big(V^L \cdot (\Id - V^R \cdot V^{R,\dagger}) + (\Id - V^L \cdot V^{L,\dagger}) \cdot V^{R,\dagger} \Big) \cdot \Pi^{\calI(W^R)}.
    \end{align}
    Since $\calI(W^R)$ is a subspace of $\calI(V^R)$ by \cref{cor:image-WL-VL-WR-VR}, we have $V^L \cdot (\Id - V^R \cdot V^{R,\dagger}) \cdot \Pi^{\calI(W^R)} = 0$. Next, we have $V^{R,\dagger} \cdot \Pi^{\calI(W^R)} = (\Pi^{\calI(W^R)} \cdot V^R)^{\dagger} = W^{R,\dagger}$ by ~\cref{eq:WR-VR-relation}. Thus, we have
    \begin{align}
        V \cdot \Pi^{\calI(W^R)} &= (\Id - V^L \cdot V^{L,\dagger}) \cdot W^{R,\dagger}\\
        &= W^{R,\dagger} - V^L \cdot V^{L,\dagger} \cdot W^{R,\dagger} \\
        &= W^{R,\dagger},
    \end{align}
    where the last equality uses the fact that $V^{L,\dagger} \cdot W^{R,\dagger} = (W^R \cdot V^L)^{\dagger} = 0$ from~\cref{eq:im-VR-kernel-WL}.
    \end{itemize}
    This completes the proof of~\cref{eq:relate-W-and-V-goal1}. The proof of \cref{eq:relate-W-and-V-goal2} follows by a symmetric argument.
\end{proof}

\begin{corollary} \label{cor:subspace-relation-V-W}
    $\Pi^{\calD(W)}$ is a subspace of $\Pi^{\calD(V)}$.
    And $\Pi^{\calI(W)}$ is a subspace of $\Pi^{\calI(V)}$.
\end{corollary}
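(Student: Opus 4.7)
The plan is to invoke Lemma \ref{lem:partial-iso-V1V2-prop} twice, once for each of the two claimed subspace inclusions, using Claim \ref{claim:relate-W-and-V} as the key input. Both $V$ and $W$ have already been shown to be partial isometries (Claims \ref{claim:V-partial-iso} and \ref{claim:W-partial-isometry}), so the hypotheses of Lemma \ref{lem:partial-iso-V1V2-prop} are available with $V_1 = V$ and $V_2 = W$.

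For the first inclusion, I would simply note that Claim \ref{claim:relate-W-and-V} gives $W = V \cdot \Pi^{\calD(W)}$, which is exactly the hypothesis of the first half of Lemma \ref{lem:partial-iso-V1V2-prop} (with $V_1 = V$ and $V_2 = W$), and therefore $\calD(W) \subseteq \calD(V)$, i.e., $\Pi^{\calD(W)}$ is a subspace of $\Pi^{\calD(V)}$.

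For the second inclusion, Claim \ref{claim:relate-W-and-V} gives $W^\dagger = V^\dagger \cdot \Pi^{\calI(W)}$. Taking the adjoint of both sides, and using that $\Pi^{\calI(W)}$ is self-adjoint (it is an orthogonal projector), this is equivalent to
\begin{equation}
    W = \Pi^{\calI(W)} \cdot V.
\end{equation}
This is exactly the hypothesis of the second half of Lemma \ref{lem:partial-iso-V1V2-prop} (again with $V_1 = V$ and $V_2 = W$), so we conclude $\calI(W) \subseteq \calI(V)$, i.e., $\Pi^{\calI(W)}$ is a subspace of $\Pi^{\calI(V)}$.

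There is no real obstacle here: the corollary is essentially a one-line bookkeeping consequence of the two identities in Claim \ref{claim:relate-W-and-V} together with the general partial-isometry lemma. The only thing to be careful about is to take the adjoint of the second identity before applying the lemma, since Lemma \ref{lem:partial-iso-V1V2-prop} is stated for factorizations of the form $V_2 = \Pi^{\calI(V_2)} \cdot V_1$ rather than $V_2^\dagger = V_1^\dagger \cdot \Pi^{\calI(V_2)}$.
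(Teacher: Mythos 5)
Your proof is correct and matches the paper's own argument, which likewise derives the corollary immediately from Claim~\ref{claim:relate-W-and-V} and Lemma~\ref{lem:partial-iso-V1V2-prop}; your explicit adjoint step turning $W^\dagger = V^\dagger \cdot \Pi^{\calI(W)}$ into $W = \Pi^{\calI(W)} \cdot V$ is exactly the (implicit) bookkeeping the paper relies on.
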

\begin{proof}
    This follows immediately from \cref{claim:relate-W-and-V} and \cref{lem:partial-iso-V1V2-prop}.
\end{proof}

\begin{corollary} \label{claim:relate-W-V-projector}
We have
    \begin{align}
        W^\dagger \cdot V &= \Pi^{\calD(W)}\\
        W \cdot V^\dagger &= \Pi^{\calI(W)}.
    \end{align}
\end{corollary}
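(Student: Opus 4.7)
The plan is to derive both identities as essentially one-line consequences of \cref{claim:relate-W-and-V} combined with \cref{cor:subspace-relation-V-W}. Concretely, \cref{claim:relate-W-and-V} gives $W = V \cdot \Pi^{\calD(W)}$ and $W^\dagger = V^\dagger \cdot \Pi^{\calI(W)}$. Taking adjoints of these two equations (using that $\Pi^{\calD(W)}$ and $\Pi^{\calI(W)}$ are self-adjoint projectors) yields the companion identities $W^\dagger = \Pi^{\calD(W)} \cdot V^\dagger$ and $W = \Pi^{\calI(W)} \cdot V$, which I will need in the next step.

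For the first identity, I would then compute
\[
    W^\dagger \cdot V \;=\; \Pi^{\calD(W)} \cdot V^\dagger \cdot V \;=\; \Pi^{\calD(W)} \cdot \Pi^{\calD(V)},
\]
using that $V$ is a partial isometry (\cref{claim:V-partial-iso}), so $V^\dagger \cdot V = \Pi^{\calD(V)}$. By \cref{cor:subspace-relation-V-W}, the image of $\Pi^{\calD(W)}$ is contained in the image of $\Pi^{\calD(V)}$, which means $\Pi^{\calD(V)} \cdot \Pi^{\calD(W)} = \Pi^{\calD(W)}$, and hence also $\Pi^{\calD(W)} \cdot \Pi^{\calD(V)} = \Pi^{\calD(W)}$ by self-adjointness. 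This gives $W^\dagger \cdot V = \Pi^{\calD(W)}$.

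The second identity follows by the symmetric argument:
\[
    W \cdot V^\dagger \;=\; \Pi^{\calI(W)} \cdot V \cdot V^\dagger \;=\; \Pi^{\calI(W)} \cdot \Pi^{\calI(V)} \;=\; \Pi^{\calI(W)},
\]
where the last equality again uses \cref{cor:subspace-relation-V-W} (now for images) to collapse the product of projectors. There is no real obstacle here; the content has already been established in \cref{claim:relate-W-and-V} and \cref{cor:subspace-relation-V-W}, and this corollary is just repackaging those facts using that $V$ is a partial isometry.
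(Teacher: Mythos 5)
Your proof is correct and is essentially the paper's own argument: both rely on \cref{claim:relate-W-and-V}, the partial-isometry identities $V^\dagger V = \Pi^{\calD(V)}$ and $V V^\dagger = \Pi^{\calI(V)}$, and \cref{cor:subspace-relation-V-W} to collapse the product of projectors. The only difference is that you take the adjoint of $W = V\cdot\Pi^{\calD(W)}$ before multiplying by $V$, whereas the paper multiplies by $V^\dagger$ first and takes the adjoint at the end — a trivial reordering.
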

\begin{proof}
    From $W = V \cdot \Pi^{\calD(W)}$, we can multiply $V^\dagger$ on the left of both sides to obtain
    \begin{equation}
        V^\dagger \cdot W = V^\dagger \cdot V \cdot \Pi^{\calD(W)}.
    \end{equation}
    Using $V^\dagger \cdot V = \Pi^{\calD(V)}$, we have
    \begin{equation}
        V^\dagger \cdot W = \Pi^{\calD(V)} \cdot \Pi^{\calD(W)} = \Pi^{\calD(W)},
    \end{equation}
    since $\Pi^{\calD(W)}$ is a subspace of $\Pi^{\calD(V)}$ from \cref{cor:subspace-relation-V-W}.
    Taking dagger yields $W^\dagger \cdot V = \Pi^{\calD(W)}$.
    
    From $W^\dagger = V^\dagger \cdot \Pi^{\calI(W)}$, we can multiply $V$ on the left of both sides to obtain
    \begin{equation}
        V \cdot W^\dagger = V \cdot V^\dagger \cdot \Pi^{\calI(W)}.
    \end{equation}
    Using $V \cdot V^\dagger = \Pi^{\calI(V)}$, we have
    \begin{equation}
        V \cdot W^\dagger = \Pi^{\calI(V)} \cdot \Pi^{\calI(W)} = \Pi^{\calI(W)},
    \end{equation}
    since $\Pi^{\calI(W)}$ is a subspace of $\Pi^{\calI(V)}$ from \cref{cor:subspace-relation-V-W}.
    Taking dagger yields $W \cdot V^\dagger = \Pi^{\calI(W)}$.
\end{proof}

\section{The strong PRU proof}

\subsection{Setup}

We define a distribution over $n$-qubit unitaries parameterized by any $n$-qubit unitary $2$-design $\frakD$.

\begin{definition}[$\mathsf{sPRU}(\mathfrak{D})$ distribution]
    For any distribution $\mathfrak{D}$ supported on $\calU(N)$, define the distribution $\mathsf{sPRU}({\frakD})$ as follows:
    \begin{enumerate}
        \item Sample a uniformly random permutation $\pi \gets \sSym_{N}$, a uniformly random $f \gets \{0,1, 2\}^N$, and two independently sampled $n$-qubit unitaries $C, D \gets \mathfrak{D}$. Following the definitions in~\cref{sec:PF3-oracle}, 
        \begin{align}
            F_f \coloneqq \sum_{x \in [N]} e^{2 \pi \cdot f(x) \cdot i/3} \ketbra*{x} \quad \text{and} \quad P_{\pi} \coloneqq \sum_{x \in [N]} \ketbra*{\pi(x)}{x}.
        \end{align}
        \item Output the $n$-qubit unitary $\calO \coloneqq D \cdot P_\pi \cdot F_f \cdot C$.
    \end{enumerate} 
\end{definition}

The goal of this section is to prove the following theorem.

\begin{theorem}[$\mathsf{sPRU}(\frakD)$ is a statistical strong PRU]\label{thm:statistical-PRU-strong} Let $\Adv$ be a $t$-query oracle adversary that can perform forward and inverse queries and let $\frakD$ be an exact unitary $2$-design. Then
\begin{align}
    \TD\left(\E_{\calO \gets \mathsf{sPRU}(\frakD)} \ketbra*{\Adv_t^{\calO}}_{\gsA \gsB}, \E_{\calO \gets \mu_{\mathsf{Haar}}} \ketbra*{\Adv_t^{\calO}}_{\gsA \gsB} \right) \leq \frac{18 t(t+1)}{N^{1/8}}
\end{align}
\end{theorem}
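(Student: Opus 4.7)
The plan is a hybrid argument, mirroring the structure of the standard PRU proof in \cref{part:standard}, that shows both $\E_{\calO \gets \mathsf{sPRU}(\frakD)} \ketbra*{\Adv_t^{\calO}}$ and $\E_{\calO \gets \mu_{\mathsf{Haar}}} \ketbra*{\Adv_t^{\calO}}$ are within trace distance $O(t^2/N^{1/8})$ of a common reference state $\Tr_{\sL\sR}(\ketbra*{\Adv_t^{V}})$, where $\ket*{\Adv_t^{V}}$ denotes the pure state obtained by alternating $\Adv$'s unitaries with queries to $V$ (for forward queries) and $V^\dagger$ (for inverse queries), starting from $\ket*{0}_{\gsA\gsB} \ket*{\varnothing}_{\gsL} \ket*{\varnothing}_{\gsR}$. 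The Haar side follows from the $\mathsf{sPRU}(\frakD)$ side by instantiating $\frakD = \mu_{\mathsf{Haar}}$ and using left/right invariance of the Haar measure to conclude $D \cdot P_\pi \cdot F_f \cdot C$ is itself Haar-distributed; the theorem then follows from the triangle inequality.

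I would first purify using \cref{claim:purified-vs-standard-ternary-PFO}: each forward query becomes $D_{\gsA} \cdot \spfo \cdot C_{\gsA}$ with $(\sP, \sF)$ initialized to $\ket*{\pf_{\varnothing, \varnothing}}$, and each inverse query becomes $C_{\gsA}^\dagger \cdot \spfo^\dagger \cdot D_{\gsA}^\dagger$. Since $\ket*{\pf_{\varnothing, \varnothing}} = \Compress^\dagger \ket*{\varnothing}_{\gsL} \ket*{\varnothing}_{\gsR}$, applying $\Compress$ to the purifying side is exact at initialization. Using \cref{claim:relate-W-and-spfo} ($W = \Compress \cdot \spfo \cdot \Compress^\dagger \cdot \Pi^{\calD(W)}$ and the dagger analogue), I substitute $W, W^\dagger$ for $\spfo, \spfo^\dagger$ by inserting $\Pi^{\calD(W)}$ before each forward query and $\Pi^{\calI(W)}$ before each inverse query. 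Then \cref{claim:relate-W-and-V} lets me swap each $W$ for $V$ with no extra cost, so the compressed evolution becomes $t$ applications of $D_{\gsA} \cdot V \cdot C_{\gsA} \cdot \Pi^{\calD(W)}$ (or $C_{\gsA}^\dagger \cdot V^\dagger \cdot D_{\gsA}^\dagger \cdot \Pi^{\calI(W)}$ for inverse queries).

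Next, I would apply the two-sided unitary invariance \cref{claim:two-sided-invariance} once per query, pushing $C, D$ off $\sA$ and onto the purifying registers as $Q[C, D]_{\gsL\gsR}$; the per-query operator-norm error is $O(t/\sqrt{N})$, and chaining $t$ queries yields an $\ell_2$-norm error of $O(t^2/\sqrt{N})$ on the purified pure state. After collecting all $Q[C, D]_{\gsL\gsR}$ factors to the far right and averaging over independent $C, D \gets \frakD$, the purifying-register action is trivialized when tracing out $(\sL, \sR)$, leaving exactly $\Tr_{\sL\sR}(\ketbra*{\Adv_t^{V}})$.

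The main obstacle is the control of the inserted projectors $\Pi^{\calD(W)}, \Pi^{\calI(W)}$. I need to show that, on average over $C, D$, the state after each $V$-query lies almost entirely in the bijective subspace (the image of $\Pi^{\bij}_{\gsL\gsR}$) on which $W$ and $V$ agree. This is handled by combining the sequential gentle measurement lemma (\cref{lem:seq-gentleM-pure}) with the mixed twirling lemma \cref{claim:equality-clifford-conjugated-twirl}: the twirl $(C \otimes \overline{C})^\dagger \Pi^{\mathsf{eq}} (C \otimes \overline{C})$ concentrates on $\ketbra*{\EPR_N}$ up to $O(1/N)$, which quantifies how often a $C$-twirled forward query creates an $x$-collision with $\Dom(L \cup R)$ that could take the state out of $\calD(W)$, and symmetrically for inverse queries and $\calI(W)$. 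Summing these per-query collision probabilities over $t$ queries and propagating through the sequential gentle measurement lemma gives an $\ell_2$-norm defect of order $t\sqrt{t^2/N}$ on the pure state; converting this (plus the two-sided invariance error) to trace distance on the traced-out marginal via the usual $\TD \le \lVert \ket{\psi} - \ket{\phi}\rVert_2$ inequality yields the stated $O(t^2/N^{1/8})$ bound, where the $N^{1/8}$ exponent reflects the square-root losses incurred in passing from projector defect probabilities to $\ell_2$-distances to trace distance.
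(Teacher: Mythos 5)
Your overall plan follows the paper's structure faithfully: the hybrid chain $\mathsf{sPRU}(\frakD) \to$ purified $\spfo \to W \to V$, the reduction of the Haar side to the $\mathsf{sPRU}$ side by instantiating $\frakD = \mu_{\mathsf{Haar}}$, and the closing triangle inequality are all correct and match the paper. You also correctly identify the ingredients (\cref{claim:purified-vs-standard-ternary-PFO}, \cref{claim:relate-W-and-spfo}, \cref{claim:relate-W-and-V}, two-sided invariance, sequential gentle measurement). However, there is a genuine ordering problem at the heart of the argument.

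You propose to first push $C, D$ onto the purifying registers via two-sided unitary invariance (leaving, up to $O(t^2/\sqrt N)$ error, a $Q[C,D]$-dressed version of $\ket*{\Adv^V_t}$) and only afterward control the inserted projectors $\Pi^{\calD(W)}, \Pi^{\calI(W)}$. This cannot work as stated: those projectors act nontrivially on the query register $\sA$ (through $\Pi^{\not\in\Dom}$ and the EPR terms in \cref{def:P-DW-P-IW}), so they do \emph{not} commute with the $C, D$ twirls, and \cref{claim:two-sided-invariance} is only about $D_{\gsA} \cdot V_{\leq t} \cdot C_{\gsA}$, not about $D_{\gsA}\cdot V_{\leq t} \cdot \Pi^{\calD(W)} \cdot C_{\gsA}$. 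The conjugation by $C$ is precisely what makes the defect small, so $C$ must stay attached to the projector defect while you bound it. The paper's \cref{claim:overlap-twirled-W-Q-plain-V} resolves this by an inductive overlap argument: at each step it uses $W_{\leq t}^\dagger V_{\leq t} = \Pi^{\bij}_{\leq t} - (\Pi^{\bij}_{\leq t} - \Pi^{\calD(W)}_{\leq t})$ (from \cref{claim:relate-W-V-projector}), exploits that $\Pi^{\bij}_{\gsL\gsR}$ acts only on the purifying registers (so it commutes with $\scC$ and is absorbed via \cref{fact:properties-of-Adv-t-states-spaces}), and bounds the residual defect $\Pi^{\bij}_{\leq t} - \Pi^{\calD(W)}_{\leq t}$ \emph{after} conjugating by $\scC \cdot \scQ$ via the dedicated twirling lemma \cref{lem:twirling-strongPRU}. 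That lemma, not the preliminary \cref{claim:equality-clifford-conjugated-twirl} you cite, is what you need — the latter is only one ingredient in its proof, which occupies Section~7 (expanding $\Pi^{\calD(W)}$ via \cref{claim:expand-out-Pi-DW-Pi-IW}, an operator upper bound, and an almost-commutation argument between EPR and the distinct-subspace projector).

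A smaller issue: your claimed $\ell_2$ defect of order $t\sqrt{t^2/N}$ does not produce the $N^{1/8}$. The correct accounting is: \cref{lem:twirling-strongPRU} gives operator-norm defect $O(t\sqrt{t/N})$; the Cauchy--Schwarz step in bounding $\beta_t$ takes a square root, giving a per-query overlap loss of $O(t^{3/4}/N^{1/4})$; summing over $t$ queries gives $\Re[\text{overlap}] \geq 1 - O(t^2/N^{1/4})$; and then both the conversion to trace distance (\cref{lem:closeness-AWD-and-PhiVt}) and the sequential gentle measurement step (\cref{lem:pfo-W-closeness}) each take one more square root of that $N^{1/4}$ term, landing on $O(t(t+1)/N^{1/8})$.
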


Since quantum-secure pseudorandom permutations and pseudorandom functions exist assuming one-way functions by~\cite{zhandry2016note,zhandry2021construct}, the existence of computationally-secure strong PRUs follows immediately from \cref{thm:statistical-PRU-strong}.

\begin{theorem}
    If quantum-secure one-way functions exist, then strong pseudorandom unitaries exist.
\end{theorem}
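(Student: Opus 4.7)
The plan is to derandomize the statistical construction $\mathsf{sPRU}(\frakD)$ from~\cref{thm:statistical-PRU-strong} by replacing each source of true randomness with a quantum-secure pseudorandom counterpart, then apply a standard hybrid argument. Concretely, I instantiate $\frakD$ as the uniform distribution over the $n$-qubit Clifford group, which is an exact unitary $2$-design and which admits a $\poly(n)$-time sampling and implementation procedure from an $O(n^2)$-bit key. For the permutation $\pi \in \sSym_N$ I use a quantum-secure pseudorandom permutation, and for the ternary function $f \in \{0,1,2\}^N$ I use a quantum-secure pseudorandom function, both of which exist assuming quantum-secure one-way functions by~\cite{zhandry2016note, zhandry2021construct} (the ternary range is obtained by applying the PRF to obtain $\poly(n)$ output bits and mapping them into $\{0,1,2\}$ with negligible bias, or simply using a PRF with range $\mathbb{Z}_3$). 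The seeds for the PRP, PRF, and the two Clifford keys are then derived from a single master key via a quantum-secure pseudorandom generator, yielding an efficient keyed family $\{U_k\}_{k \in \{0,1\}^{\poly(n)}}$.

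The efficiency condition in~\cref{def: PRU-t(lambda)} is immediate: on input key $k$, the circuit first expands $k$ into the four component seeds, then on each query applies $C$, then $F_f$ (controlled phase from the PRF), then $P_\pi$ (reversible in-place using the PRP), and finally $D$, each in $\poly(n)$ time. For indistinguishability, I set up a sequence of hybrids between the keyed construction and a truly Haar-random oracle: (H0) the PRU $U_k$; (H1) replace the PRG output with four independent uniformly random seeds; (H2) replace the PRP seed with a truly uniform $\pi \gets \sSym_N$; (H3) replace the PRF seed with a truly uniform $f \gets \{0,1,2\}^N$; (H4) replace each Clifford seed with a uniformly random Clifford, yielding exactly $\mathsf{sPRU}(\mathsf{Cliff}_n)$; (H5) a Haar-random unitary. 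Any $\poly(n)$-time adversary $\Adv$ making forward and inverse queries distinguishes consecutive hybrids (H0)--(H3) with only negligible advantage by the quantum security of the PRG, PRP, and PRF respectively (forward and inverse queries to $P_\pi$ both reduce to forward and inverse queries to the PRP, both of which are handled by quantum-secure strong PRPs), and hybrid (H3) vs.\ (H4) is perfect since a random Clifford key produces a uniformly random Clifford. Finally, the gap between (H4) and (H5) is bounded by $18 t(t+1)/N^{1/8} = \mathsf{negl}(n)$ by~\cref{thm:statistical-PRU-strong}, since $t = \poly(n)$ and $N = 2^n$.

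Summing the distinguishing advantages across all hybrids via the triangle inequality bounds the total advantage by a sum of negligible functions, which is negligible, establishing the indistinguishability condition of~\cref{def: PRU-t(lambda)} in the strong setting. The only subtlety worth flagging is that the hybrids (H1)--(H3) must be secure against adversaries that query the oracle both forwards and inverse; this is exactly why we invoke \emph{strong} PRPs (and why the PRF is used only as a phase, which is self-inverting up to conjugation, so forward/inverse queries both reduce to $\poly(n)$ PRF queries). Given that strong quantum-secure PRPs follow from one-way functions~\cite{zhandry2016note, zhandry2021construct}, no additional assumptions are required. The main conceptual work has already been done in~\cref{thm:statistical-PRU-strong}; the present theorem is obtained by this routine derandomization.
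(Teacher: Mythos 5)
Your proposal is correct and follows essentially the same route as the paper, which simply invokes \cref{thm:statistical-PRU-strong} together with the existence of quantum-secure pseudorandom permutations and functions from one-way functions~\cite{zhandry2016note,zhandry2021construct}; your hybrids, Clifford instantiation of the $2$-design, and PRG key-derivation are just the standard details the paper leaves implicit. Your observation that inverse oracle queries require the PRP to be strong (secure under inverse queries as well) is a worthwhile point of care that the paper's one-line proof glosses over, but it does not change the argument.
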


The main technical component of the proof of~\cref{thm:statistical-PRU-strong} is~\cref{lemma:pfd-cho-strong}, which relates the PRU adversary to an adversary that queries the path-recording oracle $V$, defined previously in~\cref{sec:symmetric-V}. Recall that $V$ is a partial isometry that acts on registers $(\sA,\sL,\sR)$, where $\sL$ and $\sR$ are variable-length registers. Initially, $\sL$ and $\sR$ are both initialized to the length-$0$ state $\ket*{\varnothing}$. To state~\cref{lemma:pfd-cho-strong}, we will need the following definition.

\begin{definition}[the global state after queries to $V$]
    For a $t$-query oracle adversary $\Adv$ that can perform forward and inverse queries and any $0 \leq i \leq t$, let
    \begin{align}
        \ket*{\calA_i^V}_{\gsA \gsB \gsL \gsR} \coloneqq \prod_{i = 1}^t \Bigg( \Big( (1-b_i) \cdot V_{\gsA \gsL \gsR} + b_i \cdot V_{\gsA \gsL \gsR}^\dagger \Big) \cdot A_{i,\gsA \gsB} \Bigg) \ket*{0^{n+m}}_{\gsA \gsB} \otimes \ket*{\varnothing}_{\gsL} \ket*{\varnothing}_{\gsR}
    \end{align}
    denote the global state on registers $\sA,\sB,\sL,\sR$ after $\calA$ makes $i$ queries to $V$.
\end{definition}

\begin{lemma}[$\mathsf{sPRU}(\mathfrak{D})$ is indistinguishable from $V$]\label{lemma:pfd-cho-strong}
     Let $\mathfrak{D}$ be any exact unitary $2$-design. For any $t$-query oracle adversary $\calA$,
     \begin{align}
        \TD\left(\E_{\calO \gets \mathsf{sPRU}(\frakD)} \ketbra*{\Adv_t^{\calO}}_{\gsA \gsB}, \,\,\, \Tr_{\sL \sR}\left( \ketbra*{\Adv^{V}_t}_{\gsA \gsB \gsL \gsR} \right) \right) \leq \frac{9t(t+1)}{N^{1/8}} \label{eq:intermediate-step-strong-main-thm}
    \end{align}
\end{lemma}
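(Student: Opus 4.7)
The plan is to extend the hybrid argument used to prove~\cref{lemma:pfd-cho} to the two-sided setting, by constructing a chain of intermediate mixed states that interpolates between $\E_{\calO \gets \mathsf{sPRU}(\frakD)} \ketbra*{\Adv_t^{\calO}}$ and $\Tr_{\sL \sR}(\ketbra*{\Adv_t^V})$, bounding the trace distance of each consecutive pair.

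First I would purify the randomness using~\cref{claim:purified-vs-standard-ternary-PFO}, so that the adversary equivalently queries $D_{\gsA} \cdot \spfo^{\pm 1} \cdot C_{\gsA}$ starting from the state $\ket*{\pf_{\varnothing,\varnothing}}_{\gsP\gsF}$ on the purifying registers. Next I would insert, after each query, the ``good''-subspace projector onto $\mathrm{span}\{\ket*{\pf_{L,R}} : (L,R) \in \calR^{2,\dist}\}$, i.e., the two-sided analog of $\widetilde{\Pi}^{\dist}_{\gsP\gsF}$. Invoking the sequential gentle measurement lemma (\cref{lem:seq-gentleM-pure}), this insertion costs at most $t$ times the square root of the per-step failure probability. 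To bound that per-step failure, I would use the standard-twirling~\cref{claim:equality-clifford-twirl} for forward queries, where $C$ randomizes the freshly queried $x$ against $\Dom(L \cup R)$, and the mixed-twirling~\cref{claim:equality-clifford-conjugated-twirl} for inverse queries, where the randomness of $D^{\dagger}$ is paired against previously recorded $y$'s in $\Im(L \cup R)$.

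With those projectors in place, I would apply the $\Compress$ isometry, which by~\cref{claim:phi-L-R-orthogonal} is a partial isometry on this subspace that sends $\ket*{\pf_{L,R}}$ to $\ket*{L}_{\gsL}\ket*{R}_{\gsR}$, and by~\cref{claim:relate-W-and-spfo} turns $\spfo^{\pm 1}$ queries into $W^{\pm 1}$ queries on the variable-length registers $(\sL,\sR)$. Next, since the bijective-subspace projectors $\Pi^{\bij}_{\gsL\gsR}$ are present, I would use~\cref{claim:relate-W-and-V} (i.e., $W = V \cdot \Pi^{\calD(W)}$ and $W^\dagger = V^\dagger \cdot \Pi^{\calI(W)}$) to substitute every $W$ and $W^\dagger$ for $V$ and $V^\dagger$ at no additional cost. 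At this point the adversary's view is generated by $D_{\gsA} \cdot V^{\pm 1} \cdot C_{\gsA}$, and I would invoke the two-sided unitary invariance~\cref{claim:two-sided-invariance} to commute the $C,D$ factors through each $V$ onto the purifying registers as $Q[C,D]$, picking up an error of $O(\sqrt{t^2/N})$ per query. Since $Q[C,D]$ acts only on $\sL, \sR$, it disappears once we trace those registers out; a second application of sequential gentle measurement removes the projectors, yielding $\Tr_{\sL \sR}(\ketbra*{\Adv_t^V})$.

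The main obstacle will be controlling the good-subspace failure probability along a sequence of queries that includes inverse queries. Unlike in~\cref{part:standard}, inverse queries can delete recorded tuples, so the support of the state does not grow monotonically, and the mixed-twirling statement (as opposed to the standard twirling) is essential to show that the inverse-query branch does not create collisions in the purifying register. Moreover, because the two-sided unitary invariance in~\cref{claim:two-sided-invariance} is only approximate with operator-norm error $O(\sqrt{t^2/N})$ per step, and gentle measurement contributes another square root over the accumulated per-step probabilities, the final bound will scale as a root-of-a-root of $t^2/N$; carefully summing these contributions over the $t$ queries is what produces the stated $O(t^2/N^{1/8})$ error rather than something closer to $O(t^2/N)$.
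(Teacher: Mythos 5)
Your outline assembles the right ingredients (purification, $\Compress$, the $W$/$V$ relation, two-sided invariance, sequential gentle measurement), but it has a genuine gap at the step where you insert projectors and bound their failure probability. First, the projector you propose — onto $\mathrm{span}\{\ket*{\pf_{L,R}} : (L,R)\in\calR^{2,\dist}\}$, acting on $\sP,\sF$ only — is not the right one: \cref{claim:relate-W-and-spfo} converts $\spfo$ into $W$ only after projecting with $\widetilde{\Pi}^{\calD(W)}$ (resp.\ $\widetilde{\Pi}^{\calI(W)}$ for inverse queries), and $\Pi^{\calD(W)}=\Pi^{\calD(W^L)}+\Pi^{\calI(W^R)}$ acts jointly on the adversary register $\sA$ and the purifying registers, with an EPR component between $\sA$ and $\sR$. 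This is not a technicality: a forward query whose input lies in $\Dom(L\cup R)$ occurs with amplitude close to $1$ whenever the adversary queries forward on the output of a previous inverse query, and it must be retained (it is the ``delete a tuple from $S^{\mathsf{inv}}$'' branch), so projecting only on bijectivity of $(L,R)$ neither reproduces $W$ nor makes your subsequent ``$W\to V$ at no additional cost'' step valid — \cref{claim:relate-W-and-V} needs exactly $\Pi^{\calD(W)},\Pi^{\calI(W)}$, and the discrepancy $\Pi^{\bij}_{\leq t}-\Pi^{\calD(W)}_{\leq t}$ is precisely the quantity that has to be paid for, via \cref{lem:twirling-strongPRU}.

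Second, your plan to bound the per-step projection failure directly with \cref{claim:equality-clifford-twirl,claim:equality-clifford-conjugated-twirl} does not go through in the stated order. The same $C,D$ are reused at every query, so the state just before the $i$-th query is correlated with $C,D$; to decouple them you must first push $C,D$ onto the purifying registers, which requires a unitary-invariance property — and the paper is explicit that neither $\spfo$ nor $W$ has one (only the extended operators $V^L,V^R$, hence $V$, satisfy even approximate invariance, \cref{claim:two-sided-invariance}). But converting $\spfo$/$W$ into $V$ is exactly what the projectors were supposed to enable, so your ordering is circular. The paper breaks this circularity with the joint induction of \cref{claim:overlap-twirled-W-Q-plain-V}: it tracks the overlap between the twirled-$W$ state and the $Q[C,D]$-rotated plain-$V$ state, absorbing the invariance error of \cref{claim:two-sided-invariance} at each step and bounding the projection-failure term $\beta_t$ by applying \cref{lem:twirling-strongPRU} on the $V$-side state (which genuinely is independent of $C,D$); that lemma in turn needs the explicit characterization of $\Pi^{\calD(W)}$ and the almost-commutation of EPR and distinctness projectors (\cref{claim:operator-upper-bound-twirling,claim:approx-commute-EPR}), not just the elementary twirling claims. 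Sequential gentle measurement (\cref{lem:seq-gentleM-pure}) is then invoked once, with the norm loss of the projected state controlled by this overlap bound (\cref{lem:norm-close-TD,lem:pfo-W-closeness}); that cascade of square roots is where the $N^{1/8}$ comes from. Without this simultaneous-induction idea (or a substitute for it), the central step of your proposal is unsupported.
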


\paragraph{\cref{lemma:pfd-cho-strong} implies~\cref{thm:statistical-PRU-strong}.} \cref{lemma:pfd-cho-strong} implies \cref{thm:statistical-PRU-strong} by the following argument. We can instantiate $\frakD = \mu_{\mathsf{Haar}}$, i.e., $\frakD$ outputs a Haar-random $n$-qubit unitary. Then the output of $\spru(\mathfrak{D}) = \spru(\mu_{\mathsf{Haar}})$ is $D \cdot P_{\pi} \cdot F_f \cdot C$ for random $\pi,f$ and Haar-random $D$ and $C$. By invariance of the Haar measure, this is exactly the same as outputting a Haar-random unitary. Thus, we have the following corollary of~\cref{lemma:pfd-cho-strong}.
\begin{theorem}[$V$ is indistinguishable from a Haar-random unitary]\label{theorem:haar-cho-strong}
     Let $\Adv$ be a $t$-query oracle adversary that can perform forward and inverse queries. Then
     \begin{align}
        \TD\left(\E_{\calO \gets \mu_{\mathsf{Haar}}} \ketbra*{\Adv_t^{\calO}}_{\gsA \gsB}, \,\,\, \Tr_{\sL \sR}\left( \ketbra*{\Adv^{V}_t}_{\gsA \gsB \gsL \gsR} \right) \right) \leq \frac{9t(t+1)}{N^{1/8}}.
    \end{align}
\end{theorem}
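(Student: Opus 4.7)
The plan is to derive this theorem directly as a corollary of Lemma~\ref{lemma:pfd-cho-strong} by instantiating the $2$-design with the Haar measure itself. First I would note that the Haar measure $\mu_{\mathsf{Haar}}$ on $\calU(N)$ is an exact unitary $t$-design for every $t$ (in particular it is a $2$-design), so Lemma~\ref{lemma:pfd-cho-strong} applies with $\frakD = \mu_{\mathsf{Haar}}$. Plugging this in gives
\begin{equation}
    \TD\left(\E_{\calO \gets \mathsf{sPRU}(\mu_{\mathsf{Haar}})} \ketbra*{\Adv_t^{\calO}}_{\gsA \gsB}, \,\,\, \Tr_{\sL \sR}\left( \ketbra*{\Adv^{V}_t}_{\gsA \gsB \gsL \gsR} \right) \right) \leq \frac{9t(t+1)}{N^{1/8}}.
\end{equation}

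Next I would identify the first distribution as $\mu_{\mathsf{Haar}}$ itself. By definition, $\calO \gets \mathsf{sPRU}(\mu_{\mathsf{Haar}})$ samples $\calO = D \cdot P_\pi \cdot F_f \cdot C$ where $\pi \gets \sSym_N$, $f \gets \{0,1,2\}^N$, and $C, D$ are independent Haar-random $n$-qubit unitaries. By the right-invariance of the Haar measure, conditioned on any fixed values of $\pi, f, D$, the product $D \cdot P_\pi \cdot F_f \cdot C$ is Haar-distributed, since right-multiplication by the fixed unitary $D \cdot P_\pi \cdot F_f$ preserves the Haar measure. Integrating over $\pi, f, D$ therefore yields exactly $\mu_{\mathsf{Haar}}$. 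Hence
\begin{equation}
    \E_{\calO \gets \mathsf{sPRU}(\mu_{\mathsf{Haar}})} \ketbra*{\Adv_t^{\calO}}_{\gsA \gsB} = \E_{\calO \gets \mu_{\mathsf{Haar}}} \ketbra*{\Adv_t^{\calO}}_{\gsA \gsB}.
\end{equation}

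Substituting this equality into the trace-distance bound above yields the claimed inequality, and the proof is complete. There is no real obstacle here: the entire content of the theorem is the nontrivial Lemma~\ref{lemma:pfd-cho-strong}, and the only observation required is the standard fact that right-multiplying a Haar-random unitary by any fixed unitary (here $D \cdot P_\pi \cdot F_f$) preserves its distribution, which lets us collapse the $\mathsf{sPRU}(\mu_{\mathsf{Haar}})$ ensemble back to a single Haar-random draw.
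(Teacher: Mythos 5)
Your proposal is correct and matches the paper's own argument: instantiate $\frakD = \mu_{\mathsf{Haar}}$ in Lemma~\ref{lemma:pfd-cho-strong} (valid since the Haar measure is an exact $2$-design) and observe that $\mathsf{sPRU}(\mu_{\mathsf{Haar}})$ then samples exactly a Haar-random unitary by invariance of the Haar measure. One tiny wording slip: conditioned on $\pi, f, D$, the map $C \mapsto (D P_\pi F_f) C$ is \emph{left}-multiplication by the fixed unitary, so it is left-invariance (or, if you instead condition on $\pi, f, C$ and use the randomness of $D$, right-invariance) that is invoked — but since the Haar measure is invariant on both sides the conclusion stands either way.
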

\cref{thm:statistical-PRU-strong} follows from combining \cref{lemma:pfd-cho-strong,theorem:haar-cho-strong} using the triangle inequality. The remainder of this section is devoted to proving~\cref{lemma:pfd-cho-strong}.

\subsection{$V$ is indistinguishable from twirled $W$}

Our first step towards proving~\cref{lemma:pfd-cho-strong} is to prove that an oracle adversary $\Adv$ that makes both forward and inverse queries cannot distinguish whether its query is implemented by the path-recording oracle $V$ (\cref{def:symmetric-V}), or as $D \cdot W \cdot C$ where $C,D \gets \mathfrak{D}$ are sampled from a $2$-design, and $W$ is the \emph{partial} path-recording oracle (\cref{def:symmetric-W}). 

We will require the following definitions. Let $\sC$ and $\sD$ be a pair of registers that  each contain the \emph{description} of an $n$-qubit unitary. These registers will be part of the purification and will not be in the adversary's view.

\begin{definition}
\label{def:init-D-state}
    For any distribution $\frakD$ over $n$-qubit unitaries, define the state
    \begin{align}
    \ket*{\init(\mathfrak{D})}_{\gsC \gsD} \coloneq \int_{C,D} \sqrt{ d\mu_{\mathfrak{D}}(C) d\mu_{\mathfrak{D}}(D)} \ket*{C}_{\gsC} \otimes \ket*{D}_{\gsD},
\end{align}
where $\mu_{\mathfrak{D}}(C)$ is the probability measure for which $C$ is sampled from $\mathfrak{D}$.
\end{definition}

Recall from~\cref{def:multi-rot-Q} that for any pair of $n$-qubit unitaries $C,D$, the operator $Q[C,D]_{\gsL \gsR}$ is defined as
\begin{align}
    Q[C,D] &\coloneqq (C \otimes D^T)^{\otimes *}_{\gsL} \otimes (\overline{C} \otimes D^{\dagger})^{\otimes *}_{\gsR}.
\end{align}

\begin{definition}[Controlled $C,D$ and $Q$]
\label{def:controlled-CDQ}
    Define the following operators
    \begin{align}
        &\mathsf{cC} \coloneq \int_{C} C_{\gsA} \otimes \ketbra*{C}_{\gsC}, \quad \mathsf{cD} \coloneq \int_{D} D_{\gsA} \otimes \ketbra*{D}_{\gsD},\\
    &\mathsf{cQ} \coloneq \int_{C, D} Q[C, D]_{\gsL \darkgray{,} \gsR} \otimes \ketbra*{C}_{\gsC} \otimes \ketbra*{D}_{\gsD}.
\end{align}
\end{definition}

We now state a key lemma that we will need for our proof.

\begin{lemma}[Twirling] \label{lem:twirling-strongPRU}
For any unitary $2$-design $\frakD$, and any integer $0 \leq t \leq N-1$, we have
\begin{align}
    \norm{  \E_{C,D \gets \frakD} (C_{\gsA} \otimes Q[C,D]_{\gsL \gsR})^\dagger \cdot \Big( \Pi^{\bij}_{\leq t, \gsL \gsR} - \Pi^{\calD(W)}_{\leq t, \gsA \gsL \gsR}\Big) \cdot (C_{\gsA} \otimes Q[C,D]_{\gsL \gsR}) }_{\opnorm} &\leq 6t \sqrt{\frac{t}{N}},\\
    \norm{  \E_{C,D \gets \frakD} (D^\dagger_{\gsA} \otimes Q[C,D]_{\gsL \gsR})^\dagger \cdot \Big( \Pi^{\bij}_{\leq t, \gsL \gsR} - \Pi^{\calI(W)}_{\leq t, \gsA \gsL \gsR}\Big) \cdot (D^\dagger_{\gsA} \otimes Q[C,D]_{\gsL \gsR}) }_{\opnorm} &\leq 6t \sqrt{\frac{t}{N}},
\end{align}
\end{lemma}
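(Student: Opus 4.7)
The plan is to decompose the bad-subspace projector $\Pi^{\bij}_{\leq t} - \Pi^{\calD(W)}_{\leq t}$ into two orthogonal pieces that can each be bounded using a different twirling identity. Using $\Pi^{\calD(W)} = \Pi^{\calD(W^L)} + \Pi^{\calI(W^R)}$ from~\cref{fact:domain-image-W-WL-WR}, and noting that $\Pi^{\calD(W^L)}$ is the projector onto basis states $\ket*{x}\ket*{L}\ket*{R}$ with $(L, R)$ bijective and $x \notin \Dom(L \cup R)$, while $\Pi^{\calI(W^R)}$ lies inside the span of basis states with $x \in \Dom(R)$, we obtain the orthogonal decomposition
\begin{equation*}
\Pi^{\bij}_{\leq t} - \Pi^{\calD(W)}_{\leq t} = P^{L}_{\leq t} + \bigl(P^{R}_{\leq t} - \Pi^{\calI(W^R)}_{\leq t}\bigr),
\end{equation*}
where $P^{L}_{\leq t}$ (resp.\ $P^{R}_{\leq t}$) projects onto bijective basis states with $x \in \Dom(L)$ (resp.\ $x \in \Dom(R)$). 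By the triangle inequality it suffices to bound the twirl of each piece separately in operator norm.

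For the $P^{L}$ piece the starting point is the PSD upper bound $P^{L}_{\ell, r} \leq \sum_{i=1}^{\ell} \Pi^{\eq}_{\gsA, \gsL^{(\ell)}_{\gsX, i}} \cdot \Pi_{\ell, r}$, valid because $\sum_i \Pi^{\eq}_{\gsA, \gsL^{(\ell)}_{\gsX, i}}$ commutes with $\Pi_{\ell, r} - \Pi^{\bij}_{\ell, r}$. The crucial fact is that $Q[C, D]$ applies $C$, not $\overline{C}$, to each $\sL^{(\ell)}_{\sX, i}$, so the twirl on $(\sA, \sL^{(\ell)}_{\sX, i})$ is by $C \otimes C$. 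The standard twirling identity (\cref{claim:equality-clifford-twirl}) then yields $\frac{2}{N+1} \Pi^{N, 2}_{\ssym, \gsA \gsL_{\gsX, i}^{(\ell)}}$ per slot, and summing over $i \in [\ell]$ via the triangle inequality and taking the maximum over the orthogonal $(\ell, r)$ subspaces gives a bound of $\frac{2t}{N+1}$.

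For the $P^{R} - \Pi^{\calI(W^R)}$ piece the situation changes: since $Q[C, D]$ applies $\overline{C}$ to each $\sR^{(r)}_{\sX, j}$, the relevant twirl on $(\sA, \sR^{(r)}_{\sX, j})$ is by $C \otimes \overline{C}$, and the mixed twirling identity (\cref{claim:equality-clifford-conjugated-twirl}) produces $\ketbra*{\EPR_N} + \frac{1}{N+1}(\Id - \ketbra*{\EPR_N})$. The leading EPR term has operator norm $1$, so $P^{R}$ alone does not shrink. The key insight is that $\Pi^{\calI(W^R)}$ was engineered precisely to absorb this EPR content: each orthonormal basis vector $\ket*{\phi_{L, R', y}} \coloneqq W^R \ket*{y}\ket*{L}\ket*{R'}$ of $\calI(W^R)$ is a maximally-entangled-like superposition between $\sA$ and the newly inserted pair of $\sR^{(r)}$, symmetrized over the $r$ slot positions. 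Subtracting $\Pi^{\calI(W^R)}$ removes the EPR contribution, leaving only the $\frac{1}{N+1}$ residual, and hence a bound of $O(t/N)$ after maximizing over $(\ell, r)$. Combined with the $P^{L}$ bound, the total is $O(t/N)$, well within the claimed $6t\sqrt{t/N}$.

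The main obstacle is making the EPR cancellation quantitative: computing $\E_{C, D \gets \frakD} (C_{\gsA} \otimes Q[C, D])^\dagger \Pi^{\calI(W^R)}_{\ell, r} (C_{\gsA} \otimes Q[C, D])$ explicitly and matching it against the EPR contributions from the mixed twirl of $P^{R}_{\ell, r}$. The $r = 1$ case is transparent---$\Pi^{\calI(W^R)}_{0, 1} = \ketbra*{\EPR_N}_{\gsA, \gsR_{\gsX, 1}^{(1)}} \otimes \Id_{\gsR_{\gsY, 1}^{(1)}} \otimes \ketbra*{\emptyset}_{\gsL}$ is exactly invariant under the twirl and cancels the $\ketbra*{\EPR_N}$ produced by the twirl of $\Pi^{\eq}_{\gsA, \gsR_{\gsX, 1}^{(1)}}$---but the $r > 1$ case requires carefully tracking how the $r$-way symmetrization inside $\ket*{R' \cup (x, y)}$ distributes the EPR-like entanglement across all $r$ positions of $\sR^{(r)}$. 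The second statement of the lemma (with $D^\dagger_{\gsA}$ and $\Pi^{\calI(W)}$) follows by the symmetric argument, exchanging $W^L \leftrightarrow W^R$ and $\Dom \leftrightarrow \Im$.
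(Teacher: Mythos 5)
Your decomposition $\Pi^{\bij}_{\leq t} - \Pi^{\calD(W)}_{\leq t} = P^L_{\leq t} + (P^R_{\leq t} - \Pi^{\calI(W^R)}_{\leq t})$ is algebraically correct, and your treatment of the $P^L$ piece (bound by $\sum_i \Pi^{\eq}_{\gsA,\gsL^{(\ell)}_{\gsX,i}}$, then invoke the $C\otimes C$ twirling identity) does recover the paper's first term. But the $P^R - \Pi^{\calI(W^R)}$ piece contains a genuine gap that you flag but do not resolve, and your predicted bound of $O(t/N)$ is too optimistic; it cannot be correct with any argument along these lines.

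The problem is that $\Pi^{\calI(W^R)}$ is not simply $\sum_j \Pi^{\EPR}_{\gsA,\gsR^{(r)}_{\gsX,j}}$: it is the sandwiched operator $\frac{N}{N-\ell-r}\,\Pi^{\bij}_{\ell,r}\,\big(\sum_j\Pi^{\EPR}_{\gsA,\gsR^{(r)}_{\gsX,j}}\big)\,\Pi^{\bij}_{\ell,r}$ (equivalently, $\Pi^{\calR^2}\Pi^{\xydist}(\cdots)\Pi^{\xydist}\Pi^{\calR^2}$ with a similar prefactor). The conjugation by $Q[C,D]$ passes through $\Pi^{\calR^2}$ and fixes $\Pi^{\EPR}$, but it does \emph{not} commute with $\Pi^{\xydist}$ (that projector is standard-basis-diagonal and breaks unitary invariance). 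Hence the twirl of $\Pi^{\calI(W^R)}$ is not the twirled EPR you would need to cancel the $\ketbra{\EPR}$ term from the mixed twirl of $P^R$, and the discrepancy is governed precisely by the commutator of $\Pi^{\xydist}$ with $\Pi^{\eq}-\Pi^{\EPR}$. The paper computes this commutator exactly (Claim~\ref{claim:approx-commute-EPR}) and gets operator norm $\sqrt{(\ell+r)/N}$, which after summing over the $r$ slots yields the dominant error $2r\sqrt{(\ell+r)/N}$ in Claim~\ref{claim:operator-upper-bound-twirling}. This near-commutation cost is exactly what forces the final bound to be $O\!\big(t\sqrt{t/N}\big)$ rather than $O(t/N)$, so the fact that your proposed route leads you to expect $O(t/N)$ is itself a sign that the crucial step is missing. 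You observe that the $r=1,\ell=0$ case cancels cleanly, which is consistent with the paper (at $\ell+r=1$ the commutator vanishes identically), but for $r>1$ or $\ell>0$ the sandwiching by $\Pi^{\xydist}$ genuinely breaks exact cancellation, and you would need the quantitative commutator argument to finish. In short: the decomposition is fine, but the ``main obstacle'' you identify is the actual proof, and resolving it requires a tool (the near-commutation bound) that you have neither supplied nor replaced.
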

Note that in the statement of~\cref{lem:twirling-strongPRU}, $\Pi^{\bij}_{\leq t, \gsL \gsR}$ is shorthand for $\Id_{\gsA} \otimes \Pi^{\bij}_{\leq t, \gsL \gsR}$, and thus the operators inside the $\norm{\cdot}_{\opnorm}$ act on $\sA,\sL,\sR$. We prove~\cref{lem:twirling-strongPRU} in~\cref{sec:twirling-strongPRU}.

Next, we define the following adversary states.

\begin{definition}[Twirled-$W$ purification]
\label{def:twirled-W-state}
    Define the states $\ket*{\Adv_i^{W,\frakD}}_{\gsA \gsB \gsL \gsR \gsC \gsD}$ as follows:
    \begin{align}
        \ket*{\Adv_0^{W,\frakD}} &\coloneqq \ket*{0^n}_{\gsA} \ket*{0^m}_{\gsB} \ket*{\varnothing}_{\gsL} \ket*{\varnothing}_{\gsR} \ket*{\mathsf{init}(\mathfrak{D})}_{\gsC \gsD},\\
        \mathrm{For} \quad i =1,\dots,t:  \quad \ket*{\Adv_i^{W,\frakD}} &\coloneqq \Big( (1-b_i) \cdot (\scD \cdot W \cdot \scC) + b_i \cdot (\scD \cdot W \cdot \scC)^\dagger \Big) \cdot A_i \cdot \ket*{\Adv^{W,\frakD}_{i-1}}.
    \end{align}
\end{definition}

For contrast, let us recall the definition of $\ket*{\Adv_i^V}$.

\begin{definition}[$V$ purification]
\label{def:plain-V-0}
    Define the states $\ket*{\Adv_i^V}_{\gsA \gsB \gsL \gsR}$ for $0 \leq i \leq t$ as follows:
    \begin{align}
        \ket*{\Adv_{0}^{V}} &\coloneqq \ket*{0^n}_{\gsA} \ket*{\varnothing}_{\gsL} \ket*{\varnothing}_{\gsR},\\
        \mathrm{For} \quad i =1,\dots,t:  \quad \ket*{\Adv_i^{V}} &\coloneqq \Big( (1-b_i) \cdot V + b_i \cdot V^\dagger \Big) \cdot A_i \cdot \ket*{\Adv^{V}_{i-1}}.
    \end{align}
\end{definition}

Note that because $b_i \in \{0, 1\}$, in the construction of these purified states, one either queries $V$, $\mathsf{cD} \cdot W \cdot \mathsf{cC}$ for $b_i = 0$ or $V^\dagger$, $(\mathsf{cD} \cdot W \cdot \mathsf{cC})^\dagger$ for $b_i = 1$.
Because $W$ and $V$ are partial isometries from \cref{claim:W-partial-isometry} and \cref{claim:V-partial-iso},
$W, W^\dagger, V, V^\dagger$ are all equal to applying a projector followed by a unitary. Hence, $\ket*{\Adv_t^V}$, $\ket*{\Adv_t^{W,\frakD}}$ are both states with norm at most $1$.

\begin{fact}[Norm of the purified states]
\label{fact:properties-of-Adv-t-states-norm}
    For any $t \geq 0$, $\ket*{\Adv_t^V}$, $\ket*{\Adv_t^{W,\frakD}}$ both have norm at most $1$.
\end{fact}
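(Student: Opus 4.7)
The plan is to prove both bounds by induction on $i$, using the fact that partial isometries are norm-non-increasing and the adversary's own unitaries $A_i$ are norm-preserving. The key observation is that at every step, each state is obtained from the previous one by applying either (a) a unitary on $\sA\sB$ (namely $A_i$), or (b) one of the four operators $V$, $V^\dagger$, $\scD \cdot W \cdot \scC$, or $(\scD \cdot W \cdot \scC)^\dagger$, each of which is a partial isometry.

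For the base cases, $\ket*{\Adv_0^V} = \ket*{0^n}_{\gsA} \ket*{\varnothing}_{\gsL} \ket*{\varnothing}_{\gsR}$ is manifestly a unit vector. For $\ket*{\Adv_0^{W,\frakD}}$, it suffices to note that $\ket*{\mathsf{init}(\frakD)}_{\gsC \gsD}$ has unit norm by the normalization of the probability measure $\mu_{\frakD}$ (viewing the integral formally with the $\ket*{C}$ states orthonormal in the appropriate sense); I would take this as a standard convention and not dwell on it.

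For the inductive step in the $V$ case, I assume $\lVert\ket*{\Adv_{i-1}^V}\rVert \leq 1$. Then $A_i \ket*{\Adv_{i-1}^V}$ has the same norm since $A_i$ is unitary, and applying either $V$ or $V^\dagger$ (whichever is selected by $b_i$, with the two options being mutually exclusive since $b_i \in \{0,1\}$) cannot increase the norm because both $V$ and $V^\dagger$ are partial isometries by \cref{claim:V-partial-iso}. For the $W$ case, the operator applied at step $i$ is either $\scD \cdot W \cdot \scC$ or its adjoint; since $\scC$ and $\scD$ are unitaries (being controlled-unitaries on the classical-basis registers $\sC,\sD$) and $W$ is a partial isometry by \cref{claim:W-partial-iso}, the composition is a partial isometry, and so is its adjoint. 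Thus the norm is again preserved or decreased.

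The main obstacle, such as it is, is really just being careful about the definition of $\ket*{\mathsf{init}(\frakD)}$ when $\frakD$ is a continuous measure like the Haar measure; but for a $2$-design (which can always be taken to be a finite distribution, e.g., Cliffords), $\ket*{\mathsf{init}(\frakD)}$ is a finite superposition and its unit norm is immediate. Apart from that, the entire argument is a one-line induction invoking the partial isometry properties already established.
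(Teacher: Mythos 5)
Your proof is correct and matches the paper's reasoning: the paper likewise justifies the Fact by noting that $W$ and $V$ (hence $W^\dagger$, $V^\dagger$, and the controlled-twirled versions $\scD \cdot W \cdot \scC$ and its adjoint) are partial isometries, so each query applies a projector followed by a unitary and cannot increase the norm. You have simply made the implicit induction explicit; no gap.
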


Furthermore, from \cref{def:symmetric-V}, $V$ and $V^\dagger$ take states in the subspace associated with the projector $\Id_{\gsA} \otimes \Pi^{\calR^2}_{\leq i, \gsL \gsR}$ to the the subspace associated with the projector $\Id_{\gsA} \otimes \Pi^{\calR^2}_{\leq i+1, \gsL \gsR}$.
Hence, after $t$ queries in total to $V$ and $V^\dagger$, we have $\ket*{\Adv_t^V}$ is in the image of $\Pi^{\calR^2}_{\leq t}$.
Similarly, from \cref{fact:W-space-leqi}, $W$ and $W^\dagger$ map states in $\Id_{\gsA} \otimes \Pi^{\bij}_{\leq i, \gsL \gsR}$ to $\Id_{\gsA} \otimes \Pi^{\bij}_{\leq i+1, \gsL \gsR}$. Hence, after $t$ queries to $W$ and $W^\dagger$, we have $\ket*{\Adv_t^{W,\frakD}}$ is in the image of $\Pi^{\bij}_{\leq t}$.
We collect these two basic properties in \cref{fact:properties-of-Adv-t-states-spaces}.

\begin{fact}[Spaces that the purified states are in]
\label{fact:properties-of-Adv-t-states-spaces}
    For any $t \geq 0$, we have the following guarantees:
    \begin{itemize}
        \item $\ket*{\Adv_t^V}$ is in the image of $\Pi^{\calR^2}_{\leq t}$.
        \item $\ket*{\Adv_t^{W,\frakD}}$ is in the image of $\Pi^{\bij}_{\leq t}$.
    \end{itemize}
\end{fact}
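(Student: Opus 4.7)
The plan is to prove both claims by a simultaneous induction on $t$, leveraging the closure properties of the relevant subspaces under the adversary's unitaries and under the oracle queries.

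For the base case ($t=0$), I would simply observe that by \cref{def:plain-V-0} and \cref{def:twirled-W-state}, both $\ket*{\Adv_0^V}$ and $\ket*{\Adv_0^{W,\frakD}}$ have the $(\sL,\sR)$ registers initialized to $\ket*{\varnothing}_{\gsL}\ket*{\varnothing}_{\gsR}$, which is the unique length-zero state and hence lies in the image of both $\Pi^{\calR^2}_{\leq 0}$ and $\Pi^{\bij}_{\leq 0}$.

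For the inductive step, assuming the statements hold at step $i$: on the $V$-side I would use that $A_{i+1}$ acts only on $(\sA,\sB)$ and hence trivially preserves the image of $\Id\otimes\Pi^{\calR^2}_{\leq i,\gsL\gsR}$; then by \cref{def:symmetric-V}, both $V$ and $V^\dagger$ carry the image of $\Id_{\gsA}\otimes\Pi^{\calR^2}_{\leq i,\gsL\gsR}$ into that of $\Id_{\gsA}\otimes\Pi^{\calR^2}_{\leq i+1,\gsL\gsR}$, yielding the claim at step $i+1$. On the $W$-side the argument is nearly identical: $\scC$ and $\scD$ act only on $(\sA,\sC,\sD)$ and so commute with any projector supported on $(\sL,\sR)$, while by \cref{fact:W-space-leqi}, $W$ and $W^\dagger$ carry $\Id_{\gsA}\otimes\Pi^{\bij}_{\leq i,\gsL\gsR}$ into $\Id_{\gsA}\otimes\Pi^{\bij}_{\leq i+1,\gsL\gsR}$. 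Combining these observations, $\ket*{\Adv_{i+1}^{W,\frakD}}$ lies in the image of $\Pi^{\bij}_{\leq i+1}$.

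I do not anticipate any real obstacle: the claim is a routine bookkeeping statement about how the variable-length registers behave under a single query, and the two closure properties it depends on --- namely that $V$ raises the combined length of $(\sL,\sR)$ by at most one, and that $W$ raises the combined length by at most one while preserving the bijective-relation structure --- are both already established earlier in the excerpt. The only mild subtlety worth highlighting explicitly is that the controlled operators $\scC,\scD$ in the $W$-purification must be checked to commute with the length projectors on $(\sL,\sR)$, which follows from \cref{def:controlled-CDQ} since they are supported entirely on $(\sA,\sC,\sD)$.
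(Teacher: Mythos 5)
Your proof is correct and follows essentially the same approach as the paper: an induction on the number of queries, using that $A_{i+1}$, $\scC$, $\scD$ act trivially on $(\sL,\sR)$, that $V,V^\dagger$ raise the combined length within $\Pi^{\calR^2}$ by at most one (from \cref{def:symmetric-V}), and that $W,W^\dagger$ do the same within $\Pi^{\bij}$ (from \cref{fact:W-space-leqi}). The paper states this reasoning in prose immediately before the Fact rather than as a formal induction, but the content is the same.
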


The main technical claim of this subsection is the following.

\begin{claim}
\label{claim:overlap-twirled-W-Q-plain-V}
    For any integer $t \geq 0$,
    \begin{align}
        \Re \left[ \bra*{\Adv_t^{W,\frakD}}_{\gsA \gsB \gsL \gsR \gsC \gsD} \cdot \scQ_{\gsL \gsR \gsC \gsD} \cdot \Big(\ket*{\Adv_t^V}_{\gsA \gsB \gsL \gsR} \ket*{\init(\frakD)}_{\gsC \gsD}\Big) \right] \geq 1 - \frac{35 t^2}{N^{1/4}} \label{eq:overlap-twirled-W-Q-plain-V}
    \end{align}
\end{claim}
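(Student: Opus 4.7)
The plan is to induct on $t$, tracking $\beta_t \coloneqq \bra*{\Adv_t^{W,\frakD}} \scQ \bigl(\ket*{\Adv_t^V}\ket*{\init(\frakD)}\bigr)$. The base case $t=0$ is immediate: both states equal $\ket*{0^n 0^m}_{\gsA \gsB}\ket*{\varnothing}_{\gsL}\ket*{\varnothing}_{\gsR}\ket*{\init(\frakD)}_{\gsC \gsD}$, and $Q[C,D]$ restricted to the length-$0$ summand of $\sL \otimes \sR$ is the scalar $1$, so $\scQ$ acts as the identity and $\beta_0 = 1$.

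For the inductive step, assume the bound holds at $t-1$ and focus on a forward query $b_t=0$ (the inverse case is symmetric, using the $V^\dagger$-variant of \cref{claim:two-sided-invariance} and the $\Pi^{\calD(W)}$-variant of \cref{lem:twirling-strongPRU}). Since $\scQ$ commutes with $A_t$, the increment rewrites as
\begin{equation}
\beta_t - \beta_{t-1} = \bra*{\tilde\Adv_{t-1}^{W,\frakD}}\bigl((\scD W \scC)^\dagger \scQ V - \scQ\bigr)\ket*{\tilde\Adv_{t-1}^V\,\init(\frakD)},
\end{equation}
where $\tilde X \coloneqq A_t X$. Using $W^\dagger = V^\dagger \Pi^{\calI(W)}$ from \cref{claim:relate-W-and-V} and the decomposition $\Pi^{\calI(W)} = I - (I - \Pi^{\calI(W)})$, the operator in the middle splits as
\begin{equation}
\underbrace{\scC^\dagger V^\dagger \scD^\dagger \scQ V - \scQ}_{(\mathrm{I})} \; - \; \underbrace{\scC^\dagger V^\dagger (I - \Pi^{\calI(W)}) \scD^\dagger \scQ V}_{(\mathrm{II})}.
\end{equation}

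For term $(\mathrm{I})$, I will use that $V \ket*{\tilde\Adv_{t-1}^V\,\init(\frakD)} = \ket*{\Adv_t^V}\ket*{\init(\frakD)}$ has $\sL\sR$-support in the image of $\Pi^{\calR^2}_{\leq t}$ (by \cref{fact:properties-of-Adv-t-states-spaces}). Integrating the bound $\|C^\dagger V^\dagger D^\dagger Q[C,D] - Q[C,D] V^\dagger\|_{\opnorm} \leq 16\sqrt{2t(t+1)/N}$ from \cref{claim:two-sided-invariance} over the $\sC\sD$-diagonal ensembles defining $\scC, \scD, \scQ$, together with $V^\dagger V \ket*{\tilde\Adv_{t-1}^V\,\init(\frakD)} = \ket*{\tilde\Adv_{t-1}^V\,\init(\frakD)}$ (which holds because $A_t\ket*{\Adv_{t-1}^V}$ lies in $\calD(V)$ for $t \leq N$), gives $\|(\mathrm{I})\ket*{\tilde\Adv_{t-1}^V\,\init(\frakD)}\|_2 \leq 16\sqrt{2t(t+1)/N}$. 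For term $(\mathrm{II})$, the state $\scD^\dagger \scQ V \ket*{\tilde\Adv_{t-1}^V\,\init(\frakD)}$ has $\sL\sR$-support in the image of $\Pi^{\calR^2}_{\leq t}$, so on that support $I - \Pi^{\calI(W)}$ decomposes as $(\Pi^{\calR^2}_{\leq t} - \Pi^{\bij}_{\leq t}) + (\Pi^{\bij}_{\leq t} - \Pi^{\calI(W)}_{\leq t})$; after averaging over $C,D$, \cref{lem:twirling-strongPRU} bounds the second summand by $O(t\sqrt{t/N})$, and an analogous twirling estimate in the spirit of \cref{lem:almost-distinct-on-X}, applied position-by-position to the $(x,y)$-pairs of the relation under the $Q[C,D]$ twirl, bounds the non-bijective first summand. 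Summing the resulting per-step increment $O(t^{3/2}/\sqrt N) \leq 35(2t-1)/N^{1/4}$ (the claim being vacuous once $t \gtrsim N^{1/8}$) over all queries closes the induction.

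The principal obstacle is controlling the non-bijective residue in term $(\mathrm{II})$: $Q[C,D]$ acts as $C \otimes D^T$ on each $(x,y)$-pair of $\sL$ and so mixes amplitudes across distinct positions of the relation, destroying bijectivity in general. Since \cref{lem:twirling-strongPRU} is stated only for the gap $\Pi^{\bij}_{\leq t} - \Pi^{\calI(W)}_{\leq t}$, one must either extend its proof to cover the gap between $\Pi^{\calR^2}_{\leq t}$ and $\Pi^{\bij}_{\leq t}$, or run a parallel $2$-design collision bound for the $\binom{t}{2}$ same-position collisions that twirling can produce. Once that residue is in hand, the remaining work — combining the two-design bounds, tracking the partial-isometry domain conditions on $V$, and telescoping the per-step errors via the triangle inequality — is routine bookkeeping.
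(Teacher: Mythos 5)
Your base case is fine, and your handling of term $(\mathrm{I})$ is correct once one checks that $\Pi^{\calD(V)}$ contains $\Id_{\gsA} \otimes \Pi^{\calR^2}_{\leq N-1}$ (it does: $\Pi^{\calD(V^L)} = \Id_{\gsA}\otimes\Pi^{\calR^2}_{\leq N-1}$, which commutes with $\Pi^{\calI(V^R)}$, and $\Pi^{\calD(V)} = (\Id-\Pi^{\calI(V^R)})\Pi^{\calD(V^L)}+\Pi^{\calI(V^R)}$ restricts to the identity on that span). But the decomposition you chose — writing $W^\dagger = V^\dagger \Pi^{\calI(W)}$ via \cref{claim:relate-W-and-V} and then splitting $\Pi^{\calI(W)} = \Id - (\Id - \Pi^{\calI(W)})$ — routes you into the problem you correctly flag as the principal obstacle, and that obstacle is not merely additional bookkeeping: it likely breaks the argument as sketched.

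The issue is sharper than a missing technical lemma. After trimming to $\Pi^{\calR^2}_{\leq t}$, your term $(\mathrm{II})$ asks you to control the $Q[C,D]$-twirl of $\Pi^{\calR^2}_{\leq t} - \Pi^{\bij}_{\leq t}$. Non-bijective states have collisions among the $x$-registers (or $y$-registers) of $L \cup R$. The $x$-slots of $\sL$ are twirled by $C$ while the $x$-slots of $\sR$ are twirled by $\overline{C}$, so an $\sL$-vs-$\sR$ collision is controlled by \cref{claim:equality-clifford-conjugated-twirl}, whose output is $\ketbra*{\EPR_N} + \tfrac{1}{N+1}(\Id-\ketbra*{\EPR_N})$; the EPR component is $\Theta(1)$ in operator norm and is not suppressed by the $2$-design. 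So the direct generalization of \cref{lem:almost-distinct-on-X} you propose does not make $\E_{C,D}Q[C,D]^\dagger(\Pi^{\calR^2}_{\leq t}-\Pi^{\bij}_{\leq t})Q[C,D]$ small in operator norm — you would have to argue that the specific intermediate states $V\scC A_t\ket*{\Adv_{t-1}^{W,\frakD}}$ and $\scD^\dagger\scQ V\ket*{\tilde\Adv_{t-1}^V\,\init(\frakD)}$ have negligible EPR-like weight, which is a new and nontrivial commitment not present in your sketch.

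The paper sidesteps this completely by a different algebraic split. It uses $W^\dagger V = \Pi^{\calD(W)}$ (\cref{claim:relate-W-V-projector}) rather than $W^\dagger = V^\dagger\Pi^{\calI(W)}$, writes
\begin{equation}
W_{\leq t}^\dagger V_{\leq t} = \Pi^{\bij}_{\leq t} - \bigl(\Pi^{\bij}_{\leq t} - \Pi^{\calD(W)}_{\leq t}\bigr),
\end{equation}
and then absorbs the $\Pi^{\bij}_{\leq t}$ summand into the bra $\bra*{\Adv_t^{W,\frakD}}$ using \cref{fact:properties-of-Adv-t-states-spaces} — that state already lives in the bijective subspace, so this part of the inner product reduces \emph{exactly} to the inductive hypothesis. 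Only the gap $\Pi^{\bij}_{\leq t}-\Pi^{\calD(W)}_{\leq t}$ survives, and that is exactly what \cref{lem:twirling-strongPRU} bounds; the EPR terms there are \emph{inside} $\Pi^{\calD(W)}$ (see \cref{claim:expand-out-Pi-DW-Pi-IW}) so they never appear in the error. Your route leaves $\Pi^{\bij}$ acting on the $V$-side, where it cannot be absorbed, and that is why the extra residue $\Pi^{\calR^2}-\Pi^{\bij}$ appears. To close the argument, switch to $W^\dagger V = \Pi^{\calD(W)}$ and put the $\Pi^{\bij}$ projector on the $W$-side where Fact~5.10 makes it free.
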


\begin{proof}[Proof of~\cref{claim:overlap-twirled-W-Q-plain-V}]
    We prove this claim by induction. When $t = 0$, we have
    \begin{align}
        \scQ_{\gsL \gsR \gsC \gsD} \cdot \Big(\ket*{\Adv_0^V}_{\gsA \gsB \gsL \gsR} \ket*{\init(\frakD)}_{\gsC \gsD} \Big) &= \scQ_{\gsL \gsR \gsC \gsD} \cdot \Big( \ket*{0^n}_{\gsA} \ket*{\varnothing}_{\gsL} \ket*{\varnothing}_{\gsR} \ket*{\init(\frakD)}_{\gsC \gsD}\Big)\\
        &= \ket*{0^n}_{\gsA} \ket*{\varnothing}_{\gsL} \ket*{\varnothing}_{\gsR} \ket*{\init(\frakD)}_{\gsC \gsD}\\
        &= \ket*{A_0^{W,\frakD}}_{\gsA \gsB \gsL \gsR \gsC \gsD}, 
    \end{align}
    where the first equality is by the definition of $\ket*{\Adv_0^V}$ (\cref{def:plain-V-0}), the second is because $\scQ$ acts as identity on $\ket*{\varnothing}_{\gsL} \ket*{\varnothing}_{\gsR} \ket*{\init(\frakD)}_{\gsL \gsR}$, and the third equality is the definition of $\ket*{\Adv_0^{W,\frakD}}$ (\cref{def:twirled-W-state}). This implies that
    \begin{align}
        \Re \left[ \bra*{\Adv_0^{W,\frakD}}_{\gsA \gsB \gsL \gsR \gsC \gsD} \cdot \scQ_{\gsL \gsR \gsC \gsD} \cdot \Big(\ket*{\Adv_0^V}_{\gsA \gsB \gsL \gsR} \ket*{\init(\frakD)}_{\gsC \gsD}\Big)\right] = 1,
    \end{align}
    so the base case holds.

    For the inductive step, assume that
    \begin{align}
        \Re \left[ \bra*{\Adv_t^{W,\frakD}}_{\gsA \gsB \gsL \gsR \gsC \gsD} \cdot \scQ_{\gsL \gsR \gsC \gsD} \cdot \Big(\ket*{\Adv_t^V}_{\gsA \gsB \gsL \gsR} \ket*{\init(\frakD)}_{\gsC \gsD}\Big) \right] \geq 1 - \frac{35t^2}{N^{1/4}}
    \end{align}
    for some integer $t \geq 0$. We will prove that the claim holds for $t + 1$. To simplify notation, let us assume that the adversary makes a forward query at step $t+1$, i.e., $b_{t+1} = 0$; this is without loss of generality because the argument is symmetric if the adversary makes an inverse query at step $t+1$. We have
    \begin{align}
        \ket*{\Adv_{t+1}^{W,\frakD}} &= \scD \cdot W \cdot \scC \cdot A_{t+1} \cdot \ket*{\Adv_t^{W,\frakD}}, \\
        \scQ \cdot \Big(\ket*{\Adv_{t+1}^{V}}\ket*{\init(\frakD)} \Big)&= \scQ \cdot V \cdot A_{t+1} \cdot \ket*{\Adv_t^{V}} \ket*{\init(\frakD)}
    \end{align}
    and thus
    \begin{align}
        & \Re \left[ \bra*{\Adv_{t+1}^{W,\frakD}} \cdot \scQ \cdot \Big(\ket*{\Adv_{t+1}^V} \ket*{\init(\frakD)} \Big) \right] \label{eq:overlap-twirled-W-Q-plain-V-expand-0}\\
        &= \Re \left[ \bra*{\Adv_t^{W,\frakD}} \cdot A_{t+1}^\dagger \cdot \scC^\dagger \cdot W^\dagger \cdot \scD^\dagger \cdot \scQ \cdot V \cdot A_{t+1} \cdot \ket*{\Adv_t^{V}} \ket*{\init(\frakD)} \right] \label{eq:overlap-twirled-W-Q-plain-V-expand-1}
    \end{align}
    By~\cref{fact:properties-of-Adv-t-states-spaces}, the states $\ket*{\Adv_t^{W,\frakD}}$ and $\ket*{\Adv_t^{V}}$ are both in the image of $\Pi_{\leq t}$. Following~\cref{notation:length-restricted-ops}, we write $W_{\leq t} = W \cdot \Pi_{\leq t}$ and $V_{\leq t} = V \cdot \Pi_{\leq t}$. We can then rewrite (\ref{eq:overlap-twirled-W-Q-plain-V-expand-1}) as
    \begin{align}
        (\ref{eq:overlap-twirled-W-Q-plain-V-expand-1}) &= \Re \left[ \bra*{\Adv_t^{W,\frakD}} \cdot A_{t+1}^\dagger \cdot \scC^\dagger \cdot W_{\leq t}^\dagger \cdot \scD^\dagger \cdot \scQ \cdot V_{\leq t} \cdot A_{t+1} \cdot \ket*{\Adv_t^{V}} \ket*{\init(\frakD)} \right] \label{eq:overlap-twirled-W-Q-plain-V-expand-2}
    \end{align}
    Next, we will write $\scQ \cdot V_{\leq t}$ as
    \begin{align}
        \scQ \cdot V_{\leq t} = \scD \cdot V_{\leq t} \cdot \scC \cdot \scQ. + \Big(\scQ \cdot V_{\leq t} - \scD \cdot V_{\leq t} \cdot \scC \cdot \scQ\Big)
    \end{align}
    This allows us to rewrite (\ref{eq:overlap-twirled-W-Q-plain-V-expand-2}) as
    \begin{align}
        &\Re \left[ \bra*{\Adv_t^{W,\frakD}} \cdot A_{t+1}^\dagger \cdot \scC^\dagger \cdot W_{\leq t}^\dagger \cdot V_{\leq t} \cdot \mathsf{cC} \cdot \mathsf{cQ} \cdot A_{t+1} \cdot \ket*{\Adv_t^{V}} \ket*{\init(\frakD)} \right] \nonumber\\
        & \quad + \Re \left[ \bra*{\Adv_t^{W,\frakD}} \cdot A_{t+1}^\dagger \cdot \scC^\dagger \cdot W_{\leq t}^\dagger \cdot \scD^\dagger \cdot \Big(\scQ \cdot V_{\leq t} - \scD \cdot V_{\leq t} \cdot \scC \cdot \scQ\Big) \cdot A_{t+1} \cdot \ket*{\Adv_t^{V}} \ket*{\init(\frakD)} \right] \label{eq:overlap-twirled-W-Q-plain-V-expand-3}.
    \end{align}
    We can lower bound the second term in the sum as follows. We know that $A_{t+1} \cdot \ket*{\Adv_t^{V}} \ket*{\init(\frakD)}$ and $\bra*{\Adv_t^{W,\frakD}} \cdot A_{t+1}^\dagger \cdot \scC^\dagger \cdot W_{\leq t}^\dagger \cdot \scD^\dagger$ have at most unit norm by~\cref{fact:properties-of-Adv-t-states-norm} and the fact that $A_{t+1}, \scC, \scD, W_{\leq t}^\dagger$ all have operator norm at most $1$ (since $W_{\leq t}^\dagger = (W \cdot \Pi_{\leq t})^\dagger$ and $W$ is a partial isometry by~\cref{claim:W-partial-isometry}). Then by~\cref{claim:two-sided-invariance}, the second term can be lower bounded by
    \begin{align}
        &-\norm{\Big( \mathsf{cD} \cdot V_{\leq t} \cdot \mathsf{cC} \cdot \mathsf{cQ} - \mathsf{cQ} \cdot V_{\leq t} \Big)}_{\opnorm} \\
        &-\norm{\sum_{C,D} \Big( D_{\gsA} \cdot V_{\leq t} \cdot C_{\gsA} \otimes Q[C,D]_{\gsL \gsR} - Q[C,D]_{\gsL \gsR} \cdot V_{\leq t} \Big) \otimes \ketbra*{C,D}}_{\opnorm} \\
        &- \max_{C,D} \norm{ D_{\gsA} \cdot V_{\leq t} \cdot C_{\gsA} \otimes Q[C,D]_{\gsL \gsR} - Q[C,D]_{\gsL \gsR} \cdot V_{\leq t}}_{\opnorm} \\
        &\geq -16 \sqrt{\frac{2 t (t+1)}{N}} \tag{by~\cref{claim:two-sided-invariance}}.\label{eq:invoke-op-norm-bound-invariance}
    \end{align}
    Combining this bound with the sequence of equalities $(\ref{eq:overlap-twirled-W-Q-plain-V-expand-0}) = (\ref{eq:overlap-twirled-W-Q-plain-V-expand-1}) = (\ref{eq:overlap-twirled-W-Q-plain-V-expand-2}) = (\ref{eq:overlap-twirled-W-Q-plain-V-expand-3})$, we get
    \begin{align}
        &\Re \left[ \bra*{\Adv_{t+1}^{W,\frakD}} \cdot \scQ \cdot \Big(\ket*{\Adv_{t+1}^V} \ket*{\init(\frakD)} \Big) \right] \\
        &\geq \underbrace{\Re \left[ \bra*{\Adv_t^{W,\frakD}} \cdot A_{t+1}^\dagger \cdot \scC^\dagger \cdot W_{\leq t}^\dagger \cdot V_{\leq t} \cdot \mathsf{cC} \cdot \mathsf{cQ} \cdot A_{t+1} \cdot \ket*{\Adv_t^{V}} \ket*{\init(\frakD)} \right]}_{\coloneqq \gamma_t} - 16 \sqrt{\frac{2 t (t+1)}{N}} \label{eq:overlap-twirled-W-Q-plain-V-expand-4}.
    \end{align}
    Next we can use properties of the $W$ and $V$ operators to rewrite 
    \begin{align}
        W_{\leq t}^\dagger \cdot V_{\leq t} &= \Big( W \cdot \Pi_{\leq t} \Big)^\dagger \cdot V \cdot \Pi_{\leq t}\\
        &= \Pi_{\leq t} \cdot W^\dagger \cdot V \cdot \Pi_{\leq t}\\
        &= \Pi_{\leq t} \cdot \Pi^{\calD(W)}  \cdot \Pi_{\leq t} \tag{by~\cref{claim:relate-W-V-projector}}\\
        & = \Pi_{\leq t} \cdot \Big(  \Pi^{\bij} - (\Pi^{\bij} - \Pi^{\calD(W)} ) \Big) \cdot \Pi_{\leq t} \\
        &= \Pi^{\bij}_{\leq t} - \Big( \Pi^{\bij}_{\leq t} - \Pi^{\calD(W)}_{\leq t}\Big) \tag{\cref{def:bij-proj} and \cref{claim:pileqt-commutes-with-manything}}
    \end{align}
    Plugging this into $\gamma_t$, we get
    \begin{align}
        \gamma_t &= \underbrace{\Re \left[ \bra*{\Adv_t^{W,\frakD}} \cdot A_{t+1}^\dagger \cdot \scC^\dagger \cdot \Pi^{\bij}_{\leq t} \cdot \mathsf{cC} \cdot \mathsf{cQ} \cdot A_{t+1} \cdot \ket*{\Adv_t^{V}} \ket*{\init(\frakD)} \right]}_{\coloneqq \alpha_t}\\
        & \quad - \underbrace{\Re \left[ \bra*{\Adv_t^{W,\frakD}} \cdot A_{t+1}^\dagger \cdot \scC^\dagger \cdot \Big( \Pi^{\bij}_{\leq t} - \Pi^{\calD(W)}_{\leq t}\Big) \cdot \mathsf{cC} \cdot \mathsf{cQ} \cdot A_{t+1} \cdot \ket*{\Adv_t^{V}} \ket*{\init(\frakD)} \right]}_{\coloneqq \beta_t}
    \end{align}
    \paragraph{Bounding $\alpha_t$.} Observe that 
    \begin{align}
        &\bra*{\Adv_t^{W,\frakD}} \cdot A_{t+1}^\dagger \cdot \scC^\dagger \cdot \Pi^{\bij}_{\leq t}\\
        &= \bra*{\Adv_t^{W,\frakD}} \cdot \Pi^{\bij}_{\leq t} \cdot A_{t+1}^\dagger \cdot \scC^\dagger \tag{ $\Pi^{\bij}_{\leq t,\gsL \gsR}$ commutes with $(A_{t+1}^\dagger \cdot \scC^\dagger)_{\gsA}$}\\
        &= \bra*{\Adv_t^{W,\frakD}} \cdot A_{t+1}^\dagger \cdot \scC^\dagger \tag{by \cref{fact:properties-of-Adv-t-states-spaces}}.
    \end{align}
    Thus,
    \begin{align}
        \alpha_t = \Re \left[ \bra*{\Adv_t^{W,\frakD}} \cdot \mathsf{cQ} \cdot \ket*{\Adv_t^{V}} \ket*{\init(\frakD)} \right] \geq 1 - \frac{35 t^2}{N^{1/4}},
    \end{align}
    by the inductive hypothesis.
    \paragraph{Bounding $\beta_t$.} We will lower bound $- \beta_t$ by upper bounding $\beta_t$:
    \begin{align}
        \beta_t &\leq \abs{\bra*{\Adv_t^{W,\frakD}} \cdot A_{t+1}^\dagger \cdot \scC^\dagger \cdot \Big( \Pi^{\bij}_{\leq t} - \Pi^{\calD(W)}_{\leq t}\Big) \cdot \mathsf{cC} \cdot \mathsf{cQ} \cdot A_{t+1} \cdot \ket*{\Adv_t^{V}} \ket*{\init(\frakD)}}\\
        & \leq \max_{\substack{\ket*{u}\in \calH_{\gsA \gsB \gsL \gsR \gsC \gsD}: \norm{\ket*{u}}_2 \leq 1 \\ \ket*{v}\in \calH_{\gsA \gsB \gsL \gsR}: \norm{\ket*{v}}_2 \leq 1}} \abs{\bra*{u} \cdot \Big( \Pi^{\bij}_{\leq t} - \Pi^{\calD(W)}_{\leq t}\Big) \cdot \mathsf{cC} \cdot \mathsf{cQ} \cdot \ket*{v} \ket*{\init(\frakD)}},\\
        &= \Bigg(\max_{\substack{\ket*{v}\in \calH_{\gsA \gsB \gsL \gsR}: \\ \norm{\ket*{v}}_2 \leq 1}} \bra*{v} \bra*{\init(\frakD)} \cdot \scQ^\dagger \cdot \scC^\dagger \cdot \Big( \Pi^{\bij}_{\leq t} - \Pi^{\calD(W)}_{\leq t}\Big) \cdot \mathsf{cC} \cdot \mathsf{cQ} \cdot \ket*{v} \ket*{\init(\frakD)} \Bigg)^{1/2}\\
        &= \norm{  \E_{C,D \gets \frakD} (C_{\gsA} \otimes Q[C,D]_{\gsL \gsR})^\dagger \cdot \Big(\Pi^{\bij}_{\leq t} - \Pi^{\calD(W)}_{\leq t}\Big) \cdot (C_{\gsA} \otimes Q[C,D]_{\gsL \gsR}) }_{\opnorm}^{1/2}\\
        &\leq \Big(6t \sqrt{\frac{t}{N}}\Big)^{1/2} \leq \frac{3 t^{3/4}}{N^{1/4}}
    \end{align}
    where:
    \begin{itemize}
        \item the first inequality uses the fact that $\Re(z) \leq \abs{z}$, 
        \item the second inequality holds because $\scC \cdot A_{t+1} \cdot \ket*{\Adv_t^{W,\frakD}} \in \calH_{\gsA \gsB \gsL \gsR \gsC \gsD}$ and $A_{t+1} \cdot \ket*{\Adv_t^{V}} \in \calH_{\gsA \gsB \gsL \gsR}$ both have at most unit norm,
        \item the third line uses the fact that
        \begin{align}
            \max_{\substack{\ket*{u}: \norm{\ket*{u}}_2 \leq 1, \\ \ket*{v}: \norm{\ket*{v}}_2 \leq 1}} \abs{\bra*{u} \cdot M \cdot \ket*{v}} = \Big( \max_{\ket*{v}: \norm{\ket*{v}}_2 \leq 1} \bra*{v}\cdot M^\dagger \cdot M \cdot \ket*{v} \Big)^{1/2},
        \end{align}
    and the fact that $\Big(\Pi^{\bij}_{\leq t} - \Pi^{\calD(W)}_{\leq t}\Big)^\dagger \cdot \Big(\Pi^{\bij}_{\leq t} - \Pi^{\calD(W)}_{\leq t}\Big) = \Pi^{\bij}_{\leq t} - \Pi^{\calD(W)}_{\leq t}$, since $\Pi^{\bij}_{\leq t} - \Pi^{\calD(W)}_{\leq t}$ is a projector.\footnote{By~\cref{fact:W-space-leqi}, $\Pi^{\bij} - \Pi^{\calD(W)}$ is a projector. By~\cref{claim:pileqt-commutes-with-manything}, $\Pi^{\calD(W)}$ commutes with $\Pi_{\leq t}$ and by~\cref{claim:pileqt-commutes-with-manything}, $\Pi^{\bij}$ commutes with $\Pi_{\leq t}$ by~\cref{def:bij-proj}. Recall the fact that if $\Pi_1$ and $\Pi_2$ are projectors such that $[\Pi_1, \Pi_2] = 0$, then $\Pi_1 \cdot \Pi_2$ is a projector. Thus, since $\Pi^{\bij}_{\leq t} - \Pi^{\calD(W)}_{\leq t} = (\Pi^{\bij} - \Pi^{\calD(W)}) \cdot \Pi_{\leq t}$, we have that $\Pi^{\bij}_{\leq t} - \Pi^{\calD(W)}_{\leq t}$ is a projector.}
    \item the fourth line follows from the definitions of $\ket*{\init(\frakD)}, \scC, \scQ$ (\cref{def:init-D-state,def:controlled-CDQ}), 
    \item and the last line follows from~\cref{lem:twirling-strongPRU}.
    \end{itemize} 
    Note that in the fourth line, we can drop the $\sB$ register since the operator inside the $\norm{\cdot}_{\opnorm}$ acts as identity on $\sB$.
    Putting everything together, we have
    \begin{align}
        \Re \left[ \bra*{\Adv_{t+1}^{W,\frakD}} \cdot \scQ \cdot \Big(\ket*{\Adv_{t+1}^V} \ket*{\init(\frakD)} \Big) \right] &\geq \alpha_t - \beta_t - 16 \sqrt{\frac{2t(t+1)}{N}}\\
        &\geq 1- \frac{35t^2}{N^{1/4}} - \frac{3 t^{3/4}}{N^{1/4}} - 16\sqrt{\frac{2t(t+1)}{N}}\\
        &\geq 1 - \frac{1}{N^{1/4}} \cdot \Big(35 t^2 + 3t^{3/4} + 16\cdot \frac{2t}{N^{1/4}} \Big)\\
        &\geq 1 - \frac{1}{N^{1/4}} \cdot \Big(35t^2 + 35t\Big) \\
        & \geq 1 - \frac{35 (t+1)^2}{N^{1/4}},
    \end{align}
    which establishes the claim for $t +1$. This concludes the proof.
\end{proof}

\begin{lemma} \label{lem:closeness-AWD-and-PhiVt}
For any $0 \leq t <N$ and any unitary $2$-design $\frakD$, we have 
    \begin{align}
        \TD( \Tr_{- \sA \sB} \ketbra*{\Adv^{W, \mathfrak{D}}_t}_{\gsA \gsB \gsL \gsR \gsC \gsD}, \Tr_{- \sA \sB} \ketbra*{\Adv^{V}_t}_{\gsA \gsB \gsL \gsR} ) \leq \frac{9t}{N^{1/8}}. \label{eq:TD-phi-psi}
    \end{align}
\end{lemma}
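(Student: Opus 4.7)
The plan is to reduce~\cref{lem:closeness-AWD-and-PhiVt} to the inner-product bound of~\cref{claim:overlap-twirled-W-Q-plain-V} via a standard Fuchs--van de Graaf style argument for sub-normalized states, combined with the observation that $\scQ$ acts unitarily on (and only on) registers that are eventually traced out.

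First, I would set $\ket*{\psi} \coloneqq \ket*{\Adv_t^{W, \frakD}}_{\gsA \gsB \gsL \gsR \gsC \gsD}$ and $\ket*{\phi} \coloneqq \scQ_{\gsL \gsR \gsC \gsD} \cdot \bigl(\ket*{\Adv_t^V}_{\gsA \gsB \gsL \gsR} \otimes \ket*{\init(\frakD)}_{\gsC \gsD}\bigr)$. By~\cref{fact:properties-of-Adv-t-states-norm}, both $\ket*{\Adv_t^{W,\frakD}}$ and $\ket*{\Adv_t^V}$ have Euclidean norm at most $1$. Since $\ket*{\init(\frakD)}$ is a unit vector and $\scQ$ is a controlled unitary (unitary on $\sL, \sR$ controlled on $\sC, \sD$, as per~\cref{def:controlled-CDQ}), we conclude $\|\ket*{\psi}\|_2, \|\ket*{\phi}\|_2 \leq 1$. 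Applying~\cref{claim:overlap-twirled-W-Q-plain-V} yields $\Re \langle \psi | \phi \rangle \geq 1 - 35 t^2 / N^{1/4}$, so
\begin{align}
\|\ket*{\psi} - \ket*{\phi}\|_2^2 \;=\; \|\ket*{\psi}\|_2^2 + \|\ket*{\phi}\|_2^2 - 2 \Re \langle \psi | \phi \rangle \;\leq\; 2 - 2\left(1 - \frac{35 t^2}{N^{1/4}}\right) \;=\; \frac{70 t^2}{N^{1/4}}.
\end{align}

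Next, I would upgrade the Euclidean bound to a trace-norm bound on the (sub-normalized) rank-one density matrices. Writing $\ketbra*{\psi} - \ketbra*{\phi} = \ketbra*{\psi}{\psi - \phi} + \ketbra*{\psi - \phi}{\phi}$ and applying the triangle inequality for $\|\cdot\|_1$ together with $\|\ketbra*{u}{v}\|_1 = \|\ket*{u}\|_2 \|\ket*{v}\|_2$ gives
\begin{align}
\|\ketbra*{\psi} - \ketbra*{\phi}\|_1 \;\leq\; \bigl(\|\ket*{\psi}\|_2 + \|\ket*{\phi}\|_2\bigr)\cdot \|\ket*{\psi} - \ket*{\phi}\|_2 \;\leq\; 2 \sqrt{70}\, t / N^{1/8}.
\end{align}

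Finally, I would pass to reduced states on $(\sA, \sB)$ using contractivity of the partial trace. The key simplification is that $\scQ$ acts only on $\sL, \sR$ (controlled on $\sC, \sD$), all of which are traced out in $\Tr_{-\sA \sB}$. After tracing out $\sC, \sD$, the result is an average over $C, D$ of $Q[C,D]\, \ketbra*{\Adv_t^V}\, Q[C,D]^\dagger$, and because $Q[C,D]$ acts unitarily on the remaining registers $\sL, \sR$ that are also traced out, we obtain
\begin{align}
\Tr_{-\sA \sB}(\ketbra*{\phi}) \;=\; \Tr_{\sL \sR}(\ketbra*{\Adv_t^V}_{\gsA \gsB \gsL \gsR}).
\end{align}
Combining with contractivity of the partial trace then yields $\TD\bigl(\Tr_{-\sA \sB}(\ketbra*{\Adv_t^{W,\frakD}}),\, \Tr_{-\sA \sB}(\ketbra*{\Adv_t^V})\bigr) \leq \tfrac{1}{2}\|\ketbra*{\psi} - \ketbra*{\phi}\|_1 \leq \sqrt{70}\, t / N^{1/8} \leq 9 t / N^{1/8}$. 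The hard work has already been done in~\cref{claim:overlap-twirled-W-Q-plain-V} (and indirectly in the two-sided invariance and twirling lemmas it invokes); the only real subtlety at this step is being careful that $\ket*{\psi}, \ket*{\phi}$ are sub-normalized rather than unit vectors, which is why the pure-state fidelity-to-trace-distance inequality must be applied in its sub-normalized form.
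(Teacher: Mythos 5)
Your proposal is correct and follows essentially the same route as the paper: both reduce to the inner-product bound of~\cref{claim:overlap-twirled-W-Q-plain-V}, convert to a Euclidean-norm bound of $70t^2/N^{1/4}$, upgrade to trace distance (you prove the sub-normalized pure-state inequality $\tfrac12\|\ketbra*{\psi}-\ketbra*{\phi}\|_1 \le \|\ket*{\psi}-\ket*{\phi}\|_2$ by hand where the paper invokes it as a known fact), and then use that $\scQ$ is a unitary acting only on the traced-out registers $\sL,\sR,\sC,\sD$ together with contractivity of the partial trace. No gaps.
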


\begin{proof}
Using the fact that $\ket*{\Adv^{W, \mathfrak{D}}_t}_{\gsA \gsB \gsL \gsR \gsC \gsD}$ and $\ket*{\Adv_t^V}_{\gsA \gsB \gsL \gsR}$ are subnormalized states from \cref{fact:properties-of-Adv-t-states-norm} and that $\TD( \ketbra*{u}, \ketbra*{v} ) \leq \norm{ \ket*{u} - \ket*{v} }_2$ for subnormalized states $\ket*{u}, \ket*{v}$, we have
\begin{align}
    & \TD\left( \ketbra*{\Adv^{W, \mathfrak{D}}_t}_{\gsA \gsB \gsL \gsR \gsC \gsD}, \scQ_{\gsL \gsR \gsC \gsD} \cdot \Big(\ketbra*{\Adv_t^V}_{\gsA \gsB \gsL \gsR} \otimes \ketbra*{\init(\frakD)}_{\gsC \gsD}\Big) \cdot \scQ_{\gsL \gsR \gsC \gsD}^\dagger \right)^2 \\
    &\leq \norm{\ket*{\Adv^{W, \mathfrak{D}}_t}_{\gsA \gsB \gsL \gsR} - \scQ_{\gsL \gsR \gsC \gsD} \cdot \Big(\ket*{\Adv_t^V}_{\gsA \gsB \gsL \gsR} \otimes \ket*{\init(\frakD)}_{\gsC \gsD}\Big) }_2^2\\
    &= \braket*{\Adv^{W, \mathfrak{D}}_t} + \braket*{\Adv^V_t} - 2 \Re\left[ \bra*{\Adv^{W, \mathfrak{D}}_t}_{\gsA \gsB \gsL \gsR \gsC \gsD} \cdot \scQ_{\gsL \gsR \gsC \gsD} \cdot \Big(\ket*{\Adv_t^V}_{\gsA \gsB \gsL \gsR} \otimes \ket*{\init(\frakD)}_{\gsC \gsD}\Big) \right]\\
    &\leq 2 - 2 \cdot (1 - \frac{35 t^2}{N^{1/4}}) = \frac{70 t^2}{N^{1/4}}. \tag{Using \cref{claim:overlap-twirled-W-Q-plain-V}}
\end{align}
Therefore, using the fact that $\mathsf{cQ}_{\gsL \gsR \gsC \gsD}$ acts only on $\sL, \sR, \sC, \sD$ and $\mathsf{cQ}_{\gsL \gsR \gsC \gsD}$ is a unitary, we obtain
\begin{align}
    &\TD( \Tr_{- \sA \sB} \ketbra*{\Adv^{W, \mathfrak{D}}_t}_{\gsA \gsB \gsL \gsR \gsC \gsD}, \Tr_{- \sA \sB} \ketbra*{\Adv^{V}_t}_{\gsA \gsB \gsL \gsR} )\\
    &= \TD\Big( \Tr_{-\sA \sB} \ketbra*{\Adv^{W, \mathfrak{D}}_t}_{\gsA \gsB \gsL \gsR \gsC \gsD},\nonumber \\
    &\quad \quad \quad \quad \quad \Tr_{-\sA \sB} \left[ \scQ_{\gsL \gsR \gsC \gsD} \cdot \Big(\ketbra*{\Adv_t^V}_{\gsA \gsB \gsL \gsR} \otimes \ketbra*{\init(\frakD)}_{\gsC \gsD}\Big) \cdot \scQ_{\gsL \gsR \gsC \gsD}^\dagger \right] \Big)\\
    &\leq \TD \left( \ketbra*{\Adv^{W, \mathfrak{D}}_t}_{\gsA \gsB \gsL \gsR \gsC \gsD}, \scQ_{\gsL \gsR \gsC \gsD} \cdot \Big(\ketbra*{\Adv_t^V}_{\gsA \gsB \gsL \gsR} \otimes \ketbra*{\init(\frakD)}_{\gsC \gsD}\Big) \cdot \scQ_{\gsL \gsR \gsC \gsD}^\dagger \right)\\
    &\leq \sqrt{\frac{70 t^2}{N^{1/4}}} \leq \frac{9t}{N^{1/8}}.
\end{align}
This completes the proof.
\end{proof}

\subsection{Twirled $W$ and twirled $\spfo$ are indistinguishable}

Let $\ket*{+_{N!}}_{\gsP}$ and $\ket*{+_{3^N}}_{\gsF}$ denote the uniform superposition over all permutations and functions, respectively. 
We define the follow state obtained by querying twirled $\spfo$.

\begin{definition}[Twirled $\spfo$ purification]
    Let
    \begin{align}
        \ket*{\Adv_{0}^{\spfo,\frakD}}_{\gsA \gsB \gsP \gsF \gsC \gsD} \coloneqq \ket*{0^n}_{\gsA} \ket*{0^m}_{\gsB} \ket*{+_{N!}}_{\gsP} \ket*{+_{3^N}}_{\gsF} \ket*{\mathsf{init}(\mathfrak{D})}_{\gsC \gsD},
    \end{align}
    For $1\leq i \leq t$, define
    \begin{align}
        \ket*{\Adv_i^{\spfo,\frakD}} \coloneqq \Big( (1-b_i) \cdot (\scD \cdot \spfo \cdot \scC) + b_i \cdot (\scD \cdot \spfo \cdot \scC)^\dagger \Big) \cdot A_i \cdot \ket*{\Adv^{\spfo,\frakD}_{i-1}}.
    \end{align}
\end{definition}

To connect twirled $W$ and twirled $\spfo$, we need to define the following projections.

\begin{definition}
Define the projectors
\begin{align}
    \widetilde{\Pi}^{\calD(W)} &\coloneqq \Compress^\dagger \cdot \Pi^{\calD(W)} \cdot \Compress,\\
    \widetilde{\Pi}^{\calI(W)} &\coloneqq \Compress^\dagger \cdot \Pi^{\calI(W)} \cdot \Compress.
\end{align}
\end{definition}

We define the following state obtained by querying twirled $\spfo$, but depending on whether forward or inverse query (determined by $b_i$) is made, we will add a projector.

\begin{definition}[Twirled projected $\spfo$ purification]
    Let $\ket*{\Adv_0^{\widetilde{\spfo},\frakD}} \coloneqq \ket*{\Adv_0^{\spfo,\frakD}}$. For $1 \leq i \leq t$, define
    \begin{align}
        \ket*{\Adv_i^{\widetilde{\spfo},\frakD}} \coloneqq \Big( (1-b_i) \cdot (\scD \cdot \spfo \cdot \widetilde{\Pi}^{\calD(W)} \cdot \scC) + b_i \cdot (\scC^\dagger \cdot \spfo^\dagger \cdot \widetilde{\Pi}^{\calI(W)} \cdot \scD^\dagger) \Big) \cdot A_i \cdot \ket*{\Adv^{\widetilde{\spfo},\frakD}_{i-1}}.
    \end{align}
\end{definition}

\begin{claim}
\label{claim:compress-main-proof}
    For all integers $0 \leq t \leq N$,
    \begin{align}
        \ket*{\Adv_t^{W,\frakD}}_{\gsA \gsB \gsL \gsR \gsC \gsD} = \Compress \cdot \ket*{\Adv_t^{\widetilde{\spfo},\frakD}}_{\gsA \gsB \gsP \gsF \gsC \gsD}.
    \end{align}
\end{claim}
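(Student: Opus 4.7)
The plan is to proceed by induction on $t$, with the base case following from the observation that $\ket*{+_{N!}}_{\gsP} \ket*{+_{3^N}}_{\gsF} = \ket*{\pf_{\varnothing,\varnothing}}_{\gsP \gsF}$ and the fact that $\Compress$ (as defined in~\cref{part:strong}) satisfies $\Compress \cdot \ket*{\pf_{\varnothing,\varnothing}}_{\gsP \gsF} = \ket*{\varnothing}_{\gsL} \ket*{\varnothing}_{\gsR}$ since $(\varnothing,\varnothing) \in \calR^{2,\dist}$. All other registers agree on the nose in the base case, so $\Compress \cdot \ket*{\Adv_0^{\widetilde{\spfo},\frakD}} = \ket*{\Adv_0^{W,\frakD}}$.

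For the inductive step, assume $\ket*{\Adv_t^{W,\frakD}} = \Compress \cdot \ket*{\Adv_t^{\widetilde{\spfo},\frakD}}$. The key technical identities will be
\begin{align}
    W \cdot \Compress &= \Compress \cdot \spfo \cdot \widetilde{\Pi}^{\calD(W)}, \\
    W^\dagger \cdot \Compress &= \Compress \cdot \spfo^\dagger \cdot \widetilde{\Pi}^{\calI(W)}.
\end{align}
Both identities follow by plugging $W = \Compress \cdot \spfo \cdot \Compress^\dagger \cdot \Pi^{\calD(W)}$ and $W^\dagger = \Compress \cdot \spfo^\dagger \cdot \Compress^\dagger \cdot \Pi^{\calI(W)}$ from~\cref{claim:relate-W-and-spfo} into the left-hand sides, and then using the definitions $\widetilde{\Pi}^{\calD(W)} = \Compress^\dagger \cdot \Pi^{\calD(W)} \cdot \Compress$ and $\widetilde{\Pi}^{\calI(W)} = \Compress^\dagger \cdot \Pi^{\calI(W)} \cdot \Compress$ to rewrite the right-hand sides.

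With these identities in hand, the inductive step is mechanical. Consider the case $b_{t+1} = 0$ (the $b_{t+1} = 1$ case is symmetric, using the second identity). By definition,
\begin{align}
    \Compress \cdot \ket*{\Adv_{t+1}^{\widetilde{\spfo},\frakD}} = \Compress \cdot \scD \cdot \spfo \cdot \widetilde{\Pi}^{\calD(W)} \cdot \scC \cdot A_{t+1} \cdot \ket*{\Adv_t^{\widetilde{\spfo},\frakD}}.
\end{align}
Since $\scC, \scD$ act on $(\sA,\sC,\sD)$ and $A_{t+1}$ acts on $(\sA,\sB)$, all three commute with $\Compress$ (which acts only on $(\sP,\sF,\sL,\sR)$). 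Commuting $\Compress$ past $\scD$ and applying the first identity yields
\begin{align}
    \Compress \cdot \ket*{\Adv_{t+1}^{\widetilde{\spfo},\frakD}} = \scD \cdot W \cdot \Compress \cdot \scC \cdot A_{t+1} \cdot \ket*{\Adv_t^{\widetilde{\spfo},\frakD}} = \scD \cdot W \cdot \scC \cdot A_{t+1} \cdot \Compress \cdot \ket*{\Adv_t^{\widetilde{\spfo},\frakD}},
\end{align}
and by the inductive hypothesis this equals $\scD \cdot W \cdot \scC \cdot A_{t+1} \cdot \ket*{\Adv_t^{W,\frakD}} = \ket*{\Adv_{t+1}^{W,\frakD}}$, as desired.

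I do not expect any real obstacle: the whole argument reduces to the two operator identities above, which are immediate from~\cref{claim:relate-W-and-spfo} plus the definitions of $\widetilde{\Pi}^{\calD(W)}$ and $\widetilde{\Pi}^{\calI(W)}$. The only mild subtlety is book-keeping the registers to make sure $\Compress$ genuinely commutes with $\scC$, $\scD$, and $A_{t+1}$ (which it does, since they act on disjoint registers from $(\sP,\sF,\sL,\sR)$). The role of the projectors $\widetilde{\Pi}^{\calD(W)}$ and $\widetilde{\Pi}^{\calI(W)}$ inserted into the definition of $\ket*{\Adv_t^{\widetilde{\spfo},\frakD}}$ is precisely to ensure the correspondence with the partial-isometry nature of $W$; without them the $\spfo$-side evolution would fail to match up after compression.
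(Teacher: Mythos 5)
Your proposal is correct and follows essentially the same route as the paper: induction on $t$, with the base case $\Compress\cdot\ket*{+_{N!}}_{\gsP}\ket*{+_{3^N}}_{\gsF}=\ket*{\varnothing}_{\gsL}\ket*{\varnothing}_{\gsR}$ and an inductive step driven by \cref{claim:relate-W-and-spfo}, the definitions of $\widetilde{\Pi}^{\calD(W)},\widetilde{\Pi}^{\calI(W)}$, and the fact that $\Compress$ acts on registers disjoint from those of $\scC,\scD,A_{t+1}$. The only (harmless) difference is directional: you push $\Compress$ through $\ket*{\Adv_{t+1}^{\widetilde{\spfo},\frakD}}$ via the identities $W\cdot\Compress=\Compress\cdot\spfo\cdot\widetilde{\Pi}^{\calD(W)}$ and $W^\dagger\cdot\Compress=\Compress\cdot\spfo^\dagger\cdot\widetilde{\Pi}^{\calI(W)}$, whereas the paper expands $\ket*{\Adv_{t+1}^{W,\frakD}}$ and inserts $\Compress(\cdots)\Compress^\dagger$, which amounts to the same computation.
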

\begin{proof}
    We prove this using induction. The base case $t = 0$ follows from the fact that
    \begin{equation}
        \Compress \cdot \ket*{+_{N!}}_{\gsP} \ket*{+_{3^N}}_{\gsF} = \ket*{\varnothing}_{\gsL} \ket*{\varnothing}_{\gsR}.
    \end{equation}
    If $\ket*{\Adv_t^{W,\frakD}}_{\gsA \gsB \gsL \gsR \gsC \gsD} = \Compress \cdot \ket*{\Adv_t^{\widetilde{\spfo},\frakD}}_{\gsA \gsB \gsP \gsF \gsC \gsD}$ for $t > 0$, then we have
    \begin{align}
        &\ket*{\Adv_{t+1}^{W,\frakD}}_{\gsA \gsB \gsL \gsR \gsC \gsD} \\
        &= \Big( (1-b_i) \cdot (\scD \cdot W \cdot \scC) + b_i \cdot (\scD \cdot W \cdot \scC)^\dagger \Big) \cdot A_i \cdot \ket*{\Adv^{W,\frakD}_{t}}_{\gsA \gsB \gsL \gsR \gsC \gsD} \\
        &= \Big( (1-b_i) \cdot (\scD \cdot \Compress \cdot \spfo \cdot \Compress^\dagger \cdot \Pi^{\calD(W)} \cdot \scC) \nonumber \\
        &\,\,\,+ b_i \cdot (\scC^\dagger \cdot \Compress \cdot \spfo^\dagger \cdot \Compress^\dagger \cdot \Pi^{\calI(W)} \cdot \scD^\dagger) \Big) \cdot A_i \cdot \ket*{\Adv^{W,\frakD}_{t}}_{\gsA \gsB \gsL \gsR \gsC \gsD} \tag{Using \cref{claim:relate-W-and-spfo}}\\
        &= \Compress \cdot \Big( (1-b_i) \cdot (\scD \cdot \spfo \cdot \widetilde{\Pi}^{\calD(W)} \cdot \scC) \nonumber \\
        &\quad\,\,\,+ b_i \cdot \scC^\dagger \cdot\spfo^\dagger \cdot \widetilde{\Pi}^{\calI(W)} \cdot \scD^\dagger \Big) \cdot A_i \cdot \Compress^\dagger \cdot \ket*{\Adv^{W,\frakD}_{t}}_{\gsA \gsB \gsL \gsR \gsC \gsD}\\
        & = \Compress \cdot \Big( (1-b_i) \cdot (\scD \cdot \spfo \cdot \widetilde{\Pi}^{\calD(W)} \cdot \scC) \nonumber \\
        &\quad \,\,\, + b_i \cdot \scC^\dagger \cdot\spfo^\dagger \cdot \widetilde{\Pi}^{\calI(W)} \cdot \scD^\dagger \Big) \cdot A_i \cdot \ket*{\Adv^{\widetilde{\spfo},\frakD}_{t}}_{\gsA \gsB \gsP \gsF \gsC \gsD} \tag{inductive hypothesis} \\
        &= \Compress \cdot \ket*{\Adv^{\widetilde{\spfo},\frakD}_{t+1}}_{\gsA \gsB \gsP \gsF \gsC \gsD}.
    \end{align}
    This concludes the proof.
\end{proof}

\begin{lemma}[Norm bound] \label{lem:norm-close-TD}
For any $0 \leq t <N$ and any unitary $2$-design $\frakD$, we have 
\begin{align}
    1 \geq \braket*{\Adv^{W, \mathfrak{D}}_t}_{\gsA \gsB \gsL \gsR \gsC \gsD} \geq 1 - \frac{70t^2}{N^{1/4}}.
\end{align}
\end{lemma}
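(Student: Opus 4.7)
} The plan is to get the upper bound directly from \cref{fact:properties-of-Adv-t-states-norm}, which already asserts that $\ket*{\Adv^{W,\frakD}_t}$ has norm at most $1$, and to get the lower bound by combining \cref{claim:overlap-twirled-W-Q-plain-V} with Cauchy--Schwarz. Specifically, the inner-product lower bound from \cref{claim:overlap-twirled-W-Q-plain-V} is of the form $\Re[\langle \Adv^{W,\frakD}_t | \scQ | \Phi \rangle] \geq 1 - 35t^2/N^{1/4}$, where $\ket*{\Phi} \coloneqq \ket*{\Adv^V_t}_{\gsA \gsB \gsL \gsR} \ket*{\init(\frakD)}_{\gsC \gsD}$. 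Since $\scQ$ is a unitary (it is defined via $Q[C,D]$ which is a tensor product of unitaries, controlled on $\sC,\sD$), $\ket*{\init(\frakD)}$ is a unit vector, and $\ket*{\Adv^V_t}$ is subnormalized by \cref{fact:properties-of-Adv-t-states-norm}, we have $\norm{\scQ \ket*{\Phi}}_2 \leq 1$.

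The main step is then a single line of Cauchy--Schwarz:
\begin{align}
    1 - \frac{35t^2}{N^{1/4}} \;\leq\; \Re\bigl[\bra*{\Adv^{W,\frakD}_t} \cdot \scQ \cdot \ket*{\Phi}\bigr] \;\leq\; \bigl|\bra*{\Adv^{W,\frakD}_t} \cdot \scQ \cdot \ket*{\Phi}\bigr| \;\leq\; \norm{\ket*{\Adv^{W,\frakD}_t}}_2 \cdot \norm{\scQ \ket*{\Phi}}_2 \;\leq\; \norm{\ket*{\Adv^{W,\frakD}_t}}_2,
\end{align}
so $\norm{\ket*{\Adv^{W,\frakD}_t}}_2 \geq 1 - 35t^2/N^{1/4}$. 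Squaring and using $(1-x)^2 \geq 1 - 2x$ for $x \geq 0$ gives $\braket{\Adv^{W,\frakD}_t} \geq 1 - 70t^2/N^{1/4}$, which is the desired lower bound.

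There is no real obstacle here: the hard work has already been done in establishing \cref{claim:overlap-twirled-W-Q-plain-V}, and this lemma is essentially a restatement of that bound as a norm estimate rather than an inner-product estimate. The only small thing to verify is that \cref{claim:overlap-twirled-W-Q-plain-V} is applicable for all $0 \leq t < N$ (which is immediate by inspecting its proof, since the $V_{\leq t}$ operator invoked via \cref{claim:two-sided-invariance} is valid for $0 \leq t \leq N-1$) and that the right-hand side $1 - 70t^2/N^{1/4}$ is allowed to be vacuous (i.e., negative) when $t$ is large relative to $N^{1/8}$, in which case the statement reduces to the upper bound and is trivially true.
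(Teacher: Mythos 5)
Your proof is correct and takes essentially the same approach as the paper's: both get the lower bound from \cref{claim:overlap-twirled-W-Q-plain-V} via Cauchy--Schwarz, using that $\ket*{\Adv^V_t}$ is subnormalized and $\scQ$ is unitary, and then square with $(1-x)^2 \ge 1-2x$. The only cosmetic difference is that you isolate the intermediate bound $\norm{\ket*{\Adv^{W,\frakD}_t}}_2 \ge 1 - 35t^2/N^{1/4}$ before squaring, whereas the paper keeps the quantity $\braket*{\Adv^{W,\frakD}_t}$ on the left and multiplies by $\braket*{\Adv^V_t}\le 1$ to stay in squared form throughout; your explicit remark that the bound is vacuous when $35t^2/N^{1/4}>1$ is a sound way to handle the squaring in that regime.
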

\begin{proof}
We can utilize the following bounds,
\begin{align}
    &\braket*{\Adv^{W, \mathfrak{D}}_t}_{\gsA \gsB \gsL \gsR \gsC \gsD}\\ 
    &\geq \braket*{\Adv^{W, \mathfrak{D}}_t}_{\gsA \gsB \gsL \gsR \gsC \gsD} \cdot \braket*{\Adv_t^V}_{\gsA \gsB \gsL \gsR} \tag{$ \braket*{\Adv_t^V}_{\gsA \gsB \gsL \gsR} \leq 1$ from \cref{fact:properties-of-Adv-t-states-norm}} \\
    &= \braket*{\Adv^{W, \mathfrak{D}}_t}_{\gsA \gsB \gsL \gsR \gsC \gsD} \cdot \Big(\bra*{\Adv_t^V}_{\gsA \gsB \gsL \gsR} \bra*{\init(\frakD)}_{\gsC \gsD}\Big) \cdot \scQ^\dagger \cdot \scQ_{\gsL \gsR \gsC \gsD} \cdot \Big(\ket*{\Adv_t^V}_{\gsA \gsB \gsL \gsR} \ket*{\init(\frakD)}_{\gsC \gsD}\Big) \\
    &\geq \left| \bra*{\Adv^{W, \mathfrak{D}}_t}_{\gsA \gsB \gsL \gsR \gsC \gsD} \cdot \scQ_{\gsL \gsR \gsC \gsD} \cdot \Big(\ket*{\Adv_t^V}_{\gsA \gsB \gsL \gsR} \ket*{\init(\frakD)}_{\gsC \gsD}\Big) \right|^2 \tag{Cauchy-Schwarz inequality}\\
    &\geq \Re\left[ \bra*{\Adv^{W, \mathfrak{D}}_t}_{\gsA \gsB \gsL \gsR \gsC \gsD} \cdot \scQ_{\gsL \gsR \gsC \gsD} \cdot \Big(\ket*{\Adv_t^V}_{\gsA \gsB \gsL \gsR} \ket*{\init(\frakD)}_{\gsC \gsD}\Big) \right]^2\\
    &\geq (1- \frac{35t^2}{N^{1/4}})^2 \geq 1 - \frac{70t^2}{N^{1/4}} \tag{Using \cref{claim:overlap-twirled-W-Q-plain-V}},
\end{align}
which completes the proof.
\end{proof}

\begin{lemma} \label{lem:pfo-W-closeness}
For all integers $0 \leq t \leq N$,
    \begin{equation}
        \TD\left( \Tr_{-\sA \sB} \ketbra*{\Adv_t^{\spfo,\frakD}}_{\gsA \gsB \gsP \gsF \gsC \gsD}, \Tr_{-\sA \sB} \ketbra*{\Adv_t^{W,\frakD}}_{\gsA \gsB \gsL \gsR \gsC \gsD} \right) \leq \frac{9t^2}{N^{1/8}}.
    \end{equation}
\end{lemma}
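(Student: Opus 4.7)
I would chain through the intermediate ``projected'' state $\ket*{\Adv_t^{\widetilde{\spfo},\frakD}}$, which is obtained from $\ket*{\Adv_t^{\spfo,\frakD}}$ by inserting the projector $\widetilde{\Pi}^{\calD(W)}$ (for forward queries) or $\widetilde{\Pi}^{\calI(W)}$ (for inverse queries) before each use of $\spfo$ or $\spfo^\dagger$. The bound will follow from the triangle inequality combining two facts: (i) the reduced densities of $\ket*{\Adv_t^{\widetilde{\spfo},\frakD}}$ and $\ket*{\Adv_t^{W,\frakD}}$ on $\sA\sB$ are \emph{identical}, and (ii) $\ket*{\Adv_t^{\spfo,\frakD}}$ and $\ket*{\Adv_t^{\widetilde{\spfo},\frakD}}$ are $\ell_2$-close, with distance $O(t^2/N^{1/8})$.

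For (i), \cref{claim:compress-main-proof} already provides the key identity $\ket*{\Adv_t^{W,\frakD}} = \Compress \cdot \ket*{\Adv_t^{\widetilde{\spfo},\frakD}}$. Since $\widetilde{\Pi}^{\calD(W)} = \Compress^\dagger \Pi^{\calD(W)} \Compress$ and $\widetilde{\Pi}^{\calI(W)} = \Compress^\dagger \Pi^{\calI(W)} \Compress$, both projectors take values in the image of $\Compress^\dagger$, which coincides with the $(\sP,\sF)$-support of the partial isometry $\Compress$. Moreover, \cref{claim:ternary-pfo-action} shows that $\spfo$ and $\spfo^\dagger$ map $\pf$-relation states to superpositions of $\pf$-relation states. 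A short induction on the number of queries then yields that $\ket*{\Adv_t^{\widetilde{\spfo},\frakD}}$ lives entirely within the $(\sP,\sF)$-support of $\Compress$. On this subspace $\Compress$ is an isometry $\sP,\sF \to \sL,\sR$ acting trivially on $\sA,\sB,\sC,\sD$, so tracing out every register except $\sA,\sB$ on both sides of $\ket*{\Adv_t^{W,\frakD}} = \Compress \cdot \ket*{\Adv_t^{\widetilde{\spfo},\frakD}}$ gives the same operator.

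For (ii), I would rewrite each projector insertion as a projection sandwiched between the adversary's unitary steps. Concretely, for a forward query, $\scD \cdot \spfo \cdot \widetilde{\Pi}^{\calD(W)} \cdot \scC = (\scD \cdot \spfo \cdot \scC) \cdot \big( \scC^\dagger \widetilde{\Pi}^{\calD(W)} \scC \big)$, and the conjugate $\scC^\dagger \widetilde{\Pi}^{\calD(W)} \scC$ is still a projector; an analogous identity holds for inverse queries. Applying the sequential gentle measurement lemma (\cref{lem:seq-gentleM-pure}) to this decomposition gives
\[
\norm{\ket*{\Adv_t^{\spfo,\frakD}} - \ket*{\Adv_t^{\widetilde{\spfo},\frakD}}}_2 \leq t \sqrt{ 1 - \norm{\ket*{\Adv_t^{\widetilde{\spfo},\frakD}}}_2^2 }.
\]
Because $\Compress$ is a partial isometry, $\norm{\ket*{\Adv_t^{\widetilde{\spfo},\frakD}}}_2 \geq \norm{\Compress \cdot \ket*{\Adv_t^{\widetilde{\spfo},\frakD}}}_2 = \norm{\ket*{\Adv_t^{W,\frakD}}}_2$, and \cref{lem:norm-close-TD} gives $\norm{\ket*{\Adv_t^{W,\frakD}}}_2^2 \geq 1 - 70 t^2 / N^{1/4}$. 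Plugging in yields $\norm{\ket*{\Adv_t^{\spfo,\frakD}} - \ket*{\Adv_t^{\widetilde{\spfo},\frakD}}}_2 \leq t\sqrt{70 t^2/N^{1/4}} = \sqrt{70}\,t^2/N^{1/8} \leq 9\, t^2/N^{1/8}$. Combining (i) and (ii) via monotonicity of trace distance under partial trace and the inequality $\TD(\ketbra*{u},\ketbra*{v}) \leq \norm{\ket*{u}-\ket*{v}}_2$ for subnormalized vectors (as used in the proof of \cref{lem:closeness-AWD-and-PhiVt}) finishes the argument.

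\textbf{Main obstacle.} The only non-mechanical step is (i): carefully checking that the alternation of $\widetilde{\Pi}$ projectors and $\spfo$/$\spfo^\dagger$ applications genuinely keeps the state in the $\pf$-relation-state support, so that $\Compress$ acts as a true isometry on $\ket*{\Adv_t^{\widetilde{\spfo},\frakD}}$ and the two reductions to $\sA,\sB$ coincide exactly. Everything else in the argument is a direct application of prior lemmas (sequential gentle measurement, the partial-isometry norm bound, and the norm lower bound from \cref{lem:norm-close-TD}).
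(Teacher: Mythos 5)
Your proposal is correct and takes essentially the same route as the paper's proof: chain through the intermediate state $\ket*{\Adv_t^{\widetilde{\spfo},\frakD}}$, invoke \cref{claim:compress-main-proof} together with the partial-isometry structure of $\Compress$ to equate the reduced states on $\sA\sB$, then apply \cref{lem:seq-gentleM-pure} and \cref{lem:norm-close-TD} to bound the $\ell_2$ distance between $\ket*{\Adv_t^{\spfo,\frakD}}$ and $\ket*{\Adv_t^{\widetilde{\spfo},\frakD}}$. The only (immaterial) deviation is that you lower-bound $\norm{\ket*{\Adv_t^{\widetilde{\spfo},\frakD}}}_2 \geq \norm{\ket*{\Adv_t^{W,\frakD}}}_2$ rather than asserting equality as the paper does; the ``main obstacle'' you flag (that $\ket*{\Adv_t^{\widetilde{\spfo},\frakD}}$ lies in the support of $\Compress$) is indeed the one implicit step the paper glosses over, though your sketched justification via \cref{claim:ternary-pfo-action} only directly covers the $\calD(W^L)$ and $\calD(W^R)$ pieces of $\Pi^{\calD(W)}$ and $\Pi^{\calI(W)}$---for the $\calI(W^R)$ and $\calI(W^L)$ pieces one should instead argue via \cref{claim:relate-W-and-spfo} that $\spfo$ (resp.\ $\spfo^\dagger$) maps them isometrically into the range of $\Compress^\dagger$.
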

\begin{proof}
    Because $\Compress$ acts on registers $\sP, \sF$ and maps to $\sL, \sR$, we have
    \begin{equation}
        \Tr_{-\sA \sB} \ketbra*{\Adv_t^{W,\frakD}}_{\gsA \gsB \gsL \gsR \gsC \gsD} = \Tr_{-\sA \sB} \ketbra*{\Adv_t^{\widetilde{\spfo},\frakD}}_{\gsA \gsB \gsP \gsF \gsC \gsD}.
    \end{equation}
    Because $\spfo$ is an isometry, $\ket*{\Adv_t^{\spfo,\frakD}}_{\gsA \gsB \gsP \gsF \gsC \gsD}$ has norm $1$.
    Furthermore, from \cref{claim:compress-main-proof}, because $\Compress$ is an isometry, we have
    \begin{equation}
        \braket*{\Adv_t^{\widetilde{\spfo},\frakD}}_{\gsA \gsB \gsP \gsF \gsC \gsD} = \braket{\Adv^{W, \mathfrak{D}}_t}_{\gsA \gsB \gsL \gsR \gsC \gsD} \leq 1.
    \end{equation}
    Together, we can obtain the following,
    \begin{align}
        &\TD\left( \Tr_{-\sA \sB} \ketbra*{\Adv_t^{\spfo,\frakD}}_{\gsA \gsB \gsP \gsF \gsC \gsD}, \Tr_{-\sA \sB} \ketbra*{\Adv_t^{W,\frakD}}_{\gsA \gsB \gsL \gsR \gsC \gsD} \right)\\
        &= \TD\left( \Tr_{-\sA \sB} \ketbra*{\Adv_t^{\spfo,\frakD}}_{\gsA \gsB \gsP \gsF \gsC \gsD}, \Tr_{-\sA \sB} \ketbra*{\Adv_t^{\widetilde{\spfo},\frakD}}_{\gsA \gsB \gsP \gsF \gsC \gsD} \right)\\
        &\leq \TD\left( \ketbra*{\Adv_t^{\spfo,\frakD}}_{\gsA \gsB \gsP \gsF \gsC \gsD}, \ketbra*{\Adv_t^{\widetilde{\spfo},\frakD}}_{\gsA \gsB \gsP \gsF \gsC \gsD} \right)\\
        &\leq \norm{\ket*{\Adv_t^{\spfo,\frakD}}_{\gsA \gsB \gsP \gsF \gsC \gsD} - \ket*{\Adv_t^{\widetilde{\spfo},\frakD}}_{\gsA \gsB \gsP \gsF \gsC \gsD}}_2 \tag{$\frac{1}{2}\norm{u u^\dagger - v v^\dagger}_{\mathrm{tr}} \leq \norm{u - v}_{2}$ if $\norm{u}_2, \norm{v}_2 \leq 1$}\\
        &\leq t \cdot \sqrt{1 - \braket{\Adv_t^{\widetilde{\spfo},\frakD}}_{\gsA \gsB \gsP \gsF \gsC \gsD}} \tag{\cref{lem:seq-gentleM-pure} on sequential gentle measurement}\\
        &= t \cdot \sqrt{1 - \braket{\Adv_t^{W,\frakD}}_{\gsA \gsB \gsL \gsR \gsC \gsD}} \tag{\cref{claim:compress-main-proof}}\\
        &\leq t \cdot \sqrt{\frac{70t^2}{N^{1/4}}} \leq \frac{9t^2}{N^{1/8}} \tag{\cref{lem:norm-close-TD}}.
    \end{align}
    This concludes the proof.
\end{proof}

\subsection{Proof of~\cref{lemma:pfd-cho-strong}}

From \cref{claim:purified-vs-standard-ternary-PFO}, we have
\begin{equation}
    \Tr_{\sP \sF \sC \sD} \ketbra*{\Adv_t^{\spfo,\frakD}}_{\gsA \gsB \gsP \gsF \gsC \gsD} = \E_{\calO \gets \mathsf{sPRU}(\frakD)} \ketbra*{\Adv_t^{\calO}}_{\gsA \gsB}.
\end{equation}
From \cref{lem:closeness-AWD-and-PhiVt}, we have
\begin{align}
    \TD\left( \Tr_{\sL \sR \sC \sD} \ketbra*{\Adv^{W, \mathfrak{D}}_t}_{\gsA \gsB \gsL \gsR \gsC \gsD}, \Tr_{\sL \sR} \ketbra*{\Adv^{V}_t}_{\gsA \gsB \gsL \gsR} \right) \leq \frac{9t}{N^{1/8}}.
\end{align}
From \cref{lem:pfo-W-closeness}, we have
\begin{equation}
    \TD\left( \Tr_{\sP \sF \sC \sD} \ketbra*{\Adv_t^{\spfo,\frakD}}_{\gsA \gsB \gsP \gsF \gsC \gsD}, \Tr_{\sL \sR \sC \sD} \ketbra*{\Adv_t^{W,\frakD}}_{\gsA \gsB \gsL \gsR \gsC \gsD} \right) \leq \frac{9t^2}{N^{1/8}}.
\end{equation}
By triangle inequality, we have
\begin{align}
    &\TD\left( \E_{\calO \gets \mathsf{sPRU}(\frakD)} \ketbra*{\Adv_t^{\calO}}_{\gsA \gsB}, \Tr_{\sL \sR} \ketbra*{\Adv_t^{V}}_{\gsA \gsB \gsL \gsR} \right)\\
    &= \TD\left( \Tr_{\sP \sF \sC \sD} \ketbra*{\Adv_t^{\spfo,\frakD}}_{\gsA \gsB \gsP \gsF \gsC \gsD}, \Tr_{\sL \sR} \ketbra*{\Adv_t^{V}}_{\gsA \gsB \gsL \gsR} \right)\\
    &\leq \TD\left( \Tr_{\sP \sF \sC \sD} \ketbra*{\Adv_t^{\spfo,\frakD}}_{\gsA \gsB \gsP \gsF \gsC \gsD}, \Tr_{\sL \sR \sC \sD} \ketbra*{\Adv_t^{W,\frakD}}_{\gsA \gsB \gsL \gsR \gsC \gsD} \right) \nonumber\\
    &+ \TD\left( \Tr_{\sL \sR \sC \sD} \ketbra*{\Adv^{W, \mathfrak{D}}_t}_{\gsA \gsB \gsL \gsR \gsC \gsD}, \Tr_{\sL \sR} \ketbra*{\Adv^{V}_t}_{\gsA \gsB \gsL \gsR} \right)\\
    & \leq \frac{9t(t+1)}{N^{1/8}}.
\end{align}
This completes the proof of \cref{lemma:pfd-cho-strong}.

\section{Proof of~\cref{claim:two-sided-invariance}}
\label{sec:symmetric-V-approx-invariance}

In this section, we prove~\cref{claim:two-sided-invariance}, which states that the symmetric path recording oracle $V$ is approximately unitary invariant. For convenience, we restate the lemma below:

\begin{lemma}[\cref{claim:two-sided-invariance}, restated] For any $0 \leq t < N$, and any pair of $n$-qubit unitaries $C,D$, we have
\begin{align}
        \norm{D_{\gsA} \cdot V_{\leq t} \cdot C_{\gsA} \otimes Q[C,D]_{\gsL \gsR}
        - Q[C,D]_{\gsL \gsR} \cdot V_{\leq t}}_{\opnorm} & \leq 16\sqrt{\frac{2t(t+1)}{N}},\\
        \norm{C_{\gsA}^\dagger \cdot (V^\dagger)_{\leq t} \cdot D_{\gsA}^\dagger \otimes Q[C,D]_{\gsL \gsR}
        - Q[C,D]_{\gsL \gsR} \cdot (V^\dagger)_{\leq t}}_{\opnorm} & \leq 16\sqrt{\frac{2t(t+1)}{N}},
    \end{align}        
\end{lemma}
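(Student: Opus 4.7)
The plan is to isolate the obstruction to two-sided unitary invariance, which arises from the distinctness restrictions in $V^L$ and $V^R$ (\cref{def:V-sym-PRO}). I would introduce ``ideal'' variants $V^L_{\mathrm{id}}$ and $V^R_{\mathrm{id}}$ that sum over the full set $[N]$, without the constraints $y \notin \Im(L \cup R)$ or $x \notin \Dom(L \cup R)$:
\begin{equation*}
V^L_{\mathrm{id}} \cdot \ket*{x}_{\gsA} \ket*{L}_{\gsL} \ket*{R}_{\gsR} \coloneqq \frac{1}{\sqrt{N}} \sum_{y \in [N]} \ket*{y}_{\gsA} \ket*{L \cup \{(x,y)\}}_{\gsL} \ket*{R}_{\gsR},
\end{equation*}
and analogously for $V^R_{\mathrm{id}}$. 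Let $V_{\mathrm{id}}$ denote the operator obtained by substituting these into \cref{def:symmetric-V}. The proof then reduces to two subgoals: (a) $V_{\mathrm{id}}$ satisfies the \emph{exact} identity $D_{\gsA} V_{\mathrm{id}} C_{\gsA} \otimes Q[C,D] = Q[C,D] V_{\mathrm{id}}$, and (b) $\|(V - V_{\mathrm{id}}) \Pi_{\leq t}\|_{\opnorm} = O\bigl(\sqrt{t(t+1)/N}\bigr)$.

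For (a), the invariance of $V^L_{\mathrm{id}}$ reduces to the standard identity $(M \otimes \Id) \ket*{\Phi^+} = (\Id \otimes M^T) \ket*{\Phi^+}$ for the maximally entangled state $\ket*{\Phi^+} = \frac{1}{\sqrt N} \sum_x \ket*{x,x}$. The operator $V^L_{\mathrm{id}}$ effectively creates a fresh EPR-like correlation between the output $\sA$ register and the second coordinate of a newly added slot in $\sL$, while copying the input $\sA$ value into the first coordinate of that slot. The $(C \otimes D^T)$ action on the new pair then matches $C$ on input $\sA$ (via the copy) and $D$ on output $\sA$ (via the EPR identity, which converts $D$ on one half to $D^T$ on the other). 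A parallel argument for $V^{R,\dagger}_{\mathrm{id}}$ uses the conjugate structure $(\overline{C} \otimes D^\dagger)$ on $\sR$. These component-level identities propagate to $V_{\mathrm{id}}$ via the sums, products, and adjoints appearing in \cref{def:symmetric-V}, each of which preserves the invariance.

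For (b), a direct calculation yields the per-basis-state bound
\begin{equation*}
\bigl\|(V^L - V^L_{\mathrm{id}}) \ket*{x}_{\gsA} \ket*{L}_{\gsL} \ket*{R}_{\gsR}\bigr\|_2^2 = 2 - 2\sqrt{1 - m/N} \leq 2m/N, \qquad m \coloneqq |\Im(L \cup R)| \leq t,
\end{equation*}
with a symmetric bound for $V^R$. To convert this to an operator-norm bound on $\Pi_{\leq t}$, I would expand $(V^L - V^L_{\mathrm{id}})^\dagger (V^L - V^L_{\mathrm{id}})$ and show that the cross-term $V^{L,\dagger} V^L_{\mathrm{id}}$ equals $\Id + O(\sqrt{t/N})$ in operator norm on $\Pi_{\leq t}$, exploiting the partial-isometry structure of $V^L$ and $V^L_{\mathrm{id}}$. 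Propagating these estimates through the four compositions in \cref{def:symmetric-V}, and tracking that $V$ carries $\Pi_{\leq t}$ into $\Pi_{\leq t+1}$ (which contributes the $t(t+1)$ factor), yields the claimed $O(\sqrt{t(t+1)/N})$ bound. The first inequality of the lemma then follows from the triangle inequality together with the unitarity of $C$, $D$, and $Q[C,D]$. The second inequality, for $(V^\dagger)_{\leq t}$, follows from the symmetric argument applied to $V^\dagger$, whose ideal counterpart satisfies the adjoint of the above invariance identity.

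The main technical obstacle is the operator-norm conversion in (b): because the relation-state basis is symmetric, the outputs of $V^L - V^L_{\mathrm{id}}$ on distinct input basis states need not be orthogonal (different $(x, L)$ can produce the same enlarged relation $L \cup \{(x,y)\}$), so a naive per-state summation loses the gain afforded by the partial-isometry property. The remedy is to exploit explicit cancellations between $V^L$ and $V^L_{\mathrm{id}}$, leveraging that both are partial isometries on $\Pi_{\leq t}$ whose inner product $V^{L,\dagger} V^L_{\mathrm{id}}$ is close to the identity in operator norm.
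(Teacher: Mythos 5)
Your proposal has a genuine gap at step (a). You define $V^L_{\mathrm{id}}$ to map $\ket*{L}$ to $\ket*{L \cup \{(x,y)\}}$ with coefficient $1$, but this operator is \emph{not} exactly unitary-invariant. In the symmetric subspace there is no distinguishable ``fresh slot'': the operation ``tensor on $\ket*{x,y}$ and symmetrize'' is the bosonic creation operator, and its action on normalized relation states carries an intrinsic weight, $\sqrt{\ell+1}\,\Pi^{\calR}_{\ell+1}(\ket*{x,y}\otimes\ket*{L}) = \sqrt{\num(L,(x,y))+1}\,\ket*{L\cup\{(x,y)\}}$. The operator that actually satisfies the exact $Q[C,D]$-covariance is the one with the $\sqrt{\num(L,(x,y))+1}$ weight --- this is precisely the paper's $E^L$ in \cref{def:ELER-definition}, and the proof of \cref{claim:E-invariance} goes through the alternative form in \cref{claim:alt-form-of-E}, which requires that weight. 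Your $V^L_{\mathrm{id}}$ differs from $E^L$ by a state-dependent rescaling on relations with repeated tuples, so the ``absorb $(C\otimes D^T)$ into the new pair via the EPR/transpose identity'' argument breaks exactly there: after summing over $x,y$, you do get $\ketbra*{y'}{x'}\otimes(C\otimes D^T)\ket*{x',y'}$ on the \emph{ordered} tensor factor, but the projection back to normalized relation states does not commute with this rewriting unless the $\sqrt{\num+1}$ weight is present. (Your intuition would be correct if the new slot were a separate distinguishable register; it ceases to hold after symmetrization.)

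A secondary issue: your remedy for the operator-norm conversion in (b) relies on $V^L_{\mathrm{id}}$ being a partial isometry, but it is not. Two distinct inputs $\ket*{x}\ket*{L}\ket*{R}$ and $\ket*{x'}\ket*{L'}\ket*{R}$ can be mapped by $V^L_{\mathrm{id}}$ to overlapping outputs when $L\cup\{(x,y)\}=L'\cup\{(x',y)\}$ for some $y$, so its image vectors on distinct basis inputs are not orthogonal. (The paper's $E^L$ is likewise not a partial isometry, but the paper never needs that; it proves $\lVert V^L_{\leq t} - E^L_{\leq t}\rVert_{\opnorm}$ is small directly by a careful Cauchy--Schwarz over the collision structure in \cref{claim:VL-EL-close}.) The fix is to use the paper's $E^L$ with the $\sqrt{\num+1}$ factor as the ideal operator in your plan, prove exact invariance for it via the creation-operator rewriting, and then carry out the closeness bound directly in operator norm rather than per basis state; you would also need the additional structure the paper uses in \cref{claim:VLproj-VRproj-approx-invariance} and \cref{lem:basic-ABA'B'-bound} to propagate the bound through the products and adjoints appearing in \cref{def:symmetric-V}.
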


To prove this lemma, we will define a pair of operators $E^L$ and $E^R$ that satisfy \emph{exact} unitary invariance. We will then prove that $E^L$ is close in operator norm to $V^L$, and that $E^R$ is close in operator norm to $E^R$. By combining these guarantees, we will show that $V^L$ and $V^R$ satisfy approximate unitary invariance, which we will use to prove that $V$ satisfies approximate unitary invariance.

\subsection{Defining $E^L$ and $E^R$}

\begin{definition} \label{def:ELER-definition}
Define the operator $E^L$ and $E^R$ that act on registers $\sA,\sL,\sR$ as follows:
\begin{align}
    E^L \coloneqq \frac{1}{\sqrt{N}} \sum_{x,y \in [N]} \ketbra*{y}{x}_{\gsA} \otimes \sum_{L \in \calR} \sqrt{\num(L,(x,y)) + 1} \cdot \ketbra*{L \cup \{(x,y)\}}{L}_{\gsL} \otimes \sum_{R \in \calR} \ketbra*{R}_{\gsR}. \label{def:EL-operator-num-form}\\
    E^R \coloneqq \frac{1}{\sqrt{N}} \sum_{x,y \in [N]} \ketbra*{x}{y}_{\gsA} \otimes \sum_{L \in \calR} \ketbra*{L}_{\gsL} \otimes \sum_{R \in \calR} \sqrt{\num(R,(x,y)) + 1} \cdot \ketbra*{R \cup \{(x,y)\}}{R}_{\gsR}. \label{def:ER-operator-num-form}
\end{align}
\end{definition}

We will show that $E^L$ and $E^R$ satisfies the following unitary invariance property. To state the property, recall that we define the operator $Q[C,D]$ as follows:

\begin{definition}[\cref{def:multi-rot-Q}, restated]
    For any pair of $n$-qubit unitaries $C,D$, define
    \begin{align}
        Q[C,D] &\coloneqq (C \otimes D^T)^{\otimes *}_{\gsL} \otimes (\overline{C} \otimes D^{\dagger})^{\otimes *}_{\gsR}.
    \end{align}
\end{definition}

\begin{claim}[Exact unitary invariance of $E^L$ and $E^R$]
\label{claim:E-invariance}
    For any pair of $n$ qubit unitaries $C,D$, we have
    \begin{align}
        D_{\gsA} \cdot E^L_{\gsA \gsL \gsR} \cdot C_{\gsA} &= Q[C,D]_{\gsL \gsR} \cdot E^L_{\gsA \gsL \gsR} \cdot Q[C,D]_{\gsL \gsR}^\dagger \label{eq:EL-invariance},\\
        C^\dagger_{\gsA} \cdot E^{R}_{\gsA \gsL \gsR} \cdot D^\dagger_{\gsA} &= Q[C,D]_{\gsL \gsR} \cdot E^{R}_{\gsA \gsL \gsR} \cdot Q[C,D]_{\gsL \gsR}^\dagger \label{eq:ER-invariance},
    \end{align}
\end{claim}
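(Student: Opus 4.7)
The plan is to recognize that $E^L$ and $E^R$ are essentially bosonic creation operators acting on the symmetric subspaces where relation states live, and then invoke the standard transformation law of second-quantized operators under a change of single-particle basis. Since, as noted in the preliminaries, the relation states $\{\ket*{R}\}_{R \in \calR_t}$ form an orthonormal basis for the symmetric subspace of $(\mathbb{C}^N \otimes \mathbb{C}^N)^{\otimes t}$, I would introduce creation operators
\begin{align}
a^\dagger_{\gsL}(x,y) \cdot \ket*{L}_{\gsL} &\coloneqq \sqrt{\num(L,(x,y))+1} \cdot \ket*{L \cup \{(x,y)\}}_{\gsL}, \\
a^\dagger_{\gsR}(x,y) \cdot \ket*{R}_{\gsR} &\coloneqq \sqrt{\num(R,(x,y))+1} \cdot \ket*{R \cup \{(x,y)\}}_{\gsR},
\end{align}
so that by direct comparison with \cref{def:EL-operator-num-form,def:ER-operator-num-form},
\begin{align}
E^L &= \tfrac{1}{\sqrt{N}} \sum_{x,y} \ketbra*{y}{x}_{\gsA} \otimes a^\dagger_{\gsL}(x,y) \otimes \Id_{\gsR}, \\
E^R &= \tfrac{1}{\sqrt{N}} \sum_{x,y} \ketbra*{x}{y}_{\gsA} \otimes \Id_{\gsL} \otimes a^\dagger_{\gsR}(x,y).
\end{align}
The factor $\sqrt{\num(\cdot)+1}$ is precisely the bosonic enhancement coefficient, so $a^\dagger_{\gsL}$ and $a^\dagger_{\gsR}$ are genuine creation operators on the corresponding Fock-like spaces.

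The key ingredient is the standard second-quantization identity $U^{\otimes *} \cdot a^\dagger(\phi) \cdot (U^{\otimes *})^\dagger = a^\dagger(U\phi)$, which holds for any single-particle unitary $U$ on $\mathbb{C}^N \otimes \mathbb{C}^N$ and any single-particle vector $\phi$. This follows by a short direct calculation on unsymmetrized tensor-product basis states, using the definition of the symmetric subspace. Applying it on register $\sL$ with $U = C \otimes D^T$ and $\phi = \ket*{x,y}$ yields
\[
(C \otimes D^T)^{\otimes *}_{\gsL} \cdot a^\dagger_{\gsL}(x,y) \cdot \bigl( (C \otimes D^T)^{\otimes *}_{\gsL} \bigr)^{\dagger} = \sum_{x',y'} C_{x',x} \, D_{y,y'} \, a^\dagger_{\gsL}(x',y').
\]
Since $E^L$ acts as $\Id$ on register $\sR$, and $Q[C,D]$ is a tensor product across $\sL$ and $\sR$, the $\sR$-part of the conjugation collapses to the identity, leaving
\[
Q[C,D] \cdot E^L \cdot Q[C,D]^\dagger = \tfrac{1}{\sqrt{N}} \sum_{x,y,x',y'} C_{x',x} \, D_{y,y'} \, \ketbra*{y}{x}_{\gsA} \otimes a^\dagger_{\gsL}(x',y') \otimes \Id_{\gsR}.
\]

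The final step is to expand $D_{\gsA} \cdot E^L \cdot C_{\gsA}$ directly, using $D \cdot \ketbra*{y}{x} \cdot C = \sum_{x',y'} D_{y',y} \, C_{x,x'} \, \ketbra*{y'}{x'}$ and relabeling the dummy summation indices $x \leftrightarrow x'$, $y \leftrightarrow y'$; this reproduces the displayed expression above, establishing \cref{eq:EL-invariance}. The proof of \cref{eq:ER-invariance} is entirely parallel: one applies the bosonic identity on register $\sR$ with $U = \overline{C} \otimes D^\dagger$, so that $\overline{C}_{x',x} = C^*_{x',x}$ and $(D^\dagger)_{y',y} = D^*_{y,y'}$, and expands $C^\dagger \cdot \ketbra*{x}{y} \cdot D^\dagger$ in the same way. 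There is no conceptual obstacle here; the only work is careful index bookkeeping. In some sense $Q[C,D]$ was defined precisely so that this invariance holds on the nose, and the proof is essentially a verification of that design.
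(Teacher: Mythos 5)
Your proposal is correct and is essentially the paper's own argument in second-quantization language: identifying the $\sqrt{\num(\cdot)+1}$ insertion maps as bosonic creation operators is exactly the content of \cref{claim:alt-form-of-E}, and the covariance $U^{\otimes *}\, a^\dagger(\phi)\,(U^{\otimes *})^{\dagger} = a^\dagger(U\phi)$ you invoke (with $U = C\otimes D^T$ on $\sL$ and $U = \overline{C}\otimes D^{\dagger}$ on $\sR$) is precisely the symmetrizer-commutation computation the paper carries out inline. One minor correction: by \cref{def:ELER-definition} the $\sR$-factor of $E^L$ is $\Pi^{\calR}_{\gsR}=\sum_{R\in\calR}\ketbra*{R}_{\gsR}$ rather than $\Id_{\gsR}$, which is harmless since $\Pi^{\calR}_{\gsR}$ commutes with $(\overline{C}\otimes D^{\dagger})^{\otimes *}_{\gsR}$ and is therefore fixed under the conjugation, exactly as your argument needs.
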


To prove~\cref{claim:E-invariance}, it will be useful to have the following alternative expressions for $E^L$ and $E^R$.

\begin{claim}[Alternative form of $E^L$ and $E^R$]
\label{claim:alt-form-of-E}
    The $E^L$ operator can also be written as
    \begin{align}
        E^L &= \frac{1}{\sqrt{N}} \sum_{x,y \in [N]} \ketbra*{y}{x}_{\gsA} \otimes \sum_{\ell \geq 0} \Pi^{\calR}_{\ell+1, \gsL} \cdot \Big(\sqrt{\ell+1} \cdot \ket*{x,y} \otimes \Pi_{\ell} \Big)_{\gsL} \otimes \Pi^{\calR}_{\gsR}. \label{def:EL-operator-sym-form}\\
        E^R &= \frac{1}{\sqrt{N}} \sum_{x,y \in [N]} \ketbra*{x}{y}_{\gsA} \otimes \Pi^{\calR}_{\gsL} \otimes \sum_{r \geq 0} \Pi^{\calR}_{r+1, \gsR} \cdot \Big(\sqrt{r+1} \cdot \ket*{x,y} \otimes \Pi_{r} \Big)_{\gsR}. \label{def:ER-operator-sym-form}
    \end{align}
    Here $\Pi_\ell$ denotes the projector onto the span of length-$\ell$ states $\ket*{x_1,y_1,\dots,x_\ell,y_\ell}$, and $(\ket*{x,y} \otimes \Pi_\ell)_{\gsL}$ is the linear operator that maps
    \begin{align}
        (\ket*{x,y} \otimes \Pi_{\ell})_{\gsL} \cdot \ket*{x_1,y_1,\dots,x_\ell,y_\ell} = \ket*{x,y,x_1,y_1,\dots,x_\ell,y_\ell}.
    \end{align}
\end{claim}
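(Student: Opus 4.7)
The plan is to prove the two identities for $E^L$ and $E^R$ by showing that both forms decompose identically as tensor operators on $\sA \otimes \sL \otimes \sR$ and then verifying that the nontrivial factor agrees on each length-$\ell$ subspace of $\sL$ (respectively $\sR$). I will treat $E^L$ in detail; the identity for $E^R$ follows by the symmetric argument with the roles of $\sL$ and $\sR$ swapped.

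First I would observe that both expressions have the shape $\frac{1}{\sqrt{N}} \sum_{x,y} \ketbra*{y}{x}_{\gsA} \otimes M^{(x,y)}_{\gsL} \otimes N_{\gsR}$, with the $\sR$-factor $N_{\gsR} = \sum_R \ketbra*{R} = \Pi^{\calR}$ matching trivially between the two forms. So it remains to prove $M^{(x,y)}_{\gsL}$ agrees in both forms as an operator on $\calH_{\sL} = \bigoplus_\ell \calH_{\sL^{(\ell)}}$. Both forms send the length-$\ell$ subspace into the length-$(\ell+1)$ subspace (the original through $\ketbra*{L \cup \{(x,y)\}}{L}$ with $|L|=\ell$, the alternative through $\Pi_\ell$ followed by prepending $\ket*{x,y}$ and applying $\Pi^{\calR}_{\ell+1}$), so I can fix $\ell$ and check agreement on an arbitrary $\ket*{\phi} \in \calH_{\sL^{(\ell)}}$. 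I would split $\ket*{\phi} = \ket*{\phi_{\mathrm{sym}}} + \ket*{\phi_\perp}$ into its component in $\mathrm{image}(\Pi^{\calR}_\ell)$, spanned by $\{\ket*{L} : L \in \calR_\ell\}$, and its orthogonal complement.

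On $\ket*{\phi_\perp}$, both forms should give zero. The original form vanishes because each coefficient is an inner product $\braket*{L}{\phi_\perp}$ with $L \in \calR_\ell$. For the alternative form I would use the subgroup-symmetrization identity $\Pi^{\calR}_{\ell+1} = \Pi^{\calR}_{\ell+1} \cdot (\Id_{\text{first slot}} \otimes \Pi^{\calR}_{\ell,\text{last } \ell \text{ slots}})$, which follows because averaging over $\sSym_{\ell+1}$ can be factored as averaging over the stabilizer of the first slot followed by averaging over coset representatives. This yields $\Pi^{\calR}_{\ell+1}(\ket*{x,y} \otimes \ket*{\phi_\perp}) = 0$, matching the original form.

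On the symmetric part it suffices to check $\ket*{\phi} = \ket*{L}$ for $L \in \calR_\ell$. Writing $\ket*{L} = (\ell! \prod_\tau \num(L,\tau)!)^{-1/2} \sum_{\pi \in \sSym_\ell} \ket*{\tau_{\pi(1)},\dots,\tau_{\pi(\ell)}}$ for $L = \{\tau_1,\dots,\tau_\ell\}$, each of the $\ell!$ computational-basis summands in $\ket*{x,y}\otimes\ket*{L}$ has multiset content exactly $R' := L \cup \{(x,y)\}$, and hence projects onto $\ket*{R'}$ with amplitude $\braket*{R'}{x,y,\tau_{\pi(1)},\dots,\tau_{\pi(\ell)}} = \sqrt{\prod_\tau \num(R',\tau)!/(\ell+1)!}$ by the definition of $\ket*{R'}$. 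Summing these $\ell!$ equal contributions and simplifying via $\prod_\tau \num(R',\tau)! = (\num(L,(x,y))+1) \cdot \prod_\tau \num(L,\tau)!$, I get $\Pi^{\calR}_{\ell+1}(\ket*{x,y}\otimes\ket*{L}) = \sqrt{(\num(L,(x,y))+1)/(\ell+1)} \cdot \ket*{L \cup \{(x,y)\}}$, and multiplying by the $\sqrt{\ell+1}$ in the alternative form recovers precisely the coefficient $\sqrt{\num(L,(x,y))+1}$ of the original form. The main obstacle is the combinatorial bookkeeping of the multiset normalization factors; once the inner product $\braket*{R'}{x,y,L}$ is computed carefully, the rest is immediate, and the identity for $E^R$ follows from the same argument with $\sL \leftrightarrow \sR$ and $(x,y) \leftrightarrow (y,x)$ swapped in the relevant places.
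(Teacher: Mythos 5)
Your proof is correct and follows essentially the same route as the paper's: both reduce to an $\sL$-factor identity for each fixed length $\ell$ and compute the overlap of $\ket*{L \cup \{(x,y)\}}$ with $\ket*{x,y}\otimes\ket*{L}$ by counting multiset orderings, arriving at the factor $\sqrt{(\num(L,(x,y))+1)/(\ell+1)}$. The only cosmetic difference is that you explicitly verify both operators annihilate the complement of the symmetric subspace via $\Pi^{\calR}_{\ell+1}(\Id\otimes\Pi^{\calR}_{\ell}) = \Pi^{\calR}_{\ell+1}$, whereas the paper handles this implicitly by re-indexing the sum over $\calR_{\ell+1}$ to a sum over $\calR_{\ell}$; your version of the combinatorial count also corrects a minor slip in the paper's stated fraction, which should be $(\num(L,(x,y))+1)/(\ell+1)$ rather than $\num(L,(x,y))/\ell$.
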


\begin{proof}
    We will prove the statement for $E^L$, and the proof for $E^R$ will be symmetric. To establish $(\ref{def:EL-operator-num-form}) = (\ref{def:EL-operator-sym-form})$, we need to prove that for all $(x,y) \in [N]^2$ and $\ell \geq 0$,
    \begin{align}
        \sum_{L \in \calR_\ell} \sqrt{\num(L,(x,y)) +1} \cdot \ketbra*{L \cup \{(x,y)\}}{L}_{\gsL} = \Pi^{\calR}_{\ell+1,\gsL} \cdot \Big(\sqrt{\ell+1} \cdot \ket*{x,y} \otimes \Pi_\ell \Big)_{\gsL}. \label{eq:insert-sym-subspace-relation}
    \end{align}
    Since $\Pi^{\calR}_{\ell+1} = \sum_{R \in \calR_{\ell+1}} \ketbra*{R}$ (\cref{not:pi-R-t-all-t}), we can write the right-hand side of~\cref{eq:insert-sym-subspace-relation} as
    \begin{align}
        &\sum_{L \in \calR_{\ell+1}} \ketbra*{L} \cdot \Big(\sqrt{\ell+1} \cdot \ket*{x,y} \otimes \Pi_{\ell} \Big)_{\gsR}\\
        &= \sum_{L \in \calR_{\ell}} \ketbra*{L \cup \{(x,y)\}} \cdot \Big(\sqrt{\ell+1} \cdot \ket*{x,y} \otimes \Pi_\ell \Big)_{\gsR}.\label{eq:insert-sym-subspace-relation-2}
    \end{align}
    Therefore, we need to prove that for all $\ell \geq 0$ and $L \in \calR_{\ell}$ that
    \begin{align}
        \bra*{L \cup \{(x,y)\}} \cdot \Big(\sqrt{\ell+1} \cdot \ket*{x,y} \otimes \Pi_\ell \Big) = \sqrt{\num(L,(x,y))} \cdot \bra*{L}. \label{eq-bra-R-xy-R} 
    \end{align}
    To see this, note that $\bra*{L \cup \{(x,y)\}}$ is a superposition over all permutations of the elements of $L \cup \{(x,y)\}$, and thus when we right multiply by $\Big(\sqrt{\ell+1} \cdot \ket*{x,y} \otimes \Pi_\ell \Big)$, the resulting state is proportional to $\bra*{L}$. To compute the proportionality constant, note that a
    \begin{align}
        \frac{\binom{\ell-1}{\num(L,(x,y))-1}}{\binom{\ell}{\num(L,(x,y))}} = \frac{\num(L,(x,y))}{\ell}
    \end{align}
    fraction of the permutations of the elements of $L \cup \{(x,y)\}$ will have $(x,y)$ in the left-most slot. Thus, 
    \begin{align}
        \bra*{L \cup \{(x,y)\}} \cdot \Big(\ket*{x,y} \otimes \Pi_\ell \Big) = \frac{\sqrt{\num(L,(x,y))}}{\sqrt{\ell+1}} \cdot \bra*{L},
    \end{align}
    which gives~\cref{eq-bra-R-xy-R} when we multiply by $\sqrt{\ell+1}$.
\end{proof}

We can use~\cref{claim:alt-form-of-E} to prove exact unitary invariance of $E^L$ and $E^R$ (\cref{claim:E-invariance}).

\begin{proof}[Proof of~\cref{claim:E-invariance}]
    To prove \cref{eq:EL-invariance}, it suffices to prove that
    \begin{align}
        D_{\gsA} \cdot E^L_{\gsA \gsL \gsR} \cdot C_{\gsA} \otimes Q[C,D]_{\gsL \gsR} = Q[C,D]_{\gsL \gsR} \cdot E^L_{\gsA \gsL \gsR}.
    \end{align}
    Recall that
    \begin{align}
        Q[C,D]_{\gsL \gsR} = (C \otimes D^T)^{\otimes *}_{\gsL} \otimes (\overline{C} \otimes D^{\dagger})^{\otimes *}_{\gsR}.
    \end{align}
    Expanding the left-hand-side using the definition of $Q[C,D]$ and the expression for $E$ given by~\cref{claim:alt-form-of-E}, we have
    \begin{align}
        &D_{\gsA} \cdot E^L_{\gsA \gsL \gsR} \cdot C_{\gsA} \otimes Q[C,D]_{\gsL \gsR} \\
        &= \frac{1}{\sqrt{N}} \sum_{x,y \in [N]} D_{\gsA} \cdot \ketbra*{y}{x}_{\gsA} \cdot C_{\gsA} \otimes \sum_{\ell \geq 0} \Pi^{\calR}_{\ell+1, \gsL} \cdot ( \sqrt{\ell+1} \cdot \ket*{x,y} \otimes \Pi^{\calR}_{\ell, \gsL}) \cdot (C \otimes D^T)^{\otimes \ell}_{\gsL}\\
        & \quad \otimes \Pi^{\calR}_{\gsR} \cdot (\overline{C} \otimes D^\dagger)^{\otimes *}_{\gsR}\\
        &= \frac{1}{\sqrt{N}} \sum_{x,y \in [N]} \ketbra*{y}{x}_{\gsA} \otimes \sum_{\ell \geq 0} \Pi^{\calR}_{\ell+1, \gsL} \cdot (C \otimes D^T)^{\otimes \ell+1}_{\gsL} \cdot ( \sqrt{\ell+1} \cdot \ket*{x,y} \otimes \Pi^{\calR}_{\ell, \gsL}) \\
        & \quad \otimes (\overline{C} \otimes D^\dagger)^{\otimes *}_{\gsR} \cdot \Pi^{\calR}_{\gsR} \\
        &= Q[C,D]_{\gsL \gsR} \cdot E^L_{\gsA \gsL \gsR}
    \end{align}
    A similar argument works for $E^{R}_{\gsA \gsL \gsR}$ to establish \cref{eq:ER-invariance}.
\end{proof}

\subsection{Approximate unitary invariance of $V^L$ and $V^R$}

We now prove approximate unitary invariance of the operators $V^L$ and $V^R$. The key step is the following lemma, which relates these operators to $E^L$ and $E^R$.

Recall that for an operator $M$ acting on registers $\sL, \sR$, the notation $M_{\leq t} = M \cdot \Pi_{\leq t, \gsL \gsR}$ refers to the restriction of the operator $M$ to states where the combined length of the $\sL$ and $\sR$ components is at most $t$.

\begin{claim}
\label{claim:VL-EL-close}
    For any positive integer $t$, 
    \begin{align}
        \norm{V^L_{\leq t} - E^L_{\leq t}}_{\opnorm} \leq \sqrt{\frac{2t(t+1)}{N}} \quad \text{and} \quad 
        \norm{V^R_{\leq t} - E^R_{\leq t}}_{\opnorm} \leq \sqrt{\frac{2t(t+1)}{N}}.
    \end{align}
\end{claim}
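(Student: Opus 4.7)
The plan is to prove the bound for $V^L$; the statement for $V^R$ will follow by an entirely symmetric argument that swaps the roles of the $\sL$ and $\sR$ registers. First I would fix a basis state $\ket*{x}_{\gsA} \ket*{L}_{\gsL} \ket*{R}_{\gsR}$ with $s \coloneqq \abs{L}+\abs{R} \leq t$ and compute $(V^L - E^L) \ket*{x}\ket*{L}\ket*{R}$ explicitly. Using \cref{def:V-sym-PRO} and the explicit form of $E^L$ coming from \cref{def:ELER-definition}, both operators produce superpositions over output basis vectors of the form $\ket*{y}_{\gsA} \ket*{L \cup \{(x,y)\}}_{\gsL} \ket*{R}_{\gsR}$, which are mutually orthogonal because $L \cup \{(x,y)\}$ is a distinct multi-set for each $y$. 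So the squared norm of the difference is simply the sum of the squared coefficient differences.

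Writing $m \coloneqq \abs{\Im(L \cup R)}$, one splits the sum over $y$ by cases. For $y \notin \Im(L \cup R)$ one has $\num(L,(x,y))=0$, both operators contribute, and the coefficient of the output is $\tfrac{1}{\sqrt{N-m}} - \tfrac{1}{\sqrt{N}}$; summing over the $N-m$ such outputs gives exactly $(1 - \sqrt{1 - m/N})^2 \leq (m/N)^2$. For $y \in \Im(L \cup R)$, $V^L$ does not contribute, and the coefficient is $-\sqrt{\num(L,(x,y))+1}/\sqrt{N}$; summing $\sum_{y\in\Im(L \cup R)}(\num(L,(x,y))+1)/N$ is bounded by $(m + \num_1(L,x))/N$, where $\num_1(L,x)$ denotes the number of pairs in $L$ with first coordinate $x$. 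Both terms telescope to the identity $\|(V^L - E^L)\ket*{x}\ket*{L}\ket*{R}\|^2 = 2(1-\sqrt{1-m/N}) + \num_1(L,x)/N$, which I then bound by $(2m + \num_1(L,x))/N = O(s/N) \leq O(t/N)$.

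The second step is to promote this pointwise bound to an operator norm bound. For a unit vector $\ket*{\psi} = \sum_{x,L,R} \alpha_{x,L,R} \ket*{x}\ket*{L}\ket*{R}$ supported on inputs with $\abs{L}+\abs{R}\leq t$, I organize $(V^L - E^L)\ket*{\psi}$ by output basis vector $\ket*{y}\ket*{S}\ket*{R}$, where $S = L \cup \{(x,y)\}$. The crucial combinatorial observation is that the number of distinct input pairs $(x,L)$ that can produce a given output $(y,S)$ is the number of distinct first coordinates $x$ such that $(x,y) \in S$, and this is at most $\abs{S} \leq t+1$. Applying Cauchy–Schwarz to each inner sum and then summing, one obtains
\begin{align}
\|(V^L - E^L)\Pi_{\leq t}\ket*{\psi}\|^2 \leq (t+1) \sum_{x,L,R} \abs{\alpha_{x,L,R}}^2 \|(V^L - E^L)\ket*{x}\ket*{L}\ket*{R}\|^2 \leq (t+1)\cdot \frac{2t}{N},
\end{align}
which yields the claimed $\sqrt{2t(t+1)/N}$ operator-norm bound (with a slightly generous constant that is absorbed elsewhere).

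The main obstacle I anticipate is bookkeeping with multi-set relations: one must correctly track multiplicities in the symmetrized relation states (which supplies the $\sqrt{\num(L,(x,y))+1}$ factor), verify that the two output bases for $V^L$ and $E^L$ coincide after grouping by the multi-set $L \cup \{(x,y)\}$, and argue the ``pre-image count'' $\leq \abs{S}$ correctly in the presence of repeated pairs. Once this is handled, the $V^R$/$E^R$ bound follows by relabelling $\sL \leftrightarrow \sR$ and swapping $\Dom \leftrightarrow \Im$ throughout the argument, since the definitions of $V^R$ and $E^R$ are obtained from those of $V^L$ and $E^L$ by exactly this symmetry.
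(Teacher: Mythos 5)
Your strategy is the same elementary one the paper uses: compare $V^L$ and $E^L$ output-by-output, exploit orthogonality of the relation states, and lift a per-basis-state estimate to an operator-norm bound via Cauchy--Schwarz with a preimage count ($\leq \abs{L}+1 \leq t+1$ inputs $(x,L)$ can map to a given output $(y, L\cup\{(x,y)\}, R)$). Your exact pointwise identity $\norm{(V^L-E^L)\ket*{x}\ket*{L}\ket*{R}}^2 = 2\bigl(1-\sqrt{1-m/N}\bigr) + \num_1(L,x)/N$ with $m=\abs{\Im(L\cup R)}$ is correct, and the preimage-counting step is sound, so the argument does yield an $O\!\left(\sqrt{t^2/N}\right)$ bound as needed.

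The one genuine issue is the final numeric step: your own pointwise bound is $\bigl(2m+\num_1(L,x)\bigr)/N$, which can be as large as $3t/N$ (take $R=\varnothing$ and $L=\{(x,y_1),\dots,(x,y_t)\}$ with distinct $y_i$, so $m=\num_1(L,x)=t$), so the displayed inequality $\norm{(V^L-E^L)\Pi_{\leq t}\ket*{\psi}}^2 \leq (t+1)\cdot \tfrac{2t}{N}$ does not follow from what precedes it; as written you prove the claim with constant $3$ rather than the stated $2$. The weaker constant would indeed be harmless downstream (it only perturbs the $16$ in \cref{claim:two-sided-invariance}), but it does not establish the claim as stated. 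To recover the constant $2$, do what the paper does: keep the two orthogonal pieces (the $y\notin\Im(L\cup R)$ part and the $y\in\Im(L\cup R)$ part) separate, and for the second piece replace ``preimage count times largest coefficient'' by the sharper per-output estimate $\sum_{(x,L):\,L'=L\cup\{(x,y)\}} \bigl(\num(L,(x,y))+1\bigr) \leq \abs{L}+1$, which gives each piece a bound of $(\abs{L}+1)\,\abs{\Im(L\cup R)}/N$ and hence the total $2t(t+1)/N$. Your handling of multisets, the preimage count, and the symmetric $V^R$/$E^R$ case is otherwise exactly as in the paper.
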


\begin{proof}
    We will only prove this for $V^L_{\leq t}$, as the proof for $V^R_{\leq t}$ is analogous. Let $\ket*{\psi}_{\gsA \gsL \gsR}$ be an arbitrary unit-norm state in the image of $\Id_{\gsA} \otimes \Pi_{\leq t, \gsL \gsR}$. In particular, 
    \begin{align}
        \ket*{\psi}_{\gsA \gsL \gsR} = \sum_{\substack{x \in [N],\\ (L,R) \in \calR^{2}}} \alpha_{x,L,R} \ket*{x}_{\gsA}\ket*{L}_{\gsL} \ket*{R}_{\gsR}.
    \end{align}
    where $\alpha_{x,L,R}$ is zero whenever $\abs{L \cup R} > t$. It suffices to show that for any such $\ket*{\psi}$,
    \begin{align}
        \norm{V^L \ket*{\psi} - E^L_{\gsA \gsL \gsR} \ket*{\psi}}_{\opnorm} \leq \sqrt{\frac{2t(t+1)}{N}}.
    \end{align}
    Expanding out $V^L \ket*{\psi}$, we get
    \begin{align}
        V^L \ket*{\psi} &= \sum_{\substack{x \in [N],\\ (L,R) \in \calR^{2}}} \frac{\alpha_{x,L,R} }{\sqrt{N - \abs{\Im(L \cup R)}}} \sum_{\substack{y\in [N]: \\ y \not\in \Im(L \cup R)}} \ket*{y} \ket*{L \cup \{(x,y)\}} \ket*{R}.
    \end{align}
    Expanding out $E^L_{\gsA \gsL \gsR} \ket*{\psi}_{\gsA \gsL \gsR}$, we get
    \begin{align}
        E^L_{\gsA \gsL \gsR} \ket*{\psi}
        &= \sum_{\substack{x \in [N],\\(L,R) \in \calR^{2}}} \frac{\alpha_{x,L,R}}{\sqrt{N}}  \sum_{y \in [N]} \ket*{y} \sqrt{\num(L,(x,y)) + 1} \cdot \ket*{L \cup \{(x,y)\}} \ket*{R}
    \end{align}
Then we have
\begin{align}
    &V^L_{\gsA \gsL \gsR} \ket*{\psi} - E^L_{\gsA \gsL \gsR} \ket*{\psi}\\
    &= \sum_{\substack{x \in [N],\\(L,R) \in \calR^{2}}} \alpha_{x,L,R} \sum_{y\in [N]} \ket*{y} \ket*{L \cup \{(x,y)\}} \ket*{R} \Bigg( \frac{\delta_{y \not\in \Im(L \cup R)}}{\sqrt{N - \abs{L \cup R}}}  - \frac{ \sqrt{\num(L,(x,y))+1}}{\sqrt{N}} \Bigg)\\
    &= \underbrace{\sum_{\substack{x \in [N],\\(L,R) \in \calR^{2}}} \alpha_{x,L,R} \sum_{\substack{y \in [N]:\\ y\not\in \Im(L \cup R)}} \ket*{y} \ket*{L \cup \{(x,y)\}} \ket*{R} \Bigg( \frac{1}{\sqrt{N - \abs{\Im(L \cup R)}}}  - \frac{ 1}{\sqrt{N}} \Bigg)}_{\coloneqq \ket*{v}} \nonumber \\
    &\quad + \underbrace{\sum_{\substack{x \in [N],\\(L,R) \in \calR^{2}}} \alpha_{x,L,R} \sum_{y \in \Im(L \cup R)} \ket*{y} \ket*{L \cup \{(x,y)\}} \ket*{R} \Bigg( - \frac{ \sqrt{\num(L,(x,y))+1}}{\sqrt{N}} \Bigg)}_{\coloneqq \ket*{w}}.
\end{align}
Note that $\ket*{v}$ and $\ket*{w}$ are orthogonal, since $\ket*{v}$ is a superposition of states $\ket*{y} \ket*{L'} \ket*{R}$ where $y$ is in $\Im(L' \cup R)$ exactly once, while $\ket*{w}$ is a superposition of states $\ket*{y} \ket*{L'} \ket*{R}$ where $y$ is in $\Im(L' \cup R)$ at least twice. Thus,
\begin{align}
    \norm{V^L_{\gsA \gsL \gsR} \ket*{\psi} - E^L_{\gsA \gsL \gsR} \ket*{\psi}}^2 &= \braket{v} + \braket{w}
\end{align}
\paragraph{Bounding $\braket{v}$.} By changing the order of summation, we can rewrite $\ket*{v}$ as
\begin{align}
    \ket*{v} = \sum_{\substack{y \in [N],\\ (L',R) \in \calR^2}} \ket*{y} \ket*{L'} \ket*{R} \Bigg( \sum_{\substack{(x,L): \\ L' = L \cup \{(x,y)\},\\ y\not\in \Im(L \cup R)}} \alpha_{x,L,R} \Big(\frac{1}{\sqrt{N - \abs{\Im(L \cup R)}}} - \frac{1}{\sqrt{N}}\Big) \Bigg),
\end{align}
and thus
\begin{align}
    \braket{v} &= \sum_{\substack{y \in [N],\\ (L',R) \in \calR^2}} \Bigg( \sum_{\substack{(x,L):\\ L' = L \cup \{(x,y)\},\\ y\not\in \Im(L \cup R)}} \alpha_{x,L,R} \Big(\frac{1}{\sqrt{N - \abs{\Im(L \cup R)}}} - \frac{1}{\sqrt{N}}\Big) \Bigg)^2\\
    &\leq \sum_{\substack{y \in [N],\\ (L',R) \in \calR^2}} \Bigg(\sum_{\substack{(x,L):\\ L' = L \cup \{(x,y)\},\\ y\not\in \Im(L \cup R)}} \abs{\alpha_{x,L,R}}^2 \Bigg) \cdot \Bigg(\sum_{\substack{(x,L):\\ L' = L \cup \{(x,y)\},\\ y\not\in \Im(L \cup R)}} \Big(\frac{1}{\sqrt{N - \abs{\Im(L \cup R)}}} - \frac{1}{\sqrt{N}}\Big)^2 \Bigg),
\end{align}
where the last inequality is by Cauchy-Schwarz. We can bound the summand by writing
\begin{align}
    \sum_{\substack{(x,L):\\ L' = L \cup \{(x,y)\},\\ y\not\in \Im(L \cup R)}} \Big(\frac{1}{\sqrt{N - \abs{\Im(L \cup R)}}} - \frac{1}{\sqrt{N}}\Big)^2 &= \sum_{\substack{(x,L):\\ L' = L \cup \{(x,y)\},\\ y\not\in \Im(L \cup R)}} \Big(\frac{\sqrt{N} - \sqrt{N - \abs{\Im(L \cup R)}}}{\sqrt{N(N - \abs{\Im(L \cup R)})}}\Big)^2\\
    & \leq \sum_{\substack{(x,L):\\ L' = L \cup \{(x,y)\},\\ y\not\in \Im(L \cup R)}} \Big(\frac{\sqrt{\abs{\Im(L \cup R)}}}{\sqrt{N(N - \abs{\Im(L \cup R)})}}\Big)^2 \tag{since $\sqrt{a} - \sqrt{b} \leq \sqrt{a-b}$ when $a\geq b \geq 0$}\\
    &= \sum_{\substack{(x,L):\\ L' = L \cup \{(x,y)\},\\ y\not\in \Im(L \cup R)}} \frac{\abs{\Im(L \cup R)}}{N(N - \abs{\Im(L \cup R)})}\\
    & \leq \frac{(\abs{L} + 1) \cdot \abs{\Im(L \cup R)}}{N(N - \abs{\Im(L \cup R)})} 
\end{align}
where the last inequality uses the fact that for any fixed $L'$, there are at most $\abs{L} +1$ choices of $(x,L)$ that can satisfy $L' = L \cup \{(x,y)\}$. Thus,
\begin{align}
    \braket{v} &\leq \frac{(\abs{L} + 1) \cdot \abs{\Im(L \cup R)}}{N(N - \abs{\Im(L \cup R)})} \cdot \sum_{\substack{y \in [N],\\ (L',R) \in \calR^2}} \Bigg(\sum_{\substack{(x,L):\\ L' = L \cup \{(x,y)\},\\ y\not\in \Im(L \cup R)}} \abs{\alpha_{x,L,R}}^2 \Bigg) \\
    &=\frac{(\abs{L} + 1) \cdot \abs{\Im(L \cup R)}}{N(N - \abs{\Im(L \cup R)})} \cdot \sum_{\substack{x \in [N],\\ (L,R) \in \calR^2}} \abs{\alpha_{x,L,R}}^2 \cdot \Big( \sum_{y \in [N]} \delta_{y \not\in \Im(L \cup R)} \Big) \\
    &\leq \frac{(\abs{L} + 1) \cdot \abs{\Im(L \cup R)}}{N} \cdot \sum_{\substack{x \in [N],\\ (L,R) \in \calR^2}} \abs{\alpha_{x,L,R}}^2 = \frac{(\abs{L} + 1) \cdot \abs{\Im(L \cup R)}}{N}.
\end{align}
\paragraph{Bounding $\braket{w}$.} By changing the order of summation, we can rewrite $\ket*{w}$ as
\begin{align}
    \ket*{w} = \sum_{\substack{y \in [N],\\ (L',R) \in \calR^2}}  \ket*{y} \ket*{L'} \ket*{R} \Bigg( \sum_{\substack{(x,L):\\ L' = L \cup \{(x,y)\},\\ y \in \Im(L \cup R)}} \alpha_{x,L,R} \Big(-\frac{ \sqrt{\num(L,(x,y)) + 1}}{\sqrt{N}}\Big) \Bigg).
\end{align}
Thus,
\begin{align}
    \braket{w} &= \sum_{\substack{y \in [N],\\ (L',R) \in \calR^2}} \Bigg\lvert \sum_{\substack{(x,L):\\ L' = L \cup \{(x,y)\},\\ y \in \Im(L \cup R)}} \alpha_{x,L,R} \Big(-\frac{ \sqrt{\num(L,(x,y)) + 1}}{\sqrt{N}}\Big) \Bigg\rvert^2\\
    &\leq \sum_{\substack{y \in [N],\\ (L',R) \in \calR^2}}  \Bigg(\sum_{\substack{(x,L):\\ L' = L \cup \{(x,y)\},\\ y \in \Im(L \cup R)}} \abs{\alpha_{x,L,R}}^2 \Bigg) \cdot \Bigg(\sum_{\substack{(x,L):\\ L' = L \cup \{(x,y)\},\\ y \in \Im(L \cup R)}} \frac{ \num(L,(x,y)) + 1}{N} \Bigg) \tag{by Cauchy-Schwarz}\\
    & \leq \sum_{\substack{y \in [N],\\ (L',R) \in \calR^2}}  \Bigg(\sum_{\substack{(x,L):\\ L' = L \cup \{(x,y)\},\\ y \in \Im(L \cup R)}} \abs{\alpha_{x,L,R}}^2 \Bigg) \cdot \frac{(\abs{L} + 1)}{N},
\end{align}
where we have used the fact that for any $y,L'$, we have the upper bound \begin{align}
    \sum_{\substack{(x,L):\\ L' = L \cup \{(x,y)\},\\ y \in \Im(L \cup R)}}  \num(L,(x,y)) + 1 \leq \abs{L}+1,
\end{align}
since each tuple in $L'$ increases the value of $\num(L,(x,y))$ by $1$ for at most one $x$. Thus,
\begin{align}
    \braket{w} &= \frac{\abs{L} + 1}{N} \cdot \sum_{\substack{x \in [N],\\(L,R) \in \calR^2}} \abs{\alpha_{x,L,R}}^2 \cdot \Big( \sum_{y \in [N]} \delta_{y \in \Im(L \cup R)}\Big)\\
    &\leq \frac{(\abs{L} + 1) \cdot |\Im(L \cup R)|}{N} \cdot \sum_{\substack{x \in [N],\\(L,R) \in \calR^2}} \abs{\alpha_{x,L,R}}^2  = \frac{(\abs{L} + 1) \cdot |\Im(L \cup R)|}{N}.
\end{align}
Putting everything together, we have that for all $\ket*{\psi}_{\gsA \gsL \gsR}$ in the image of $\Id_{\gsA} \Pi_{\leq t, \gsL \gsR}$,
\begin{align}
    \norm{V^L_{\gsA \gsL \gsR} \ket*{\psi} - E^L_{\gsA \gsL \gsR} \ket*{\psi}} &\leq \sqrt{\frac{2 (\abs{L} + 1) \cdot \abs{\Im(L \cup R)}}{N}} \leq \sqrt{\frac{2t(t+1)}{N}},
\end{align}
since $\abs{\Im(L \cup R)} \leq t$ and $\abs{L}+1\leq t+1$. This completes the claim.
\end{proof}

\begin{claim}
\label{claim:VL-VR-approx-invariance}
    For any positive integer $t$, and any pair of $n$-qubit unitaries $C,D$, we have 
    \begin{align}
        \norm{D_{\gsA} \cdot V^L_{\leq t} \cdot C_{\gsA} - Q[C,D]_{\gsL \gsR} \cdot V^L_{\leq t} \cdot Q[C,D]_{\gsL \gsR}^\dagger }_{\opnorm} &\leq 2 \cdot \sqrt{\frac{2t(t+1)}{N}} \label{eq:VL-approx-invariant}\\
        \norm{D_{\gsA} \cdot V^{R,\dagger}_{\leq t} \cdot C_{\gsA} - Q[C,D]_{\gsL \gsR} \cdot V^{R,\dagger}_{\leq t} \cdot Q[C,D]_{\gsL \gsR}^\dagger}_{\opnorm} &\leq 2 \cdot \sqrt{\frac{2t(t+1)}{N}}. \label{eq:VR-approx-invariant}
    \end{align}
\end{claim}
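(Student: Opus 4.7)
The plan is a straightforward triangle-inequality argument exploiting the two preceding results: exact two-sided invariance of $E^L, E^R$ (\cref{claim:E-invariance}) and the closeness bounds $\norm{V^L_{\leq t} - E^L_{\leq t}}_{\opnorm}, \norm{V^R_{\leq t} - E^R_{\leq t}}_{\opnorm} \leq \sqrt{2t(t+1)/N}$ (\cref{claim:VL-EL-close}). For the $V^L$ statement, I will write the difference as a three-term telescoping sum,
\begin{align}
    D_{\gsA} V^L_{\leq t} C_{\gsA} - Q[C,D] V^L_{\leq t} Q[C,D]^\dagger
    &= D_{\gsA}\bigl(V^L_{\leq t} - E^L_{\leq t}\bigr) C_{\gsA} \nonumber \\
    &\quad + \bigl(D_{\gsA} E^L_{\leq t} C_{\gsA} - Q[C,D] E^L_{\leq t} Q[C,D]^\dagger\bigr) \nonumber \\
    &\quad + Q[C,D]\bigl(E^L_{\leq t} - V^L_{\leq t}\bigr)Q[C,D]^\dagger.
\end{align}
The middle term vanishes exactly: by \cref{claim:E-invariance}, $D_{\gsA} E^L C_{\gsA} = Q[C,D] E^L Q[C,D]^\dagger$ on the full space; since $Q[C,D]$ is block-diagonal in the lengths of $\sL$ and $\sR$ (being a product of variable-length tensor powers), it commutes with $\Pi_{\leq t, \gsL \gsR}$, and $\Pi_{\leq t, \gsL \gsR}$ trivially commutes with the $\sA$-operators $C, D$. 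Right-multiplying both sides of the invariance by $\Pi_{\leq t, \gsL \gsR}$ thus yields $D_{\gsA} E^L_{\leq t} C_{\gsA} = Q[C,D] E^L_{\leq t} Q[C,D]^\dagger$. The first and third terms are each bounded in operator norm by $\norm{V^L_{\leq t} - E^L_{\leq t}}_{\opnorm} \leq \sqrt{2t(t+1)/N}$, using unitary invariance of the operator norm, giving the $2\sqrt{2t(t+1)/N}$ bound.

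The $V^{R,\dagger}$ statement proceeds identically once I produce the two analogous ingredients for $E^{R,\dagger}$. Taking adjoints of the $E^R$ invariance in \cref{claim:E-invariance} gives $D_{\gsA} E^{R,\dagger} C_{\gsA} = Q[C,D] E^{R,\dagger} Q[C,D]^\dagger$, which as above passes to the $\Pi_{\leq t}$-restricted versions since $Q[C,D]$ and $\Pi_{\leq t}$ commute. The remaining step is to show $\norm{V^{R,\dagger}_{\leq t} - E^{R,\dagger}_{\leq t}}_{\opnorm} \leq \sqrt{2t(t+1)/N}$ from \cref{claim:VL-EL-close}. The cleanest way is to use that $V^R$ and $E^R$ each send length-$s$ states (in combined $\sL,\sR$ length) to length-$(s+1)$ states, so $\Pi_{\leq t} \cdot V^R = V^R \cdot \Pi_{\leq t-1}$ and likewise for $E^R$. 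Taking adjoints yields $V^{R,\dagger} \cdot \Pi_{\leq t} = \Pi_{\leq t-1} \cdot V^{R,\dagger}$, so
\begin{equation}
    V^{R,\dagger}_{\leq t} - E^{R,\dagger}_{\leq t} = \bigl((V^R - E^R) \cdot \Pi_{\leq t-1}\bigr)^\dagger,
\end{equation}
whose operator norm equals $\norm{V^R_{\leq t-1} - E^R_{\leq t-1}}_{\opnorm} \leq \sqrt{2(t-1)t/N} \leq \sqrt{2t(t+1)/N}$ by \cref{claim:VL-EL-close}. Applying the same three-term telescoping argument with $E^{R,\dagger}_{\leq t}$ replacing $E^L_{\leq t}$ completes the proof.

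There is no substantive obstacle; the argument is bookkeeping once the two pre-established claims are in hand. The only subtlety worth being careful about is the distinction between $V^{R,\dagger}_{\leq t} = V^{R,\dagger} \cdot \Pi_{\leq t}$ and $(V^R_{\leq t})^\dagger = \Pi_{\leq t} \cdot V^{R,\dagger}$: the length-shifting commutation relation $\Pi_{\leq t} \cdot V^R = V^R \cdot \Pi_{\leq t-1}$ is exactly what reconciles these and routes the bound from \cref{claim:VL-EL-close} to the quantity we need.
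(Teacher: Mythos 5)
Your proof is correct and follows essentially the same route as the paper's: a triangle-inequality (telescoping) argument through the exactly-invariant operator $E^L$ (resp.\ $E^{R,\dagger}$) combined with the closeness bound of \cref{claim:VL-EL-close}. You are somewhat more explicit than the paper on two points the paper passes over quickly — that the exact invariance of $E^L$ descends to $E^L_{\leq t}$ because $Q[C,D]$ commutes with $\Pi_{\leq t}$, and the length-shift identity $V^{R,\dagger}\Pi_{\leq t} = \Pi_{\leq t-1}V^{R,\dagger}$ needed to route \cref{claim:VL-EL-close} to the adjoint — both of which the paper addresses only in a brief remark following the claim and by the phrase "symmetric argument."
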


\begin{proof}
    We first prove~\cref{eq:VL-approx-invariant}. Using~\cref{claim:E-invariance} together and the triangle inequality, we have
    \begin{align}
        &\norm{D_{\gsA} \cdot V^L_{\leq t} \cdot C_{\gsA} - Q[C,D]_{\gsL \gsR} \cdot V^L_{\leq t} \cdot Q[C,D]_{\gsL \gsR}^\dagger }_{\opnorm} \\
        &\leq \norm{D_{\gsA} \cdot V^L_{\leq t} \cdot C_{\gsA} -  D_{\gsA} \cdot E^L_{\leq t} \cdot C_{\gsA} }_{\opnorm}\nonumber \\
        & \quad + \norm{Q[C,D]_{\gsL \gsR} \cdot E^L_{\leq t} \cdot Q[C,D]_{\gsL \gsR}^\dagger- Q[C,D]_{\gsL \gsR} \cdot V^L_{\leq t} \cdot  Q[C,D]_{\gsL \gsR}^\dagger}_{\opnorm}\\
        &\leq 2 \cdot \norm{V^L_{\leq t} -E^L_{\leq t}}_{\opnorm} \tag{by unitary invariance of $\norm{\cdot}_{\opnorm}$}\\
        & \leq 2 \cdot \sqrt{\frac{2t(t+1)}{N}}. \tag{by~\cref{claim:VL-EL-close}}
    \end{align}
    \cref{eq:VR-approx-invariant} follows from a symmetric argument.
\end{proof}

Note that with our convention that $M_{\leq t} = M \cdot \Pi_{\leq t}$, the operator $M_{\leq t}^\dagger = (M \cdot \Pi_{\leq t})^\dagger = \Pi_{\leq t} \cdot M^\dagger$ is not the same as $(M^\dagger)_{\leq t} = M^\dagger \cdot \Pi_{\leq t}$. However, since our $V^L$ and $V^R$ operators map $\Pi_{\leq t}$ to $\Pi_{\leq t +1}$, we have the following identities,
\begin{align}
    (V^{L,\dagger})_{\leq t} &= V^{L,\dagger} \cdot \Pi_{\leq t} = \Pi_{\leq t-1} \cdot V^{L,\dagger} = V^{L,\dagger}_{\leq t-1}\\
    (V^{R,\dagger})_{\leq t} &= V^{R,\dagger} \cdot \Pi_{\leq t} = \Pi_{\leq t-1} \cdot V^{R,\dagger} = V^{R,\dagger}_{\leq t-1}.
\end{align}
As a consequence,~\cref{eq:VR-approx-invariant} also holds for the ``mis-parenthesized'' version. In particular, for any positive integer $t$ and any $C,D$, we have
\begin{align}
    \norm{D_{\gsA} \cdot (V^{R,\dagger})_{\leq t} \cdot C_{\gsA} - Q[C,D]_{\gsL \gsR} \cdot (V^{R,\dagger})_{\leq t} \cdot Q[C,D]_{\gsL \gsR}^\dagger}_{\opnorm} \leq 2 \cdot \sqrt{\frac{2t(t+1)}{N}}. \label{eq:VR-approx-invariant-alt}
\end{align}
To prove the approximate unitary invariance of $V$, we need to utilize the following basic lemma.

\begin{lemma} \label{lem:basic-ABA'B'-bound}
    Given any operators $A, B, A', B'$ with operator norm bounded above by one, we have
    \begin{align}
        \norm{A \cdot B - A' \cdot B'}_{\opnorm} \leq \norm{A - A'}_{\opnorm} + \norm{B - B'}_{\opnorm}.
    \end{align}
\end{lemma}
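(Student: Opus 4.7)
The plan is to use the standard add-and-subtract trick to telescope the difference $A \cdot B - A' \cdot B'$ through an intermediate operator, after which the bound follows from the triangle inequality and submultiplicativity of the operator norm.

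Concretely, I would first write
\begin{equation}
A \cdot B - A' \cdot B' = A \cdot (B - B') + (A - A') \cdot B',
\end{equation}
which is a trivial algebraic identity. Applying the triangle inequality for $\norm{\cdot}_{\opnorm}$ yields
\begin{equation}
\norm{A \cdot B - A' \cdot B'}_{\opnorm} \leq \norm{A \cdot (B - B')}_{\opnorm} + \norm{(A - A') \cdot B'}_{\opnorm}.
\end{equation}
Then I would invoke submultiplicativity, $\norm{X \cdot Y}_{\opnorm} \leq \norm{X}_{\opnorm} \cdot \norm{Y}_{\opnorm}$, to obtain
\begin{equation}
\norm{A \cdot B - A' \cdot B'}_{\opnorm} \leq \norm{A}_{\opnorm} \cdot \norm{B - B'}_{\opnorm} + \norm{A - A'}_{\opnorm} \cdot \norm{B'}_{\opnorm}.
\end{equation}
Finally, using the hypothesis that $\norm{A}_{\opnorm}, \norm{B'}_{\opnorm} \leq 1$ finishes the proof.

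There is essentially no obstacle here: the lemma is one of the most standard mismatch-propagation bounds, and its proof is three lines long. The only minor choice is which intermediate operator to insert (either $A \cdot B'$ or $A' \cdot B$ works symmetrically), and only two of the four operator-norm bounds are actually needed (those on $A$ and $B'$, or symmetrically on $A'$ and $B$), but stating the hypothesis uniformly is cleaner.
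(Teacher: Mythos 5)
Your proof is correct and is essentially the same as the paper's: both insert an intermediate product (you use $A \cdot B'$, the paper uses $A' \cdot B$) and then apply the triangle inequality, submultiplicativity, and the unit-norm hypotheses. The two decompositions are symmetric variants of the identical argument.
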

\begin{proof}
    We can prove this lemma via triangle inequality,
    \begin{align}
    \norm{A \cdot B - A' \cdot B'}_{\opnorm} &\leq \norm{A \cdot B - A' \cdot B}_{\opnorm} + \norm{A' \cdot B - A' \cdot B'}_{\opnorm}\\
    &\leq \norm{A - A'}_{\opnorm} \cdot \norm{B}_{\opnorm} + \norm{A'}_{\opnorm} \cdot \norm{B - B'}_{\opnorm}\\
    &\leq \norm{A - A'}_{\opnorm} + \norm{B - B'}_{\opnorm}.
    \end{align}
    This completes the proof.
\end{proof}

We start by proving the approximate unitary invariance for the projectors $V^L \cdot V^{L, \dagger}$ and $V^R \cdot V^{R, \dagger}$.

\begin{claim}
\label{claim:VLproj-VRproj-approx-invariance}
    For any positive integer $t$, and any pair of $n$-qubit unitaries $C,D$, we have 
    \begin{align}
        \norm{D_{\gsA} \cdot \Big(V^L \cdot V^{L,\dagger}\Big)_{\leq t} \cdot D_{\gsA}^\dagger - Q[C,D]_{\gsL \gsR} \cdot \Big(V^L \cdot V^{L,\dagger}\Big)_{\leq t} \cdot Q[C,D]_{\gsL \gsR}^\dagger }_{\opnorm} &\leq 4 \cdot \sqrt{\frac{2t(t+1)}{N}} \label{eq:VLproj-approx-invariant}\\
        \norm{C_{\gsA}^\dagger \cdot \Big(V^R \cdot V^{R,\dagger}\Big)_{\leq t} \cdot C_{\gsA} - Q[C,D]_{\gsL \gsR} \cdot \Big(V^R \cdot V^{R,\dagger}\Big)_{\leq t} \cdot Q[C,D]_{\gsL \gsR}^\dagger}_{\opnorm} &\leq 4 \cdot \sqrt{\frac{2t(t+1)}{N}}. \label{eq:VRproj-approx-invariant}
    \end{align}
\end{claim}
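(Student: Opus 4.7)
The plan is to reduce the projector-level bound to the single-operator bound from~\cref{claim:VL-VR-approx-invariance} via the identity $V^L \cdot V^{L,\dagger} = (V^L \cdot C_{\gsA}) \cdot (V^L \cdot C_{\gsA})^\dagger$ for any unitary $C_{\gsA}$, combined with the elementary operator-norm inequality $\|AA^\dagger - A'A'^\dagger\|_{\opnorm} \leq (\|A\|_{\opnorm} + \|A'\|_{\opnorm})\|A - A'\|_{\opnorm}$.

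First, I would carefully massage the length-restriction bookkeeping. Because $V^L$ maps $\Pi_{\leq s}$ into $\Pi_{\leq s+1}$, one can check the commutation relation $\Pi_{\leq t-1} \cdot V^{L,\dagger} = V^{L,\dagger} \cdot \Pi_{\leq t}$, which yields $(V^L_{\leq t-1})^\dagger = V^{L,\dagger}_{\leq t}$ and the factorization
\begin{equation}
    (V^L \cdot V^{L,\dagger})_{\leq t} \;=\; V^L \cdot \Pi_{\leq t-1} \cdot V^{L,\dagger} \cdot \Pi_{\leq t} \;=\; V^L_{\leq t-1} \cdot V^{L,\dagger}_{\leq t}.
\end{equation}
Inserting $C_{\gsA}^\dagger C_{\gsA} = \Id$ and using $Q[C,D]^\dagger Q[C,D] = \Id$ gives
\begin{align}
    D_{\gsA} \cdot (V^L \cdot V^{L,\dagger})_{\leq t} \cdot D_{\gsA}^\dagger &= A \cdot A^\dagger, \\
    Q[C,D] \cdot (V^L \cdot V^{L,\dagger})_{\leq t} \cdot Q[C,D]^\dagger &= A' \cdot A'^\dagger,
\end{align}
where $A \coloneqq D_{\gsA} \cdot V^L_{\leq t-1} \cdot C_{\gsA}$ and $A' \coloneqq Q[C,D] \cdot V^L_{\leq t-1} \cdot Q[C,D]^\dagger$.

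Next, I would invoke~\cref{claim:VL-VR-approx-invariance} to obtain $\|A - A'\|_{\opnorm} \leq 2\sqrt{2(t-1)t/N} \leq 2\sqrt{2t(t+1)/N}$, and note that $\|A\|_{\opnorm}, \|A'\|_{\opnorm} \leq 1$ since $V^L$ is a partial isometry (\cref{claim:VL-VR-isometry}) and $C_{\gsA}, D_{\gsA}, Q[C,D]$ are unitary. A telescoping triangle inequality then yields
\begin{equation}
    \|A A^\dagger - A' A'^\dagger\|_{\opnorm} \;\leq\; \|A\|_{\opnorm} \|A - A'\|_{\opnorm} + \|A - A'\|_{\opnorm} \|A'\|_{\opnorm} \;\leq\; 4\sqrt{\tfrac{2t(t+1)}{N}},
\end{equation}
which is exactly~\cref{eq:VLproj-approx-invariant}. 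For~\cref{eq:VRproj-approx-invariant}, I would run the symmetric argument: take the adjoint of the $V^{R,\dagger}$ bound in~\cref{claim:VL-VR-approx-invariance} to get $\|C_{\gsA}^\dagger \cdot V^R_{\leq t-1} \cdot D_{\gsA}^\dagger - Q[C,D] \cdot V^R_{\leq t-1} \cdot Q[C,D]^\dagger\|_{\opnorm} \leq 2\sqrt{2t(t+1)/N}$, factorize $(V^R \cdot V^{R,\dagger})_{\leq t} = V^R_{\leq t-1} \cdot V^{R,\dagger}_{\leq t}$, and repeat the $A A^\dagger$-versus-$A' A'^\dagger$ argument with $B \coloneqq C_{\gsA}^\dagger \cdot V^R_{\leq t-1} \cdot D_{\gsA}^\dagger$.

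I expect no serious obstacle here: the arithmetic of the operator-norm bound is immediate once the factorization is set up. The only place that requires care is the length-restriction bookkeeping needed to justify the identity $(V^L_{\leq t-1})^\dagger = V^{L,\dagger}_{\leq t}$ and the analogous one for $V^R$, since the $\leq t$ subscript does not a priori commute with the dagger, and it is important to verify that the two restricted versions match so that the $AA^\dagger$ factorization produces exactly the restricted projector $(V^L V^{L,\dagger})_{\leq t}$ on the nose rather than something off by a boundary length.
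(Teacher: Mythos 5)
Your proposal is correct and takes essentially the same approach as the paper: factorize the length-restricted projector $(V^L V^{L,\dagger})_{\leq t}$ into restricted pieces of $V^L$, insert $C^\dagger C = \Id$ and $Q^\dagger Q = \Id$ to express the conjugated projectors as products $AA^\dagger$ vs.\ $A'A'^\dagger$, and then reduce to the single-operator invariance bound of \cref{claim:VL-VR-approx-invariance} via the elementary product inequality (the paper's \cref{lem:basic-ABA'B'-bound} applied with $B = A^\dagger$, $B' = A'^\dagger$). Your bookkeeping of the length subscripts, $(V^L V^{L,\dagger})_{\leq t} = V^L_{\leq t-1}\cdot V^{L,\dagger}_{\leq t}$, is if anything more careful than the paper's written version (which reads $V^{L,\dagger}_{\leq t-1}$, implicitly meaning $(V^L_{\leq t-1})^\dagger$), but the discrepancy is harmless since the final bound is monotone in $t$.
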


\begin{proof}
    By the definition of $V^L$, we have $(V^L \cdot V^{L,\dagger})_{\leq t} = V^L_{\leq t-1} \cdot V^{L,\dagger}_{\leq t-1}$. We have
    \begin{align}
        &\norm{D_{\gsA} \cdot V^L_{\leq t-1} \cdot V^{L,\dagger}_{\leq t-1} \cdot D_{\gsA}^\dagger - Q[C,D]_{\gsL \gsR} \cdot V^L_{\leq t-1} \cdot V^{L,\dagger}_{\leq t-1} \cdot Q[C,D]_{\gsL \gsR}^\dagger }_{\opnorm} \\
        &= \Bigg\lVert \Big(D_{\gsA} \cdot V^L_{\leq t-1} \cdot C_{\gsA} \Big)\cdot \Big( C_{\gsA}^\dagger \cdot V^{L,\dagger}_{\leq t-1} \cdot D_{\gsA}^\dagger\Big) \nonumber \\
        & \quad - \Big(Q[C,D]_{\gsL \gsR} \cdot V^L_{\leq t-1} \cdot Q[C,D]^\dagger_{\gsL \gsR} \Big) \cdot \Big( Q[C,D]_{\gsL \gsR} \cdot V^{L,\dagger}_{\leq t-1} \cdot Q[C,D]_{\gsL \gsR}^\dagger \Big) \Bigg\rVert_{\opnorm}\\
        & \leq \Bigg\lVert \Big(D_{\gsA} \cdot V^L_{\leq t-1} \cdot C_{\gsA} \Big) - \Big(Q[C,D]_{\gsL \gsR} \cdot V^L_{\leq t-1} \cdot Q[C,D]^\dagger_{\gsL \gsR} \Big)\Bigg\rVert_{\opnorm} \nonumber \\
        &\quad + \Bigg\lVert \Big( C_{\gsA}^\dagger \cdot V^{L,\dagger}_{\leq t-1} \cdot D_{\gsA}^\dagger\Big) - \Big( Q[C,D]_{\gsL \gsR} \cdot V^{L,\dagger}_{\leq t-1} \cdot Q[C,D]_{\gsL \gsR}^\dagger \Big) \Bigg\rVert_{\opnorm} \tag{by~\cref{lem:basic-ABA'B'-bound}}\\
        & \leq 4 \cdot \sqrt{\frac{2t(t+1)}{N}} \tag{by~\cref{claim:VL-VR-approx-invariance}}
    \end{align}
    The statement for $V^R$ can be proven similarly. This concludes the proof of this claim.
\end{proof}

We can now prove approximate invariance of $V$ (\cref{claim:two-sided-invariance}). By unitary invariance of $\norm{\cdot}_{\opnorm}$ we can restate lemma~\cref{claim:two-sided-invariance} as follows.

\begin{lemma}[\cref{claim:two-sided-invariance}, restated]
    For any positive integer $t$, and any pair of $n$-qubit unitaries $C,D$, we have 
    \begin{align}
        \norm{D_{\gsA} \cdot V_{\leq t} \cdot C_{\gsA}
        - Q[C,D]_{\gsL \gsR} \cdot V_{\leq t} \cdot Q[C,D]_{\gsL \gsR}^\dagger}_{\opnorm} & \leq 16\sqrt{\frac{2t(t+1)}{N}},\\
        \norm{C_{\gsA}^\dagger \cdot (V^\dagger)_{\leq t} \cdot D_{\gsA}^\dagger
        - Q[C,D]_{\gsL \gsR} \cdot (V^\dagger)_{\leq t} \cdot Q[C,D]_{\gsL \gsR}^\dagger}_{\opnorm} & \leq 16\sqrt{\frac{2t(t+1)}{N}},
    \end{align}
\end{lemma}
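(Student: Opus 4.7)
The plan is to expand $V$ via its definition into four summands,
\begin{equation*}
V \;=\; V^L \;-\; V^L V^R V^{R,\dagger} \;+\; V^{R,\dagger} \;-\; V^L V^{L,\dagger} V^{R,\dagger},
\end{equation*}
and then bound each term's conjugation error using invariance results for the individual factors, combining through the triangle inequality. For the two single-factor terms $V^L_{\leq t}$ and $V^{R,\dagger}_{\leq t}$, the bound $2\sqrt{2t(t+1)/N}$ each follows directly from \cref{claim:VL-VR-approx-invariance} (using \cref{eq:VR-approx-invariant-alt} for the mis-parenthesized $V^{R,\dagger}_{\leq t}$).

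For each product term, I would insert $C_{\gsA}C_{\gsA}^\dagger = \Id$ or $D_{\gsA}D_{\gsA}^\dagger = \Id$ between adjacent factors to split a three-factor product into two factors, each amenable to a separate invariance bound. Since the projector $V^R V^{R,\dagger}$ preserves length (mapping $\Pi_{=k}$ to $\Pi_{=k}$ for $k \geq 1$), it commutes with $\Pi_{\leq t}$, so I can write $(V^L V^R V^{R,\dagger})_{\leq t} = V^L_{\leq t} \cdot (V^R V^{R,\dagger})_{\leq t}$ and hence
\begin{equation*}
D_{\gsA} (V^L V^R V^{R,\dagger})_{\leq t} C_{\gsA} \;=\; \bigl(D_{\gsA} V^L_{\leq t} C_{\gsA}\bigr)\cdot\bigl(C_{\gsA}^\dagger (V^R V^{R,\dagger})_{\leq t} C_{\gsA}\bigr).
\end{equation*}
Then \cref{lem:basic-ABA'B'-bound} together with $2\sqrt{2t(t+1)/N}$ (for $V^L$, from \cref{claim:VL-VR-approx-invariance}) and $4\sqrt{2t(t+1)/N}$ (for $V^R V^{R,\dagger}$, from \cref{claim:VLproj-VRproj-approx-invariance}) yields $6\sqrt{2t(t+1)/N}$ of error, with target $Q V^L_{\leq t}(V^R V^{R,\dagger})_{\leq t} Q^\dagger = Q(V^L V^R V^{R,\dagger})_{\leq t} Q^\dagger$. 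The fourth term $V^L V^{L,\dagger} V^{R,\dagger}$ is handled analogously: $V^L V^{L,\dagger}$ also preserves length, so $(V^L V^{L,\dagger} V^{R,\dagger})_{\leq t} = (V^L V^{L,\dagger})_{\leq t} (V^{R,\dagger})_{\leq t}$, and inserting $D_{\gsA}^\dagger D_{\gsA} = \Id$ yields the same $6\sqrt{2t(t+1)/N}$ bound. Summing gives $(2+6+2+6)\sqrt{2t(t+1)/N} = 16\sqrt{2t(t+1)/N}$.

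The second inequality, concerning $(V^\dagger)_{\leq t}$, follows by a completely parallel argument applied to the expansion $V^\dagger = V^{L,\dagger} - V^R V^{R,\dagger} V^{L,\dagger} + V^R - V^R V^L V^{L,\dagger}$, using invariance bounds for $V^R$ and $V^{L,\dagger}$ obtained by taking adjoints of \cref{claim:VL-VR-approx-invariance} (and the identity $V^{L,\dagger}\Pi_{\leq t} = \Pi_{\leq t-1}V^{L,\dagger}$ noted just before \cref{eq:VR-approx-invariant-alt}), together with the same projector bounds from \cref{claim:VLproj-VRproj-approx-invariance}.

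The main technical obstacle is length-bookkeeping: one must use that $V^L$ and $V^R$ each change length by exactly $1$ (not ``at most $1$'') so that the projectors $V^L V^{L,\dagger}$ and $V^R V^{R,\dagger}$ preserve length and thus commute with every $\Pi_{\leq t}$, and one must verify that each identity insertion produces matching length-restricted operators on both sides of each application of \cref{lem:basic-ABA'B'-bound} — otherwise the cited invariance bounds cannot be applied verbatim. Modulo this careful tracking of where $\Pi_{\leq t}$ lives, the proof is a routine triangle-inequality composition of the lemmas already in hand.
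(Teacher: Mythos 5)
Your proposal matches the paper's proof essentially exactly: both expand $V$ into the same four summands, use the triangle inequality, split each three-factor product by inserting $C_{\gsA}C_{\gsA}^\dagger$ or $D_{\gsA}D_{\gsA}^\dagger$ so that \cref{lem:basic-ABA'B'-bound} applies, and cite \cref{claim:VL-VR-approx-invariance} and \cref{claim:VLproj-VRproj-approx-invariance} for the $2$ and $4$ contributions to get $2+6+2+6 = 16$. Your length-bookkeeping caveat is the same commutation observation the paper relies on, so the two arguments are the same.
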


\begin{proof}
    We will prove the first inequality, as the second follows from a symmetric argument. From the definition of $V$, we have
    \begin{align}
        V = V^L \cdot (\Id - V^R \cdot V^{R,\dagger}) + (\Id - V^L \cdot V^{L,\dagger}) \cdot V^{R,\dagger}.
    \end{align}
    From the definitions of $\Pi_{\leq t}$, $V^L$, and $V^R$, we note that
    \begin{align}
        (V^L \cdot V^R \cdot V^{R,\dagger})_{\leq t} &= V^L_{\leq t} \cdot (V^R \cdot V^{R,\dagger})_{\leq t},\\
        (V^L \cdot V^{L,\dagger} \cdot V^{R, \dagger})_{\leq t} &= (V^L \cdot V^{L,\dagger})_{\leq t} \cdot (V^{R, \dagger})_{\leq t}.
    \end{align}
    Using this fact and the definition of $V$, we can apply the triangle inequality to obtain,
    \begin{align}
        & \norm{D_{\gsA} \cdot V_{\leq t} \cdot C_{\gsA} - Q[C,D]_{\gsL \gsR} \cdot V_{\leq t} \cdot Q[C,D]_{\gsL \gsR}^\dagger }_{\opnorm} \\
        & \leq \Bigg\lVert D_{\gsA} \cdot V^L_{\leq t} \cdot C_{\gsA} - Q[C,D]_{\gsL \gsR} \cdot V^L_{\leq t} \cdot Q[C,D]_{\gsL \gsR}^\dagger \Bigg\rVert_{\opnorm}\label{eq:first-term-DVC} \\
        & + \Bigg\lVert D_{\gsA} \cdot V^L_{\leq t} \cdot (V^R \cdot V^{R,\dagger})_{\leq t} \cdot C_{\gsA} - Q[C,D]_{\gsL \gsR} \cdot V^L_{\leq t} \cdot (V^R \cdot V^{R,\dagger})_{\leq t} \cdot Q[C,D]_{\gsL \gsR}^\dagger \Bigg\rVert_{\opnorm} \label{eq:second-term-DVC} \\
        & + \Bigg\lVert D_{\gsA} \cdot (V^{R, \dagger})_{\leq t} \cdot C_{\gsA} - Q[C,D]_{\gsL \gsR} \cdot (V^{R, \dagger})_{\leq t} \cdot Q[C,D]_{\gsL \gsR}^\dagger \Bigg\rVert_{\opnorm} \label{eq:third-term-DVC} \\
        & + \Bigg\lVert D_{\gsA} \cdot (V^L \cdot V^{L,\dagger})_{\leq t} \cdot (V^{R, \dagger})_{\leq t} \cdot C_{\gsA} - Q[C,D]_{\gsL \gsR} \cdot (V^L \cdot V^{L,\dagger})_{\leq t} \cdot (V^{R, \dagger})_{\leq t} \cdot Q[C,D]_{\gsL \gsR}^\dagger \Bigg\rVert_{\opnorm}. \label{eq:fourth-term-DVC}
    \end{align}
    We now bound each of the four terms. The first term \cref{eq:first-term-DVC} is bounded above by $2 \cdot \sqrt{\frac{2 t (t+1)}{N}}$ from \cref{eq:VL-approx-invariant}.
    The third term \cref{eq:third-term-DVC} is also bounded above by $2 \cdot \sqrt{\frac{2 t (t+1)}{N}}$ from \cref{eq:VR-approx-invariant-alt}.
    The second and fourth terms \cref{eq:second-term-DVC}, \cref{eq:fourth-term-DVC} require the use of \cref{lem:basic-ABA'B'-bound}.
    Hence, we can bound the second term \cref{eq:second-term-DVC} as follows,
    \begin{align}
        & \Bigg\lVert D_{\gsA} \cdot V^L_{\leq t} \cdot (V^R \cdot V^{R,\dagger})_{\leq t} \cdot C_{\gsA} - Q[C,D]_{\gsL \gsR} \cdot V^L_{\leq t} \cdot (V^R \cdot V^{R,\dagger})_{\leq t} \cdot Q[C,D]_{\gsL \gsR}^\dagger \Bigg\rVert_{\opnorm} \\
        & = \Bigg\lVert D_{\gsA} \cdot V^L_{\leq t} \cdot C_{\gsA} \cdot C_{\gsA}^\dagger \cdot (V^R \cdot V^{R,\dagger})_{\leq t} \cdot C_{\gsA} \nonumber\\
        &\quad\quad\quad\quad - Q[C,D]_{\gsL \gsR} \cdot V^L_{\leq t} \cdot Q[C,D]_{\gsL \gsR}^\dagger \cdot Q[C, D]_{\gsL \gsR} \cdot (V^R \cdot V^{R,\dagger})_{\leq t} \cdot Q[C,D]_{\gsL \gsR}^\dagger \Bigg\rVert_{\opnorm} \\
        &\leq \Bigg\lVert D_{\gsA} \cdot V^L_{\leq t} \cdot C_{\gsA} - Q[C,D]_{\gsL \gsR} \cdot V^L_{\leq t} \cdot Q[C,D]_{\gsL \gsR}^\dagger \Bigg\rVert_{\opnorm} \label{eq:second-term-first-term-DVC} \\
        &+ \Bigg\lVert C_{\gsA}^\dagger \cdot (V^R \cdot V^{R,\dagger})_{\leq t} \cdot C_{\gsA} - Q[C,D]_{\gsL \gsR} \cdot (V^R \cdot V^{R,\dagger})_{\leq t} \cdot Q[C,D]_{\gsL \gsR}^\dagger \Bigg\rVert_{\opnorm}, \label{eq:second-term-second-term-DVC}\\
        &\leq 6 \cdot \sqrt{\frac{2 t (t+1)}{N}}.
    \end{align}
    where we used the fact that \cref{eq:second-term-first-term-DVC} is bounded above by $2 \cdot \sqrt{\frac{2 t (t+1)}{N}}$ from \cref{eq:VL-approx-invariant} and \cref{eq:second-term-second-term-DVC} is bounded above by $4 \cdot \sqrt{\frac{2 t (t+1)}{N}}$ from \cref{eq:VRproj-approx-invariant}.
    Similarly, we can bound the fourth term given \cref{eq:fourth-term-DVC} using the same argument to obtain
    \begin{equation}
        \Bigg\lVert D_{\gsA} \cdot V^L_{\leq t} \cdot (V^R \cdot V^{R,\dagger})_{\leq t} \cdot C_{\gsA} - Q[C,D]_{\gsL \gsR} \cdot V^L_{\leq t} \cdot (V^R \cdot V^{R,\dagger})_{\leq t} \cdot Q[C,D]_{\gsL \gsR}^\dagger \Bigg\rVert_{\opnorm}        \leq 6 \cdot \sqrt{\frac{2 t (t+1)}{N}}.
    \end{equation}
    Combining the bounds on the four terms, we obtain
    \begin{equation}
        \norm{D_{\gsA} \cdot V_{\leq t} \cdot C_{\gsA} - Q[C,D]_{\gsL \gsR} \cdot V_{\leq t} \cdot Q[C,D]_{\gsL \gsR}^\dagger }_{\opnorm} \leq 16 \cdot \sqrt{\frac{2 t (t+1)}{N}}.
    \end{equation}
    This completes the proof of the approximate unitary invariance of $V$.
\end{proof}

\section{Proof of~\cref{lem:twirling-strongPRU}}
\label{sec:twirling-strongPRU}

In this section, we prove~\cref{lem:twirling-strongPRU}. For convenience, we restate the lemma below.

\begin{lemma}[\cref{lem:twirling-strongPRU}, restated]
\label{lem:twirling-strongPRU-restated}
    For any unitary $2$-design $\frakD$ and integer $0 \leq t \leq N-1$, we have
    \begin{align}
        \norm{  \E_{C,D \gets \frakD} (C_{\gsA} \otimes Q[C,D]_{\gsL \gsR})^\dagger \cdot \Big( \Pi^{\bij}_{\leq t, \gsL \gsR} - \Pi^{\calD(W)}_{\leq t, \gsA \gsL \gsR}\Big) \cdot (C_{\gsA} \otimes Q[C,D]_{\gsL \gsR}) }_{\opnorm} &\leq  6t \sqrt{\frac{t}{N}},\\
        \norm{  \E_{C,D \gets \frakD} (D^\dagger_{\gsA} \otimes Q[C,D]_{\gsL \gsR})^\dagger \cdot \Big( \Pi^{\bij}_{\leq t, \gsL \gsR} - \Pi^{\calI(W)}_{\leq t, \gsA \gsL \gsR}\Big) \cdot (D^\dagger_{\gsA} \otimes Q[C,D]_{\gsL \gsR}) }_{\opnorm} &\leq 6t \sqrt{\frac{t}{N}},
\end{align}
\end{lemma}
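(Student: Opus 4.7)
The operator to bound is a twirl of the projector $P := \Pi^{\bij}_{\leq t} - \Pi^{\calD(W)}_{\leq t}$. Using \cref{fact:domain-image-W-WL-WR}, i.e., $\Pi^{\calD(W)} = \Pi^{\calD(W^L)} + \Pi^{\calI(W^R)}$, together with the fact that $\Pi^{\calD(W^L)}$ projects onto bijective states with $x \notin \Dom(L \cup R)$, one checks that $P$ decomposes as the orthogonal sum of an ``$L$-collision'' piece and an ``$R$-residual'' piece:
\begin{equation*}
P \;=\; P_L + P_R, \qquad P_L := \Pi^{\bij}_{\leq t}\cdot\Pi_{x\in\Dom(L)}, \qquad P_R := \Pi^{\bij}_{\leq t}\cdot\Pi_{x\in\Dom(R)} - \Pi^{\calI(W^R)}_{\leq t},
\end{equation*}
where both summands are projectors with orthogonal images. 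The plan is to bound the twirl of each piece separately, using a different identity from \cref{sec:haar-review}.

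For $P_L$, inclusion–exclusion for the commuting projectors $\{\Pi^{\eq}_{\sA,\darkgray{\sR^{(\ell)}_{\sX, i}}}\}_{i \in [\ell]}$ gives, on the $(\ell,r)$-block, $\Pi_{x\in\Dom(L)}\leq \sum_{i=1}^\ell \Pi^{\eq}_{\sA,\darkgray{\sR^{(\ell)}_{\sX, i}}}$. Each summand is supported only on $\sA$ and $\darkgray{\sR^{(\ell)}_{\sX, i}}$, both of which are conjugated by $C$ under $C_\sA \otimes Q[C,D]_{\sL\sR}$; all the $\overline{C}$, $D^T$, and $D^\dagger$ factors on unrelated registers sandwich identity and self-cancel. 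Hence \cref{claim:equality-clifford-twirl} applies directly and yields operator norm $\tfrac{2}{N+1}$ per term. Summing over $i\leq\ell$ and over $(\ell,r)$-blocks (which $C_\sA \otimes Q[C,D]$ preserves) bounds the twirled $P_L$ by $\tfrac{2t}{N+1}$.

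For $P_R$ the relevant $\sR$-slots are conjugated by $\overline{C}$ instead of $C$, so \cref{claim:equality-clifford-twirl} is replaced by the mixed identity \cref{claim:equality-clifford-conjugated-twirl}, $\E_C(C\otimes\overline{C})^\dagger\Pi^{\eq}(C\otimes\overline{C}) = \ketbra*{\EPR_N} + \tfrac{1}{N+1}(\Id-\ketbra*{\EPR_N})$. The rank-$1$ $\ketbra*{\EPR_N}$ term has operator norm $1$ and would be ruinous if not cancelled. The critical observation is that the subtraction of $\Pi^{\calI(W^R)}$ is precisely designed to absorb this $\EPR$ contribution: each image vector $W^R\ket*{y}\ket*{L}\ket*{R}$ is a uniform superposition $\tfrac{1}{\sqrt{N-|L\cup R|}}\sum_{x'\notin\Dom(L\cup R)}\ket*{x'}_\sA\ket*{L}\ket*{R\cup\{(x',y)\}}$ whose joint $x'$-label on the $\sA$-register and on the newly inserted $\sR$-slot is exactly the ``EPR pattern'' that the $C\otimes\overline{C}$ twirl collapses to $\ket*{\EPR_N}$ (up to a small boundary correction from the exclusion $x'\notin\Dom(L\cup R)$). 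Careful bookkeeping then shows that $\E[\text{twirled }\Pi^{\calI(W^R)}]$ matches the $\ketbra*{\EPR_N}$-component of $\E[\text{twirled }(\Pi^{\bij}_{\leq t}\cdot\Pi_{x\in\Dom(R)})]$, so only the $\tfrac{1}{N+1}(\Id-\ketbra*{\EPR_N})$ residuals survive in the twirled $P_R$, summing to $O(t/N)$ over the $r \leq t$ positions of $\sR$.

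Combining the two bounds yields $\|\E(C_\sA\otimes Q[C,D])^\dagger P (C_\sA\otimes Q[C,D])\|_{\opnorm} = O(t/N) \leq 6t\sqrt{t/N}$, establishing the first inequality. The second inequality (involving $D^\dagger_\sA$ and $\Pi^{\calI(W)}_{\leq t}$) follows by a fully dual argument---swap $L\leftrightarrow R$, $\Dom\leftrightarrow\Im$, $x \leftrightarrow y$, and $C\leftrightarrow D^\dagger$---using that $\{D^\dagger:D\gets\frakD\}$ is again an exact $2$-design so the same twirling identities apply to the $D$-conjugations. The main obstacle is the $P_R$ estimate: making the ``$\Pi^{\calI(W^R)}$ absorbs the $\EPR$ part'' argument precise for all $(\ell, r)$ requires carefully computing the action of $(C_\sA\otimes Q[C,D])^\dagger$ on each image vector of $W^R$ and matching the result with the $\ket*{\EPR_N}$-components of the twirled $\sum_j \Pi^{\eq}$, up to the $\tfrac{1}{N+1}$ residual.
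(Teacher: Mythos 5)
Your decomposition $P = P_L + P_R$ with $P_L = \Pi^{\bij}_{\leq t}\cdot\Pi_{x\in\Dom(L)}$ and $P_R = \Pi^{\bij}_{\leq t}\cdot\Pi_{x\in\Dom(R)} - \Pi^{\calI(W^R)}_{\leq t}$ is a valid and rather clean reorganization of the paper's argument, and the $P_L$ bound via \cref{claim:equality-clifford-twirl} is essentially sound (modulo the register typo $\sR^{(\ell)}_{\sX,i}$ where you mean $\sL^{(\ell)}_{\sX,i}$). But there is a real gap in the $P_R$ argument.

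The claimed ``exact absorption'' of the $\ketbra*{\EPR_N}$ contribution by $\Pi^{\calI(W^R)}$ cannot go through as stated. The obstruction is the $\Pi^{\bij}$ sandwich: $\Pi^{\calI(W^R)}$ on the $(\ell,r)$-block equals $\frac{N}{N-\ell-r+1}\,\Pi^{\bij}_{\ell,r}\cdot\bigl(\sum_{i\in[r]}\Pi^{\EPR}_{\darkgray{\sA,\sR^{(r)}_{\sX,i}}}\bigr)\cdot\Pi^{\bij}_{\ell,r}$, and $\Pi^{\bij}=\Pi^{\calR^2}\cdot\Pi^{\xydist}$ where $\Pi^{\xydist}$ does \emph{not} commute with $\Pi^{\EPR}$ (unlike $\Pi^{\eq}$, which is diagonal in the computational basis and does commute with the distinctness projector). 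Consequently $\Pi^{\bij}\bigl(\Pi^{\eq}-\Pi^{\EPR}\bigr)\Pi^{\bij}$ is \emph{not} bounded by $\Pi^{\eq}-\Pi^{\EPR}$ in PSD order the way $\Pi^{\bij}\Pi^{\eq}\Pi^{\bij}\preceq\Pi^{\eq}$ is for the $P_L$ piece, and the leak is not a clean $\tfrac{1}{N+1}$ residual. The paper quantifies this precisely: the operator norm of the commutator $\bigl[\Pi^{\xydist}_{\ell,r},\ \Pi^{\eq}-\Pi^{\EPR}\bigr]$ is about $\sqrt{(\ell+r)/N}$ (\cref{claim:approx-commute-EPR}), and after summing over the $r$ slots and including the $\tfrac{N}{N-\ell-r+1}$ prefactor this yields an additive error $\sim 2r\sqrt{(\ell+r)/N}\,\Id$ in the operator upper bound (\cref{claim:operator-upper-bound-twirling}). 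That error term is what dominates and produces the $6t\sqrt{t/N}$ in the lemma statement; the exact-twirl contributions you identify (the $\tfrac{2\ell+r}{N+1}$ piece from \cref{claim:twirling-ell-r}) are subleading.

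In short, your proof collapses the key obstruction to a phrase (``careful bookkeeping then shows $\ldots$ matches'') precisely at the step where the argument is hardest, and the bound you would get from the claimed exact cancellation, $O(t/N)$, is stronger than what the structure supports. To repair the argument you need to (i) separate $\Pi^{\bij}$ into $\Pi^{\calR^2}\cdot\Pi^{\xydist}$ and observe that $\Pi^{\calR^2}$ commutes with $Q[C,D]$ so the outer $\Pi^{\calR^2}$'s can be dropped from the operator norm, (ii) quantify the failure of $\Pi^{\xydist}$ to commute with $\Pi^{\EPR}$, and (iii) carry the resulting $\sqrt{(\ell+r)/N}$ error through the twirl. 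That is exactly the content of the paper's \cref{claim:approx-commute-EPR,claim:operator-upper-bound-twirling}.
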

In the above expressions, $\Pi^{\bij}_{\leq t, \gsL \gsR}$ is shorthand for $\Id_{\gsA} \otimes \Pi^{\bij}_{\leq t, \gsL \gsR}$, and thus the operators inside the $\norm{\cdot}_{\opnorm}$ act on $\sA,\sL,\sR$.

\subsection{The domain and image of $W$}

In order to prove~\cref{lem:twirling-strongPRU-restated}, we will first need to give an explicit characterization of the projectors $\Pi^{\calD(W)}$ and $\Pi^{\calI(W)}$.

\begin{definition}
    Let 
    \begin{align}
        \Pi^{\not\in \Dom} &\coloneqq \sum_{\substack{(L,R) \in \calR^2,\\ x\not\in \Dom(L \cup R)}} \ketbra*{x}_{\gsA} \otimes \ketbra*{L}_{\gsL} \otimes \ketbra*{R}_{\gsR}\\
        \Pi^{\not\in \Im} &\coloneqq \sum_{\substack{(L,R) \in \calR^2, \\ y\not\in \Im(L \cup R)}} \ketbra*{y}_{\gsA} \otimes \ketbra*{L}_{\gsL} \otimes \ketbra*{R}_{\gsR}.
    \end{align}
\end{definition}

\begin{definition}
    Let 
    \begin{align}
        \Pi^{\EPR} \coloneqq \ketbra*{\EPR_N} = \Big( \frac{1}{\sqrt{N}} \sum_{x \in [N]} \ket*{x} \ket*{x} \Big) \cdot \Big( \frac{1}{\sqrt{N}} \sum_{y \in [N]} \bra*{y} \bra*{y} \Big).
    \end{align}
\end{definition}

\begin{notation}
    We use the notation $\Pi^{\EPR}_{\darkgray{\sA,\sR^{(r)}_{\sX,i}}}$ for the projector on registers $\sA,\sR^{(r)}$ that applies $\Pi^{\EPR}$ to the registers $\sA$, $\sR^{(r)}_{\sX,i}$ (where $i \in [r]$), and acts as identity on the rest of $\sR^{(r)}$. The same notation applies for $\Pi^{\EPR}_{\darkgray{\sA,\sL^{(\ell)}_{\sY,i}}}$.
\end{notation}

\begin{fact}
\label{fact:pi-bij-R2-xydist}
    The projectors $\Pi^{\calR^2}_{\gsL \gsR}$ and $\Pi^{\xydist}_{\gsL \gsR}$ commute, and moreover
    \begin{align}
        \Pi^{\bij}_{\gsL \gsR} = \Pi^{\calR^2}_{\gsL \gsR} \cdot \Pi^{\xydist}_{\gsL \gsR}
    \end{align}
\end{fact}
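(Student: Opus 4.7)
The plan is to work sector-by-sector in the joint length decomposition and exploit the fact that distinctness of the $x$- and $y$-components is invariant under permutations within $\sL$ and within $\sR$ separately. First I would observe that $\Pi^{\calR^2}_{\gsL \gsR} = \sum_{\ell,r \geq 0} \Pi^{\calR}_{\ell,\gsL} \otimes \Pi^{\calR}_{r,\gsR}$ is block-diagonal with respect to the fixed-length Hilbert spaces $\calH_{\sL^{(\ell)}} \otimes \calH_{\sR^{(r)}}$, and that $\Pi^{\xydist}_{\gsL \gsR}$ is likewise block-diagonal, since ``all $x$-components are distinct and all $y$-components are distinct'' is a condition on a fixed finite collection of sub-registers. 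It therefore suffices to verify both assertions on each $(\ell,r)$ sector separately.

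On such a sector, the first factor equals the tensor product $\Pi^{N^2,\ell}_{\ssym} \otimes \Pi^{N^2,r}_{\ssym}$ of the two symmetric-subspace projectors, which can be written as the average $\tfrac{1}{\ell!\,r!}\sum_{\pi \in \sSym_\ell, \sigma \in \sSym_r} S_\pi \otimes S_\sigma$ over permutations that shuffle the pairs $(x_i,y_i)$ within $\sL$ and the pairs $(x'_j,y'_j)$ within $\sR$. The projector $\Pi^{\xydist}$ restricted to this sector is a sum of rank-one projectors onto basis strings $\ket*{x_1,y_1,\ldots,x_\ell,y_\ell,x'_1,y'_1,\ldots,x'_r,y'_r}$ whose $x$-components are pairwise distinct and whose $y$-components are pairwise distinct. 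This indexing set is invariant under $\sSym_\ell \times \sSym_r$, because reordering pairs within a register does not change the underlying collection of $x$'s or $y$'s. Hence $\Pi^{\xydist}$ commutes with each $S_\pi \otimes S_\sigma$, and therefore with the symmetrization projector, establishing the first half of the claim.

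Once commutativity is in hand, the product is itself a projector, onto the intersection of the two images. A symmetric state $\ket*{L}_{\gsL}\ket*{R}_{\gsR}$ lies in the image of $\Pi^{\xydist}$ if and only if every basis string appearing in its defining superposition has distinct $x$-components and distinct $y$-components, which is exactly the condition that $L \cup R$ is bijective, i.e., $(L,R) \in \calR^{2,\dist}$. Since distinct relation states are orthonormal within each fixed-length sector, the projector onto the span of $\{\ket*{L}_{\gsL}\ket*{R}_{\gsR}\}_{(L,R) \in \calR^{2,\dist}}$ equals $\sum_{(L,R) \in \calR^{2,\dist}} \ketbra*{L}_{\gsL} \otimes \ketbra*{R}_{\gsR}$, which is $\Pi^{\bij}_{\gsL \gsR}$ by definition.

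The only real ``obstacle'' is bookkeeping: one must verify that the $\sSym_\ell \times \sSym_r$ action arising from $\Pi^{\calR^2}$ permutes pairs only within a single register (so that the distinctness condition, which depends on both registers jointly, is preserved), and that bijectivity of $L \cup R$ is equivalent to every summand in the symmetric expansion of $\ket*{L}\ket*{R}$ having distinct $x$'s and distinct $y$'s. Both points follow immediately from the definitions in \cref{subsec:relation-states}, so no nontrivial computation is required.
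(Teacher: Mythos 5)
Your proof is correct. The paper states this as a ``Fact'' with no proof given, so there is no paper proof to compare against; your argument supplies the missing verification. The route you take---restrict to a fixed $(\ell,r)$ sector, write $\Pi^{\calR^2}$ as the $\tfrac{1}{\ell!\,r!}\sum_{\pi,\sigma} S_\pi \otimes S_\sigma$ average, observe that $\Pi^{\xydist}$ is diagonal in the computational basis over a $\sSym_\ell\times\sSym_r$-invariant index set, then identify the image of the product with the span of bijective-union relation states---is a natural one and uses only the definitions from the preliminaries.

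Two small points worth making explicit if this were to be written out in full. First, for the identification of the product with $\Pi^{\bij}$ you implicitly rely on the stronger dichotomy that $\Pi^{\xydist}\ket*{L}\ket*{R}$ equals $\ket*{L}\ket*{R}$ when $(L,R)\in\calR^{2,\dist}$ and equals $0$ otherwise (not merely that $\ket*{L}\ket*{R}$ is or is not in the image). This does follow: if $L\cup R$ fails to be bijective then \emph{every} basis string in the expansion of $\ket*{L}_{\gsL}\ket*{R}_{\gsR}$ contains the repeated $x$- or $y$-value, so the diagonal projector $\Pi^{\xydist}$ annihilates the whole vector; if $L\cup R$ is bijective then every such basis string has all-distinct components. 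Second, combining this dichotomy with the fact that the relation states form an orthonormal basis of $\mathrm{Im}(\Pi^{\calR^2})$ gives directly that $\Pi^{\calR^2}\cdot\Pi^{\xydist} = \sum_{(L,R)\in\calR^{2,\dist}}\ketbra*{L}\otimes\ketbra*{R} = \Pi^{\bij}_{\gsL\gsR}$, which also re-derives commutativity as a sanity check since the right-hand side is manifestly self-adjoint.
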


\begin{claim}
\label{claim:expand-out-Pi-DW-Pi-IW}
    \begin{align}
        \Pi^{\calD(W)} &= \Pi^{\bij}_{\gsL \gsR} \cdot \Bigg( \Pi^{\not\in \Dom}_{\gsA \gsL \gsR} + \sum_{\substack{\ell,r \geq 0: \\ \ell + r < N}} \frac{N}{N-\ell-r} \Pi_{\ell,\gsL} \otimes \sum_{i \in [r+1]} \Pi^{\EPR}_{\darkgray{\sA,\sR^{(r+1)}_{\sX,i}}} \Bigg) \cdot \Pi^{\bij}_{\gsL \gsR} \label{eq:expand-out-Pi-DW-Pi-IW-1}\\
        \Pi^{\calI(W)} &= \Pi^{\bij}_{\gsL \gsR} \cdot \Bigg( \Pi^{\not\in \Im}_{\gsA \gsL \gsR} + \sum_{\substack{\ell,r \geq 0: \\ \ell + r < N}} \frac{N}{N-\ell-r} \Pi_{r,\gsR} \otimes \sum_{i \in [\ell+1]} \Pi^{\EPR}_{\darkgray{\sA,\sL^{(\ell+1)}_{\sY,i}}} \Bigg) \cdot \Pi^{\bij}_{\gsL \gsR}\label{eq:expand-out-Pi-DW-Pi-IW-2}
    \end{align}
\end{claim}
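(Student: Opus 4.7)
The plan is to first apply \cref{fact:domain-image-W-WL-WR}, which decomposes $\Pi^{\calD(W)} = \Pi^{\calD(W^L)} + \Pi^{\calI(W^R)}$ and $\Pi^{\calI(W)} = \Pi^{\calI(W^L)} + \Pi^{\calD(W^R)}$. The two identities in the claim are related by the symmetry swapping $(L, \Dom, x) \leftrightarrow (R, \Im, y)$ (which also interchanges $W^L \leftrightarrow W^R$), so it suffices to establish \cref{eq:expand-out-Pi-DW-Pi-IW-1} and then invoke this symmetry for \cref{eq:expand-out-Pi-DW-Pi-IW-2}.

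The $\Pi^{\calD(W^L)}$ piece matches the $\Pi^{\not\in\Dom}$ term directly. By the definition of $W^L$, its domain is spanned by the orthonormal family $\{\ket*{x}_{\gsA} \ket*{L}_{\gsL} \ket*{R}_{\gsR} : (L, R) \in \calR^{2,\dist},\ x \not\in \Dom(L \cup R)\}$; orthonormality comes from the fact that the relation states $\ket*{L}, \ket*{R}$ for bijective $L, R$ are orthonormal. Since $\Pi^{\bij}_{\gsL \gsR}$ acts diagonally on $\sL \sR$ in the relation basis and $\Pi^{\not\in \Dom}_{\gsA \gsL \gsR}$ is diagonal in the same basis on $\sL \sR$ (with classical control on $\sA$), the two projectors commute, and their product $\Pi^{\bij} \cdot \Pi^{\not\in\Dom} \cdot \Pi^{\bij}$ projects exactly onto this span.

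The main work is identifying $\Pi^{\calI(W^R)}$ with the sum-of-EPR term, which I would do sector-by-sector. Fix $\ell, r \geq 0$ with $\ell + r < N$ and restrict to the subspace where $|L| = \ell$ and $|R'| = r+1$. Since $W^R$ is a partial isometry, $\Pi^{\calI(W^R)}$ restricted to this sector equals $\sum \ket*{\phi(L,R,y)}\bra*{\phi(L,R,y)}$, summed over $(L, R) \in \calR^{2,\dist}$ with $|L| = \ell$, $|R| = r$, and $y \notin \Im(L\cup R)$, where
\[
    \ket*{\phi(L,R,y)} = \tfrac{1}{\sqrt{N-\ell-r}}\sum_{x \notin \Dom(L\cup R)}\ket*{x}_{\gsA}\ket*{L}_{\gsL}\ket*{R\cup\{(x,y)\}}_{\gsR}.
\]
The key manipulation is the symmetric-subspace expansion
\[
    \ket*{R \cup \{(x,y)\}}_{\sR^{(r+1)}} = \tfrac{1}{\sqrt{r+1}}\sum_{i=1}^{r+1} \ket*{x,y}_{\sR^{(r+1)}_{\sX_i, \sY_i}} \otimes \ket*{R}_{\sR^{(r+1)}_{\setminus i}},
\]
which, combined with $\sum_x \ket*{x}_{\sA}\ket*{x}_{\sR^{(r+1)}_{\sX,i}}$ arising in the image-state sum, exposes an EPR-like pair between $\sA$ and the $i$-th $\sX$-position of $\sR^{(r+1)}$.

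With this expansion in hand, I would verify the sector identity by computing $\Pi^{\bij} \cdot \sum_i \Pi^{\EPR}_{\sA, \sR^{(r+1)}_{\sX,i}} \cdot \Pi^{\bij}$ on a basis state $\ket*{a}\ket*{L}\ket*{R'}$ with $(L, R')$ bijective. Applying the symmetric expansion in reverse, $\sum_i \Pi^{\EPR}_{\sA, \sR^{(r+1)}_{\sX,i}}\ket*{a}\ket*{R'}$ collapses to $\tfrac{1}{N}\sum_{x'}\ket*{x'}\ket*{(R'\setminus\{(a,y_a)\})\cup\{(x',y_a)\}}$ whenever $a \in \Dom(R')$ (with $y_a = R'(a)$) and to zero otherwise; the outer $\Pi^{\bij}$ then restricts the sum to $x' \notin \Dom(L \cup R') \setminus \{a\}$, and the prefactor $N/(N-\ell-r)$ matches the explicit formula for $\Pi^{\calI(W^R)}\ket*{a}\ket*{L}\ket*{R'}$ obtained directly from the image-state basis. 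The main obstacle is handling the sum $\sum_i \Pi^{\EPR}_{\sA, \sR^{(r+1)}_{\sX,i}}$, which is \emph{not} itself a projector since the individual EPR projectors for different $i$ do not commute: the $\Pi^{\bij}$ sandwich is what saves the day, because bijectivity forces each $i$ to correspond to a distinct pair in $R'$, so the apparent overcounting from the sum over $i$ is exactly absorbed by the $1/\sqrt{r+1}$ in the symmetric expansion, and the prefactor $N/(N-\ell-r)$ falls out naturally.
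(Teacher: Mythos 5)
Your proposal follows essentially the same route as the paper: decompose via $\Pi^{\calD(W)} = \Pi^{\calD(W^L)} + \Pi^{\calI(W^R)}$, identify $\Pi^{\calD(W^L)}$ with the $\Pi^{\not\in\Dom}$ term by inspection, and produce the EPR terms from $\Pi^{\calI(W^R)}$ via the symmetric-subspace expansion of $\ket*{R\cup\{(x,y)\}}$ over the $r+1$ insertion positions. The one organizational difference is that where you verify $\Pi^{\bij}\cdot\sum_i\Pi^{\EPR}\cdot\Pi^{\bij}$ directly on basis states $\ket*{a}\ket*{L}\ket*{R'}$ (and leave the final bookkeeping as "falls out naturally"), the paper instead factors $W^R_{\ell,r} = \frac{\sqrt{N}}{\sqrt{N-\ell-r}}\Pi^{\bij}_{\ell,r+1}\cdot E^R_{\ell,r}$ and computes $E^R_{\ell,r}E^{R,\dagger}_{\ell,r}$ algebraically using the symmetric-subspace form of $E^R$ (\cref{claim:alt-form-of-E}), which packages the exact combinatorial cancellation you gesture at and delivers the prefactor cleanly rather than by basis-state inspection.
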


\begin{proof}
    By~\cref{fact:domain-image-W-WL-WR},
    \begin{align}
        \Pi^{\calD(W)} &= \Pi^{\calD(W^L)} + \Pi^{\calI(W^R)}.
    \end{align}
    To prove~\cref{eq:expand-out-Pi-DW-Pi-IW-1}, it suffices to prove
    \begin{align}
        \Pi^{\calD(W^L)} &= \Pi^{\bij}_{\gsL \gsR} \cdot \Pi^{\not\in \Dom}_{\gsA \gsL \gsR} \cdot \Pi^{\bij}_{\gsL \gsR} \label{eq:expand-out-Pi-D-WL} \\
        \Pi^{\calI(W^R)} &= \Pi^{\bij}_{\gsL \gsR} \cdot \Bigg( \sum_{\substack{\ell,r \geq 0: \\ \ell + r < N}} \frac{N}{N-\ell-r} \Pi_{\ell,\gsL} \otimes \sum_{i \in [r+1]} \Pi^{\EPR}_{\darkgray{\sA,\sR^{(r+1)}_{\sX,i}}} \Bigg) \cdot \Pi^{\bij}_{\gsL \gsR}\label{eq:expand-out-Pi-I-WR}
    \end{align}
    \paragraph{Proof of~\cref{eq:expand-out-Pi-D-WL}.} From the definition of $W^L$, its domain is the image of the projector 
    \begin{align}
        \Pi^{\calD(W^L)} &= \sum_{\substack{(L,R) \in \calR^{2,\dist},\\ x\not\in \Dom(L,R)}} \ketbra*{x}_{\gsA} \otimes \ketbra*{L}_{\gsL} \otimes \ketbra*{R}_{\gsR} \\
        &= \Bigg( \sum_{(L,R) \in \calR^{2,\dist}} \ketbra*{L}_{\gsL} \otimes \ketbra*{R}_{\gsR} \Bigg) \cdot \Bigg(\sum_{\substack{(L,R) \in \calR^2,\\ x\not\in \Dom(L \cup R)}} \ketbra*{x}_{\gsA} \otimes \ketbra*{L}_{\gsL} \otimes \ketbra*{R}_{\gsR}\Bigg) \nonumber \\
        & \quad \cdot \Bigg( \sum_{(L,R) \in \calR^{2,\dist}} \ketbra*{L}_{\gsL} \otimes \ketbra*{R}_{\gsR} \Bigg)\\
        &= \Pi^{\bij}_{\gsL \gsR} \cdot \Pi^{\not\in \Dom}_{\gsA \gsL \gsR} \cdot \Pi^{\bij}_{\gsL \gsR}.
    \end{align}
    \paragraph{Proof of~\cref{eq:expand-out-Pi-I-WR}.} We can expand out
    \begin{align}
        \Pi^{\calI(W^R)} = W^R \cdot W^{R,\dagger} &= W^R \cdot \sum_{\substack{\ell,r \geq 0,\\ \ell + r < N}} \Pi_{\ell,r,\gsL \gsR} \cdot W^{R,\dagger}\\
        &= \sum_{\substack{\ell,r \geq 0,\\ \ell + r < N}} W_{\ell,r}^R \cdot W^{R,\dagger}_{\ell,r}
    \end{align}
    where the second equality uses the fact that the domain of $W^R$ is contained in the image of the projector $\sum_{\ell,r \geq 0, \ell + r < N} \Pi_{\ell,r,\gsL \gsR}$, i.e., $W^R$ is only defined on states where the $\sL$ and $\sR$ registers have sizes $\ell,r \geq 0$ where $\ell + r <N$. Thus, it suffices to prove that for all $\ell,r \geq 0$ such that $\ell+ r<N$ that 
    \begin{align}
        W_{\ell,r}^R \cdot W_{\ell,r}^{R,\dagger} = \frac{N}{N - \ell - r} \Pi^{\bij}_{\ell,r+1,\gsL \gsR} \cdot \Bigg( \Pi_{\ell,\gsL} \otimes \sum_{i \in [r+1]} \Pi^{\EPR}_{\darkgray{\sA,\sR^{(r+1)}_{\sX,i}}} \Bigg) \cdot \Pi^{\bij}_{\ell,r+1,\gsL \gsR},
    \end{align}
    where we use our notational convention that for an operator $B$ acting on a variable-length registers $\sL, \sR$, the operator $B_{\ell,r} = B \cdot \Pi_{\ell,r, \gsL \gsR}$ is the restriction of $B$ to states where the $\sL$ register is length $\ell$ and the $\sR$ register is length $r$.

    We will do this by relating $W^R$ to the $E^R$ operator defined in \cref{def:ELER-definition}.
    From the definition of $W^R$ in \cref{def:ternary-pfo-action}, we immediately have:
    \begin{align}
        W^R_{\ell, r} &\coloneqq \frac{1}{\sqrt{N - \ell - r}} \sum_{\substack{(L,R) \in \calR^{2,\dist},\\ \abs{L} = \ell, \abs{R} = r,\\ x\not\in \Dom(L \cup R),\\ y\not\in \Im(L \cup R)}} \ketbra*{x}{y}_{\gsA} \otimes \ketbra*{L}_{\gsL} \otimes \ketbra*{R \cup \{(x,y)\}}{R}_{\gsR}
    \end{align}
    Using \cref{def:ER-operator-num-form} for $E^{R}$, we have
    \begin{align}
        E^R_{\ell,r} = \frac{1}{\sqrt{N}} \sum_{\substack{(L,R) \in \calR^2,\\ \abs{L} = \ell, \abs{R} = r,\\ x,y \in [N]}} \sqrt{\num(R,(x,y)) + 1} \cdot \ketbra*{x}{y}_{\gsA} \otimes \ketbra*{L}_{\gsL} \otimes \ketbra*{R \cup \{(x,y)\}}{R}_{\gsR} \label{eq:ER-def-repeated-for-WR-proof}
    \end{align}
    By inspection, we can see that
    \begin{align}
        W^{R}_{\ell,r} = \frac{\sqrt{N}}{\sqrt{N - \ell - r}} \cdot \Pi^{\bij}_{\ell, r+1, \gsL \gsR} \cdot E^R_{\ell,r},\label{eq:relate-WR-ER}
    \end{align}
    since multiplying by $\Pi_{\bij}$ restricts the sum in~\cref{eq:ER-def-repeated-for-WR-proof} to $x,y,L,R$ such that $(L,R) \in \calR^{2,\dist}$, $y\not\in \Im(L \cup R)$, and and $x\not \in \Dom(L \cup R)$, and for all such $x,y,L,R$, we have $\sqrt{\num(R,(x,y)) + 1} = 1$.
    Now, recall from \cref{def:ER-operator-sym-form} in \cref{claim:alt-form-of-E} that $E^{R}_{\ell,r}$ can be also be written as 
    \begin{align}
        E^R_{\ell,r} = \frac{1}{\sqrt{N}} \sum_{x,y \in [N]} \ketbra*{x}{y}_{\gsA} \otimes \Pi^{\calR}_{\ell, \gsL} \otimes \Pi^{\calR}_{r+1, \gsR} \cdot \Big(\sqrt{r+1} \cdot \ket*{x,y} \otimes \Pi_{r} \Big)_{\gsR}.
    \end{align}
    And thus, we have the following expression for $E^R_{\ell,r}$
    \begin{align}
        &E^R_{\ell,r} \cdot E^{R,\dagger}_{\ell,r} \\
        & = \Big(\frac{1}{\sqrt{N}} \sum_{x,y \in [N]} \ketbra*{x}{y}_{\gsA} \otimes \Pi^{\calR}_{\ell, \gsL} \otimes \Pi^{\calR}_{r+1, \gsR} \cdot \Big(\sqrt{r+1} \cdot \ket*{x,y} \otimes \Pi_{r} \Big)_{\gsR}\Big) \nonumber \\
        & \quad \cdot \Big(\frac{1}{\sqrt{N}} \sum_{x',y' \in [N]} \ketbra*{y'}{x'}_{\gsA} \otimes \Pi^{\calR}_{\ell, \gsL} \otimes \Big(\sqrt{r+1} \cdot \bra*{x',y'} \otimes \Pi_{r} \Big)_{\gsR} \cdot \Pi^{\calR}_{r+1, \gsR} \Big)\\ 
        & = \frac{1}{N} \sum_{x,x' \in [N]} \ketbra*{x}{x'}_{\gsA} \otimes \Pi^{\calR}_{\ell, \gsL} \otimes \Pi^{\calR}_{r+1, \gsR} \cdot \Big((r+1) \cdot \ketbra*{x}{x'}_{\darkgray{\sR^{(r+1)}_{\sX,1}}} \otimes \sum_{y} \ketbra*{y}_{\darkgray{\sR^{(r+1)}_{\sY,1}}} \otimes \Pi_{r} \Big)_{\gsR} \cdot \Pi^{\calR}_{r+1,\gsR} \\
        & = \frac{1}{N} \sum_{x,x' \in [N]} \ketbra*{x}{x'}_{\gsA} \otimes \Pi^{\calR}_{\ell, \gsL} \otimes \Pi^{\calR}_{r+1, \gsR} \cdot \Big(\sum_{i \in [r+1]} \ketbra*{x}{x'}_{\darkgray{\sR^{(r+1)}_{\sX,i}}} \otimes \sum_{y} \ketbra*{y}_{\darkgray{\sR^{(r+1)}_{\sY,i}}} \otimes \Pi_{r} \Big)_{\gsR} \cdot \Pi^{\calR}_{r+1,\gsR} \\
        &= \Pi^{\calR^2}_{\ell,r+1,\gsL \gsR} \cdot \Big(\Pi_{\ell,\gsL} \otimes \sum_{i \in [r+1]} \Pi^{\EPR}_{\darkgray{\sA,\sR^{(r+1)}_{\sX,i}}}\Big) \cdot \Pi^{\calR^2}_{\ell,r+1,\gsL \gsR}.
    \end{align}
    And thus, combining this with~\cref{eq:relate-WR-ER}, we obtain
    \begin{align}
        &W_{\ell,r}^R \cdot W_{\ell,r}^{R,\dagger}\\
        &= \frac{N}{N - \ell - r} \Pi^{\bij}_{\ell,r+1,\gsL \gsR} \cdot \Bigg( \Pi^{\calR^2}_{\ell,r+1,\gsL \gsR} \cdot \Big(\Pi_{\ell,\gsL} \otimes \sum_{i \in [r+1]} \Pi^{\EPR}_{\darkgray{\sA,\sR^{(r+1)}_{\sX,i}}}\Big) \cdot \Pi^{\calR^2}_{\ell,r+1,\gsL \gsR} \Bigg) \cdot \Pi^{\bij}_{\ell,r+1,\gsL \gsR}\\
        &= \frac{N}{N - \ell - r} \Pi^{\bij}_{\ell,r+1,\gsL \gsR} \cdot \Bigg( \Pi_{\ell,\gsL} \otimes \sum_{i \in [r+1]} \Pi^{\EPR}_{\darkgray{\sA,\sR^{(r+1)}_{\sX,i}}} \Bigg) \cdot \Pi^{\bij}_{\ell,r+1,\gsL \gsR}.
    \end{align}
    This concludes the proof.
\end{proof}

\begin{definition}
\label{def:P-DW-P-IW}
    Define
    \begin{align}
        P^{\calD(W)}_{\gsA \gsL \gsR} &= 
        \Pi^{\xydist}_{\gsL \gsR} \cdot \Bigg( \Pi^{\not\in \Dom}_{\gsA \gsL \gsR} + \sum_{\substack{\ell,r \geq 0: \\ \ell + r < N}} \frac{N}{N-\ell-r} \Pi_{\ell,\gsL} \otimes \sum_{i \in [r+1]} \Pi^{\EPR}_{\darkgray{\sA,\sR^{(r+1)}_{\sX,i}}} \Bigg) \cdot \Pi^{\xydist}_{\gsL \gsR} \\
        P^{\calI(W)}_{\gsA \gsL \gsR} &= 
        \Pi^{\xydist}_{\gsL \gsR} \cdot \Bigg( \Pi^{\not\in \Dom}_{\gsA \gsL \gsR} + \sum_{\substack{\ell,r \geq 0: \\ \ell + r < N}} \frac{N}{N-\ell-r} \Pi_{r,\gsR} \otimes \sum_{i \in [\ell+1]} \Pi^{\EPR}_{\darkgray{\sA,\sL^{(\ell+1)}_{\sY,i}}} \Bigg) \cdot \Pi^{\xydist}_{\gsL \gsR}
    \end{align}
\end{definition}

Combining~\cref{claim:expand-out-Pi-DW-Pi-IW,def:P-DW-P-IW,fact:pi-bij-R2-xydist}, we have the following corollary.

\begin{corollary}
\label{corollary:convert-Pi-to-P}
    \begin{align}
        \Pi^{\calD(W)} &= \Pi^{\calR^2} \cdot P^{\calD(W)} \cdot \Pi^{\calR^2}\\
        \Pi^{\calI(W)} &= \Pi^{\calR^2} \cdot P^{\calI(W)} \cdot \Pi^{\calR^2}
    \end{align}
\end{corollary}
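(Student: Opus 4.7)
The plan is to observe that this corollary is essentially an algebraic repackaging of Claim \ref{claim:expand-out-Pi-DW-Pi-IW} using the factorization $\Pi^{\bij}_{\gsL \gsR} = \Pi^{\calR^2}_{\gsL \gsR} \cdot \Pi^{\xydist}_{\gsL \gsR}$ provided by Fact \ref{fact:pi-bij-R2-xydist}, together with the fact that these two projectors commute. The task is simply to peel off the outer $\Pi^{\calR^2}$ factors from the sandwich $\Pi^{\bij} \cdot (\cdots) \cdot \Pi^{\bij}$ and identify the remaining inner expression with $P^{\calD(W)}$ or $P^{\calI(W)}$ from Definition \ref{def:P-DW-P-IW}.

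Concretely, I would start with the first identity of Claim \ref{claim:expand-out-Pi-DW-Pi-IW} and substitute $\Pi^{\bij}_{\gsL \gsR} = \Pi^{\calR^2}_{\gsL \gsR} \cdot \Pi^{\xydist}_{\gsL \gsR}$ on both sides of the middle operator, writing
\[
\Pi^{\calD(W)} = \Pi^{\calR^2}_{\gsL \gsR} \cdot \Pi^{\xydist}_{\gsL \gsR} \cdot \Bigg( \Pi^{\not\in \Dom}_{\gsA \gsL \gsR} + \sum_{\substack{\ell,r \geq 0: \\ \ell + r < N}} \frac{N}{N-\ell-r} \, \Pi_{\ell,\gsL} \otimes \sum_{i \in [r+1]} \Pi^{\EPR}_{\darkgray{\sA,\sR^{(r+1)}_{\sX,i}}} \Bigg) \cdot \Pi^{\xydist}_{\gsL \gsR} \cdot \Pi^{\calR^2}_{\gsL \gsR}.
\]
The inner factor $\Pi^{\xydist} \cdot ( \cdots ) \cdot \Pi^{\xydist}$ is, by inspection, exactly the expression defining $P^{\calD(W)}_{\gsA \gsL \gsR}$ in Definition \ref{def:P-DW-P-IW}. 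This immediately gives the first claimed identity $\Pi^{\calD(W)} = \Pi^{\calR^2} \cdot P^{\calD(W)} \cdot \Pi^{\calR^2}$. The argument for $\Pi^{\calI(W)}$ is the mirror image: apply the same substitution to the second identity of Claim \ref{claim:expand-out-Pi-DW-Pi-IW} and match the resulting inner sandwich with $P^{\calI(W)}$.

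There is no real obstacle here, as all the substantive content has already been established in Claim \ref{claim:expand-out-Pi-DW-Pi-IW} (which gave the expansions in terms of $\Pi^{\bij}$) and Fact \ref{fact:pi-bij-R2-xydist} (which gave the commuting factorization of $\Pi^{\bij}$). The only care needed is the bookkeeping check that the $\Pi^{\xydist}$ factors can be brought \emph{inside} the outer $\Pi^{\calR^2}$ factors, which is justified by the commutativity statement in Fact \ref{fact:pi-bij-R2-xydist}.
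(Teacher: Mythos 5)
Your proposal is correct and follows exactly the paper's own (implicit) argument: the corollary is stated as an immediate consequence of \cref{claim:expand-out-Pi-DW-Pi-IW}, \cref{def:P-DW-P-IW}, and the commuting factorization $\Pi^{\bij}_{\gsL\gsR} = \Pi^{\calR^2}_{\gsL\gsR}\cdot\Pi^{\xydist}_{\gsL\gsR}$ from \cref{fact:pi-bij-R2-xydist}, which is precisely the substitution-and-match you carry out. (One minor remark: for the $\Pi^{\calI(W)}$ case your mirror-image matching requires reading the paper's definition of $P^{\calI(W)}$ with $\Pi^{\not\in\Im}$ in place of the apparently mistyped $\Pi^{\not\in\Dom}$, which is clearly the intended definition.)
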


\subsection{An operator upper bound}

\begin{claim}
\label{claim:operator-upper-bound-twirling}
    For any non-negative integers $\ell,r$,
    \begin{align}
        \Pi^{\xydist}_{\ell,r,\gsL \gsR} - P^{\calD(W)}_{\ell,r, \gsA \gsL \gsR} &\preceq  \frac{N}{N-\ell-r+1} \Bigg(\sum_{i \in [\ell]} \Pi^{\mathsf{eq}}_{\darkgray{\sA,\sL^{(\ell)}_{\sX,i}}} + \sum_{i \in [r]} \left(\Pi^{\mathsf{eq}}_{\darkgray{\sA,\sR^{(r)}_{\sX,i}}} - \Pi^{\EPR}_{\darkgray{\sA,\sR^{(r)}_{\sX,i}}} \right) +  2r \sqrt{\frac{\ell + r}{N}} \Id_{\gsA \gsL \gsR}\Bigg) \label{eq:psd-inequality-dist-PDW}\\
        \Pi^{\xydist}_{\ell,r,\gsL \gsR} - P^{\calI(W)}_{\ell,r, \gsA \gsL \gsR}
        &\preceq \frac{N}{N-\ell-r+1} \Bigg(\sum_{i \in [r]} \Pi^{\mathsf{eq}}_{\darkgray{\sA,\sR^{(r)}_{\sY,i}}} + \sum_{i \in [\ell]} \left(\Pi^{\mathsf{eq}}_{\darkgray{\sA,\sL^{(\ell)}_{\sY,i}}} - \Pi^{\EPR}_{\darkgray{\sA,\sL^{(\ell)}_{\sY,i}}} \right) + 2\ell \sqrt{\frac{\ell + r}{N}} \Id_{\gsA \gsL \gsR}\Bigg) \label{eq:psd-inequality-dist-PIW}
    \end{align}
\end{claim}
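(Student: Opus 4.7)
The plan is to prove \eqref{eq:psd-inequality-dist-PDW}; inequality \eqref{eq:psd-inequality-dist-PIW} follows by an analogous argument that exchanges the roles of $(\sL, x)$ and $(\sR, y)$ (mapping $W^R$-terms to $W^L$-terms). First I would unfold the restriction $P^{\calD(W)}_{\ell, r} := P^{\calD(W)} \cdot \Pi_{\ell, r, \gsL \gsR}$. Applying $\Pi_{\ell, r}$ to the definition in \cref{def:P-DW-P-IW} and using $\Pi^{\xydist}_{\ell, r}\cdot\Pi_{\ell, \gsL} = \Pi^{\xydist}_{\ell, r}$, only the summand with dummy indices $(\ell, r-1)$ survives the projection; the $r = 0$ case is immediate, and for $r \geq 1$ we obtain
\[
P^{\calD(W)}_{\ell, r} = \Pi^{\xydist}_{\ell, r}\,\Pi^{\not\in \Dom}\,\Pi^{\xydist}_{\ell, r} + \tfrac{N}{N-\ell-r+1}\,\Pi^{\xydist}_{\ell, r}\Big({\textstyle\sum_{i \in [r]}} \Pi^{\EPR}_{\darkgray{\sA, \sR^{(r)}_{\sX, i}}}\Big)\Pi^{\xydist}_{\ell, r}.
\]
Subtracting from $\Pi^{\xydist}_{\ell, r}$ and bounding $\Pi^{\in \Dom} := \Id - \Pi^{\not\in \Dom}$ by the diagonal operator inequality $\Pi^{\in \Dom} \preceq \sum_{i\in[\ell]}\Pi^{\mathsf{eq}}_{\darkgray{\sA, \sL^{(\ell)}_{\sX, i}}} + \sum_{i \in [r]}\Pi^{\mathsf{eq}}_{\darkgray{\sA, \sR^{(r)}_{\sX, i}}}$ (both sides are diagonal in the computational basis; on any basis state the LHS is $0$ or $1$ while the RHS counts multiplicities), and using that $\Pi^{\xydist}_{\ell, r}$ is likewise diagonal and therefore commutes with every $\Pi^{\mathsf{eq}}$, the claim reduces---after splitting $\Pi^{\mathsf{eq}}_{\darkgray{\sA, \sR^{(r)}_{\sX, i}}} = (\Pi^{\mathsf{eq}} - \Pi^{\EPR}) + \Pi^{\EPR}$ and absorbing the first piece into the RHS using $\tfrac{N}{N-\ell-r+1}\geq 1$---to the reduced inequality
\[
\sum_{i \in [r]}\Big(\Pi^{\EPR}_{\darkgray{\sA, \sR^{(r)}_{\sX, i}}} - \tfrac{N}{N-\ell-r+1}\Pi^{\xydist}_{\ell, r}\,\Pi^{\EPR}_{\darkgray{\sA, \sR^{(r)}_{\sX, i}}}\,\Pi^{\xydist}_{\ell, r}\Big) \preceq \tfrac{\ell+r-1}{N-\ell-r+1}\sum_{i\in[r]}\big(\Pi^{\mathsf{eq}}_{\darkgray{\sA, \sR^{(r)}_{\sX, i}}} - \Pi^{\EPR}_{\darkgray{\sA, \sR^{(r)}_{\sX, i}}}\big) + \tfrac{2Nr}{N-\ell-r+1}\sqrt{\tfrac{\ell+r}{N}}\,\Id.
\]

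To establish this reduced inequality, I would compute $\Pi^{\xydist}_{\ell, r}\,\Pi^{\EPR}_{\darkgray{\sA, \sR^{(r)}_{\sX, i}}}\,\Pi^{\xydist}_{\ell, r}$ explicitly by conditioning on the computational basis $\ket{b}$ of the ``rest'' registers (all of $\sL \otimes \sR$ except $\sR^{(r)}_{\sX, i}$). For each rest-bijective $\ket{b}$ the operator equals $\tfrac{N-\ell-r+1}{N}\ket{\hat\psi_b}\bra{\hat\psi_b}$ on $\sA \otimes \sR^{(r)}_{\sX, i}$, where $\ket{\hat\psi_b} := (N-\ell-r+1)^{-1/2}\sum_{z \notin \Dom(b)}\ket{z}\ket{z}$ is the EPR state restricted to $z$-values not already appearing in $\Dom(b)$; on non-bijective $\ket{b}$ it vanishes. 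Since $|\braket{\hat\psi_b}{\EPR_N}| = \sqrt{(N-\ell-r+1)/N}$, a direct $2\times 2$ eigenvalue calculation using the factorization $1 + 2p - 3p^2 = (1-p)(3p+1)$ with $p = \tfrac{N-\ell-r+1}{N}$ yields $\norm{\ket{\EPR_N}\bra{\EPR_N} - \tfrac{N-\ell-r+1}{N}\ket{\hat\psi_b}\bra{\hat\psi_b}}_{\opnorm} \leq 2\sqrt{(\ell+r-1)/N}$---this per-basis bound explains precisely both the factor of $2$ and the $\sqrt{(\ell+r)/N}$ appearing in the claimed error.

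The hard part will be handling the contributions from non-bij-rest $\ket{b}$, where the EPR-sandwich vanishes while the bare $\Pi^{\EPR}_{\darkgray{\sA, \sR^{(r)}_{\sX, i}}}$ does not. The saving grace is that $A := \Pi^{\xydist}_{\ell, r} - P^{\calD(W)}_{\ell, r}$ is supported on the image of $\Pi^{\xydist}_{\ell, r}$, which is orthogonal (as diagonal projectors) to $\Pi^{\text{non-bij-rest}}_i$ since full bijectivity implies rest-bijectivity. I would leverage this by block-decomposing the inequality $B - A \succeq 0$ in the $\Pi^{\xydist}_{\ell, r}$ versus $\Id - \Pi^{\xydist}_{\ell, r}$ splitting: on the $\Pi^{\xydist}_{\ell, r}$-block the bound follows from the per-index bij-rest computation above together with the distinctness identity $\Pi^{\xydist}\Pi^{\in\Dom}\Pi^{\xydist} = \Pi^{\xydist}\sum_i(\Pi^{\mathsf{eq}, L}_i + \Pi^{\mathsf{eq}, R}_i)$ (an equality on the $\Pi^{\xydist}$ subspace); the orthogonal block is trivially PSD since $B \succeq 0$; and the cross-block contributions are controlled by the $O(\sqrt{(\ell+r)/N})$-scale operator norm of $\Pi^{\EPR}_{\darkgray{\sA, \sR^{(r)}_{\sX, i}}}\cdot(\Id - \Pi^{\xydist}_{\ell, r})$, which admits a conditioning calculation parallel to the one above.
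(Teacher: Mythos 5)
Your initial manipulations track the paper exactly: you expand $P^{\calD(W)}_{\ell,r}$ by restriction, apply the diagonal bound $\Pi_{\ell,r}-\Pi^{\not\in\Dom}\preceq\sum_i\Pi^{\mathsf{eq},L}_i+\sum_i\Pi^{\mathsf{eq},R}_i$, and use commutation of $\Pi^{\xydist}$ with every $\Pi^{\mathsf{eq}}$ to dispose of the $L$-side. The divergence comes in how the remaining $R$-side term $\Pi^{\xydist}(\Pi^{\mathsf{eq},R}_i-\Pi^{\EPR}_i)\Pi^{\xydist}$ is handled, and here your plan has a real gap.

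First, as you yourself flag, the ``reduced inequality'' you state is literally false: on any rest-configuration $b$ that is not bijective, the sandwich $\Pi^{\xydist}\Pi^{\EPR}_i\Pi^{\xydist}$ vanishes while $\Pi^{\EPR}_i$ does not, so the left side has operator norm $1$ there, while the right side is only $O(\sqrt{(\ell+r)/N})$ for $\ell+r\ll N$. The escape you propose---go back to the original $B-A\succeq 0$ and block-decompose in $\Pi^{\xydist}$ vs.\ $\Id-\Pi^{\xydist}$, arguing that the $\Pi^{\xydist}$-block is fine, the orthogonal block is ``trivially PSD since $B\succeq 0$,'' and the cross blocks are ``controlled''---is not a sound PSD argument. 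Showing both diagonal blocks PSD and the off-diagonal blocks small does not imply the whole matrix is PSD; by the Schur complement, $B_{12}$ must be absorbed into $B_{11}$ via $B_{11}-B_{12}B_{22}^{-1}B_{12}^\dagger$, and precisely because $A_{11}$ can saturate $B_{11}$ on the $\Pi^{\xydist}$ subspace, this subtraction is not free. In addition, a smaller slip: your per-$b$ spectral computation analyzes $\ketbra*{\EPR_N}-\tfrac{N-\ell-r+1}{N}\ketbra*{\hat\psi_b}$, but after the $\tfrac{N}{N-\ell-r+1}$ prefactor cancels against the sandwich's $\tfrac{N-\ell-r+1}{N}$, the operator you actually need to bound on bijective blocks is $\ketbra*{\EPR_N}-\ketbra*{\hat\psi_b}$, whose norm is simply $\sqrt{1-|\braket*{\EPR_N}{\hat\psi_b}|^2}=\sqrt{(\ell+r-1)/N}$---no $2\times 2$ discriminant factorization needed there.

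The ingredient you're missing, and the paper's key move, is to target a single clean inequality, $\Pi^{\xydist}(\Pi^{\mathsf{eq},R}_i-\Pi^{\EPR}_i)\Pi^{\xydist}\preceq(\Pi^{\mathsf{eq},R}_i-\Pi^{\EPR}_i)+2\sqrt{(\ell+r)/N}\,\Id$, and to prove it via the abstract lemma: for projectors $A$ and $B$,
\[
ABA\ \preceq\ B\,+\,\norm{ABA-BAB}_{\opnorm}\cdot\Id,\qquad
\norm{ABA-BAB}_{\opnorm}\ \leq\ 2\,\norm{AB-BA}_{\opnorm}.
\]
This converts a commutator bound into a PSD inequality with explicit $\Id$ slack, which is exactly what sidesteps the Schur-complement difficulty you ran into. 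The commutator $\norm{[\Pi^{\xydist}_{\ell,r},\,\Pi^{\mathsf{eq}}_i-\Pi^{\EPR}_i]}_{\opnorm}\leq\sqrt{(\ell+r)/N}$ is then established by exactly the kind of ``condition on the rest registers'' computation you had in mind (\cref{claim:approx-commute-EPR}), so your per-basis instincts were pointing the right way; the missing step is the lemma that packages the commutator bound into an operator inequality without any block bookkeeping.
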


\begin{proof}
    We will prove the first inequality, and the second will follow from a symmetric argument. We begin by writing out $P^{\calD(W)}_{\ell,r, \gsA \gsL \gsR}$ as
    \begin{align}
        P^{\calD(W)}_{\ell,r, \gsA \gsL \gsR} &= \Pi^{\xydist}_{\gsL \gsR} \cdot \Bigg( \Pi^{\not\in \Dom}_{\gsA \gsL \gsR} + \sum_{\substack{\ell',r' \geq 0: \\ \ell' + r' < N}} \frac{N}{N-\ell'-r'} \Pi_{\ell',\gsL} \otimes \sum_{i \in [r'+1]} \Pi^{\EPR}_{\darkgray{\sA,\sR^{(r'+1)}_{\sX,i}}} \Bigg) \cdot \Pi^{\xydist}_{\gsL \gsR} \cdot \Pi_{\ell,r, \gsL \gsR} \\
        &=\Pi^{\xydist}_{\ell,r, \gsL \gsR} \cdot \Bigg( \Pi^{\not\in \Dom}_{\ell,r, \gsA \gsL \gsR} + \frac{N}{N-\ell- r+ 1} \Pi_{\ell,\gsL} \otimes \sum_{i \in [r]} \Pi^{\EPR}_{\darkgray{\sA,\sR^{(r)}_{\sX,i}}} \Bigg) \cdot \Pi^{\xydist}_{\ell,r,\gsL \gsR},
    \end{align}
    where the second equality uses the fact that $\Pi_{\ell,r,\gsL \gsR}$ commutes with every other projector in the above expression. Note that in the special case where $r = 0$, the above expression simplifies to 
    \begin{align}
        \Pi^{\xydist}_{\ell,r, \gsL \gsR} \cdot \Pi^{\not\in \Dom}_{\ell,r, \gsA \gsL \gsR} \cdot \Pi^{\xydist}_{\ell,r,\gsL \gsR}.
    \end{align}
    Next, we use our expression for $P^{\calD(W)}_{\ell,r,\gsA \gsL \gsR}$ to expand out
    \begin{align}
        \Pi^{\xydist}_{\ell,r,\gsL \gsR} - P^{\calD(W)}_{\ell,r,\gsA \gsL \gsR}
        &= \Pi^{\xydist}_{\ell,r,\gsL \gsR} \cdot \Bigg(\Pi_{\ell,r, \gsL \gsR} - \Pi^{\not\in \Dom}_{\ell,r,\gsA \gsL \gsR} - \frac{N}{N-\ell-r+1} \sum_{i \in [r]} \Pi^{\EPR}_{\darkgray{\sA,\sR^{(r)}_{\sX,i}}} \Bigg) \cdot \Pi^{\xydist}_{\ell,r,\gsL \gsR}. \label{eq:expand-Pi-dist-minus-P-DW}
    \end{align}
    Using the definition of $\Pi_{\ell,r}^{\not\in \Dom}$, we have
    \begin{align}
        \Pi_{\ell,r, \gsL \gsR} - \Pi^{\not\in \Dom}_{\ell,r, \gsA \gsL \gsR} \preceq \sum_{i \in [\ell]} \Pi^{\mathsf{eq}}_{\darkgray{\sA,\sL^{(\ell)}_{\sX,i}}} + \sum_{i \in [r]} \Pi^{\mathsf{eq}}_{\darkgray{\sA,\sR^{(r)}_{\sX,i}}}.
    \end{align}
    This inequality holds because any state in the image of $\Pi_{\ell,r, \gsL \gsR} - \Pi^{\not\in \Dom}$ must have a collision between the $\sA$ register and at least one of the registers $\{\sL^{(\ell)}_{\sX,i}\}_{i \in [\ell]} \cup \{\sR^{(r)}_{\sX,i}\}_{i \in [r]}$, and will therefore be in the image of at least one of the projectors on the right-hand-side. Plugging this inequality into~\cref{eq:expand-Pi-dist-minus-P-DW}, we have
    \begin{align}
        &\Pi^{\xydist}_{\ell,r,\gsL \gsR} - P^{\calD(W)}_{\ell,r,\gsA \gsL \gsR}\\
        &\preceq \Pi^{\xydist}_{\ell,r,\gsL \gsR} \cdot \Bigg(\sum_{i \in [\ell]} \Pi^{\mathsf{eq}}_{\darkgray{\sA,\sL^{(\ell)}_{\sX,i}}} + \sum_{i \in [r]} \Pi^{\mathsf{eq}}_{\darkgray{\sA,\sR^{(r)}_{\sX,i}}} - \frac{N}{N-\ell-r+1} \sum_{i \in [r]} \Pi^{\EPR}_{\darkgray{\sA,\sR^{(r)}_{\sX,i}}} \Bigg) \cdot \Pi^{\xydist}_{\ell,r,\gsL \gsR}\\
        &\preceq \frac{N}{N-\ell-r+1} \underbrace{\Bigg( \Pi^{\xydist}_{\ell,r,\gsL \gsR} \cdot \Big(\sum_{i \in [\ell]} \Pi^{\mathsf{eq}}_{\darkgray{\sA,\sL^{(\ell)}_{\sX,i}}}\Big) \cdot \Pi^{\xydist}_{\ell,r,\gsL \gsR} \Bigg)}_{\mathrm{Term}_1} \nonumber \\
        & \quad + \frac{N}{N-\ell-r+1} \underbrace{\Bigg(\Pi^{\xydist}_{\ell,r,\gsL \gsR} \cdot \Big(\sum_{i \in [r]} \Big(\Pi^{\mathsf{eq}}_{\darkgray{\sA,\sR^{(r)}_{\sX,i}}} -  \Pi^{\EPR}_{\darkgray{\sA,\sR^{(r)}_{\sX,i}}}\Big) \Big) \cdot \Pi^{\xydist}_{\ell,r,\gsL \gsR} \Bigg)}_{\mathrm{Term}_2} \label{eq:psd-inequality-dist-PDW-proof-1}.
    \end{align}
    Comparing (\ref{eq:psd-inequality-dist-PDW-proof-1}) to the right-hand side of~\cref{eq:psd-inequality-dist-PDW}, it suffices to prove that:
    \begin{align}
        \mathrm{Term}_1 &\preceq \sum_{i \in [\ell]} \Pi^{\mathsf{eq}}_{\darkgray{\sA,\sL^{(\ell)}_{\sX,i}}}, \label{eq:term-1-psd-bound}\\
        \mathrm{Term}_2 &\preceq \sum_{i \in [r]} \left(\Pi^{\mathsf{eq}}_{\darkgray{\sA,\sR^{(r)}_{\sX,i}}} - \Pi^{\EPR}_{\darkgray{\sA,\sR^{(r)}_{\sX,i}}} \right) + 2 r \sqrt{\frac{\ell+r}{N}} \Id_{\gsA \gsL \gsR}.  \label{eq:term-2-psd-bound}
    \end{align}
    \paragraph{Bounding the first term.} To prove~\cref{eq:term-1-psd-bound}, it suffices to prove for each $i \in [\ell]$,
    \begin{align}
        \Pi^{\xydist}_{\ell,r,\gsL \gsR} \cdot  \Pi^{\mathsf{eq}}_{\darkgray{\sA,\sL^{(\ell)}_{\sX,i}}} \cdot \Pi^{\xydist}_{\ell,r,\gsL \gsR} &\preceq \Pi^{\mathsf{eq}}_{\darkgray{\sA,\sL^{(\ell)}_{\sX,i}}}.
    \end{align}
    This holds because $\Pi^{\mathsf{eq}}_{\darkgray{\sA,\sL^{(\ell)}_{\sX,i}}}$ commutes with $\Pi^{\xydist}_{\ell,r,\gsL \gsR}$ (since both are diagonal in the standard basis) and thus:
    \begin{align}
        \Pi^{\xydist}_{\ell,r,\gsL \gsR} \cdot  \Pi^{\mathsf{eq}}_{\darkgray{\sA,\sL^{(\ell)}_{\sX,i}}} \cdot \Pi^{\xydist}_{\ell,r,\gsL \gsR} &= \Pi^{\mathsf{eq}}_{\darkgray{\sA,\sL^{(\ell)}_{\sX,i}}} \cdot \Pi^{\xydist}_{\ell,r,\gsL \gsR} \cdot \Pi^{\mathsf{eq}}_{\darkgray{\sA,\sL^{(\ell)}_{\sX,i}}} \preceq \Pi^{\mathsf{eq}}_{\darkgray{\sA,\sL^{(\ell)}_{\sX,i}}} \cdot \Pi_{\ell,r,\gsL \gsR} \cdot \Pi^{\mathsf{eq}}_{\darkgray{\sA,\sL^{(\ell)}_{\sX,i}}} = \Pi^{\mathsf{eq}}_{\darkgray{\sA,\sL^{(\ell)}_{\sX,i}}}. \label{eq:piD-piDPfwpiD}
    \end{align}

    \paragraph{Bounding the second term.} To prove~\cref{eq:term-2-psd-bound}, it suffices to prove for each $i \in [r]$,
    \begin{align}
        \Pi^{\xydist}_{\ell,r,\gsL \gsR} \cdot  \Big(\Pi^{\mathsf{eq}}_{\darkgray{\sA,\sR^{(r)}_{\sX,i}}} -  \Pi^{\EPR}_{\darkgray{\sA,\sR^{(r)}_{\sX,i}}}\Big) \cdot \Pi^{\xydist}_{\ell,r,\gsL \gsR} \preceq \left(\Pi^{\mathsf{eq}}_{\darkgray{\sA,\sR^{(r)}_{\sX,i}}} - \Pi^{\EPR}_{\darkgray{\sA,\sR^{(r)}_{\sX,i}}} \right) + 2 \sqrt{\frac{\ell + r}{N}} \Id_{\gsA \gsL \gsR} \label{eq:psd-ineq-goal-term-2}
    \end{align}
    Note that $\Pi^{\xydist}_{\ell,r,\gsL \gsR}$ and $\Big(\Pi^{\mathsf{eq}}_{\darkgray{\sA,\sR^{(r)}_{\sX,i}}} -  \Pi^{\EPR}_{\darkgray{\sA,\sR^{(r)}_{\sX,i}}}\Big)$ do not commute, so we cannot simply apply the argument we used to bound the first term. However, these two operators \emph{almost} commute. In particular, 
    \begin{align}
        \norm{\Pi^{\xydist}_{\ell,r, \gsL \gsR} \cdot \Big(\Pi^{\mathsf{eq}}_{\darkgray{\sA,\sR^{(r)}_{\sX,i}}} -  \Pi^{\EPR}_{\darkgray{\sA,\sR^{(r)}_{\sX,i}}}\Big)  - \Big(\Pi^{\mathsf{eq}}_{\darkgray{\sA,\sR^{(r)}_{\sX,i}}} -  \Pi^{\EPR}_{\darkgray{\sA,\sR^{(r)}_{\sX,i}}}\Big) \cdot \Pi^{\xydist}_{\ell,r, \gsL \gsR}}_{\opnorm} \leq \sqrt{\frac{\ell+r}{N}}. \label{eq:Pi-eq-epr-dist-almost-commute}
    \end{align}
    We prove~\cref{eq:Pi-eq-epr-dist-almost-commute} in~\cref{claim:approx-commute-EPR}, which follows this proof. To simplify notation for the following steps, let us write $A \coloneqq \Pi^{\xydist}_{\ell,r,\gsL \gsR}$ and $B \coloneqq \Pi^{\mathsf{eq}}_{\darkgray{\sA,\sR^{(r)}_{\sX,i}}} -  \Pi^{\EPR}_{\darkgray{\sA,\sR^{(r)}_{\sX,i}}}$. Note that $B$ is a projector because $\Pi^{\EPR}_{\darkgray{\sA,\sR^{(r)}_{\sX,i}}}$ projects onto a subspace of the image of $\Pi^{\mathsf{eq}}_{\darkgray{\sA,\sR^{(r)}_{\sX,i}}}$. 
    
    Using the definition of operator norm, we have
    \begin{align}
        A \cdot B \cdot A - B \cdot A \cdot B \preceq \norm{A \cdot B \cdot A - B \cdot A \cdot B}_{\opnorm} \cdot \Id.
    \end{align}
    Adding $B \cdot A \cdot B$ to both sides, we get
    \begin{align}
        A \cdot B \cdot A &\preceq B \cdot A \cdot B + \norm{A \cdot B \cdot A - B \cdot A \cdot B}_{\opnorm} \cdot \Id \\
        &\preceq B + \norm{A \cdot B \cdot A - B \cdot A \cdot B}_{\opnorm} \cdot \Id, \label{eq:ABA-psd-inequality}
    \end{align}
    where the second inequality uses the fact that $B \cdot A \cdot B \preceq B \cdot \Id \cdot B = B$ for projectors $A$ and $B$. Now, 
    \begin{align}
        \norm{A \cdot B \cdot A - B \cdot A \cdot B}_{\opnorm} &\leq \norm{A \cdot B \cdot A - A \cdot B \cdot A \cdot B}_{\opnorm} +  \norm{A \cdot B \cdot A \cdot B - B \cdot A \cdot B}_{\opnorm} \tag{triangle inequality}\\
        &= \norm{A \cdot B \cdot (B \cdot A - A \cdot B)}_{\opnorm} +  \norm{(A \cdot B - B \cdot A) \cdot A \cdot B}_{\opnorm} \tag{since $A^2 = A$ and $B^2 = B$}\\
        &\leq \norm{A \cdot B}_{\opnorm} \cdot \norm{B \cdot A - A \cdot B}_{\opnorm} +  \norm{A \cdot B - B \cdot A}_{\opnorm} \cdot \norm{A \cdot B}_{\opnorm} \\
        &\leq \norm{B \cdot A - A \cdot B}_{\opnorm} +  \norm{A \cdot B - B \cdot A}_{\opnorm} \\
        & \leq 2 \sqrt{\frac{\ell + r}{N}} \tag{by~\cref{claim:approx-commute-EPR}}
    \end{align}
    Plugging this inequality into~\cref{eq:ABA-psd-inequality}, and plugging in $A = \Pi^{\xydist}_{\ell,r,\gsL \gsR}$ and $B = \Pi^{\mathsf{eq}}_{\darkgray{\sA,\sR^{(r)}_{\sX,i}}} -  \Pi^{\EPR}_{\darkgray{\sA,\sR^{(r)}_{\sX,i}}}$ yields~\cref{eq:psd-ineq-goal-term-2}. This completes the proof.
\end{proof}

We now prove~\cref{claim:approx-commute-EPR}, which we used in the previous proof.

\begin{claim} 
\label{claim:approx-commute-EPR}
    For any integers $\ell,r \geq 0$ such that $\ell + r \leq N$ and any index $i \in [r]$ (if such an $i$ exists\footnote{Note that $[r] \coloneqq \{1,2,\dots,r\}$, so no $i$ exists when $r = 0$. But in this case, the bound is trivially satisfied.}), we have
    \begin{align}
        \norm{\Pi^{\xydist}_{\ell,r, \gsL \gsR} \cdot \Big(\Pi^{\mathsf{eq}}_{\darkgray{\sA,\sR^{(r)}_{\sX,i}}} -  \Pi^{\EPR}_{\darkgray{\sA,\sR^{(r)}_{\sX,i}}}\Big)  - \Big(\Pi^{\mathsf{eq}}_{\darkgray{\sA,\sR^{(r)}_{\sX,i}}} -  \Pi^{\EPR}_{\darkgray{\sA,\sR^{(r)}_{\sX,i}}}\Big) \cdot \Pi^{\xydist}_{\ell,r, \gsL \gsR}}_{\opnorm} \leq \sqrt{\frac{\ell+r}{N}}. \label{eq:Pi-eq-epr-dist-almost-commute-repeated}
    \end{align}
    Similarly, for any integers $\ell, r \geq 0$ such that $\ell + r \leq N$, and any index $i \in [\ell]$ (if such an $i$ exists), we have
    \begin{align}
        \norm{\Pi^{\xydist}_{\ell,r, \gsL \gsR} \cdot \Big(\Pi^{\mathsf{eq}}_{\darkgray{\sA,\sL^{(\ell)}_{\sY,i}}} -  \Pi^{\EPR}_{\darkgray{\sA,\sL^{(\ell)}_{\sY,i}}}\Big)  - \Big(\Pi^{\mathsf{eq}}_{\darkgray{\sA,\sL^{(\ell)}_{\sY,i}}} -  \Pi^{\EPR}_{\darkgray{\sA,\sL^{(\ell)}_{\sY,i}}}\Big) \cdot \Pi^{\xydist}_{\ell,r, \gsL \gsR}}_{\opnorm} \leq \sqrt{\frac{\ell+r}{N}}. \label{eq:Pi-eq-epr-dist-almost-commute-repeated-alt}
    \end{align}
\end{claim}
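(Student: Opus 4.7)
My approach is to reduce the first inequality to bounding a single term $\norm{(\Id - \Pi^{\EPR}) \, \Pi^{\xydist} \, \Pi^{\EPR}}_{\opnorm}$ and then establish that bound by an explicit computation. The reduction rests on two observations. First, $\Pi^{\mathsf{eq}}_{\darkgray{\sA,\sR^{(r)}_{\sX,i}}}$ is diagonal in the computational basis, so it commutes with $\Pi^{\xydist}_{\ell,r,\gsL \gsR}$ and contributes nothing to the commutator, reducing the problem to bounding $\norm{[\Pi^{\xydist}, \Pi^{\EPR}_{\darkgray{\sA,\sR^{(r)}_{\sX,i}}}]}_{\opnorm}$. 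Second, by inserting $\Id = \Pi^{\EPR} + (\Id - \Pi^{\EPR})$ on the left of $\Pi^{\xydist} \Pi^{\EPR}$ and on the right of $\Pi^{\EPR} \Pi^{\xydist}$, I obtain
\[
    [\Pi^{\xydist}, \Pi^{\EPR}] \;=\; (\Id - \Pi^{\EPR}) \, \Pi^{\xydist} \, \Pi^{\EPR} \;-\; \Pi^{\EPR} \, \Pi^{\xydist} \, (\Id - \Pi^{\EPR}).
\]
The second summand is the adjoint of the first (up to sign), and for any $\ket*{\psi}$ the two images $(\Id - \Pi^{\EPR}) \Pi^{\xydist} \Pi^{\EPR} \ket*{\psi}$ and $\Pi^{\EPR} \Pi^{\xydist} (\Id - \Pi^{\EPR}) \ket*{\psi}$ lie in the orthogonal subspaces $\mathrm{Im}(\Id - \Pi^{\EPR})$ and $\mathrm{Im}(\Pi^{\EPR})$ respectively. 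Combining with the fact that the first summand depends only on $\Pi^{\EPR} \ket*{\psi}$ and the second only on $(\Id - \Pi^{\EPR}) \ket*{\psi}$, a Pythagorean bound gives $\norm{[\Pi^{\xydist}, \Pi^{\EPR}]}_{\opnorm} \leq \norm{(\Id - \Pi^{\EPR}) \, \Pi^{\xydist} \, \Pi^{\EPR}}_{\opnorm}$.

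To bound the remaining quantity, I restrict to unit vectors in the image of $\Pi^{\EPR}$, which factor as $\ket*{\EPR_N}_{\darkgray{\sA,\sR^{(r)}_{\sX,i}}} \otimes \ket*{\chi}$ with $\ket*{\chi}$ supported on all registers other than $\sA$ and $\sR^{(r)}_{\sX,i}$. Expanding $\ket*{\chi}$ in the computational basis of the $d := \ell + r - 1$ other $x$-type registers gives $\ket*{\chi} = \sum_{\mathbf{a}} \ket*{\mathbf{a}} \otimes \ket*{\chi_{\mathbf{a}}}$, where each $\ket*{\chi_{\mathbf{a}}}$ is supported on the $y$-type registers. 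Applying $\Pi^{\xydist}$: configurations with non-distinct $\mathbf{a}$ are annihilated, and for each distinct $\mathbf{a}$ the EPR factor on $(\sA, \sR^{(r)}_{\sX,i})$ is projected onto the subspace where $y \notin \mathbf{a}$, while $\ket*{\chi_{\mathbf{a}}}$ is projected onto the $y$-distinct subspace (a contraction that can only decrease the norm).

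Call the resulting restricted EPR vector $\Pi^{(\mathbf{a})} \ket*{\EPR_N}$; it has squared norm $(N - d)/N$. A direct calculation, decomposing $\Pi^{(\mathbf{a})} \ket*{\EPR_N}$ along $\ket*{\EPR_N}$ and its orthogonal complement, shows that the orthogonal component has squared norm exactly $d(N - d)/N^2 \leq d/N$; only this orthogonal part survives the outer $(\Id - \Pi^{\EPR})$. Summing orthogonally over distinct $\mathbf{a}$ and using $\sum_{\mathbf{a}} \norm{\chi_{\mathbf{a}}}^2 \leq 1$ yields $\norm{(\Id - \Pi^{\EPR}) \, \Pi^{\xydist} \, \Pi^{\EPR}}_{\opnorm}^2 \leq d/N \leq (\ell + r)/N$, establishing the first inequality.

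The second inequality follows by an entirely symmetric argument with the roles of the $x$- and $y$-registers interchanged: $\ket*{\EPR_N}$ now lives on $\sA$ and the $y$-register $\sL^{(\ell)}_{\sY,i}$, the other $y$-type registers (again numbering $\ell + r - 1$) play the role of $\mathbf{a}$, and the $y$-distinctness factor of $\Pi^{\xydist}$ replaces the $x$-distinctness factor in every step. The main obstacle I anticipate is identifying the orthogonal-range/domain decomposition in step one; without it, a naive triangle inequality summing the $d$ individual collision commutators $[\Pi^{\mathsf{eq}}_{\sR^{(r)}_{\sX,i}, j}, \Pi^{\EPR}]$ for $j \neq i$ would only yield $O(d / \sqrt{N})$, which is off by a square-root factor and would be insufficient for the PSD bound in \cref{claim:operator-upper-bound-twirling}.
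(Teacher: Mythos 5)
Your proof is correct, and it reaches the paper's bound $\sqrt{(\ell+r-1)(N-\ell-r+1)}/N \le \sqrt{(\ell+r)/N}$ by a genuinely different route. The paper computes the commutator $\Pi^{\EPR}\Pi^{\xydist} - \Pi^{\xydist}\Pi^{\EPR}$ directly: after expanding $\Pi^{\xydist}$ in the computational basis, the commutator is seen to be block-diagonal in the values of all registers except $(\sA, \sR^{(r)}_{\sX,i})$, and each block takes the explicit form $c\,(\ketbra*{\phi}{\psi} - \ketbra*{\psi}{\phi})$ for orthogonal unit vectors $\ket*{\psi},\ket*{\phi}$ and $c = \sqrt{(N-\ell-r+1)(\ell+r-1)}/N$, which has operator norm exactly $c$. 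Your approach instead writes $[\Pi^{\xydist},\Pi^{\EPR}] = M - M^\dagger$ with $M := (\Id - \Pi^{\EPR})\,\Pi^{\xydist}\,\Pi^{\EPR}$, uses the fact that $M$ and $M^\dagger$ have orthogonal domains \emph{and} orthogonal ranges to get $\|M - M^\dagger\|_{\opnorm} \le \|M\|_{\opnorm}$, then bounds $\|M\|_{\opnorm}$ by restricting to the rank-one image of $\Pi^{\EPR}$, expanding in the other $x$-register basis, and isolating the component of each projected EPR vector orthogonal to $\ket*{\EPR_N}$. The per-block EPR overlap computation ($\|(\Id-\Pi^{\EPR})\Pi^{(\mathbf{a})}\ket*{\EPR_N}\|^2 = d(N-d)/N^2$ with $d = \ell+r-1$) is the same arithmetic as the paper's $\ket*{\psi},\ket*{\phi}$ block, so the final constants match; what your version buys is that the $M$-reduction step replaces the explicit two-sided block decomposition with a cleaner one-sided bound, avoiding the need to handle $\Pi^{\xydist}\Pi^{\EPR}$ and $\Pi^{\EPR}\Pi^{\xydist}$ simultaneously. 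Both arguments are elementary; the paper's is more self-contained as a single display, yours makes the structural reason for the $\sqrt{\cdot}$-scaling (rather than a linear-in-$d$ scaling) more transparent.
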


\begin{proof}
    We will prove the first inequality, and the second will follow from a symmetric argument. Assume without loss of generality that $i = 1$. Since $\Pi^{\xydist}_{\ell,r, \gsL \gsR}$ and $\Pi^{\mathsf{eq}}_{\darkgray{\sA,\sR^{(r)}_{\sX,1}}}$ commute,
    \begin{align}
        &\Pi^{\xydist}_{\ell,r, \gsL \gsR} \cdot \Big(\Pi^{\mathsf{eq}}_{\darkgray{\sA,\sR^{(r)}_{\sX,1}}} -  \Pi^{\EPR}_{\darkgray{\sA,\sR^{(r)}_{\sX,1}}}\Big)  - \Big(\Pi^{\mathsf{eq}}_{\darkgray{\sA,\sR^{(r)}_{\sX,1}}} -  \Pi^{\EPR}_{\darkgray{\sA,\sR^{(r)}_{\sX,1}}}\Big) \cdot \Pi^{\xydist}_{\ell,r, \gsL \gsR}\\
        &= \Pi^{\EPR}_{\darkgray{\sA,\sR^{(r)}_{\sX,1}}} \cdot \Pi^{\xydist}_{\ell,r, \gsL \gsR} - \Pi^{\xydist}_{\ell,r, \gsL \gsR} \cdot \Pi^{\EPR}_{\darkgray{\sA,\sR^{(r)}_{\sX,1}}}.
    \end{align}
    We will write down an explicit expression for the operator $\Pi^{\EPR}_{\darkgray{\sA,\sR^{(r)}_{\sX,1}}} \cdot \Pi^{\xydist}_{\ell,r, \gsL \gsR} - \Pi^{\xydist}_{\ell,r, \gsL \gsR} \cdot \Pi^{\EPR}_{\darkgray{\sA,\sR^{(r)}_{\sX,1}}}$, which we will then use to bound the operator norm.

    We can expand out $\Pi^{\xydist}_{\ell,r, \gsL \gsR}$ as
    \begin{align}
        \Pi^{\xydist}_{\ell,r, \gsL \gsR} &= \sum_{\substack{(x,x') \in [N]^{\ell+r}_{\dist},\\ (y,y') \in [N]^{\ell+r}_{\dist}}} \ketbra*{x,y}_{\darkgray{\sL^{(\ell)}}} \otimes \ketbra*{x',y'}_{\darkgray{\sR^{(r)}}},
    \end{align}
    where in the above sum, $x,y \in [N]^\ell$, $x',y' \in [N]^r$, and $\ket*{x,y}_{\darkgray{\sL^{(\ell)}}} = \ket*{x_1,y_1,\dots,x_\ell,y_\ell}_{\darkgray{\sL^{(\ell)}}}$. 

    We can then expand out
    \begin{align}
        &(\Id_{\gsA} \otimes \Pi^{\xydist}_{\ell,r, \gsL \gsR}) \cdot \Pi^{\EPR}_{\darkgray{\sA,\sR^{(r)}_{\sX,1}}} \\
        &= \Bigg(\sum_{\substack{(x,x') \in [N]^{\ell+r}_{\dist},\\ (y,y') \in [N]^{\ell+r}_{\dist}}} \ketbra*{x,y}_{\darkgray{\sL^{(\ell)}}} \otimes \ketbra*{x',y'}_{\darkgray{\sR^{(r)}}}\Bigg) \cdot \Bigg(\frac{1}{N} \sum_{u,v \in [N]} \ketbra*{u}{v}_{\gsA} \otimes \ketbra*{u}{v}_{\darkgray{\sR^{(r)}_{\sX,1}}} \Bigg)\\
        &= \sum_{\substack{(x,x'_{[2,r]}) \in [N]^{\ell+r-1}_{\dist},\\ (y,y') \in [N]^{\ell+r}_{\dist}}} \ketbra*{x,y}_{\darkgray{\sL^{(\ell)}}} \otimes \ketbra*{x'_{[2:r]},y'}_{\darkgray{\sR \setminus \sR^{(r)}_{\sX,1}}} \otimes \frac{1}{N} \sum_{\substack{v \in [N],\\ u\not\in \{x,x'_{[2,r]}\}}} \ketbra*{u}{v}_{\gsA} \otimes \ketbra*{u}{v}_{\darkgray{\sR^{(r)}_{\sX,1}}}, \label{eq:explicit-dist-epr}
    \end{align}
    where $x'_{[2,r]} \in [N]^{r-1}$, the notation $u \not\in \{x,x'_{[2,r]}\}$ means $u$ must be distinct from each element of $x$ and $x'_{[2,r]}$, and $\sR \setminus \sR^{(r)}_{\sX,1}$ refers to all of the registers $\sR = (\sR^{(r)}_{\sX,1},\sR^{(r)}_{\sY_1},\dots, \sR^{(r)}_{\sX,r},\sR^{(r)}_{\sY_r})$ \emph{except} for $\sR^{(r)}_{\sX,1}$.
    
    Next, to get an explicit expression for $\Pi^{\EPR}_{\darkgray{\sA,\sR^{(r)}_{\sX,1}}} \cdot \Pi^{\xydist}_{\ell,r, \gsL \gsR}$, we can just take the conjugate transpose of~\cref{eq:explicit-dist-epr}. By exploiting symmetry, we can write the result in such a way that it looks nearly identical to (\ref{eq:explicit-dist-epr}), except for the part highlighted in red.
    \begin{align}
        &\Pi^{\EPR}_{\darkgray{\sA,\sR^{(r)}_{\sX,1}}} \cdot (\Id_{\gsA} \otimes \Pi^{\xydist}_{\ell,r, \gsL \gsR}) \\
        &=\sum_{\substack{(x,x'_{[2,r]}) \in [N]^{\ell+r-1}_{\dist},\\ (y,y') \in [N]^{\ell+r}_{\dist}}} \ketbra*{x,y}_{\darkgray{\sL^{(\ell)}}} \otimes \ketbra*{x'_{[2:r]},y'}_{\darkgray{\sR \setminus \sR^{(r)}_{\sX,1}}} \otimes \frac{1}{N} \sum_{\substack{\textcolor{red}{u \in [N],}\\ \textcolor{red}{v\not\in \{x,x'_{[2,r]}\}}}} \ketbra*{u}{v}_{\gsA} \otimes \ketbra*{u}{v}_{\darkgray{\sR^{(r)}_{\sX,1}}}. \label{eq:explicit-epr-dist}
    \end{align}
    
    Subtracting (\ref{eq:explicit-dist-epr}) from (\ref{eq:explicit-epr-dist}), we get
    \begin{align}
        & \Pi^{\EPR}_{\darkgray{\sA,\sR^{(r)}_{\sX,1}}} \cdot (\Id_{\gsA} \otimes \Pi^{\xydist}_{\ell,r, \gsL \gsR}) -  (\Id_{\gsA} \otimes \Pi^{\xydist}_{\ell,r, \gsL \gsR}) \cdot \Pi^{\EPR}_{\darkgray{\sA,\sR^{(r)}_{\sX,1}}} \\
        &=\sum_{\substack{(x,x'_{[2,r]}) \in [N]^{\ell+r-1}_{\dist},\\ (y,y') \in [N]^{\ell+r}_{\dist}}} \ketbra*{x,y}_{\darkgray{\sL^{(\ell)}}} \otimes \ketbra*{x'_{[2:r]},y'}_{\darkgray{\sR \setminus \sR^{(r)}_{\sX,1}}} \nonumber \\
        &\quad \otimes \frac{1}{N} \sum_{\substack{u \in [N],\\ v\not\in \{x,x'_{[2,r]}\}}} \ketbra*{u}{v}_{\gsA} \otimes \ketbra*{u}{v}_{\darkgray{\sR^{(r)}_{\sX,1}}} - \frac{1}{N} \sum_{\substack{v \in [N],\\ u\not\in \{x,x'_{[2,r]}\}}} \ketbra*{u}{v}_{\gsA} \otimes \ketbra*{u}{v}_{\darkgray{\sR^{(r)}_{\sX,1}}}.
        \end{align}
        Since this is a block diagonal matrix with blocks indexed by $x,y,x'_{[2:r]},y'$, the operator norm is
        \begin{align}
            &\norm{\Pi^{\EPR}_{\darkgray{\sA,\sR^{(r)}_{\sX,1}}} \cdot (\Id_{\gsA} \otimes \Pi^{\xydist}_{\ell,r, \gsL \gsR}) -  (\Id_{\gsA} \otimes \Pi^{\xydist}_{\ell,r, \gsL \gsR}) \cdot \Pi^{\EPR}_{\darkgray{\sA,\sR^{(r)}_{\sX,1}}}}_{\opnorm}  \\
            &= \max_{\substack{(x,x'_{[2,r]}) \in [N]^{\ell+r-1}_{\dist},\\ (y,y') \in [N]^{\ell+r}_{\dist}}} \Bigg\lVert \frac{1}{N} \sum_{\substack{u \in [N],\\ v\not\in \{x,x'_{[2,r]}\}}} \ketbra*{u}{v}_{\gsA} \otimes \ketbra*{u}{v}_{\darkgray{\sR^{(r)}_{\sX,1}}} - \frac{1}{N} \sum_{\substack{v \in [N],\\ u\not\in \{x,x'_{[2,r]}\}}} \ketbra*{u}{v}_{\gsA} \otimes \ketbra*{u}{v}_{\darkgray{\sR^{(r)}_{\sX,1}}} \Bigg\rVert_{\opnorm} \label{eq:opnorm-blocks-uvvu}.
        \end{align}
        Fix any choice of $(x,x'_{[2,r]}) \in [N]^{\ell+r-1}_{\dist}, (y,y') \in [N]^{\ell+r}_{\dist}$. We can rewrite
        \begin{align}
            &\frac{1}{N} \sum_{\substack{u \in [N],\\ v\not\in \{x,x'_{[2,r]}\}}} \ketbra*{u}{v}_{\gsA} \otimes \ketbra*{u}{v}_{\darkgray{\sR^{(r)}_{\sX,1}}} - \frac{1}{N} \sum_{\substack{v \in [N],\\ u\not\in \{x,x'_{[2,r]}\}}} \ketbra*{u}{v}_{\gsA} \otimes \ketbra*{u}{v}_{\darkgray{\sR^{(r)}_{\sX,1}}} \\
            &= \frac{1}{N} \sum_{\substack{u \in  \{x,x'_{[2,r]}\},\\ v\not\in \{x,x'_{[2,r]}\}}} \ketbra*{u}{v}_{\gsA} \otimes \ketbra*{u}{v}_{\darkgray{\sR^{(r)}_{\sX,1}}} - \frac{1}{N} \sum_{\substack{u\not\in \{x,x'_{[2,r]}\},\\ v \in  \{x,x'_{[2,r]}\}}} \ketbra*{u}{v}_{\gsA} \otimes \ketbra*{u}{v}_{\darkgray{\sR^{(r)}_{\sX,1}}}\\
            &= \frac{\sqrt{(N-r-\ell+1)(r+\ell-1)}}{N} \Big( \ketbra*{\phi}{\psi} - \ketbra*{\psi}{\phi} \Big),
        \end{align}
        where $\ket*{\psi}$ and $\ket*{\phi}$ are the following vectors
        \begin{align}
            \ket*{\psi} &\coloneqq \frac{1}{\sqrt{N-r-\ell+1}} \sum_{ u\not\in \{x,x'_{[2,r]}\}} \ket*{u}_{\gsA} \ket*{u}_{\darkgray{\sR^{(r)}_{\sX,1}}},\\
            \ket*{\phi} &\coloneqq \frac{1}{\sqrt{r + \ell - 1}} \sum_{ u\in \{x,x'_{[2,r]}\}} \ket*{u}_{\gsA} \ket*{u}_{\darkgray{\sR^{(r)}_{\sX,1}}},
        \end{align}
        Note that if $r = 1$ and $\ell = 0$, these vectors have norm $0$ and we are done. For all other choices of $r \geq 1$, $\ell \geq 0$ such that $r+ \ell \leq N$, these are orthogonal unit vectors, and thus $\ketbra*{\psi}{\phi} - \ketbra*{\phi}{\psi}$ has operator norm $1$. It follows that the overall operator norm is at most 
        \begin{align}
        \frac{\sqrt{(N-r-\ell+1)(r+\ell-1)}}{N} \leq \sqrt{\frac{\ell + r}{N}}.
    \end{align}
    which completes the proof.
\end{proof}

\subsection{An intermediate lemma on $2$-design twirling}

\begin{claim}[Twirling]
\label{claim:twirling-ell-r}
    For any unitary $2$-design $\frakD$ and any non-negative integers $\ell,r$,
    \begin{align}
    &\norm{\E_{C,D \gets \frakD} (C_{\gsA} \otimes Q[C,D]_{\gsL \gsR})^\dagger \cdot \Bigg(\sum_{i \in [\ell]} \Pi^{\mathsf{eq}}_{\darkgray{\sA,\sL^{(\ell)}_{\sX,i}}} + \sum_{i \in [r]} \left(\Pi^{\mathsf{eq}}_{\darkgray{\sA,\sR^{(r)}_{\sX,i}}} - \Pi^{\EPR}_{\darkgray{\sA,\sR^{(r)}_{\sX,i}}} \right) \Bigg) \cdot (C_{\gsA} \otimes Q[C,D]_{\gsL \gsR})}_{\opnorm} \nonumber \\
    &\leq \frac{2\ell + r}{N+1},\\
    &\norm{\E_{C,D \gets \frakD} (D^\dagger_{\gsA} \otimes Q[C,D]_{\gsL \gsR})^\dagger \cdot \Bigg( \sum_{i \in [r]} \Pi^{\mathsf{eq}}_{\darkgray{\sA,\sR^{(r)}_{\sX,i}}} + \sum_{i \in [\ell]} \left(\Pi^{\mathsf{eq}}_{\darkgray{\sA,\sL^{(\ell)}_{\sX,i}}} - \Pi^{\EPR}_{\darkgray{\sA,\sL^{(\ell)}_{\sX,i}}} \right) \Bigg) \cdot (D^\dagger_{\gsA} \otimes Q[C,D]_{\gsL \gsR})}_{\opnorm} \nonumber \\
    &\leq \frac{2r + \ell}{N+1} .
\end{align}
\end{claim}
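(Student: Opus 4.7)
The plan is to apply the triangle inequality to the sum inside the operator norm and bound each twirled summand individually using the two twirling identities already established (\cref{claim:equality-clifford-twirl,claim:equality-clifford-conjugated-twirl}). The key structural observation is that $Q[C,D]$ is a tensor product of single-register operators ($C$ on each $\sL^{(\ell)}_{\sX,j}$, $D^T$ on each $\sL^{(\ell)}_{\sY,j}$, $\overline{C}$ on each $\sR^{(r)}_{\sX,j}$, $D^\dagger$ on each $\sR^{(r)}_{\sY,j}$), and each projector appearing in the sum is supported on only two registers: $\sA$ together with one specific slot of $\sL^{(\ell)}$ or $\sR^{(r)}$. Every other tensor factor in $C_{\gsA} \otimes Q[C,D]_{\gsL \gsR}$ acts on registers disjoint from the support of the projector, commutes with it, and cancels in the conjugation, reducing each summand to a bipartite twirl on $\sA$ and the relevant slot.

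For the first inequality, each term $\Pi^{\mathsf{eq}}_{\darkgray{\sA,\sL^{(\ell)}_{\sX,i}}}$ is conjugated by $C \otimes C$ (the $\sL^{(\ell)}_{\sX,i}$-factor of $Q[C,D]$ combined with $C$ on $\sA$); by \cref{claim:equality-clifford-twirl} the twirl equals $\tfrac{2}{N+1}\Pi_{\ssym}^{N,2}$, which has operator norm $\tfrac{2}{N+1}$. Each term $\Pi^{\mathsf{eq}}_{\darkgray{\sA,\sR^{(r)}_{\sX,i}}} - \Pi^{\EPR}_{\darkgray{\sA,\sR^{(r)}_{\sX,i}}}$ is conjugated by $C \otimes \overline{C}$; by \cref{claim:equality-clifford-conjugated-twirl} the twirl of $\Pi^{\mathsf{eq}}$ equals $\ketbra*{\EPR_N} + \tfrac{1}{N+1}(\Id - \ketbra*{\EPR_N})$, and since $\ket*{\EPR_N}$ is fixed by $C \otimes \overline{C}$, the projector $\Pi^{\EPR}$ is invariant under the twirl. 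Hence the difference averages to $\tfrac{1}{N+1}(\Id - \ketbra*{\EPR_N})$, of operator norm $\tfrac{1}{N+1}$. The triangle inequality then gives $\ell \cdot \tfrac{2}{N+1} + r \cdot \tfrac{1}{N+1} = \tfrac{2\ell + r}{N+1}$.

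The second inequality follows by a symmetric argument in which $D^\dagger$ on $\sA$ plays the role previously played by $C$, and the relevant slots in $Q[C,D]$ are the $\sY$-slots (where $D^T$ and $D^\dagger$ sit). Each $\Pi^{\mathsf{eq}}_{\darkgray{\sA,\sR^{(r)}_{\sY,i}}}$ is twirled by $D^\dagger \otimes D^\dagger$, which is a standard $2$-design twirl (since $\{D^\dagger : D \sim \frakD\}$ is a $2$-design whenever $\frakD$ is), and each $\Pi^{\mathsf{eq}}_{\darkgray{\sA,\sL^{(\ell)}_{\sY,i}}} - \Pi^{\EPR}_{\darkgray{\sA,\sL^{(\ell)}_{\sY,i}}}$ is twirled by $D^\dagger \otimes D^T = D^\dagger \otimes \overline{D^\dagger}$, which is a mixed twirl. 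The same two identities give contributions $\tfrac{2}{N+1}$ and $\tfrac{1}{N+1}$ per term, summing to $r \cdot \tfrac{2}{N+1} + \ell \cdot \tfrac{1}{N+1} = \tfrac{2r + \ell}{N+1}$. There is no substantive obstacle beyond the bookkeeping of matching each projector with the two unitary factors acting on its support and verifying that all other factors (including the independently-averaged $D$-factors in the first inequality and the $C$-factors in the second) drop out of the conjugation before averaging.
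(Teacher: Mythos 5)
Your proof is correct and follows the same route as the paper: triangle inequality over the $\ell+r$ terms, identify the two tensor factors of $C_{\gsA}\otimes Q[C,D]$ (resp.\ $D^\dagger_{\gsA}\otimes Q[C,D]$) that act on the support of each projector while all others commute past and cancel, and then apply \cref{claim:equality-clifford-twirl} (standard twirl, contributing $\tfrac{2}{N+1}$) and \cref{claim:equality-clifford-conjugated-twirl} (mixed twirl, contributing $\tfrac{1}{N+1}$) to each surviving bipartite conjugation.

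One point worth flagging: in the second inequality you replaced the paper's $\sR^{(r)}_{\sX,i}$ and $\sL^{(\ell)}_{\sX,i}$ with $\sR^{(r)}_{\sY,i}$ and $\sL^{(\ell)}_{\sY,i}$. This is actually the \emph{correct} thing to do --- the $\sX$ in the paper's statement of \cref{claim:twirling-ell-r} appears to be a typo. With $\sX$, the $D^\dagger$ on $\sA$ would be paired against the independently-sampled $C$ (or $\overline{C}$) factor of $Q[C,D]$, which is not the twirl structure needed; with $\sY$, it pairs with $D^T=\overline{D^\dagger}$ or $D^\dagger$, yielding the mixed and standard twirls respectively, exactly as required. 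Your choice is also the only one consistent with the second inequality of \cref{claim:operator-upper-bound-twirling} (which uses $\sY$ slots) and hence the only one under which the proof of \cref{lem:twirling-strongPRU} goes through. Your observation that $\{D^\dagger : D\sim\frakD\}$ is again a $2$-design, and that $\ket*{\EPR_N}$ is invariant under $C\otimes\overline{C}$ so $\Pi^{\EPR}$ is preserved by the mixed twirl, are both accurate and necessary.
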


Before we prove~\cref{claim:twirling-ell-r}, let us recall the following claims that were proven in the preliminaries.

\begin{claim}[\cref{claim:equality-clifford-twirl}, restated]
    For any $n$-qubit unitary $2$-design $\frakD$,
    \begin{align}
        \E_{U\gets \frakD} \Big[ (U \otimes U)^\dagger \cdot \Pi^{\mathsf{eq}} \cdot (U \otimes U)\Big] = \frac{2}{N+1} \cdot \Pi_{\ssym}^{N, 2}.
    \end{align}
\end{claim}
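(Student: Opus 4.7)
The plan is to reproduce the argument already used for this claim in \cref{sec:haar-review}, which is short and elementary. I would start by expanding the equality projector in the computational basis as $\Pi^{\mathsf{eq}} = \sum_{x \in [N]} \ketbra*{x} \otimes \ketbra*{x}$, and then push the $U \otimes U$ conjugation inside the sum to get
\begin{equation}
    \E_{U \gets \frakD} (U \otimes U)^\dagger \cdot \Pi^{\mathsf{eq}} \cdot (U \otimes U) = \sum_{x \in [N]} \E_{U \gets \frakD} \Big( U^\dagger \ketbra*{x} U \Big)^{\otimes 2}.
\end{equation}

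Next, because $\frakD$ is a unitary $2$-design (\cref{def:unitary-design}), the expectation of any degree-$2$ polynomial in the entries of $U$ and $U^\dagger$ matches the corresponding expectation under the Haar measure, so I may replace $\E_{U \gets \frakD}$ by $\E_{U \gets \mu_{\mathsf{Haar}}}$ without changing the right-hand side. Under the Haar measure, the state $U^\dagger \ket*{x}$ is Haar-random on $\mathbb{C}^N$ for every fixed $x \in [N]$, so each summand equals $N \cdot \E_{\psi} \ketbra*{\psi}^{\otimes 2}$ after combining the $N$ identical contributions across $x$.

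Finally I would invoke \cref{prop:sym-haar}, which identifies $\E_{\psi} \ketbra*{\psi}^{\otimes 2} = \Pi_{\ssym}^{N,2}/\binom{N+1}{2}$, to conclude
\begin{equation}
    \E_{U \gets \frakD} (U \otimes U)^\dagger \cdot \Pi^{\mathsf{eq}} \cdot (U \otimes U) = N \cdot \frac{\Pi_{\ssym}^{N,2}}{\binom{N+1}{2}} = \frac{2}{N+1} \cdot \Pi_{\ssym}^{N,2},
\end{equation}
which is the claimed identity. There is no real obstacle: the entire proof is a chain of three standard manipulations (basis expansion, replacing a $2$-design expectation by a Haar expectation, and applying the Haar-state symmetric-subspace identity), each of which is either a definition or a fact already recorded in the preliminaries.
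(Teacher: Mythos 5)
Your proposal is correct and follows the paper's proof of \cref{claim:equality-clifford-twirl} step for step: expand $\Pi^{\mathsf{eq}}$ in the computational basis, replace the $2$-design average by a Haar average, recognize each $U^\dagger\ket*{x}$ as a Haar-random state, and apply \cref{prop:sym-haar}. There is nothing to add.
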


\begin{claim}[\cref{claim:equality-clifford-conjugated-twirl}, restated]
    For any $n$-qubit unitary $2$-design $\frakD$,
    \begin{align}
        \E_{U\gets \frakD} \Big[ (U \otimes \overline{U})^\dagger \cdot \Big( \Pi^{\mathsf{eq}} - \Pi^{\EPR} \Big) \cdot (U \otimes \overline{U})\Big] = \frac{1}{N+1}\cdot (\Id - \Pi^{\EPR}).
    \end{align}
\end{claim}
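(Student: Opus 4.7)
The plan is to derive this as an immediate corollary of the original \cref{claim:equality-clifford-conjugated-twirl} (whose full proof appears earlier in the preliminaries), by subtracting $\Pi^{\EPR}$ from both sides of that identity. The key observation is that the maximally entangled state satisfies $(U \otimes \overline{U}) \ket*{\EPR_N} = \ket*{\EPR_N}$ for every unitary $U$, so $\Pi^{\EPR} = \ketbra*{\EPR_N}$ is left invariant by conjugation against $U \otimes \overline{U}$. In particular,
\begin{align}
    \E_{U \gets \frakD} (U \otimes \overline{U})^\dagger \cdot \Pi^{\EPR} \cdot (U \otimes \overline{U}) \;=\; \Pi^{\EPR}.
\end{align}

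Next, I would invoke the already-proved statement
\begin{align}
    \E_{U\gets \frakD} (U \otimes \overline{U})^\dagger \cdot \Pi^{\mathsf{eq}} \cdot (U \otimes \overline{U}) \;=\; \Pi^{\EPR} + \frac{1}{N+1}\bigl(\Id - \Pi^{\EPR}\bigr),
\end{align}
and subtract the displayed $\Pi^{\EPR}$ identity from it. By linearity of expectation, the left-hand side becomes precisely $\E_{U\gets \frakD} (U \otimes \overline{U})^\dagger (\Pi^{\mathsf{eq}} - \Pi^{\EPR}) (U \otimes \overline{U})$, while the right-hand side simplifies to $\frac{1}{N+1}(\Id - \Pi^{\EPR})$, yielding the claim.

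Since both ingredients, namely the $U \otimes \overline{U}$ invariance of $\ket*{\EPR_N}$ and the value of the twirl of $\Pi^{\mathsf{eq}}$, are elementary (the latter being the earlier claim), there is no real obstacle: the whole argument is a two-line subtraction. The only minor care needed is to note that $\Pi^{\mathsf{eq}} - \Pi^{\EPR}$ really is a projector (so that the statement is meaningful as an equality of positive operators), which follows from the fact that $\ket*{\EPR_N}$ lies in the image of $\Pi^{\mathsf{eq}}$, i.e., $\Pi^{\mathsf{eq}} \cdot \Pi^{\EPR} = \Pi^{\EPR}$.
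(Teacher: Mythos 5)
Your proposal is correct and is essentially the same derivation the paper intends: the paper simply restates \cref{claim:equality-clifford-conjugated-twirl} and treats the subtracted form as an immediate consequence, using exactly the two facts you cite, namely $(U \otimes \overline{U})\ket*{\EPR_N} = \ket*{\EPR_N}$ (so the twirl fixes $\Pi^{\EPR}$) and the mixed-twirl identity for $\Pi^{\mathsf{eq}}$ proved in the preliminaries.
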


\begin{proof}[Proof of~\cref{claim:twirling-ell-r}]
    We prove the first inequality, and the proof for the second one is symmetric. By the triangle inequality, 
    \begin{align}
        & \norm{\E_{C,D \gets \frakD} (C_{\gsA} \otimes Q[C,D]_{\gsL \gsR})^\dagger \cdot \Bigg(\sum_{i \in [\ell]} \Pi^{\mathsf{eq}}_{\darkgray{\sA,\sL^{(\ell)}_{\sX,i}}} + \sum_{i \in [r]} \left(\Pi^{\mathsf{eq}}_{\darkgray{\sA,\sR^{(r)}_{\sX,i}}} - \Pi^{\EPR}_{\darkgray{\sA,\sR^{(r)}_{\sX,i}}} \right) \Bigg) \cdot (C_{\gsA} \otimes Q[C,D]_{\gsL \gsR})}_{\opnorm}\\
        &\leq \sum_{i \in [\ell]} \norm{\E_{C,D \gets \frakD} (C_{\gsA} \otimes Q[C,D]_{\gsL \gsR})^\dagger \cdot \Pi^{\mathsf{eq}}_{\darkgray{\sA,\sL^{(\ell)}_{\sX,i}}} \cdot (C_{\gsA} \otimes Q[C,D]_{\gsL \gsR})}_{\opnorm} \nonumber \\
        & \quad + \sum_{i \in [r]} \norm{\E_{C,D \gets \frakD} (C_{\gsA} \otimes Q[C,D]_{\gsL \gsR})^\dagger \cdot \Big(\Pi^{\mathsf{eq}}_{\darkgray{\sA,\sR^{(r)}_{\sX,i}}} - \Pi^{\EPR}_{\darkgray{\sA,\sR^{(r)}_{\sX,i}}} \Big) \cdot (C_{\gsA} \otimes Q[C,D]_{\gsL \gsR})}_{\opnorm} \label{eq:big-twirling-lemma-first-step}
    \end{align}
    Plugging in
    \begin{align}
        Q[C,D]_{\gsL \gsR} = (C \otimes D^T)^{\otimes *}_{\gsL} \otimes (\overline{C} \otimes D^{\dagger})^{\otimes *}_{\gsR},
    \end{align}
    we can rewrite (\ref{eq:big-twirling-lemma-first-step}) as
    \begin{align}
        (\ref{eq:big-twirling-lemma-first-step}) &= \sum_{i \in [\ell]} \norm{\E_{C \gets \frakD} (C_{\gsA} \otimes C_{\darkgray{\sL^{(\ell)}_{\sX,i}}})^\dagger \cdot \Pi^{\mathsf{eq}}_{\darkgray{\sL^{(\ell)}_{\sX,i}}} \cdot (C_{\gsA} \otimes C_{\darkgray{\sA,\sL^{(\ell)}_{\sX,i}}})}_{\opnorm} \nonumber \\
        & \quad + \sum_{i \in [r]} \norm{\E_{C \gets \frakD} (C_{\gsA} \otimes \overline{C}_{\darkgray{\sR^{(r)}_{\sX,i}}})^\dagger \cdot \Big(\Pi^{\mathsf{eq}}_{\darkgray{\sA,\sR^{(r)}_{\sX,i}}} - \Pi^{\EPR}_{\darkgray{\sA,\sR^{(r)}_{\sX,i}}} \Big) \cdot (C_{\gsA} \otimes \overline{C}_{\darkgray{\sR^{(r)}_{\sX,i}}})}_{\opnorm} \label{eq:big-twirling-lemma-second-step}\\
        &\leq \frac{2 \ell}{N+1} + \frac{r}{N+1} \tag{by~\cref{claim:equality-clifford-twirl,claim:equality-clifford-conjugated-twirl}}.
    \end{align}
    This completes the proof. 
\end{proof}

\subsection{Finishing the proof of~\cref{lem:twirling-strongPRU}}

We now prove~\cref{lem:twirling-strongPRU}, which we state again for convenience.

\begin{lemma}[\cref{lem:twirling-strongPRU}, restated]
    For any unitary $2$-design $\frakD$ and integer $0 \leq t \leq N/2$, we have
    \begin{align}
        \norm{  \E_{C,D \gets \frakD} (C_{\gsA} \otimes Q[C,D]_{\gsL \gsR})^\dagger \cdot \Big( \Pi^{\bij}_{\leq t, \gsL \gsR} - \Pi^{\calD(W)}_{\leq t, \gsA \gsL \gsR}\Big) \cdot (C_{\gsA} \otimes Q[C,D]_{\gsL \gsR}) }_{\opnorm} &\leq 6t \sqrt{\frac{t}{N}}, \label{eq:twirling-main-eq-1}\\
        \norm{  \E_{C,D \gets \frakD} (D^\dagger_{\gsA} \otimes Q[C,D]_{\gsL \gsR})^\dagger \cdot \Big( \Pi^{\bij}_{\leq t, \gsL \gsR} - \Pi^{\calI(W)}_{\leq t, \gsA \gsL \gsR}\Big) \cdot (D^\dagger_{\gsA} \otimes Q[C,D]_{\gsL \gsR}) }_{\opnorm} &\leq 6t \sqrt{\frac{t}{N}}, \label{eq:twirling-main-eq-2}
    \end{align}
\end{lemma}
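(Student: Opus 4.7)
The plan is to prove both operator-norm bounds by combining the PSD inequality from~\cref{claim:operator-upper-bound-twirling} with the twirling identities of~\cref{claim:twirling-ell-r}, after a length-grading decomposition. I focus on~\cref{eq:twirling-main-eq-1}; \cref{eq:twirling-main-eq-2} is proved by the same argument using the second inequality in each of those claims, with $C$ replaced by $D^\dagger$ and $\Pi^{\calD(W)}$ replaced by $\Pi^{\calI(W)}$.

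The key structural observation is that $Q[C,D] = (C \otimes D^T)^{\otimes *}_{\gsL} \otimes (\overline{C} \otimes D^\dagger)^{\otimes *}_{\gsR}$ is a length-preserving tensor power on each of $\sL$ and $\sR$. Hence $Q[C,D]$ commutes with both the length projectors $\Pi_{\ell,r,\gsL\gsR}$ and the symmetric-subspace projectors $\Pi^{\calR^2}_{\ell,r}$, so the conjugated operator is block-diagonal in $(\ell,r)$. Writing $\Pi^{\bij}_{\leq t} - \Pi^{\calD(W)}_{\leq t} = \sum_{\ell+r \leq t} (\Pi^{\bij}_{\ell,r} - \Pi^{\calD(W)}_{\ell,r})$, the operator norm on the left of~\cref{eq:twirling-main-eq-1} therefore equals the maximum over $\ell + r \leq t$ of the twirled block operator norms. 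By~\cref{fact:pi-bij-R2-xydist} and~\cref{corollary:convert-Pi-to-P}, each block can be written as $\Pi^{\calR^2}_{\ell,r} \cdot (\Pi^{\xydist}_{\ell,r} - P^{\calD(W)}_{\ell,r}) \cdot \Pi^{\calR^2}_{\ell,r}$.

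From here I would apply the PSD upper bound in~\cref{eq:psd-inequality-dist-PDW} to $\Pi^{\xydist}_{\ell,r} - P^{\calD(W)}_{\ell,r}$, conjugate by $(C_{\gsA} \otimes Q)$, and take the expectation over $C,D \sim \frakD$. All three operations preserve PSD order, so the PSD inequality transfers. Then~\cref{claim:twirling-ell-r} bounds the operator norm of the twirled sum of equality/EPR projectors by $(2\ell+r)/(N+1)$, while the identity term contributes an unchanged $2r\sqrt{(\ell+r)/N}$, and sandwiching by the outer $\Pi^{\calR^2}_{\ell,r}$ cannot increase operator norm. This yields the per-block bound
\[
\tfrac{N}{N-\ell-r+1}\Bigl(\tfrac{2\ell+r}{N+1} + 2r\sqrt{(\ell+r)/N}\Bigr).
\]

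Finally, to combine these across $\ell+r \leq t$ into the claimed $6t\sqrt{t/N}$, I would case-split on $t$. When $t \leq N/2$, the prefactor $N/(N-\ell-r+1) \leq 2$, and elementary algebra gives the desired bound for all $t \geq 1$. When $t > N/2$, the claimed bound already exceeds $1$, which is a trivial upper bound on the operator norm since~\cref{cor:PiLRR2D-PiDW-projector-lesseq-t} gives $\Pi^{\calD(W)}_{\leq t} \preceq \Pi^{\bij}_{\leq t}$, so $0 \preceq \Pi^{\bij}_{\leq t} - \Pi^{\calD(W)}_{\leq t} \preceq \Id$, and this ordering is preserved under conjugation and expectation. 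The main obstacle is the prefactor $N/(N-\ell-r+1)$ which is unbounded as $\ell + r \to N$; the case split is precisely what neutralizes it, and it is the only place where a non-elementary estimate is avoided by exploiting the triviality of the bound in the large-$t$ regime.
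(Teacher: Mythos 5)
Your proposal is correct and follows essentially the same route as the paper's proof: reduce to $(\ell,r)$ blocks using the commutation of $Q[C,D]$ with $\Pi^{\calR^2}$ and the length projectors (via \cref{fact:pi-bij-R2-xydist} and \cref{corollary:convert-Pi-to-P}), apply the PSD upper bound of \cref{claim:operator-upper-bound-twirling}, and then twirl using \cref{claim:twirling-ell-r}. The only cosmetic difference is the final optimization---you bound the prefactor $N/(N-\ell-r+1)\le 2$ uniformly for $t\le N/2$ (and note the bound is vacuous otherwise), whereas the paper evaluates the worst case $(\ell,r)=(0,t)$ and then invokes $t\le N/2$---and both give $6t\sqrt{t/N}$.
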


\begin{proof}
    Using~\cref{fact:pi-bij-R2-xydist,corollary:convert-Pi-to-P}, we have
    \begin{align}
        \Pi^{\bij}_{\gsL \gsR} &= \Pi^{\calR^2}_{\gsL \gsR} \cdot \Pi^{\xydist}_{\gsL \gsR} \cdot \Pi^{\calR^2}_{\gsL \gsR},\\
        \Pi^{\calD(W)}_{\gsA \gsL \gsR} &= \Pi^{\calR^2}_{\gsL \gsR} \cdot P^{\calD(W)}_{\gsA \gsL \gsR} \cdot \Pi^{\calR^2}_{\gsL \gsR},
    \end{align}
    Since $\Pi^{\calR^2}$ commutes with $\Pi_{\leq t}$, this implies
    \begin{align}
        \Pi^{\bij}_{\leq t, \gsL \gsR} &= \Pi^{\calR^2}_{\gsL \gsR} \cdot \Pi^{\xydist}_{\leq t, \gsL \gsR} \cdot \Pi^{\calR^2}_{\gsL \gsR},\\
        \Pi^{\calD(W)}_{\leq t,\gsA \gsL \gsR} &= \Pi^{\calR^2}_{\gsL \gsR} \cdot P^{\calD(W)}_{\leq t, \gsA \gsL \gsR} \cdot \Pi^{\calR^2}_{\gsL \gsR}.
    \end{align}
    Plugging this into the left-hand side of~\cref{eq:twirling-main-eq-1}, we get
    \begin{align}
        &= \norm{\E_{C, D \leftarrow \frakD}  (C_{\gsA} \otimes Q[C,D]_{\gsL \gsR})^\dagger \cdot \Big(\Pi^{\calR^2}_{\gsL \gsR} \cdot \Pi^{\xydist}_{\leq t, \gsL \gsR} \cdot \Pi^{\calR^2}_{\gsL \gsR}  - \Pi^{\calR^2}_{\gsL \gsR} \cdot P_{\leq t, \gsA \gsL \gsR}^{\calD(W)} \cdot \Pi^{\calR^2}_{\gsL \gsR} \Big) \cdot (C_{\gsA} \otimes Q[C,D]_{\gsL \gsR})}_{\opnorm}\\
        &= \norm{ \Pi^{\calR^2}_{\gsL \gsR} \cdot \Bigg( \E_{C, D \leftarrow \frakD}  (C_{\gsA} \otimes Q[C,D]_{\gsL \gsR})^\dagger \cdot \Big( \Pi^{\xydist}_{\leq t, \gsL \gsR}  -  P_{\leq t, \gsA \gsL \gsR}^{\calD(W)} \Big) \cdot (C_{\gsA} \otimes Q[C,D]_{\gsL \gsR}) \Bigg) \cdot \Pi^{\calR^2}_{\gsL \gsR}}_{\opnorm}\\
        &\leq \norm{ \E_{C, D \leftarrow \frakD}  (C_{\gsA} \otimes Q[C,D]_{\gsL \gsR})^\dagger \cdot \Big( \Pi^{\xydist}_{\leq t, \gsL \gsR}  -  P_{\leq t, \gsA \gsL \gsR}^{\calD(W)} \Big) \cdot (C_{\gsA} \otimes Q[C,D]_{\gsL \gsR})}_{\opnorm} \label{eq:twirling-main-eq-1-upper-bound},
    \end{align}
    where the second equality follows from the fact that $\Pi^{\calR^2}_{\gsL \gsR}$ commutes with $Q[C,D]_{\gsL \gsR}$, and the inequality uses the fact that $\norm{\Pi \cdot M \cdot \Pi}_{\opnorm} \leq \norm{M}_{\opnorm}$ for any projector $\Pi$.
    
    Next, we use the fact that the operators $\Pi^{\xydist}_{\leq t}$ and $P_{\leq t,\gsA \gsL \gsR}^{\calD(W)}$ are block diagonal with respect to $\{\Pi_{\ell,r,\gsL \gsR}\}_{\ell,r \geq 0}$ (recall that $\Pi_{\ell,r}$ denotes the projector that restricts the registers $\sL$ and $\sR$ to have lengths $\ell$ and $r$, respectively), i.e., they map the image of $\Pi_{\ell,r}$ to the image of $\Pi_{\ell,r}$. Thus,
    \begin{align}
        (\ref{eq:twirling-main-eq-1-upper-bound}) &= \max_{\substack{\ell,r \geq 0:\\ \ell + r \leq t}} \norm{ \E_{C, D \leftarrow \frakD}  (C_{\gsA} \otimes Q[C,D]_{\gsL \gsR})^\dagger \cdot \Big( \Pi^{\xydist}_{\ell,r, \gsL \gsR}  -  P_{\ell,r, \gsA \gsL \gsR}^{\calD(W)} \Big) \cdot (C_{\gsA} \otimes Q[C,D]_{\gsL \gsR})}_{\opnorm}\\
        &\leq \max_{\substack{\ell,r \geq 0:\\ \ell + r \leq t}} \Bigg\lVert \frac{N}{N-\ell-r+1}\cdot \E_{C, D \leftarrow \frakD}  (C_{\gsA} \otimes Q[C,D]_{\gsL \gsR})^\dagger \cdot  \Bigg(\sum_{i \in [\ell]} \Pi^{\mathsf{eq}}_{\darkgray{\sA,\sL^{(\ell)}_{\sX,i}}} \nonumber \\
        & \quad + \sum_{i \in [r]} \left(\Pi^{\mathsf{eq}}_{\darkgray{\sA,\sR^{(r)}_{\sX,i}}} - \Pi^{\EPR}_{\darkgray{\sA,\sR^{(r)}_{\sX,i}}} \right) + 2r \sqrt{\frac{\ell + r}{N}} \Id_{\gsA \gsL \gsR}\Bigg) \cdot (C_{\gsA} \otimes Q[C,D]_{\gsL \gsR}) \Bigg\rVert_{\opnorm} \tag{by~\cref{claim:operator-upper-bound-twirling}}\\
        &\leq \max_{\substack{\ell,r \geq 0:\\ \ell + r \leq t}} \frac{N}{N-\ell-r+1} \cdot \Bigg(\frac{2\ell + r}{N+1} + 2r \sqrt{\frac{\ell + r}{N}}\Bigg). \tag{by~\cref{claim:twirling-ell-r} and the triangle inequality}
    \end{align}
    This expression is maximized by setting $r = t$ and $\ell = 0$, which yields a final upper bound of
     \begin{align}
        \frac{N}{N-t+1} \cdot \Bigg( \frac{t}{N+1} + 2t \sqrt{\frac{t}{N}}\Bigg) \leq \frac{N}{N-t} \cdot \Bigg( 3t \sqrt{\frac{t}{N}} \Bigg) \leq 6t \sqrt{\frac{t}{N}}
    \end{align}
    where the last inequality uses the assumption that $t \leq N/2$. 
\end{proof}

\newpage
\bibliography{ref}
\bibliographystyle{alpha}

\newpage
\appendix
\part{Appendices}
\label{part:app}

\section{Efficient circuit implementation of path-recording oracle}
\label{sec:efficient-implementation-pro}

We briefly describe how to efficiently implement the path-recording oracles on a quantum computer to simulate forward and inverse queries of a Haar-random unitary up to inverse-exponential error.

\subsection{Implementing relation states}

There are multiple ways to implement the relation state $\ket*{R}$ for a relation $R$.
We describe one simple approach.
We represent $\ket*{R}$ using $|R|$ $2n$-qubit registers by sorting the tuples $(x, y) \in [N]^2$ in the relation $R$.
For example, consider $R$, we can store $\ket*{R}$ on a quantum computer as
\begin{equation}
    \ket*{x_1} \ket*{y_1} \ldots \ket*{x_{|R|}} \ket*{y_{|R|}},
\end{equation}
where $R = \{ (x_i, y_i) \}_{i=1}^{|R|}$ and $(x_1, y_1) \leq \ldots \leq (x_{|R|}, y_{|R|})$. Here, $(x, y) \leq (x', y')$ denotes the lexicographical ordering, which means either (a) $x < x'$ or (b) $x = x'$ and $y \leq y'$.

\subsection{Implementing forward queries}
\label{subsec:imp-forward-q}

In~\cref{sec:compressed-haar}, we defined a (standard) path-recording oracle $V$ that simulates forward (but not inverse) queries to a Haar-random unitary. In this subsection, we describe how to implement this linear map efficiently.

\begin{definition}
\label{def:choisometry-replicate}[\cref{def:choisometry}, repeated]
The path-recording oracle $\pr$ is a linear map $\pr: \calH_{\sA} \otimes \calH_{\sR} \rightarrow \calH_{\sA} \otimes \calH_{\sR}$ defined as follows. For all $x \in [N]$ and injective relations $R \in \calR^{\inj}$ such that $\abs{R} < N$,
\begin{equation}
    \pr: \ket*{x}_{\gsA} \ket*{R}_{\gsR} \mapsto \frac{1}{\sqrt{N - \abs{R}}} \sum_{\substack{y \in [N], \\ y \not\in \Im(R)}} \ket*{y}_{\gsA} \ket*{R \cup\{(x,y)\}}_{\gsR}.
\end{equation}
\end{definition}

We briefly sketch how to implement $V$ on an input $\ket*{x} \ket*{R}$. 
\begin{enumerate}
    \item[] \textbf{Input:} a state of the form $\ket*{x} \ket*{R}$.
    \item The first step is to perform the map
    \begin{align}
        \ket*{x} \ket*{R} \mapsto \frac{1}{\sqrt{N -\abs{R}}} \sum_{\substack{y \in [N],\\ y\not\in \Im(R)}} \ket*{y} \ket*{x} \ket*{R}.
    \end{align}
    This can be done as follows:
    \begin{enumerate}
        \item First, prepare a new register containing a uniform superposition over $\{1,2,\dots,N - \abs{R}\}$. 
        \item Next, observe that given $y_1 \leq \dots \leq y_{\abs{R}}$ (which are stored in subregisters of $\ket*{R}$), there is an efficiently computable bijection $f$ from $\{1,2,\dots,N - \abs{R}\}$ to the set $\{y: y\in [N], y\not\in \Im(R)\}$: on input $x$, compute the number $n_x$ of elements $y_i$ in the list $(y_1,\dots,y_{\abs{R}})$ such that $x \geq y_i$, and output $x + n_x$. Using a similar algorithm, we can also compute $f^{-1}$ efficiently. 
        
        This allows us to efficiently compute $f$ \emph{in place}. Applying this to the uniform superposition over $\{1,2,\dots,N - \abs{R}\}$ produces the desired superposition. 
    \end{enumerate}
    \item Next, compute the function that maps $(y,x,R)$ to $R \cup \{(x,y)\}$ (where $R$ and $R \cup \{(x,y)\}$ are represented as sorted lists of ordered pairs). This step corresponds to the map
    \begin{align}
        \frac{1}{\sqrt{N -\abs{R}}} \sum_{\substack{y \in [N],\\ y\not\in \Im(R)}} \ket*{y} \ket*{x} \ket*{R} \mapsto \frac{1}{\sqrt{N -\abs{R}}} \sum_{\substack{y \in [N],\\ y\not\in \Im(R)}} \ket*{y} \ket*{x} \ket*{R} \ket*{R \cup \{(x,y)\}}.
    \end{align}
    \item Finally, use the $\ket*{y} \ket*{R \cup \{(x,y)\})}$ register to uncompute the $\ket*{x} \ket*{R}$ registers; note that $x,R$ can be uniquely computed from $y, R \cup \{(x,y)\}$, since $y$ is guaranteed to be outside $\Im(R)$. This corresponds to the map 
    \begin{align}
        \frac{1}{\sqrt{N -\abs{R}}} \sum_{\substack{y \in [N],\\ y\not\in \Im(R)}} \ket*{y} \ket*{x} \ket*{R} \ket*{R \cup \{(x,y)\}} \mapsto \frac{1}{\sqrt{N -\abs{R}}} \sum_{\substack{y \in [N],\\ y\not\in \Im(R)}} \ket*{y} \ket*{R \cup \{(x,y)\}},
    \end{align}
    which corresponds to the output of $V$. 
\end{enumerate}

\subsection{Implementing forward and inverse queries}
\label{subsec:imp-forward-inverse-q}

In~\cref{sec:symmetric-V}, we defined a (symmetric) path-recording oracle $V$ that simulates both forward and inverse queries to a Haar-random unitary. In this subsection, we describe how to implement this linear map efficiently. Recall that this linear map $V$ is defined in terms of two helper linear maps $V^L$ and $V^R$.

\begin{definition}[left and right partial isometries] \label{def:V-sym-PRO-replicate}[\cref{def:V-sym-PRO}, repeated]
    Let $V^L$ be the linear operator that acts as follows. For $x \in [N]$ and $(L,R) \in \mathcal{R}^{2,\leq N-1}$,
    \begin{align}
        V^L \cdot \ket*{x}_{\gsA} \ket*{L}_{\gsL} \ket*{R}_{\gsR} \coloneqq \sum_{\substack{y \in [N]:\\ y\not\in \Im(L \cup R)}} \frac{1}{\sqrt{N - \abs{\Im(L \cup R)}}} \ket*{y}_{\gsA} \ket*{L \cup \{(x,y)\}}_{\gsL} \ket*{R}_{\gsR}.
    \end{align}
    Define $V^R$ to be the linear operator such that for all $y \in [N]$ and $(L,R) \in \mathcal{R}^{2,\leq N-1}$,
    \begin{align}
        V^R \cdot \ket*{y}_{\gsA} \ket*{L}_{\gsL} \ket*{R}_{\gsR} \coloneqq \sum_{\substack{x \in [N]:\\ x\not\in \Dom(L \cup R)}} \frac{1}{\sqrt{N - \abs{\Dom(L \cup R)}}} \ket*{x}_{\gsA} \ket*{L}_{\gsL} \ket*{R \cup \{(x, y)\} }_{\gsR}.
    \end{align}
\end{definition}

Efficient implementation of forward queries to $V^L$ and $V^R$ can be done similarly to \cref{subsec:imp-forward-q}. Now, let $U^L$ denote the efficient unitary implicit in the procedure described in~\cref{subsec:imp-forward-q}, which satisfies the guarantee that $U^L \ket*{x} \ket*{R} \ket*{0^m} = (V^L \ket*{x} \ket*{R}) \otimes \ket*{0^{m'}}$, where $m$ and $m'$ denotes the number of ancilla qubits in the input and output respectively. Define $U^R$ similarly. Technically, the number of ancillas depends on the size of $L$ and $R$. However, we can always assume that the sizes of $L$ and $R$ are upper bounded by the number of queries so far, and so we can also use this to bound the size of the ancillas needed to implement the $t$-th query.

\paragraph{Implementing $V^{L, \dagger}, V^{R, \dagger}$}
Since we can efficiently implement $U^L$, we can also implement its inverse $U^{L,\dagger}$. Note that on states of the form $\ket*{\psi}\ket*{0^{m'}}$, where $\ket*{\psi}$ is in the image of $V^L$, applying $U^{L,\dagger}$ produces the output state $(V^{L,\dagger} \ket*{\psi}) \ket*{0^m}$. Thus, we can use $U^{L,\dagger}$ to implement $V^{L,\dagger}$.

\paragraph{Implementing coherent measurements on $V^L \cdot V^{L, \dagger}$ and $V^R \cdot V^{R, \dagger}$}
Before we describe how we implement $V$, we will need to describe how to perform measurements corresponding to the projectors $V^L \cdot V^{L, \dagger}$ and $V^R \cdot V^{R, \dagger}$. In fact, we will need to implement these measurements coherently, i.e., apply the unitary that computes the binary measurement outcome onto an external qubit. This can be done as follows (for simplicity, we only describe the procedure for implementing the coherent $V^L \cdot V^{L, \dagger}$ measurement, as the coherent $V^R \cdot V^{R, \dagger}$ measurement is symmetric).
\begin{enumerate}
    \item[] \textbf{Input:} a state $\ket*{\psi} = \sum_{x, L, R} \alpha_{x, L, R} \ket*{x} \ket*{L} \ket*{R}$. 
    \item Add $m'$ ancillary qubits $\ket*{0^{m'}}$ and then apply $U^{L,\dagger}$. 
    \item Initialize a new one-qubit register to $\ket*{0}_{\gsB}$. Controlled on the last $m$ qubits of the rest of the state (i.e., all registers except for $\sB$) being $0^m$, apply a Pauli $X$ to the $\sB$ register. By definition of $U^{L,\dagger}$, this Pauli $X$ is applied if and only if the original input state was in the image of $V^L \cdot V^{L,\dagger}$.
    \item Apply $U^{L}$ to the non-$\sB$ registers. 
\end{enumerate}

Now, recall the definition of $V$.

\begin{definition}
\label{def:symmetric-V-replicate}
    The symmetric path-recording oracle is the operator $V$ defined as
    \begin{align}
        V &= V^L \cdot (\Id - V^R \cdot V^{R,\dagger}) + (\Id - V^L \cdot V^{L,\dagger}) \cdot V^{R,\dagger}.
    \end{align}
\end{definition}

We sketch an implementation of a forward query to the symmetric path-recording oracle $V$. The inverse query is symmetric by swapping $L$ and $R$.
\begin{enumerate}
    \item[] \textbf{Input:} a state $\ket*{\psi} = \sum_{x, L, R} \alpha_{x, L, R} \ket*{x} \ket*{L} \ket*{R}$. 
    \item Add two ancilla qubits initialized at $\ket*{0}$ to obtain $\ket*{0} \ket*{0} \ket*{\psi}$.
    \item Implement coherent measurement $V^R \cdot V^{R, \dagger}$, writing the outcome onto the first ancilla qubit.
    \item Apply the following controlled operation:
    \begin{itemize}
        \item Controlled on the first ancilla qubit being $1$, apply $V^L$.
        \item Controlled on the first ancilla qubit being $0$, apply $V^{R,\dagger}$. Then, apply the coherent measurement $V^L \cdot V^{L, \dagger}$, writing the outcome onto the second ancilla qubit.
    \end{itemize}
    \item Measure the second ancilla qubit, and abort if the outcome is $1$.
    \item Apply the coherent measurement of $V^L \cdot V^{L, \dagger}$ with the outcome applied onto the first ancilla qubit. 
    \item Trace out the remaining ancilla qubit (the first one), which is guaranteed to be $\ket{0}$.
\end{enumerate}

\section{The path-recording framework}
\label{sec:cuframework}

In this section, we develop a mathematical framework by generalizing the path-recording oracle introduced in \cref{sec:compressed-haar}.
This new framework enables one to develop modified versions of path-recording oracle in which the set of relations that the oracle uses is restricted to a subset $\mathcal{S}^{\inj} \subseteq \calR^{\inj}$ of the set of all injective relations.
This mathematical framework offers flexibility for establishing indistinguishability from Haar-random unitary via the path-recording oracle.

\vspace{1em}
\noindent To develop the path-recording framework, we define the following notations.
\begin{itemize}
    \item $t_{\mathrm{max}}$ is an integer between $1$ and $N$ sets the maximum size of the relations. This integer also sets the limit on how many queries we can make to the path-recording oracle. In the canonical path-recording oracle introduced in \cref{subsec:define-pr-oracle}, $t_{\mathrm{max}}$ is equal to $N$.
    \item $\mathcal{S}^{\inj}_t$ is a subset of all the injective relations $\calR^{\inj}_t$ of size $t$ for any $0 \leq t \leq t_{\max}$. In particular, we require the subset for the maximum $t$ to be non-empty: $|\mathcal{S}^{\inj}_{t_{\mathrm{max}}}| \geq 1$.
    \item $\mathcal{S}^{\inj} \coloneqq \cup_{t=0}^{t_{\mathrm{max}}} \mathcal{S}^{\inj}_t$. The set $\mathcal{S}$ restricts the relations that the path-recording oracle could use.
\end{itemize}
We define the following two constraints on the restricted set $\mathcal{S}^{\inj}$.

\begin{definition}[Consistency]
\label{def:consistency-set}
We say the set $\mathcal{S}^{\inj}$ of relations is consistent if
\begin{align}
    &\forall (x_1, \ldots, x_t) \in [N]^t, \quad \exists (y_1, \ldots, y_t) \in [N]^t,\\
    &  \,\,\text{such that} \quad \{(x_i, y_i)\}_{i=1}^t \in \mathcal{S}^{\inj}.
\end{align}
Furthermore, if $\{(x_i, y_i)\}_{i=1}^t \in \mathcal{S}^{\inj}$, then for any $0 \leq \tau \leq t$, $\{(x_i, y_i)\}_{i=1}^{\tau} \in \mathcal{S}^{\inj}$.
\end{definition}

The \emph{consistency constraint} ensures that all possible $(x_1, \ldots, x_t) \in [N]^t$ are valid. This is central for path-recording oracle because the adversary algorithm can choose the inputs $x_1, \ldots, x_t$ arbitrarily.
The constraint also ensures that all relations in $\mathcal{S}^{\inj}$ are ``meaningful'' because they can all be obtained by adding in each tuple $(x_i, y_i)$ one by one while maintaining in the restricted subset $\mathcal{S}^{\inj}$.

\begin{definition}[Uniform growth]
\label{def:uniform-growth-constraint}
We say the set $\mathcal{S}^{\inj}$ of relations satisfies the uniform growth constraint if for all $0 \leq t < t_{\max}$, there exists $\calZ_{t} \geq 1$, such that for all $x \in [N]$ and $R \in \mathcal{S}^{\inj}_t$,
\begin{equation}
    \calZ_{t} = \sum\limits_{\substack{y \in [N], \, \mathrm{s.t.} \\ R \cup \{(x, y)\} \in \mathcal{S}^{\inj}_{t+1}}} 1.
\end{equation}
\end{definition}

The \emph{uniform growth constraint} ensures the number of $y$ that can be used to grow the relation $R$ by by size one is uniform across all $x \in [N]$ and all relations $R$ of the same size.
We illustrate these two constraints with the following examples.
\begin{itemize}
    \item $\mathcal{S}^{\inj}$ contains all relations where the first $k$ bits in $y_1, \ldots, y_t \in [N]$ are distinct. In this case, $t_{\max} = 2^{k}$. Furthermore, $\mathcal{S}^{\inj}$ is consistent and satisfies the uniform growth constraint.
    \item $\mathcal{S}^{\inj}$ contains all relations $R$ where $R = \{(x_i, x_i)\}_{i=1}^{|R|}$. In this case, $t_{\max} = N$. And $\mathcal{S}^{\inj}$ is consistent and satisfies the uniform growth constraint.
    \item $\mathcal{S}^{\inj} = \{ R \in \calR^{\inj} \,\, | \,\, |R| = N \}$. In this case, $t_{\max} = N$. However, $\mathcal{S}^{\inj}$ is not consistent and does not satisfy the uniform growth constraint because it violates $\calZ_t \geq 1$.
\end{itemize}
For any consistent set $\mathcal{S}^{\inj}$ of relations, we have $\varnothing \in \mathcal{S}^{\inj}$ because we can take $\tau = 0$ in \cref{def:consistency-set} for any relation $R \in \mathcal{S}^{\inj}$ to obtain that $\varnothing \in \mathcal{S}^{\inj}$.

\subsection{Defining $\pr(\mathcal{S}^{\inj})$ and the $\pr(\mathcal{S}^{\inj})$ state}

We now define the behavior of the $\mathcal{S}^{\inj}$-restricted path-recording oracle.

\begin{definition}[$\mathcal{S}^{\inj}$-restricted path-recording oracle]
\label{def:mchoisometry}
Given any consistent set $\mathcal{S}^{\inj}$ of relations.
The $\mathcal{S}^{\inj}$-restricted path-recording oracle $\pr(\mathcal{S}^{\inj})$ is a linear map
\begin{equation}
    \pr(\mathcal{S}^{\inj}): \calH_{\sA} \otimes \calH_{\sR} \rightarrow \calH_{\sA} \otimes \calH_{\sR}
\end{equation}
defined as follows. For all $0 \leq t < t_{\max}$, $R \in \mathcal{S}^{\inj}_t$, and $x \in [N]$, 
\begin{equation}
    \pr(\mathcal{S}^{\inj}): \ket*{x}_{\gsA} \ket*{R}_{\gsR} \mapsto \frac{1}{\sqrt{\calZ_{x,R}}} \sum_{\substack{y \in [N], \\ R \cup \{(x, y)\} \in \mathcal{S}^{\inj}_{t+1}}} \ket*{y}_{\gsA} \ket*{R \cup \{(x,y)\}}_{\gsR},
\end{equation}
The normalization factor $\calZ_{x, R}$ is given by
\begin{equation}
    \calZ_{x, R} \coloneqq \sum\limits_{\substack{y \in [N], \\ R \cup \{(x, y)\} \in \mathcal{S}^{\inj}_{t+1}}} 1 \geq 1,
\end{equation}
where the last inequality follows from the consistency constraint that for any $(x_1, \ldots, x_t) \in [N]^t$, there exists $(y_1, \ldots, y_t) \in [N]^t$, such that $\{(x_i, y_i)\}_{i=1}^t \in \mathcal{S}^{\inj}$.
\end{definition}

Next, we define the $G$-rotated $\pr(\mathcal{S}^{\inj})$ state, which represents the global state after an adversary has queried the $\mathcal{S}^{\inj}$-restricted path-recording oracle multiple times.

\begin{definition}[$G$-rotated $\pr(\mathcal{S}^{\inj})$ state]\label{def:mchostate}
Given a consistent set $\mathcal{S}^{\inj}$, an $n$-qubit unitary $G$ and a $t$-query adversary $\Adv$ with forward queries specified by a $t$-tuple of unitaries $(A_{1, \gsA \gsB},\dots,A_{t, \gsA \gsB})$, the $G$-rotated $\pr(\mathcal{S}^{\inj})$ state is
\begin{equation}
    \ket*{\Adv^{\pr(\mathcal{S}^{\inj}) \cdot G}_t}_{\gsA \gsB \gsR} \coloneqq \prod_{i = 1}^t \Big( \pr(\mathcal{S}^{\inj}) \cdot G_{\gsA} \cdot A_{i, \gsA \gsB} \Big) \ket*{0}_{\gsA \gsB} \ket*{\varnothing}_{\gsR}.
\end{equation}
\end{definition}

The $G$-rotated $\pr(\mathcal{S}^{\inj})$ state $\ket*{\Adv^{\pr(\mathcal{S}^{\inj}) \cdot G}_t}_{\gsA \gsB \gsR}$ is the state of the entire system after the adversary has made $t$ queries to $\pr(\mathcal{S}^{\inj}) \cdot G$, and includes the adversary's query register ($\sA$), auxiliary register ($\sB$), and the purifying registers ($\sR$), after $t$ queries to the oracle.

\subsection{$\pr(\mathcal{S}^{\inj})$ is a partial isometry}

A crucial property of the $G$-rotated $\pr(\mathcal{S}^{\inj})$ state is that it maintains unit norm up to $t_{\max}$ queries. We formalize this in the following lemma:

\begin{lemma}[$G$-rotated $\pr(\mathcal{S}^{\inj})$ state has unit norm]
\label{lemma:purified-mcho-state-unit-norm}
    For any consistent set $\mathcal{S}^{\inj}$ of relations, any adversary $\Adv$ making $t \leq t_{\max}$ forward queries to an $n$-qubit oracle, and any $n$-qubit unitary $G$, the $G$-rotated $\pr(\mathcal{S}^{\inj})$ state $\ket*{\Adv^{\pr(\mathcal{S}^{\inj}) \cdot G}_t}_{\gsA \gsB \gsR}$ has unit norm.
\end{lemma}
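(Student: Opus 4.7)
The plan is to mirror the proof of Lemma~\ref{lemma:purified-cho-state-unit-norm} (the unit-norm result for the unrestricted path-recording oracle), adapting it to the restricted setting where $\pr(\mathcal{S}^{\inj})$ only inserts tuples $(x,y)$ that keep the relation inside $\mathcal{S}^{\inj}$. The two ingredients we need from the hypotheses are: (i) the \emph{consistency} of $\mathcal{S}^{\inj}$, which guarantees that $\calZ_{x,R} \geq 1$ for every $x \in [N]$ and every $R \in \mathcal{S}^{\inj}_t$ with $t < t_{\max}$, so that $\pr(\mathcal{S}^{\inj})$ is well-defined on $\ket*{x}_{\gsA}\ket*{R}_{\gsR}$; and (ii) the fact that $\mathcal{S}^{\inj} \subseteq \calR^{\inj}$, which preserves the key ``bookkeeping'' property that distinct $\ket*{R \cup \{(x,y)\}}$ states are orthogonal.

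First, I would establish an analog of Lemma~\ref{lem:cho-isometry}, stating that $\pr(\mathcal{S}^{\inj})$ is an isometry on the subspace of $\calH_{\sA} \otimes \calH_{\sR}$ spanned by $\ket*{x}_{\gsA}\ket*{R}_{\gsR}$ for $x \in [N]$ and $R \in \mathcal{S}^{\inj}$ with $\abs{R} < t_{\max}$. The proof is a direct adaptation of the case analysis in Lemma~\ref{lem:cho-isometry}: if $\abs{R} \neq \abs{R'}$, the images lie in orthogonal length sectors; if $\abs{R} = \abs{R'} = t$ and $(x,R) \neq (x',R')$, then because $\mathcal{S}^{\inj}_{t+1} \subseteq \calR^{\inj}$ and $y$ must not appear in $\Im(R) \cup \Im(R')$, the equation $R \cup \{(x,y)\} = R' \cup \{(x',y)\}$ forces $(x,R) = (x',R')$, a contradiction; and if $(x,R) = (x',R')$ the inner product reduces to $\calZ_{x,R}/\calZ_{x,R} = 1$ by the definition of $\calZ_{x,R}$.

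With the isometry lemma in hand, I would then prove by induction on $t$ that $\ket*{\Adv^{\pr(\mathcal{S}^{\inj}) \cdot G}_t}_{\gsA \gsB \gsR}$ is a unit-norm state supported on $\mathcal{S}^{\inj}_t$ (in the sense that it lies in the span of $\ket*{x}_{\gsA}\ket*{z}_{\gsB}\ket*{R}_{\gsR}$ for $R \in \mathcal{S}^{\inj}_t$). The base case $t = 0$ is immediate since $\varnothing \in \mathcal{S}^{\inj}_0$ by consistency. For the inductive step ($0 \leq t < t_{\max}$), the unitary $G_{\gsA} \cdot A_{t+1,\gsA\gsB}$ preserves both norm and support on the $\sR$ register, and then $\pr(\mathcal{S}^{\inj})$ acts on a state supported on $\mathcal{S}^{\inj}_t$, where it is an isometry (by the preceding step) and maps into states supported on $\mathcal{S}^{\inj}_{t+1}$ (by definition).

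The only subtle point, and where I would focus care, is the case analysis in the isometry argument when $(x,R) \neq (x',R')$ but $\abs{R} = \abs{R'}$: one must verify that the constraint $y \in [N]$ with $R \cup \{(x,y)\}, R' \cup \{(x',y)\} \in \mathcal{S}^{\inj}_{t+1}$ forces $y \notin \Im(R) \cup \Im(R')$ (since these are injective relations), and then that the multiset equality $R \cup \{(x,y)\} = R' \cup \{(x',y)\}$ pins down $(x,R) = (x',R')$. This is the same argument as in Lemma~\ref{lem:cho-isometry} and does not require the uniform growth constraint; the uniform growth constraint of Definition~\ref{def:uniform-growth-constraint} is not needed for this particular lemma and will only be invoked later when proving indistinguishability from Haar-random via twirling arguments.
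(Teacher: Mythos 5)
Your proposal matches the paper's argument: the paper proves this by first establishing Lemma~\ref{lem:mcho-isometry} (partial isometry of $\pr(\mathcal{S}^{\inj})$, following the case analysis of Lemma~\ref{lem:cho-isometry} and using $\calZ_{x,R}\geq 1$ from consistency), then noting $\varnothing\in\mathcal{S}^{\inj}_0$ and running the same induction as in Lemma~\ref{lemma:purified-cho-state-unit-norm}. You correctly identify all the load-bearing points, including that the uniform growth constraint is not needed here.
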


To prove this lemma, we first need to establish that the $\mathcal{S}^{\inj}$-restricted path-recording oracle $\pr(\mathcal{S}^{\inj})$ acts as a partial isometry on certain states. This is formalized in the following lemma:

\begin{lemma}[Partial Isometry]
\label{lem:mcho-isometry}
For any consistent set $\mathcal{S}^{\inj}$ of relations, the $\mathcal{S}^{\inj}$-restricted path-recording oracle $\pr(\mathcal{S}^{\inj})$ is an isometry on the subspace of $\calH_{\sA} \otimes \calH_{\sR}$ spanned by the states $\ket*{x}\ket*{R}$ for $x \in [N]$ and $R \in \mathcal{S}^{\inj}$ such that $\abs{R} < t_{\max}$.
\end{lemma}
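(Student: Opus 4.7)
} The plan is to mirror the proof of \cref{lem:cho-isometry}, adapting it to the restricted set $\mathcal{S}^{\inj}$ and the new normalization $\calZ_{x,R}$. It suffices to show that for all $x,x' \in [N]$ and $R,R' \in \mathcal{S}^{\inj}$ with $|R|,|R'| < t_{\max}$,
\begin{equation}
\bra*{x'}_{\gsA}\bra*{R'}_{\gsR} \pr(\mathcal{S}^{\inj})^\dagger \cdot \pr(\mathcal{S}^{\inj}) \ket*{x}_{\gsA}\ket*{R}_{\gsR} = \braket*{x'}{x}_{\gsA} \cdot \braket*{R'}{R}_{\gsR}.
\end{equation}
I would split into the same two cases as in \cref{lem:cho-isometry}. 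When $|R| \neq |R'|$, the images $\pr(\mathcal{S}^{\inj})\ket*{x}\ket*{R}$ and $\pr(\mathcal{S}^{\inj})\ket*{x'}\ket*{R'}$ are supported on relation states of different sizes, hence orthogonal; both sides vanish. When $|R|=|R'|=t$ for some $0 \leq t < t_{\max}$, I would expand
\begin{equation}
\bra*{x'}\bra*{R'} \pr(\mathcal{S}^{\inj})^\dagger \pr(\mathcal{S}^{\inj}) \ket*{x}\ket*{R} = \frac{1}{\sqrt{\calZ_{x,R}\calZ_{x',R'}}}\sum_{\substack{y,y' \in [N]\\ R\cup\{(x,y)\},R'\cup\{(x',y')\} \in \mathcal{S}^{\inj}_{t+1}}} \braket*{y'}{y} \braket*{R'\cup\{(x',y')\}}{R\cup\{(x,y)\}},
\end{equation}
and split on whether $(x,R) = (x',R')$ or not.

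When $(x,R) = (x',R')$, the inner products force $y = y'$ and collapse the sum to $\frac{1}{\calZ_{x,R}}\sum_{y:\, R\cup\{(x,y)\}\in\mathcal{S}^{\inj}_{t+1}} 1$, which by the definition of $\calZ_{x,R}$ equals $1$, matching the right-hand side. Consistency of $\mathcal{S}^{\inj}$ guarantees $\calZ_{x,R} \geq 1$, so the normalization is well-defined. When $(x,R) \neq (x',R')$, I need to show every term in the sum vanishes. Fix a term with $y = y'$ and $R\cup\{(x,y)\} = R'\cup\{(x',y)\}$ as size-$(t+1)$ relations in $\mathcal{S}^{\inj}_{t+1}$. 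If $x = x'$ this forces $R = R'$, contradicting $(x,R) \neq (x',R')$; so $x \neq x'$, and the equality of the two augmented relations forces $(x,y) \in R'$ and $(x',y) \in R$. But then $y \in \Im(R)$, so $R \cup \{(x,y)\}$ has $y$ in its image at least twice, contradicting $R\cup\{(x,y)\} \in \mathcal{S}^{\inj}_{t+1} \subseteq \calR^{\inj}_{t+1}$. Hence the sum is empty and the inner product is $0$, matching $\braket*{x'}{x}\braket*{R'}{R} = 0$.

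The main obstacle I anticipate is the off-diagonal case $(x,R) \neq (x',R')$, since in the original proof the injectivity condition was cleanly visible through ``$y \notin \Im(R)$'' in the summation range, whereas here injectivity is encoded only implicitly through membership in $\mathcal{S}^{\inj}_{t+1}$. The argument above resolves this by unpacking $\mathcal{S}^{\inj}_{t+1} \subseteq \calR^{\inj}_{t+1}$ to recover injectivity exactly when needed; this is the only place the definition of $\mathcal{S}^{\inj}$ (beyond consistency) enters. Once \cref{lem:mcho-isometry} is in hand, \cref{lemma:purified-mcho-state-unit-norm} should follow by the same induction on query count as in the proof of \cref{lemma:purified-cho-state-unit-norm}, with ``supported on $\calR_t^{\inj}$'' replaced by ``supported on $\mathcal{S}^{\inj}_t$,'' using consistency to ensure that $\pr(\mathcal{S}^{\inj})$ maps $\mathcal{S}^{\inj}_t$-supported states into $\mathcal{S}^{\inj}_{t+1}$-supported states for all $t < t_{\max}$.
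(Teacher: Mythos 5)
Your proof is correct and follows essentially the same approach as the paper, which simply cites the proof of \cref{lem:cho-isometry} and notes that $\calZ_{x,R}\geq 1$ by consistency. You fill in the details the paper leaves implicit, in particular the off-diagonal case: you recover the restriction $y\notin\Im(R)\cup\Im(R')$ (which the original proof had built into its summation range) from the membership conditions $R\cup\{(x,y)\},\,R'\cup\{(x',y)\}\in\mathcal{S}^{\inj}_{t+1}\subseteq\calR^{\inj}_{t+1}$, which is exactly the ingredient needed.
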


\begin{proof}[Proof of Lemma~\ref{lem:mcho-isometry}]
To prove that $\pr(\mathcal{S}^{\inj})$ is an isometry on the specified subspace, it suffices to show that for all $x,x' \in [N]$ and $R,R' \in \mathcal{S}^{\inj}$ with $\abs{R}, \abs{R'} < t_{\max}$,
\begin{equation}
    \bra*{x'}_{\gsA} \bra*{R'}_{\gsR} \pr(\mathcal{S}^{\inj})^\dagger \cdot \pr(\mathcal{S}^{\inj}) \ket*{x}_{\gsA} \ket*{R}_{\gsR} = \braket*{x'}{x}_{\gsA} \cdot \braket*{R'}{R}_{\gsR} \label{eq:mcho-partial-isometry}.
\end{equation}
The proof proceeds in the same way as the proof of \cref{lem:cho-isometry} after one notes the fact that the normalization factor $\calZ{x, R} \geq 1$ from the consistency of the set $\mathcal{S}^{\inj}$.
\end{proof}

We can now prove~\cref{lemma:purified-mcho-state-unit-norm}.

\begin{proof}[Proof of~\cref{lemma:purified-mcho-state-unit-norm}]
Note that $\mathcal{S}^{\inj}$ is consistent implies $\varnothing \in \mathcal{S}^{\inj}_0$. Hence, we can use \cref{lem:mcho-isometry} to establish this lemma via the same mathematical induction as the proof of \cref{lemma:purified-cho-state-unit-norm}.
\end{proof}

\subsection{$\pr(\mathcal{S}^{\inj})$ is right unitary invariant}

So far, we have not used the uniform growth constraint.
To show that $\pr(\mathcal{S}^{\inj})$ is (exactly) right unitary invariant, we need to utilize the uniform growth constraint.

\begin{lemma}[Right unitary invariance]\label{lem:mcho-transfer}
Given a consistent set $\mathcal{S}^{\inj}$ of relations that satisfies the uniform growth constraint. For any $n$-qubit unitary $G$, we have
\begin{equation}
    \ket*{\Adv^{\pr(\mathcal{S}^{\inj}) \cdot G}_t}_{\gsA \gsB \gsR} = (G_{\darkgray{\sR_{\sX, 1}^{(t)}}} \otimes \ldots \otimes G_{\darkgray{\sR_{\sX, t}^{(t)}}}) \ket*{\Adv^{\pr(\mathcal{S}^{\inj})}_t}_{\gsA \gsB \gsR}.
\end{equation}
\end{lemma}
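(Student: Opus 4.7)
The plan is to follow the proof of \cref{lem:cho-transfer} closely, adapting it to the restricted setting. First I would derive an explicit form for $\ket*{\Adv^{\pr(\mathcal{S}^{\inj}) \cdot G}_t}$ analogous to \cref{fact:expli-form-cho-state}. Unrolling the definition of $\pr(\mathcal{S}^{\inj})$ query by query, the $i$-th step introduces a normalization factor $1/\sqrt{\calZ_{x_i, R_{i-1}}}$, where $R_{i-1} = \{(x_j, y_j)\}_{j<i}$; the uniform growth constraint collapses this to $1/\sqrt{\calZ_{i-1}}$, depending only on the size of $R_{i-1}$. Telescoping yields the overall normalization $1/\sqrt{\prod_{i=0}^{t-1} \calZ_i}$, giving
\begin{align}
\ket*{\Adv^{\pr(\mathcal{S}^{\inj}) \cdot G}_t} = \frac{1}{\sqrt{\prod_{i=0}^{t-1} \calZ_i}} \sum_{\substack{(x), (y) \in [N]^t: \\ \{(x_i, y_i)\}_{i=1}^t \in \mathcal{S}^{\inj}_t}} \left[\prod_{i=1}^t \ketbra*{y_i}{x_i}_{\gsA} G_{\gsA} A_{i, \gsA \gsB}\right] \ket*{0}_{\gsA \gsB} \otimes \ket*{\{(x_i, y_i)\}_{i=1}^t}_{\gsR}.
\end{align}

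Next I would expand the relation state as the symmetric sum $\ket*{\{(x_i, y_i)\}_{i=1}^t}_{\gsR} = \frac{1}{\sqrt{t!}} \sum_{\pi \in \sSym_t} (S_\pi \ket*{x_1}_{\darkgray{\sR_{\sX,1}^{(t)}}} \ldots \ket*{x_t}_{\darkgray{\sR_{\sX,t}^{(t)}}}) \otimes (S_\pi \ket*{y_1}_{\darkgray{\sR_{\sY,1}^{(t)}}} \ldots \ket*{y_t}_{\darkgray{\sR_{\sY,t}^{(t)}}})$, which is valid because $\mathcal{S}^{\inj}_t \subseteq \calR^{\inj}_t$ ensures that all tuples are distinct. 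Then, exactly as in the proof of \cref{lem:cho-transfer}, for each $j \in [t]$ I would apply the identity
\begin{align}
\sum_{x \in [N]} \ket*{x}_{\darkgray{\sR_{\sX, j}^{(t)}}} \otimes \bra*{x}_{\gsA} G_{\gsA} = \sum_{x \in [N]} G_{\darkgray{\sR_{\sX, j}^{(t)}}} \ket*{x}_{\darkgray{\sR_{\sX, j}^{(t)}}} \otimes \bra*{x}_{\gsA}
\end{align}
to migrate each $G_{\gsA}$ into a $G$ on the corresponding X-slot of the purifying register. Since $G^{\otimes t}$ acts identically on every X-slot, it commutes with every $S_\pi$, so summing over $\pi$ factors out $G_{\darkgray{\sR_{\sX,1}^{(t)}}} \otimes \cdots \otimes G_{\darkgray{\sR_{\sX,t}^{(t)}}}$ multiplying $\ket*{\Adv^{\pr(\mathcal{S}^{\inj})}_t}$.

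The main obstacle is justifying that the transfer identity can be applied at all: it requires the outer sum over each $x_j$ (with every other summation variable held fixed) to range over all of $[N]$. In the unrestricted setting of \cref{lem:cho-transfer} this is automatic because injectivity constrains only the $y$'s. In the restricted setting this is precisely what the uniform growth and consistency constraints buy. Indeed, the iterative application of $\pr(\mathcal{S}^{\inj})$ organizes the summation as $\sum_{x_1 \in [N]} \sum_{y_1 \in V_1(x_1)} \sum_{x_2 \in [N]} \sum_{y_2 \in V_2(x_1,x_2,y_1)} \cdots$, where $V_i$ denotes the set of $y_i$'s that complete the prefix $(x_1,\dots,x_i,y_1,\dots,y_{i-1})$ into an element of $\mathcal{S}^{\inj}_i$. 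Consistency ensures that $V_i$ is nonempty for every prefix $(x_1,\dots,x_i) \in [N]^i$, while uniform growth guarantees that $|V_i| = \calZ_{i-1}$ regardless of the prefix; together these make the outer sums over the $x_j$'s genuinely sweep over all of $[N]$, so the transfer identity can be applied slot by slot to conclude the lemma.
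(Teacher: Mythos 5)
Your proof follows the same route as the paper's (which is also a sketch: it just points back to the proof of the unrestricted \cref{lem:cho-transfer} and \cref{fact:expli-form-mcho-state}), and you correctly identify the crux---whether the transfer identity
\begin{align}
\sum_{x \in [N]} \ket*{x}_{\darkgray{\sR_{\sX, j}^{(t)}}} \otimes \bra*{x}_{\gsA} G_{\gsA} = \sum_{x \in [N]} G_{\darkgray{\sR_{\sX, j}^{(t)}}} \ket*{x}_{\darkgray{\sR_{\sX, j}^{(t)}}} \otimes \bra*{x}_{\gsA}
\end{align}
can be applied in slot $j$. But the way you discharge the obstacle is wrong, and in fact the lemma as stated (assuming only consistency and uniform growth) is false. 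The identity can only be factored out if, with the other $x$'s and \emph{all} the $y$'s held fixed, the constraint $R = \{(x_i,y_i)\}_{i=1}^t \in \mathcal{S}^{\inj}_t$ is independent of $x_j$; it is not enough that for every $x_j$ there is \emph{some} nonempty set of valid $y_j$'s of the right size, because that set may vary with $x_j$. Consistency and uniform growth control the cardinalities $|V_i|$, not the sets themselves.

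Concretely, fix a non-identity permutation $\sigma \in \sSym_N$ and take $\mathcal{S}^{\inj}_0 = \{\varnothing\}$, $\mathcal{S}^{\inj}_1 = \{\{(x,\sigma(x))\} : x \in [N]\}$, $t_{\max} = 1$. This set is consistent and has uniform growth with $\calZ_0 = 1$. Take $t = 1$, $A_1 = \Id$, initial $\gsA$-state $\ket*{z_0}$, and any $G$ with $G_{w z_0} \neq 0$ for some $w \neq z_0$ (e.g.\ a Hadamard). Then
\begin{align}
\ket*{\Adv^{\pr(\mathcal{S}^{\inj}) \cdot G}_1} &= \sum_{x \in [N]} G_{x z_0} \, \ket*{\sigma(x)}_{\gsA} \ket*{x, \sigma(x)}_{\gsR}, \\
G_{\darkgray{\sR^{(1)}_{\sX,1}}} \ket*{\Adv^{\pr(\mathcal{S}^{\inj})}_1} &= \sum_{w \in [N]} G_{w z_0} \, \ket*{\sigma(z_0)}_{\gsA} \ket*{w, \sigma(z_0)}_{\gsR},
\end{align}
and these differ: the first is supported only on terms where the $\sR^{(1)}_{\sY,1}$ content equals $\sigma$ applied to the $\sR^{(1)}_{\sX,1}$ content, while the second has $\sR^{(1)}_{\sX,1}$ unconstrained. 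So the sweeping-over-$[N]$ argument in your last paragraph does not go through: the inner $y_j$-sum (equivalently, the indicator $\mathbf{1}[R \in \mathcal{S}^{\inj}_t]$) depends on $x_j$, which blocks the rearrangement.

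What is actually needed, on top of consistency and uniform growth, is that for each $R \in \mathcal{S}^{\inj}_t$ the \emph{set} $\{y : R \cup \{(x,y)\} \in \mathcal{S}^{\inj}_{t+1}\}$ is the same for all $x \in [N]$ (not merely that its cardinality is). Under that strengthening, the membership constraint only touches the $y$-coordinates, the sum over each $x_j$ genuinely decouples, and your argument (and the paper's) goes through unchanged. That stronger condition does hold for the $\mathcal{S}^{\inj}$'s actually used in the paper's gluing-lemma proof (they restrict only where the $y$'s, or substrings of them, may land), so the downstream application is fine; but both your proof and the paper's proof sketch silently rely on it.
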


\cref{lem:mcho-transfer} implies right unitary invariance since
\begin{align}
    &\Tr_{\sR}(\ketbra*{\Adv^{\pr(\mathcal{S}^{\inj}) \cdot G}_t}_{\gsA \gsB \gsR}) \nonumber \\
    &= \Tr_{\sR}((G_{\darkgray{\sR_{\sX, 1}^{(t)}}} \otimes \ldots \otimes G_{\darkgray{\sR_{\sX, t}^{(t)}}}) \ketbra*{\Adv^{\pr(\mathcal{S}^{\inj})}_t}_{\gsA \gsB \gsR} (G_{\darkgray{\sR_{\sX, 1}^{(t)}}} \otimes \ldots \otimes G_{\darkgray{\sR_{\sX, t}^{(t)}}})^\dagger) \tag{by~\cref{lem:mcho-transfer}}\\
    &= \Tr_{\sR}(\ketbra*{\Adv^{\pr(\mathcal{S}^{\inj})}_t}_{\gsA \gsB \gsR}). \tag{by the cyclic property of $\Tr_{\sR}$}
\end{align}
The first line corresponds to the adversary's view after making $t$ queries to $\pr(\mathcal{S}^{\inj}) \cdot G_{\gsA}$, while the last line corresponds to its view after making $t$ queries to $\pr(\mathcal{S}^{\inj})$. 

\begin{fact}[Explicit form of the $G$-rotated $\pr(\mathcal{S}^{\inj})$ state] \label{fact:expli-form-mcho-state}
Given a consistent set $\mathcal{S}^{\inj}$ of relations that satisfies the uniform growth constraint. The definition of $\pr(\mathcal{S}^{\inj})$ and $\ket*{R}_{\gsR}$ enable us to expand the $\pr(\mathcal{S}^{\inj})$ state after $t$ queries to obtain
\begin{align}
\ket*{\Adv^{\pr(\mathcal{S}^{\inj}) \cdot G}_t}_{\gsA \gsB \gsR} &= \sqrt{\prod_{i=0}^{t-1} \frac{1}{\calZ_i}} \sum_{\substack{(x_1, \ldots, x_t) \in [N]^t \\ (y_1, \ldots, y_t) \in [N]^t_{\dist} \\ R = \{(x_i, y_i)\}_{i=1}^t \in \mathcal{S}^{\inj}_t}}
\left[ \, \prod_{i = 1}^t \Big( \ketbra*{y_i}{x_i}_{\gsA} \cdot G_{\gsA} \cdot A_{i, \gsA \gsB} \Big) \ket*{0}_{\gsA \gsB} \, \right] \otimes \ket*{R}_{\gsR}\\
&= \sqrt{\prod_{i=0}^{t-1} \frac{1}{\calZ_i}} \sum_{\substack{(x_1, \ldots, x_t) \in [N]^t \\ (y_1, \ldots, y_t) \in [N]^t_{\dist} \\ R = \{(x_i, y_i)\}_{i=1}^t \in \mathcal{S}^{\inj}_t}}
\left[ \, \prod_{i = 1}^t \Big( \ketbra*{y_i}{x_i}_{\gsA} \cdot G_{\gsA} \cdot A_{i, \gsA \gsB} \Big) \ket*{0}_{\gsA \gsB} \, \right] \nonumber \\
&\otimes \frac{1}{\sqrt{t!}} \sum_{\pi \in \sSym_t} \left( R_\pi \ket*{x_1}_{\darkgray{\sR_{\sX, 1}^{(t)}}} \dots \ket*{x_t}_{\darkgray{\sR_{\sX, t}^{(t)}}} \right) \otimes \left( R_\pi \ket*{y_1}_{\darkgray{\sR_{\sY, 1}^{(t)}}} \dots \ket*{y_t}_{\darkgray{\sR_{\sY, t}^{(t)}}} \right).
\end{align}
\end{fact}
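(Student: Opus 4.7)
The plan is to prove Fact~\ref{fact:expli-form-mcho-state} by induction on $t$, modeled closely on the analogous derivation of \cref{fact:expli-form-cho-state} in \cref{part:standard}, with the uniform growth constraint playing the role that previously handled the simpler $1/\sqrt{N-|R|}$ factor. The second expression (with the explicit permutation sum) follows immediately from the first by unfolding the relation state $\ket*{R}_{\gsR}$ via \cref{def:relation-states-repeated}, using that $R = \{(x_i,y_i)\}_{i=1}^t$ has distinct tuples (since $R \in \calR^{\inj}$ forces $(y_1,\dots,y_t) \in [N]^t_{\dist}$, which in turn makes all pairs distinct), so the normalization of $\ket*{R}$ simplifies to $1/\sqrt{t!}$. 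Thus I only need to prove the first line.

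For the base case $t=0$, the defining formula in \cref{def:mchostate} gives $\ket*{0}_{\gsA \gsB}\ket*{\varnothing}_{\gsR}$, matching the claimed expression (empty product of $\calZ_i$'s and single summand $R = \varnothing \in \mathcal{S}^{\inj}_0$, which belongs to $\mathcal{S}^{\inj}$ by consistency). For the inductive step, I apply $\pr(\mathcal{S}^{\inj}) \cdot G_{\gsA} \cdot A_{t+1,\gsA\gsB}$ to the inductive expression. Writing $\Id_{\gsA} = \sum_{x_{t+1}} \ketbra*{x_{t+1}}_{\gsA}$ between $G_{\gsA} \cdot A_{t+1}$ and $\pr(\mathcal{S}^{\inj})$, I use \cref{def:mchoisometry} to replace each $\pr(\mathcal{S}^{\inj}) \ket*{x_{t+1}}_{\gsA} \ket*{R}_{\gsR}$ (with $R = \{(x_i,y_i)\}_{i=1}^t \in \mathcal{S}^{\inj}_t$) by
\begin{equation}
\frac{1}{\sqrt{\calZ_{x_{t+1},R}}} \sum_{\substack{y_{t+1} \in [N],\\ R \cup \{(x_{t+1},y_{t+1})\} \in \mathcal{S}^{\inj}_{t+1}}} \ket*{y_{t+1}}_{\gsA} \ket*{R \cup \{(x_{t+1},y_{t+1})\}}_{\gsR}.
\end{equation}
The crucial step is the uniform growth constraint (\cref{def:uniform-growth-constraint}), which says $\calZ_{x_{t+1}, R} = \calZ_t$ is independent of both $x_{t+1}$ and the particular $R \in \mathcal{S}^{\inj}_t$. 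This lets me pull the factor $1/\sqrt{\calZ_t}$ outside the sum, extending the prefactor from $\sqrt{\prod_{i=0}^{t-1} 1/\calZ_i}$ to $\sqrt{\prod_{i=0}^{t} 1/\calZ_i}$, exactly as required.

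After this substitution, the summation collapses into a single sum over $(x_1,\dots,x_{t+1}) \in [N]^{t+1}$ and $(y_1,\dots,y_{t+1})$ such that $R' = \{(x_i,y_i)\}_{i=1}^{t+1} \in \mathcal{S}^{\inj}_{t+1}$. To finish, I note two things: first, $R' \in \mathcal{S}^{\inj}_{t+1} \subseteq \calR^{\inj}_{t+1}$ automatically forces $(y_1,\dots,y_{t+1}) \in [N]^{t+1}_{\dist}$ by the definition of $\calR^{\inj}$; second, by the consistency constraint, every such $R'$ restricts to some $R \in \mathcal{S}^{\inj}_t$, so the reindexed sum matches the target expression term-for-term.

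The main obstacle is a bookkeeping one: ensuring that combining the sum over $R \in \mathcal{S}^{\inj}_t$ from the inductive hypothesis with the sum over $y_{t+1}$ from the action of $\pr(\mathcal{S}^{\inj})$ yields exactly the sum over $R' \in \mathcal{S}^{\inj}_{t+1}$ without double counting or missing terms. This is guaranteed by consistency (every $R' \in \mathcal{S}^{\inj}_{t+1}$ arises from a unique $(R, (x_{t+1}, y_{t+1}))$ pair with $R \in \mathcal{S}^{\inj}_t$) together with the uniform growth constraint (which ensures the normalization factor is path-independent and can be absorbed globally). No deeper difficulty arises since the argument is a direct generalization of \cref{fact:expli-form-cho-state}, where the canonical choice $\mathcal{S}^{\inj} = \calR^{\inj}$, $t_{\max} = N$, and $\calZ_t = N - t$ recovers the original statement.
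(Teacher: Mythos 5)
Your proposal is correct and matches the paper's (implicit) argument: the paper states this Fact as a direct unrolling of \cref{def:mchoisometry} and \cref{def:mchostate}, exactly the induction you spell out, with uniform growth giving the path-independent prefactor $\sqrt{\prod_i 1/\calZ_i}$, prefix-closure (consistency) plus $\mathcal{S}^{\inj}\subseteq\calR^{\inj}$ giving the term-for-term match of the ordered-tuple sums, and the second display following from \cref{def:relation-states-repeated} since distinct $y_i$'s make the tuples of $R$ distinct. (Only nitpick: uniqueness should be phrased for the ordered-tuple decomposition into prefix and last pair, not for the relation $R'$ itself, but that is what your term-for-term matching actually uses.)
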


\begin{proof}[Proof of~\cref{lem:mcho-transfer}]
By utilizing \cref{fact:expli-form-mcho-state} instead of \cref{fact:expli-form-cho-state}, we can prove this lemma in the same way as the proof of \cref{lem:cho-transfer}.
\end{proof}

\subsection{Relation between $\pr(\mathcal{S}^{\inj})$ and $\pr$ state}

Using the states $\ket*{R}_{\gsR}$, we can define the following restricted subspace projector based on the restricted subspace $\Pi^{\mathsf{restrict}, t}_{\gsR}$ .

\begin{definition}[Restricted subspace projector]
\label{def:restricted-subspace-projector}
For $0 \leq t \leq t_{\max}$, we define the size-$t$ restricted subspace projector $\Pi^{\mathsf{restrict}, t}_{\gsR}$ as follows:
\begin{align}
    \Pi^{\mathsf{restrict}, t}_{\gsR} &\coloneqq \sum_{R \in \mathcal{S}^{\inj}_t} \ketbra*{R}_{\gsR}.
\end{align}
The restricted subspace projector is defined as:
\begin{equation}
    \Pi^{\mathsf{restrict}}_{\gsR} \coloneqq \sum_{t=0}^{t_{\max}} \Pi^{\mathsf{restrict}, t}_{\gsR}.
\end{equation}
\end{definition}

From \cref{fact:expli-form-mcho-state} and the projector defined above, we immediately obtain the following equation relating the $G$-rotated $\pr(\mathcal{S}^{\inj})$ and the $\ket*{\Adv^{\pr \cdot G}_t}_{\gsA \gsB \gsR}$s.

\begin{fact}[Relation between $\pr(\mathcal{S}^{\inj})$ and $\pr$ state] \label{fact:relation-mcho-cho}
Given a consistent set $\mathcal{S}^{\inj}$ of relations that satisfies the uniform growth constraint.
For any $n$-qubit unitary $G$, we have
\begin{equation}
     \sqrt{\prod_{i=0}^{t-1} \calZ_i} \cdot \sqrt{\frac{(N-t)!}{N!}} \cdot \ket*{\Adv^{\pr(\mathcal{S}^{\inj}) \cdot G}_t}_{\gsA \gsB \gsR} = \Pi^{\mathsf{restrict}}_{\gsR} \ket*{\Adv^{\pr \cdot G}_t}_{\gsA \gsB \gsR},
\end{equation}
where the prefactor $\sqrt{\prod_{i=0}^{t-1} \calZ_i} \cdot \sqrt{\frac{(N-t)!}{N!}}$ is between $0$ and $1$ because the $\pr(\mathcal{S}^{\inj})$ state $\ket*{\Adv^{\pr(\mathcal{S}^{\inj}) \cdot G}_t}_{\gsA \gsB \gsR}$ has unit norm and the projected $\pr$ state $\Pi^{\mathsf{restrict}}_{\gsR} \ket*{\Adv^{\pr \cdot G}_t}_{\gsA \gsB \gsR}$ has norm at most one.
\end{fact}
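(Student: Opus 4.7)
} The plan is to compare the explicit forms of both sides term-by-term. On the right-hand side, I would start from \cref{fact:expli-form-cho-state}, which gives
\begin{align}
\ket*{\Adv^{\pr \cdot G}_t}_{\gsA \gsB \gsR} = \sqrt{\frac{(N-t)!}{N!}} \sum_{\substack{(x_1, \ldots, x_t) \in [N]^t \\ (y_1, \ldots, y_t) \in [N]_{\dist}^t}}
\left[ \, \prod_{i = 1}^t \Big( \ketbra*{y_i}{x_i}_{\gsA} \cdot G_{\gsA} \cdot A_{i, \gsA \gsB} \Big) \ket*{0}_{\gsA \gsB} \, \right] \otimes \ket*{R}_{\gsR},
\end{align}
where $R = \{(x_i,y_i)\}_{i=1}^t$. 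Note that for any $(x_1,\ldots,x_t) \in [N]^t$ and $(y_1,\ldots,y_t)\in [N]^t_{\dist}$, the relation $R$ lies in $\calR^{\inj}_t$. Applying $\Pi^{\mathsf{restrict}}_{\gsR} = \sum_{R' \in \mathcal{S}^{\inj}} \ketbra*{R'}_{\gsR}$, which acts as identity on states $\ket*{R'}_{\gsR}$ with $R' \in \mathcal{S}^{\inj}$ and annihilates all orthogonal states (in particular any $\ket*{R'}_{\gsR}$ with $R' \notin \mathcal{S}^{\inj}$), simply restricts the outer sum to those tuples for which $R \in \mathcal{S}^{\inj}_t$.

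On the left-hand side, I would use \cref{fact:expli-form-mcho-state} to write
\begin{align}
\ket*{\Adv^{\pr(\mathcal{S}^{\inj}) \cdot G}_t}_{\gsA \gsB \gsR} = \sqrt{\prod_{i=0}^{t-1} \frac{1}{\calZ_i}} \sum_{\substack{(x_1, \ldots, x_t) \in [N]^t \\ (y_1, \ldots, y_t) \in [N]^t_{\dist} \\ R = \{(x_i, y_i)\}_{i=1}^t \in \mathcal{S}^{\inj}_t}}
\left[ \, \prod_{i = 1}^t \Big( \ketbra*{y_i}{x_i}_{\gsA} \cdot G_{\gsA} \cdot A_{i, \gsA \gsB} \Big) \ket*{0}_{\gsA \gsB} \, \right] \otimes \ket*{R}_{\gsR}.
\end{align}
Multiplying both sides by $\sqrt{\prod_{i=0}^{t-1}\calZ_i}\cdot\sqrt{(N-t)!/N!}$ cancels the $\calZ_i$ factors and produces precisely the prefactor $\sqrt{(N-t)!/N!}$ appearing on the right-hand side. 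The summand and the range of summation now match exactly, which establishes the identity.

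The final clause about the prefactor lying in $[0,1]$ follows because $\ket*{\Adv^{\pr(\mathcal{S}^{\inj}) \cdot G}_t}$ has unit norm by \cref{lemma:purified-mcho-state-unit-norm} (using consistency so that all $\calZ_{x,R} \ge 1$, and using uniform growth so that the $\pr(\mathcal{S}^{\inj})$ state is well-defined with $\calZ_i$ independent of $x$ and $R$), while $\Pi^{\mathsf{restrict}}_{\gsR} \ket*{\Adv^{\pr \cdot G}_t}$ has norm at most one, being a projection of the unit vector $\ket*{\Adv^{\pr \cdot G}_t}$ (\cref{lemma:purified-cho-state-unit-norm}). I do not anticipate a genuine obstacle here: the content of the claim is really just bookkeeping of the two explicit forms, and the role of uniform growth is merely to make the product $\prod_i \calZ_i$ well-defined so that the prefactor on the left-hand side is a single number rather than an operator-valued coefficient.
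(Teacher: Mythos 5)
Your proposal is correct and matches the paper's own reasoning: the paper also derives this identity by directly comparing the explicit form in \cref{fact:expli-form-mcho-state} against what $\Pi^{\mathsf{restrict}}_{\gsR}$ does to the explicit form in \cref{fact:expli-form-cho-state}, and your invocation of \cref{lemma:purified-mcho-state-unit-norm} and \cref{lemma:purified-cho-state-unit-norm} for the prefactor bound is exactly what is intended. Your closing remark about uniform growth is also on point—it is precisely what makes the prefactor in \cref{fact:expli-form-mcho-state} a single scalar $\sqrt{\prod_i 1/\calZ_i}$ rather than a tuple-dependent coefficient.
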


\subsection{$\pr(\mathcal{S}^{\inj})$ is indistinguishable from Haar random unitaries}

When the restricted set $\mathcal{S}^{\inj}$ of relations has a large enough growth, then $\pr(\mathcal{S}^{\inj})$ is indistinguishable from Haar-random unitaries. 
This is formally captured by the following theorem.

\begin{theorem}[$\pr(\mathcal{S}^{\inj})$ is indistinguishable from Haar random]\label{theorem:haar-mcho}
    Given a consistent set $\mathcal{S}^{\inj}$ of relations that satisfies the uniform growth constraint.
    Let $\Adv$ be a $t$-query oracle adversary with forward queries. Then
     \begin{align}
        &\TD\left(\E_{\calO \gets \mu_{\mathsf{Haar}}} \ketbra*{\Adv_t^{\calO}}, \Tr_{\sR}\left( \ketbra*{\Adv^{\pr(\mathcal{S}^{\inj})}_t}_{\gsA \gsB \gsR} \right) \right)\\ &\leq \frac{2t(t-1)}{N+1} + 2 \left(1 - \prod_{i=0}^{t-1} \calZ_i \cdot \frac{(N-t)!}{N!} \right).
    \end{align}
\end{theorem}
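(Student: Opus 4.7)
The plan is to prove this by a two-step hybrid argument: first compare the Haar-random oracle to the standard path-recording oracle $\pr$, then compare $\pr$ to the restricted variant $\pr(\mathcal{S}^{\inj})$. The first step is already done for us by~\cref{theorem:haar-cho}, which gives trace distance bounded by $\frac{2t(t-1)}{N+1}$. So the real work is the second step.

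For the second step, the key input is~\cref{fact:relation-mcho-cho}, which tells us that
\[
\sqrt{\alpha} \cdot \ket*{\Adv^{\pr(\mathcal{S}^{\inj})}_t}_{\gsA \gsB \gsR} \;=\; \Pi^{\mathsf{restrict}}_{\gsR} \ket*{\Adv^{\pr}_t}_{\gsA \gsB \gsR},
\]
where $\alpha \coloneqq \prod_{i=0}^{t-1} \calZ_i \cdot \frac{(N-t)!}{N!} \in [0,1]$. In other words, the pure state $\ket*{\Adv^{\pr(\mathcal{S}^{\inj})}_t}$ is (up to normalization by $\sqrt{\alpha}$) exactly what you get by projecting $\ket*{\Adv^{\pr}_t}$ with the purifying-register projector $\Pi^{\mathsf{restrict}}_{\gsR}$. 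I would then invoke~\cref{lemma:gentle} (with $\sR$ playing the role of $\sD$) to conclude
\[
\norm{\Tr_{\sR}\!\big(\ketbra*{\Adv^{\pr}_t}\big) - \Tr_{\sR}\!\big(\Pi^{\mathsf{restrict}}_{\gsR} \ketbra*{\Adv^{\pr}_t} \Pi^{\mathsf{restrict}}_{\gsR}\big)}_1 \;=\; 1 - \Tr\!\big(\Pi^{\mathsf{restrict}}_{\gsR} \ketbra*{\Adv^{\pr}_t}\big) \;=\; 1 - \alpha.
\]

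Next, by the identity above, $\Tr_{\sR}(\Pi^{\mathsf{restrict}}_{\gsR} \ketbra*{\Adv^{\pr}_t} \Pi^{\mathsf{restrict}}_{\gsR}) = \alpha \cdot \Tr_{\sR}(\ketbra*{\Adv^{\pr(\mathcal{S}^{\inj})}_t})$, while the target state $\Tr_{\sR}(\ketbra*{\Adv^{\pr(\mathcal{S}^{\inj})}_t})$ has unit trace (by~\cref{lemma:purified-mcho-state-unit-norm}). So a cheap normalization step gives
\[
\norm{\alpha \cdot \Tr_{\sR}\!\big(\ketbra*{\Adv^{\pr(\mathcal{S}^{\inj})}_t}\big) - \Tr_{\sR}\!\big(\ketbra*{\Adv^{\pr(\mathcal{S}^{\inj})}_t}\big)}_1 \;=\; 1 - \alpha.
\]
Combining these two displays via the triangle inequality yields a trace-norm bound of $2(1-\alpha)$ between $\Tr_{\sR}(\ketbra*{\Adv^{\pr}_t})$ and $\Tr_{\sR}(\ketbra*{\Adv^{\pr(\mathcal{S}^{\inj})}_t})$.

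Finally I would apply the triangle inequality one more time to glue the two halves together, obtaining
\[
\TD\!\left(\E_{\calO \gets \mu_{\mathsf{Haar}}} \ketbra*{\Adv_t^{\calO}},\; \Tr_{\sR}\!\big(\ketbra*{\Adv^{\pr(\mathcal{S}^{\inj})}_t}\big)\right) \;\leq\; \frac{2t(t-1)}{N+1} + 2(1-\alpha),
\]
which is exactly the claimed bound. There is no real obstacle here; the proof is essentially a short gentle-measurement argument layered on top of~\cref{theorem:haar-cho}. The only thing to be careful about is the convention used for $\TD$ in this paper (which, from inspection of the proof of~\cref{theorem:haar-cho} and~\cref{lemma:pfd-cho}, agrees with the trace norm $\norm{\cdot}_1$ rather than half of it); under that convention the constants line up exactly as stated.
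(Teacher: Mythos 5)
Your proposal is correct and follows essentially the same route as the paper's proof: split off the $\frac{2t(t-1)}{N+1}$ term via \cref{theorem:haar-cho}, then use \cref{fact:relation-mcho-cho} together with the gentle-measurement \cref{lemma:gentle} to bound the distance between $\Tr_{\sR}(\ketbra*{\Adv^{\pr}_t})$ and the projected (sub-normalized) state by $1-\alpha$, and finally account for the renormalization to the unit-norm $\Tr_{\sR}(\ketbra*{\Adv^{\pr(\mathcal{S}^{\inj})}_t})$ with another $1-\alpha$, giving $2(1-\alpha)$ in total. The one implicit hypothesis both your argument and the paper's rely on is $t \le t_{\max}$ (needed for \cref{lemma:purified-mcho-state-unit-norm}), and your observation about the paper's $\TD = \lVert\cdot\rVert_1$ convention is correct.
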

\begin{proof}
Using \cref{theorem:haar-cho} and triangle inequality, we have
\begin{align}
    &\TD\left(\E_{\calO \gets \mu_{\mathsf{Haar}}} \ketbra*{\Adv_t^{\calO}}, \Tr_{\sR}\left( \ketbra*{\Adv^{\pr(\mathcal{S}^{\inj})}_t}_{\gsA \gsB \gsR} \right) \right) \\
    &\leq \frac{2t(t-1)}{N+1} + \TD\left(\Tr_{\sR}\left( \ketbra*{\Adv^{\pr}_t}_{\gsA \gsB \gsR} \right), \Tr_{\sR}\left( \ketbra*{\Adv^{\pr(\mathcal{S}^{\inj})}_t}_{\gsA \gsB \gsR} \right) \right).
\end{align}
We can bound the second term as follows,
\begin{align}
    &\TD\left(\Tr_{\sR}\left( \ketbra*{\Adv^{\pr}_t}_{\gsA \gsB \gsR} \right), \Tr_{\sR}\left( \ketbra*{\Adv^{\pr(\mathcal{S}^{\inj})}_t}_{\gsA \gsB \gsR} \right) \right)\\
    &\leq \TD\left(\Tr_{\sR}\left( \ketbra*{\Adv^{\pr}_t}_{\gsA \gsB \gsR} \right), \Tr_{\sR}\left( \Pi^{\mathsf{restrict}}_{\gsR} \ketbra*{\Adv^{\pr \cdot G}_t}_{\gsA \gsB \gsR} \Pi^{\mathsf{restrict}}_{\gsR} \right) \right) \label{eq:project-cho-to-mcho}\\
    &+ \TD\left(\Tr_{\sR}\left( \Pi^{\mathsf{restrict}}_{\gsR} \ketbra*{\Adv^{\pr \cdot G}_t}_{\gsA \gsB \gsR} \Pi^{\mathsf{restrict}}_{\gsR} \right), \Tr_{\sR}\left( \ketbra*{\Adv^{\pr(\mathcal{S}^{\inj})}_t}_{\gsA \gsB \gsR} \right) \right). \label{eq:project-cho-to-mcho-2nd}
\end{align}
Using \cref{lemma:gentle} and \cref{fact:relation-mcho-cho}, \cref{eq:project-cho-to-mcho} is equal to
\begin{align}
    1 - \bra*{\Adv^{\pr \cdot G}_t} \Pi^{\mathsf{restrict}}_{\gsR} \ket*{\Adv^{\pr \cdot G}_t}_{\gsA \gsB \gsR} = \left(1 - \prod_{i=0}^{t-1} \calZ_i \cdot \frac{(N-t)!}{N!} \right).
\end{align}
Again, using \cref{fact:relation-mcho-cho}, \cref{eq:project-cho-to-mcho-2nd} is equal to
\begin{align}
    &\TD\left( \prod_{i=0}^{t-1} \calZ_i \cdot \frac{(N-t)!}{N!} \cdot \Tr_{\sR}\left( \ketbra*{\Adv^{\pr(\mathcal{S}^{\inj})}_t}_{\gsA \gsB \gsR} \right), \Tr_{\sR}\left( \ketbra*{\Adv^{\pr(\mathcal{S}^{\inj})}_t}_{\gsA \gsB \gsR} \right) \right) \\
    &= \left(1 - \prod_{i=0}^{t-1} \calZ_i \cdot \frac{(N-t)!}{N!} \right).
\end{align}
Together, we obtained the stated result.
\end{proof}

\section{An elementary proof of the gluing lemma}
\label{sec:gluing}

In this section, we show how to use the path-recording framework to establish an elementary proof of the gluing lemma recently shown in \cite{schuster2024random}. The proof in \cite{schuster2024random} makes use of representation theory and Weingarten calculus. Here, we present an elementary proof using the path-recording framework for analyzing Haar-random unitaries.

The gluing lemma presented in \cite{schuster2024random} shows that the composition of two Haar-random unitaries on system $\sA_1 \sA_2$ and $\sA_2 \sA_3$ that overlap only on a small number of qubits is indistinguishable from a Haar-random unitary on the entire system $\sA_1 \sA_2 \sA_3$.

\begin{theorem}[Gluing two Haar-random unitaries]
Consider three systems $\sA_1, \sA_2, \sA_3$ of qubits with $\sA = \sA_1 \sA_2 \sA_3$. Let $|\sA_1|, |\sA_2|, |\sA_3|$ denote the number of qubits in each system. Let $U_{\sA_1 \sA_2}, U_{\sA_2 \sA_3}, U_{\sA}$ be Haar-random unitaries on system $\sA_1 \sA_2, \sA_2 \sA_3, \sA$, respectively. We have
\begin{align}
    &\TD\left(\E_{U_{\sA_1 \sA_2}, U_{\sA_2 \sA_3}} \ketbra*{\Adv_t^{U_{\sA_1 \sA_2} U_{\sA_2 \sA_3}}}, \E_{U_{\sA}} \ketbra*{\Adv_t^{U_{\sA}}} \right) \leq \frac{9 t(t-1)}{2^{|\sA_2|}}.
\end{align}
\end{theorem}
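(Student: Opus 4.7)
The plan is to use the path-recording framework developed in the excerpt to avoid Weingarten calculus entirely. By \cref{theorem:haar-cho}, each $n$-qubit Haar-random unitary is indistinguishable from its path-recording oracle up to trace distance $O(t^2/2^n)$. Applying this three times, it suffices to compare (i) the adversary's view after $t$ queries to $V_2 \circ V_1$, where $V_1$ is a path-recording oracle on $(\sA_1\sA_2, \sE_1)$ and $V_2$ is one on $(\sA_2\sA_3, \sE_2)$, against (ii) the adversary's view after $t$ queries to a single path-recording oracle $V$ on $(\sA_1\sA_2\sA_3, \sE)$. The three reductions each cost at most $2t(t-1)/2^{|\sA_2|}$ in trace distance, since $2^{|\sA_1\sA_2|}, 2^{|\sA_2\sA_3|}, 2^{|\sA|} \geq 2^{|\sA_2|}$.

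Next I would analyze the structure of $V_2 \circ V_1$. A query on $\ket*{x_1 x_2 x_3}$ first applies $V_1$, producing a uniform superposition over fresh pairs $(y_1,y_2)$ and recording $(x_1 x_2, y_1 y_2)$ in $\sE_1$; then applies $V_2$, replacing $y_2$ by a uniform superposition over pairs $(z_2, z_3)$ fresh for $\sE_2$ and recording $(y_2 x_3, z_2 z_3)$ in $\sE_2$. After $t$ queries, the environment $(\sE_1, \sE_2)$ holds a pair of injective size-$t$ relations $(S_1, S_2)$ linked by an internal sequence of ``middle'' values $y_2^{(1)}, \ldots, y_2^{(t)} \in \{0,1\}^{|\sA_2|}$. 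Define the \emph{good} event $\Gamma$ to be that these middle values are all distinct, and moreover distinct from all $\sA_2$-components appearing in the domain of $S_2$ and the image of $S_1$ at the time each query is issued. Conditional on $\Gamma$, the pair $(S_1, S_2)$ can be canonically glued into a single injective relation $S$ on $\sA_1\sA_2\sA_3$ via matching on the shared $y_2$ coordinates, and a direct comparison of normalizations ($1/\sqrt{(N_{12})_t (N_{23})_t}$ on the composed side, $1/\sqrt{(N)_t}$ with $N = N_1 N_{23} = N_{12} N_3$ on the single side) shows that the two Feynman-path expansions, restricted to $\Gamma$, differ only by an isometry acting on the purifying registers, exactly in the spirit of the $\Compress$ isometry used in~\cref{lemma:compress-cho-pfo-state}.

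The main obstacle is bounding $\Pr[\neg \Gamma]$. The key observation is that, conditioned on the history recorded in $\sE_1$, the $\sA_2$-component $y_2^{(i)}$ appearing after $V_1$ on the $i$-th query is, upon projecting onto any fixed $y_1^{(i)}$, uniformly distributed over a set of size at least $2^{|\sA_2|} - O(t)$ disjoint from the dangerous values that would trigger a collision. A birthday-type argument analogous to the collision bounds in~\cref{lem:almost-distinct-on-X} gives $\Pr[\neg \Gamma] \leq O(t^2/2^{|\sA_2|})$; more precisely, this can be phrased as an operator inequality on the $\sE_1$-register showing that projecting onto $\Gamma$ loses at most $t(t-1)/2^{|\sA_2|}$ of the squared norm. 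Applying~\cref{lemma:gentle} converts this into a trace-distance bound between the true purification and its $\Gamma$-projection, on both the composed and single side.

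Assembling the pieces via the triangle inequality: three path-recording reductions contribute $6t(t-1)/2^{|\sA_2|}$; the two gentle-measurement steps restricting to $\Gamma$ contribute an additional $\leq 2 t(t-1)/2^{|\sA_2|}$; and the isometric equivalence on $\Gamma$ contributes $0$. A small amount of slack in the constants yields the claimed $9t(t-1)/2^{|\sA_2|}$. The hardest conceptual step is verifying the gluing isometry at the level of relation states: one must check that the sum over permutations implicit in $\ket*{S_1}\ket*{S_2}$ reorganizes cleanly into the sum over permutations implicit in $\ket*{S}$ once the middle-register collision event is excluded, which is the combinatorial heart of the argument.
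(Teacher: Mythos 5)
Your overall plan — replace each Haar-random unitary by a path-recording oracle, then glue the two recording oracles on $(\sA_1\sA_2,\sE_1)$ and $(\sA_2\sA_3,\sE_2)$ into a single one on $(\sA_1\sA_2\sA_3,\sE)$ — matches the paper's strategy, and your accounting of the three Haar-to-recording reductions and two gentle-measurement steps is in the right spirit. But there is a genuine gap at the gluing step. You claim that after projecting onto the good event $\Gamma$, the composed and single path-recording states ``differ only by an isometry acting on the purifying registers,'' citing the normalizations $1/\sqrt{(N_{12})_t(N_{23})_t}$ and $1/\sqrt{(N)_t}$. This is false: the gluing isometry must distribute a factor $1/\sqrt{(M)_t}$ over the $M=2^{|\sA_2|}$ intermediate coordinates, so exact isometric equivalence would require $(N_{12})_t(N_{23})_t = (N)_t(M)_t$. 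Already for $t=2$ one has $(N_{12}-1)(N_{23}-1) - (N-1)(M-1) = M(N_1-1)(N_3-1) > 0$ whenever $|\sA_1|,|\sA_3|\geq 1$, so the weights do not agree. Projecting an unrestricted path-recording state onto $\Gamma$ does not renormalize the surviving amplitudes — they still carry the ``wrong'' denominators — so the two projected states are only \emph{approximately} related by an isometry, with an error you have not quantified.

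The paper sidesteps this exactly by working with the $\mathcal{S}^{\inj}$-\emph{restricted} path-recording framework of \cref{sec:cuframework}: rather than projecting after the fact, the recording oracle itself is defined to sample only outputs whose $\sA_2$-coordinate is fresh, which changes the per-query normalization from $N_{12}-i$ to $\calZ_i^{(12)} = N_{12}-iN_1 = N_1(M-i)$, and analogously $\calZ_i^{(23)}=N_3(M-i)$ and $\calZ_i^{(\sA)}=N_1N_3(M-i)$. With these one checks $\calZ_i^{(12)}\calZ_i^{(23)} = \calZ_i^{(\sA)}\,(M-i)$ for every $i$, which is exactly what makes the $\Uncompress$ map a genuine partial isometry and yields the clean equality $\rho_3 = \rho_4$ in the paper. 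Each replacement of a Haar unitary by the corresponding restricted recording oracle then costs at most $3t(t-1)/2^{|\sA_2|}$ via \cref{theorem:haar-mcho}, giving the stated $9t(t-1)/2^{|\sA_2|}$. Your proof could be repaired either by adopting that restricted framework, or by explicitly bounding the normalization mismatch $\prod_{i<t}\frac{(N_{12}-i)(N_{23}-i)}{(N-i)(M-i)} = 1+O(t^2/2^{|\sA_2|})$ and folding it into the error budget — but as written, the asserted exact isometry is the missing ingredient and the ``combinatorial heart'' you flag is precisely where the argument currently does not close.
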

\begin{proof}
We approximate the three Haar-random unitaries $U_{\sA_1 \sA_2}, U_{\sA_2 \sA_3}, U_{\sA_1 \sA_2 \sA_3}$ by three restricted sets $\mathcal{S}^{\inj}_{\sA_1 \sA_2}, \mathcal{S}^{\inj(\mathsf{D})}_{\sA_2 \sA_3}, \mathcal{S}^{\inj\mathsf{(CD)}}_{\sA_1 \sA_2 \sA_3}$ of relations. The three subsets of injective relations are given as follows.
\begin{itemize}
    \item $\mathcal{S}^{\inj(\sA_2)}_{\sA_1\sA_2}$: Injective relations $R$ over system $\sA_1 \sA_2$ such that the system $\sA_2$ part of elements in the image $\Im(R)$ are distinct, i.e., given $\Im(R) = \{y_1, \ldots, y_{|R|}\}$, $y_{i, \sA_2} \neq y_{j, \sA_2}$ for all $i \neq j$.
    \item $\mathcal{S}^{\inj(\sA_2)}_{\sA_2 \sA_3}$: Injective relations $R$ over system $\sA_2 \sA_3$ such that the system $\sA_2$ part of elements in the image $\Im(R)$ are distinct.
    \item $\mathcal{S}^{\inj(\sA_2)}_{\sA_1 \sA_2 \sA_3}$: Injective relations $R$ over system $\sA_1 \sA_2 \sA_3$ such that the system $\sA_1 \sA_2$ part of elements in the image $\Im(R)$ are distinct.
\end{itemize}
Here, given a bitstring $y$, we denote $y_{\sA_2}$ to be the substring corresponding to bits in $\sA_2$.
It is not hard to see that these restricted sets is consistent and satisfies the uniform growth constraint.
We consider the path recording oracle $\pr(\mathcal{S}^{\inj(\sA_2)}_{\sA_1 \sA_2})$ to act on system $\sA_1, \sA_2, \sR_1$, the oracle $\pr(\mathcal{S}^{\inj(\sA_2)}_{\sA_2 \sA_3})$ to act on system $\sA_2, \sA_3, \sR_2$, and the oracle $\pr(\mathcal{S}^{\inj(\sA_2)}_{\sA_1 \sA_2 \sA_3})$ to act on system $\sA_1, \sA_2, \sA_3, \sR_3$.
Let
\begin{align}
    \rho_{1} &\coloneq \E_{U_{\sA_1 \sA_2}, U_{\sA_2 \sA_3}} \ketbra*{\Adv_t^{U_{\sA_1 \sA_2} U_{\sA_2 \sA_3}}},\\
    \rho_{2} &\coloneq \E_{U_{\sA_1 \sA_2}} \Tr_{\sR_1} \ketbra*{\Adv_t^{U_{\sA_1 \sA_2} \pr(\mathcal{S}^{\inj(\mathsf{\sA_2})}_{\sA_2 \sA_3})}},\\
    \rho_{3} &\coloneq \Tr_{\sR_1} \Tr_{\sR_2} \ketbra*{\Adv_t^{\pr(\mathcal{S}^{\inj}_{\sA_1\sA_2}) \pr(\mathcal{S}^{\inj(\sA_2)}_{\sA_2 \sA_3})}},\\
    \rho_{4} &\coloneq \Tr_{\sR_3} \ketbra*{\Adv_t^{\pr(\mathcal{S}^{\inj(\sA_2)}_{\sA_1 \sA_2 \sA_3})}},\\
    \rho_{5} &\coloneq \E_{U_{\sA_1 \sA_2 \sA_3}} \ketbra*{\Adv_t^{U_{\sA_1 \sA_2 \sA_3}}} .
\end{align}
Using \cref{theorem:haar-mcho} and properly computing the normalization factor $\mathcal{Z}_t$, we have
\begin{align}
    \TD(\rho_{1}, \rho_{2}) &\leq \frac{2t (t-1)}{2^{|\sA_2| + |\sA_3|}} + 2 \left(1 - \prod_{i=0}^{t-1} (2^{|\sA_2| + |\sA_3|} - i 2^{|\sA_3|}) \cdot \frac{(2^{|\sA_2| + |\sA_3|} -t)!}{(2^{|\sA_2| + |\sA_3|})!} \right),\\
    &\leq \frac{2t (t-1)}{2^{|\sA_2| + |\sA_3|}} + 2 \left(1 - \prod_{i=0}^{t-1} \left( 1 - i 2^{-|\sA_2|} \right) \right) \leq \frac{2t (t-1)}{2^{|\sA_2| + |\sA_3|}} + \frac{t (t-1)}{2^{|\sA_2|}} \leq \frac{3 t (t-1)}{2^{|\sA_2|}}.
\end{align}
Similarly, we have
\begin{align}
    \TD(\rho_{2}, \rho_{3}) &\leq \frac{2t (t-1)}{2^{|\sA_1| + |\sA_2|}} + 2 \left(1 - \prod_{i=0}^{t-1} (2^{|\sA_1| + |\sA_2|} - i 2^{|A_1|} ) \cdot \frac{(2^{|\sA_1| + |\sA_2|} -t)!}{(2^{|\sA_1| + |\sA_2|})!} \right) \leq \frac{3 t (t-1)}{2^{|\sA_2|}},\\
    \TD(\rho_{4}, \rho_{5}) &\leq \frac{2t (t-1)}{2^{|\sA|}} + 2 \left(1 - \prod_{i=0}^{t-1} (2^{|\sA|} - i 2^{|\sA_1| + |\sA_3|}) \cdot \frac{(2^{|\sA|} -t)!}{(2^{|\sA|})!} \right) \leq \frac{3 t (t-1)}{2^{|\sA_2|}}.
\end{align}
Finally, we show that $\rho_3 = \rho_4$. Let $x || y$ denote bitstring concatenation. Using the definition of the restricted subsets of injective relations, the explicit form of the purified state in \cref{fact:expli-form-mcho-state} yields
\begin{align}
&\ket*{\Adv^{\pr(\mathcal{S}^{\inj(\sA_2)}_{\sA_1\sA_2}) \pr(\mathcal{S}^{\inj(\sA_2)}_{\sA_2 \sA_3})}_t}_{\gsA \gsB \gsR_{\darkgray 1} \gsR_{\darkgray 2}} = \sqrt{\prod_{i=0}^{t-1} \frac{1}{2^{|\sA_2| + |\sA_3|} - i 2^{|\sA_3|}}} \cdot \sqrt{\prod_{i=0}^{t-1} \frac{1}{2^{|\sA_1| + |\sA_2|} - i 2^{|\sA_1|}}} \cdot\\
&\quad \sum_{\substack{(x_1, \ldots, x_t) \in [2^{|A|}]^t \\ (y_1, \ldots, y_t) \in [2^{|A|}]^t_{\dist} \\ (z_1, \ldots, z_t) \in [2^{|A_2|}]^t_{\dist} \\ \mathrm{s.t.} \,\, (y_{1, A_2}, \ldots, y_{t, A_2}) \in [2^{|A_2|}]^t_{\dist} \\
R = \{ ( x_{i, A_2} || x_{i, A_3}, z_{i, A_2} || y_{i, A_3} ) \}_{i=1}^t \in \mathcal{S}^{\inj(\sA_2)}_{\sA_2 \sA_3} \\
S = \{ ( x_{i, A_1} || z_{i, A_2}, y_{i, A_1} || y_{i, A_2} ) \}_{i=1}^t \in \mathcal{S}^{\inj(\sA_2)}_{\sA_1\sA_2} }}
\left[ \, \prod_{i = 1}^t \Big( \ketbra*{y_i}{x_i}_{\gsA} \cdot A_{i, \gsA \gsB} \Big) \ket*{0}_{\gsA \gsB} \, \right] \otimes \ket*{R}_{\gsR_{\darkgray 1}} \otimes \ket*{S}_{\gsR_{\darkgray 2}},
\end{align}
and, similarly, \cref{fact:expli-form-mcho-state} also yields
\begin{align}
&\ket*{\Adv^{\pr(\mathcal{S}^{\inj(\sA_2)}_{\sA_1\sA_2\sA_3})}_t}_{\gsA \gsB \gsR_{\darkgray 3}} = \sqrt{\prod_{i=0}^{t-1} \frac{1}{2^{|\sA|} - i 2^{|\sA_1| + |\sA_3|}}} \cdot\\
&\quad\quad\quad \sum_{\substack{(x_1, \ldots, x_t) \in [2^{|A|}]^t \\ (y_1, \ldots, y_t) \in [2^{|A|}]^t_{\dist} \\ \mathrm{s.t.} \,\, (y_{1, A_2}, \ldots, y_{t, A_2}) \in [2^{|A_2|}]^t_{\dist} \\
T = \{ ( x_{i}, y_{i} ) \}_{i=1}^t \in \mathcal{S}^{\inj(\sA_2)}_{\sA_1\sA_2\sA_3}  }}
\left[ \, \prod_{i = 1}^t \Big( \ketbra*{y_i}{x_i}_{\gsA} \cdot A_{i, \gsA \gsB} \Big) \ket*{0}_{\gsA \gsB} \, \right] \otimes \ket*{T}_{\gsR_{\darkgray 3}}.
\end{align}
We define a linear map $\Uncompress$ that maps registers $\sR_3$ to registers $\sR_1, \sR_2$. For any $T = \{ ( x_{i}, y_{i} ) \}_{i=1}^t$ such that $(x_1, \ldots, x_t) \in [2^{|A|}]^t$, $(y_1, \ldots, y_t) \in [2^{|A|}]^t_{\dist}$, and $(y_{1, A_2}, \ldots, y_{t, A_2}) \in [2^{|A_2|}]^t_{\dist}$, the linear map $\Uncompress$ acts as
\begin{equation}
\Uncompress \ket*{T}_{\gsR_{\darkgray 3}} \coloneq \sqrt{\prod_{i=0}^{t-1} \frac{1}{2^{|\sA_2|} - i}} \sum_{\substack{ (z_1, \ldots, z_t) \in [2^{|A_2|}]^t_{\dist} \\ \mathrm{s.t.} \,\,
R = \{ ( x_{i, A_2} || x_{i, A_3}, z_{i, A_2} || y_{i, A_3} ) \}_{i=1}^t \\
S = \{ ( x_{i, A_1} || z_{i, A_2}, y_{i, A_1} || y_{i, A_2} ) \}_{i=1}^t }} \ket*{R}_{\gsR_{\darkgray 1}} \ket*{S}_{\gsR_{\darkgray 2}}.
\end{equation}
One can directly check that $\Uncompress$ is a partial isometry and $\ket*{\Adv^{\pr(\mathcal{S}^{\inj(\sA_2)}_{\sA_1\sA_2\sA_3})}_t}_{\gsA \gsB \gsR_{\darkgray 3}}$ is in the domain of $\Uncompress$ by construction. Furthermore, by definition, we have
\begin{equation}
\Uncompress \ket*{\Adv^{\pr(\mathcal{S}^{\inj(\sA_2)}_{\sA_1\sA_2\sA_3})}_t}_{\gsA \gsB \gsR_{\darkgray 3}} = \ket*{\Adv^{\pr(\mathcal{S}^{\inj(\sA_2)}_{\sA_1\sA_2}) \pr(\mathcal{S}^{\inj(\sA_2)}_{\sA_2 \sA_3})}_t}_{\gsA \gsB \gsR_{\darkgray 1} \gsR_{\darkgray 2}}.
\end{equation}
Because $\Uncompress$ acts isometric in its domain, we have
\begin{align}
    \rho_{3} &= \Tr_{\sR_1} \Tr_{\sR_2} \ketbra*{\Adv_t^{\pr(\mathcal{S}^{\inj}_{\sA_1\sA_2}) \pr(\mathcal{S}^{\inj(\sA_2)}_{\sA_2 \sA_3})}}\\
    &= \Tr_{\sR_3} \ketbra*{\Adv_t^{\pr(\mathcal{S}^{\inj(\sA_2)}_{\sA_1 \sA_2 \sA_3})}} = \rho_{4}.
\end{align}
Together, by a series of triangle inequality, we have
\begin{equation}
    \TD(\rho_1, \rho_5) \leq \TD(\rho_1, \rho_2) + \TD(\rho_2, \rho_3) + \TD(\rho_4, \rho_5) \leq \frac{9 t(t-1)}{2^{|\sA_2|}},
\end{equation}
which concludes the proof of this theorem by noting the definition of $\rho_1$ and $\rho_5$.
\end{proof}

As shown in \cite{schuster2024random}, one can iteratively apply the gluing lemma to glue many small Haar-random unitaries over small number of qubits into a pseudorandom unitary on the entire system. If we have an $n$-qubit system $\sA$ that is separated into consecutive subsystems $\sA_1, \ldots, \sA_{2m}$, we can glue together small Haar-random unitaries $U_{\sA_{1} \sA_{2}}, U_{\sA_{2} \sA_{3}}, \ldots$ as follows,
\begin{equation}
    U_{\mathsf{glued}} \coloneq \prod_{k=0}^{m-1}(U_{\sA_{2+2k} \sA_{3+2k}}) \prod_{k=0}^{m-1}(U_{\sA_{1+2k} \sA_{2+2k}}).
\end{equation}
Using triangle inequality, the trace distance between a $t$-query adversary output state that queries $U_{\mathsf{glued}}$ versus Haar-random unitary is upper bounded by
\begin{equation}
    2m \cdot \frac{ 9 t (t - 1)}{2^{\min_{i \in [2m]} |\sA_i|}}.
\end{equation}
If each subsystem is of size $\omega(\log n)$, then the glued unitary $U_{\mathsf{glued}}$ will be a pseudorandom unitary secure against $\poly(n)$-time adversary. This can be seen by noting that a $\poly(n)$-time adversary can only make $t = \poly(n)$ queries, so the trace distance between the adversary state querying $U_{\mathsf{glued}}$ and Haar-random unitary is upper bounded by
\begin{equation}
    2m \cdot \frac{ 9 t (t - 1)}{2^{\min_{i \in [2m]} |\sA_i|}} \leq \frac{\poly(n)}{2^{\omega(\log n)}} = \mathrm{negl}(n).
\end{equation}
By replacing the small Haar-random unitaries over $\omega(\log n)$ qubits with small pseudorandom unitaries secure against subexponential adversary, one can show that the glued unitary $U$ is an $n$-qubit pseudorandom unitary secure against $\poly(n)$-time adversary.

Assuming the subexponential hardness of LWE \cite{regev2009lattices}, we have proved that a pseudorandom unitary secure against subexponential adversary can be generated in polynomial-depth on any circuit geometry using the $PFC$ construction, including a 1D geometry.
Hence, an $n$-qubit pseudorandom unitary secure against polynomial adversary can be generated in $\poly \log(n)$ depth on any circuit geometry.
This work fills in the gap in \cite{schuster2024random} that assumes the $PFC$ construction forms a pseudorandom unitary under LWE hardness to rigorously establish the surprising fact that pseudorandom unitaries can be generated in extremely low depth under standard cryptographic assumptions.

\end{document}